\newtheorem{thm}{Theorem}
\newtheorem{lemma}{Lemma}
\newtheorem{defn}{Definition}
\newtheorem{property}{Property}
\theoremstyle{remark}
\newtheorem{remark}{Remark}
\newcommand{\bigo}[1]{O\left(#1\right)}
\newcommand{\Qinv}[1]{Q^{-1}\left(#1\right)}
\newcommand{\Var}[1]{{\rm Var}\left[#1\right]}
\newcommand{\E}[1]{\mathbb E\left[#1\right]}
\newcommand{\Prob}[1]{\mathbb P\left[#1\right]}
\newcommand{\Qrob}[1]{\mathbb Q\left[#1\right]}
\newcommand{\beq}{\begin{equation}}
\newcommand{\eeq}{\end{equation}}
\def\binosum#1#2{\left\langle{#1\atopwithdelims..#2}\right\rangle}
\title{Lossy joint source-channel coding\\ in the finite blocklength regime}
\author{
Victoria Kostina,  \IEEEmembership{Student Member, IEEE}, \and Sergio Verd\'{u},
\IEEEmembership{Fellow, IEEE} 
%\\ Dept. of Electrical Engineering, Princeton University, NJ 08544, USA
\thanks{
This work was supported in part by the National Science Foundation (NSF)
under Grant CCF-1016625 and by the Center for Science of Information
(CSoI), an NSF Science and Technology Center, under Grant CCF-0939370. The work of V. Kostina was supported in part by the Natural Sciences and Engineering Research Council of Canada.

Portions
of this paper were presented at the 2012 IEEE International Symposium on Information Theory \cite{kostina2012ISITjscc}, and at the 2012 IEEE Information Theory Workshop \cite{kostina2012tocodeornot}.

The authors are with the Department of Electrical Engineering, Princeton
University, NJ 08544 USA (e-mail: vkostina@princeton.edu; verdu@princeton.
edu).
}
}
\begin{document}

%\makeanontitle
\maketitle
\begin{abstract}
This paper finds new tight finite-blocklength bounds for the best achievable lossy joint source-channel code rate, and demonstrates that joint source-channel  code design brings considerable performance advantage over a separate one in the non-asymptotic regime. A joint source-channel code maps a block of $k$ source symbols onto a length$-n$ channel codeword, and the fidelity of reproduction at the receiver end is measured by the probability $\epsilon$ that the distortion exceeds a given threshold $d$.   For memoryless sources and channels, it is demonstrated that the parameters of the best joint source-channel code must satisfy
$nC - kR(d) \approx \sqrt{nV + k \mathcal V(d)} \Qinv{\epsilon}$,
where $C$ and $V$ are the channel capacity and channel dispersion, respectively; $R(d)$ and $\mathcal V(d)$ are the source rate-distortion and rate-dispersion functions; and $Q$ is the standard Gaussian complementary cdf.
Symbol-by-symbol (uncoded) transmission is known to achieve the Shannon limit when the source and channel satisfy a certain probabilistic matching condition. In this paper we show that even  when this condition is not satisfied, symbol-by-symbol transmission is, in some cases, the best known strategy in the non-asymptotic regime. 
\end{abstract}

\begin{IEEEkeywords}
Achievability, converse, finite blocklength regime, joint source-channel coding, lossy source coding, memoryless sources, rate-distortion theory, Shannon theory.
\end{IEEEkeywords}

\section{Introduction}
\label{sec:intro}
In the limit of infinite blocklengths, the optimal achievable coding rates in channel coding and lossy data compression are characterized by the channel capacity $C$ and the source rate-distortion function $R(d)$, respectively \cite{shannon1948mathematical}. For a large class of sources and channels, in the limit of large blocklength, the maximum achievable joint source-channel coding (JSCC) rate compatible with vanishing excess distortion probability is characterized  by the ratio $\frac C {R(d)}$ \cite{shannon1959coding}. A perennial question in information theory is how relevant the asymptotic fundamental limits are when the communication system is forced to operate at a given fixed blocklength. The finite blocklength (delay) constraint is inherent to all communication scenarios. In fact, in many systems of current interest, such as real-time multimedia communication, delays are strictly constrained, while in packetized data communication, packets are frequently on the order of 1000 bits. While computable formulas for the channel capacity and the source rate-distortion function are available for a wide class of channels and sources, the luxury of being able to compute exactly (in polynomial time) the non-asymptotic fundamental limit of interest is rarely affordable. Notable exceptions where the non-asymptotic fundamental limit is indeed computable are almost lossless source coding \cite{verdu2012losslessCISS,verdu2012lossless}, and JSCC over matched source-channel pairs \cite{gastpar2003tocodeornot}. In general, however, one can at most hope to obtain bounds and approximations to the information-theoretic non-asymptotic fundamental limits.  
 
Although non-asymptotic bounds can be distilled from classical proofs of coding theorems, these bounds are rarely satisfyingly tight in the non-asymptotic regime, as studied in \cite{polyanskiy2010channel,kostina2011fixed} in the contexts of channel coding and lossy source coding, respectively. For the JSCC problem, the classical converse is based on the mutual information data processing inequality, while the classical achievability scheme uses separate source/channel coding (SSCC), in which the channel coding block and the source coding block are optimized separately without knowledge of each other. These conventional approaches lead to disappointingly weak non-asymptotic bounds. In particular, SSCC can be rather suboptimal non-asymptotically. An accurate finite blocklength analysis therefore calls for novel upper and lower bounds that sandwich tightly the non-asymptotic fundamental limit. Such bounds were shown in \cite{polyanskiy2010channel} for the channel coding problem and in \cite{kostina2011fixed} for the source coding problem. In this paper, we derive new tight bounds for the JSCC problem, which hold in full generality, without any assumptions on the source alphabet, stationarity or memorylessness.
 
While numerical evaluation of the non-asymptotic upper and lower bounds bears great practical interest (for example, to decide how suboptimal with respect to the information-theoretic limit a given blocklength-$n$ code is), such bounds usually involve cumbersome expressions that offer scant conceptual insight. Somewhat ironically, to get an elegant, insightful approximation of the non-asymptotic fundamental limit, one must resort to an asymptotic analysis of these non-asymptotic bounds. Such asymptotic analysis must be finer than that based on the law of large numbers, which suffices to obtain the asymptotic fundamental limit but fails to provide any estimate of the speed of convergence to that limit. There are two complementary approaches to a finer asymptotic analysis: the large deviations analysis which leads to error exponents, and the Gaussian approximation analysis which leads to dispersion. The error exponent approximation and the Gaussian approximation to the non-asymptotic fundamental limit are tight in different operational regimes. In the former, a rate which is strictly suboptimal with respect to the asymptotic fundamental limit is fixed, and the error exponent measures the exponential decay of the error probability to $0$ as the blocklength increases. The error exponent approximation is tight if the error probability a system can tolerate is extremely small. However, already for probability of error as low as $10^{-6}$ to $10^{-1}$, which is the operational regime for many high data rate applications, the Gaussian approximation, which gives the optimal rate achievable at a given error probability as a function of blocklength, is tight \cite{polyanskiy2010channel,kostina2011fixed}. In the channel coding problem, the Gaussian approximation of $R^\star (n, \epsilon)$, the maximum achievable finite blocklength coding rate at blocklength $n$ and error probability $\epsilon$, is given by, for finite alphabet stationary memoryless channels \cite{polyanskiy2010channel}, 
\begin{equation}
 n R^\star (n, \epsilon) = nC - \sqrt{nV} \Qinv{\epsilon} + \bigo{\log n}  \label{eq:2orderCC}
\end{equation}
where $C$ and $V$ are the channel capacity and dispersion, respectively. 
In the lossy source coding problem, the Gaussian approximation of $R^\star (k, d, \epsilon)$, the minimum achievable finite blocklength coding rate at blocklength $k$ and probability $\epsilon$ of exceeding fidelity $d$, is given by, for stationary memoryless sources \cite{kostina2011fixed}, 
\begin{equation}
 k R^\star (k, d, \epsilon) = k R(d) + \sqrt{k \mathcal V(d)} \Qinv{\epsilon} + \bigo{\log k} \label{eq:2orderSC}
\end{equation}
where $R(d)$ and $\mathcal V(d)$ are the rate-distortion and the rate-dispersion functions, respectively. 

For a given code, the excess distortion constraint, which is the figure of merit in this paper as well as in \cite{kostina2011fixed}, is, in a way, more fundamental than the average distortion constraint, because varying $d$ over its entire range and evaluating the probability of exceeding $d$ gives full information about the distribution (and not just its mean) of the distortion incurred at the decoder output. Following the philosophy of \cite{polyanskiy2010channel,kostina2011fixed}, in this paper we perform the Gaussian approximation analysis of our new bounds to show that $k$, the maximum number of source symbols transmissible using a given channel blocklength $n$, must satisfy
\begin{equation}
nC - kR(d) = \sqrt{nV + k \mathcal V(d)} \Qinv{\epsilon}+ \bigo{\log n} \label{eq:2orderJSCCintro}
\end{equation}
 under the fidelity constraint of exceeding a given distortion level $d$ with probability $\epsilon$. In contrast, if, following the SSCC paradigm, we just concatenate the channel code in \eqref{eq:2orderCC} and the source code in \eqref{eq:2orderSC}, we obtain
\begin{align}
 nC - kR(d) &\leq \min_{\eta + \zeta \leq \epsilon} \left\{ \sqrt{nV} \Qinv{\eta} + \sqrt{k\mathcal V(d)} \Qinv{\zeta}\right\} \notag\\
 &+ \bigo{\log n} \label{eq:2orderSSCCintro}
\end{align} 
which is usually strictly suboptimal with respect to \eqref{eq:2orderJSCCintro}. 

%For a large class of sources and channels, in the limit of large blocklength, the maximum achievable joint source-channel coding (JSCC) rate compatible with vanishing excess distortion probability is characterized  by the ratio $\frac C {R(d)}$ \cite{shannon1959coding}, attainable by separate source-channel coding (SSCC). However, at finite blocklengths not only is the fundamental limit $\frac C {R(d)}$ no longer achievable but separation ceases to be optimal. Quantifying, as a function of blocklength and excess distortion probability,  the required backoff from $\frac C {R(d)}$ as well as the increase in the rate of source symbols per channel use afforded by an optimal joint design bears great practical, as well as conceptual, interest. 

In addition to deriving new general achievability and converse bounds for JSCC and performing their Gaussian approximation analysis, in this paper we revisit the dilemma of whether one should or
should not code when operating under delay constraints. Gastpar et al. \cite{gastpar2003tocodeornot} gave a set of necessary and sufficient conditions on the source, its distortion measure, the channel and its cost function in order for symbol-by-symbol transmission to attain the minimum average distortion. In these curious cases, the source and the channel are probabilistically matched.
In the absence of channel cost constraints, we show that whenever the source and the channel are probabilistically matched so that symbol-by-symbol coding achieves the minimum average distortion,  it also
achieves the dispersion of joint source-channel coding. Moreover,
even in the absence of such a match between the
source and the channel, symbol-by-symbol transmission, though
asymptotically suboptimal, might outperform in the non-asymptotic regime not only separate
source-channel coding but also our random-coding
achievability bound.

Prior research relating to finite blocklength analysis of JSCC includes the work of Csisz{\'a}r  \cite{csiszar1980joint,csiszar1982error} who demonstrated that the error exponent of joint source-channel coding outperforms that of separate source-channel coding. For discrete source-channel pairs with average distortion criterion, Pilc's achievability bound \cite{pilc1967coding,pilc1967transmission} applies. For the transmission of a Gaussian source over a discrete channel under the average mean square error constraint, Wyner's achievability bound \cite{wyner1968communication,wyner1972transmission} applies. Non-asymptotic achievability and converse bounds for a graph-theoretic model of JSCC have been obtained by Csisz{\'a}r \cite{csiszar1983abstract}. Most recently, Tauste Campo et al. \cite{campo2011random} showed a number of finite-blocklength random-coding bounds applicable to the almost-lossless JSCC setup, while Wang et al. \cite{wang2011dispersion}  found the dispersion of JSCC for sources and channels with finite alphabets. 

%Redundancy: Pilc showed for discrete source 
%$d_ n- d \leq \bigo{\sqrt{\frac{\log n}{n}}}$. But this is likely not optimal. 

The rest of the paper is organized as follows. Section \ref{sec:prelim} summarizes basic definitions and notation. Sections \ref{sec:C} and \ref{sec:A} introduce the new converse and achievability bounds to the maximum achievable coding rate, respectively.  A Gaussian approximation analysis of the new bounds is presented in Section \ref{sec:2order}. The evaluation of the bounds and the approximation is performed for two important special cases:  the transmission of a  binary memoryless source (BMS) over a binary symmetric channel (BSC) with bit error rate distortion (Section \ref{sec:BMS_BSC}) and the transmission of a Gaussian memoryless source (GMS) with mean-square error distortion over an AWGN channel with a total power constraint (Section \ref{sec:GMS_AWGN}). Section \ref{sec:tocodeornot} focuses on symbol-by-symbol transmission.  

\section{Definitions}
\label{sec:prelim}
A lossy source-channel code is a pair of (possibly randomized) mappings $\mathsf f \colon \mathcal M \mapsto \mathcal X$ and $\mathsf g \colon \mathcal Y \mapsto \widehat{\mathcal M}$. A distortion measure $\mathsf d \colon \mathcal M \times \widehat{\mathcal M}  \mapsto [0, +\infty]$ is used to quantify the performance of the lossy code. A cost function $\mathsf c \colon \mathcal X \mapsto [0, +\infty]$ may be imposed on the channel inputs. The channel is used without feedback.

\label{sec:prelim}
\begin{defn}
The pair $(\mathsf f, \mathsf g)$ is a $(d, \epsilon, \alpha)$ lossy source-channel code for 
$\{\mathcal M, \ \mathcal X,\ \mathcal Y, \ \widehat{\mathcal M}, 
\ P_S, \mathsf d, 
\ P_{Y|X}, \ \mathsf c\}$ 
if 
$\mathbb P \left[ \mathsf d\left( S, \mathsf g(Y)\right) > d\right] \leq \epsilon$
and either 
$\E{\mathsf c(X)} \leq \alpha$ (average cost constraint) or 
$\mathsf c(X) \leq \alpha$ a.s. (maximal cost constraint),
 where $\mathsf{f} (S) = X$ (see Fig. \ref{fig:jscc}). In the absence of an input cost constraint we simplify the terminology and refer to the code as $(d, \epsilon)$ lossy source-channel code.
\label{defn:jscc}
\end{defn}

\begin{figure}[htbp]
\begin{center}
    \epsfig{file=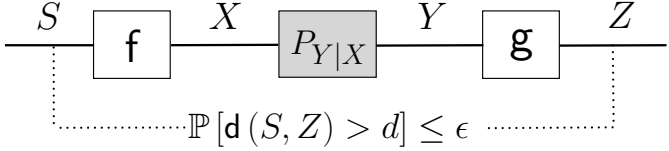,width=1\linewidth}
\caption{A $(d, \epsilon)$ joint source-channel code. }
\label{fig:jscc}
\end{center}
\end{figure}

The special case $d = 0$ and $\mathsf d(s, z) = 1 \left\{ s \neq z \right\}$ corresponds to almost-lossless compression. If, in addition, $P_S$ is equiprobable on an alphabet of cardinality 
$\left| \mathcal{M} \right| = | \widehat{\mathcal{M}} | = M$, 
a $(0, \epsilon, \alpha)$ code in Definition \ref{defn:jscc} corresponds to an $(M, \epsilon, \alpha)$ channel code (i.e. a code with $M$ codewords and average error probability $\epsilon$ and cost $\alpha$). On the other hand, if $P_{Y|X}$ is an identity mapping on an alphabet of cardinality $M$ without cost constraints, a $(d, \epsilon)$ code in Definition \ref{defn:jscc} corresponds to an $(M, d, \epsilon)$ lossy compression code (as e.g. defined in \cite{kostina2011fixed}). 

As our bounds in Sections \ref{sec:C} and \ref{sec:A} do not foist a Cartesian structure on the  underlying alphabets, we state them in the one-shot paradigm of Definition \ref{defn:jscc}. When we apply those bounds to the block coding setting, transmitted objects indeed become vectors, and Definition \ref{defn:jscckn} below comes into play.    
\begin{defn}
In the conventional fixed-to-fixed (or block) setting in which $\mathcal X$ and $\mathcal Y$ are the $n-$fold Cartesian products of alphabets $\mathcal A$ and $\mathcal B$, $\mathcal M$ and $\widehat {\mathcal M}$ are the $k-$fold Cartesian products of alphabets $\mathcal S$ and $\hat{\mathcal S}$, and $\mathsf d_k \colon \mathcal S^k \times \hat{\mathcal S}^k \mapsto  [0, +\infty]$, 
$\mathsf c_n \colon \mathcal A^n \mapsto [0, +\infty]$,  
a $(d, \epsilon, \alpha)$ code for 
$\{\mathcal S^k, \ \mathcal A^n,\ \mathcal B^n, \ \hat{ \mathcal S}^k, \ P_{S^k}, \ \mathsf d_k, \ P_{Y^n|X^n}, \ \mathsf c_n\}$ is called a $(k, n, d, \epsilon, \alpha)$ code (or a $(k, n, d, \epsilon)$ code if there is no cost constraint).
\label{defn:jscckn}
\end{defn}

\begin{defn}
Fix $\epsilon$, $d$, $\alpha$ and the channel blocklength $n$. The maximum achievable source blocklength and coding rate (source symbols per channel use) are defined by, respectively
\begin{align}
k^\star(n, d, \epsilon, \alpha) &=  \sup\left\{ k \colon \exists (k, n, d, \epsilon, \alpha) \text{ code}\right\} \label{eq:kstar}\\
R(n, d, \epsilon, \alpha) &= \frac 1 n k^\star(n, d, \epsilon, \alpha) \label{eq:Rstar}
\end{align}
Alternatively, fix $\epsilon$, $\alpha$, source blocklength $k$ and channel blocklength $n$. The minimum achievable excess distortion is defined by
\begin{equation}
D(k, n, \epsilon, \alpha) =  \inf\left\{ d \colon \exists (k, n, d, \epsilon, \alpha) \text{ code}\right\} \label{eq:Dstar}
\end{equation}
\end{defn}

Denote, for a given $P_{Y|X}$ and a cost function 
$\mathsf c \colon \mathcal X \mapsto [0, +\infty]$,
\begin{equation}
\mathbb C(\alpha) = \sup_{\substack{P_{X} \colon\\
\E{ \mathsf c(X)} \leq \alpha
}} I(X; Y) \label{eq:Capacity}
\end{equation}
and, for a given $P_S$ and a distortion measure $\mathsf d \colon \mathcal M \times \widehat {\mathcal M}  \mapsto [0, +\infty]$,
\begin{equation}
\mathbb R_S(d) =  \inf_{\substack{P_{Z|S}\colon\\ \E{\mathsf d(S, Z)} \leq d}} I(S; Z) \label{eq:RR(d)}
\end{equation}
We impose the following basic restrictions on $P_{Y|X}$, $P_S$, the input-cost function and the distortion measure:
 \begin{enumerate}[(a)]
  \item  \label{item:a}
%$\inf_{y \in \mathcal Y} d(x, y) = 0 $ for each $x \in A$.
%\item \label{item:b}
 $\mathbb R_S(d)$ is finite for some $d$, i.e. $ d_{\min} < \infty$, where
\begin{equation}
 d_{\min} = \inf \left\{ d\colon ~ \mathbb R_S(d) < \infty \right\}; \label{eq:dmin}
\end{equation}
\item \label{item:b} The infimum in \eqref{eq:RR(d)} is achieved by a unique $P_{Z^\star|S}$;
\item \label{item:c} The supremum in \eqref{eq:Capacity} is achieved by a unique $P_{X^\star}$.
\end{enumerate}

The dispersion, which serves to quantify the penalty on the rate of the best JSCC code induced by the finite blocklength, is defined as follows.
\begin{defn}
Fix $\alpha$ and $d
\geq d_{\min}$. The rate-dispersion function of joint source-channel coding (source samples squared per channel use) is defined as
\begin{equation}
 \mathscr V(d, \alpha)= \lim_{\epsilon \rightarrow 0} \limsup_{n \rightarrow \infty}
 \frac { n \left( \frac{C(\alpha)}{R(d)} - R(n,d,\epsilon, \alpha) \right)^2}{2 \log_e \frac 1 \epsilon} \label{eq:dispersiondef}
\end{equation}
where $C(\alpha)$ and $R(d)$ are the channel capacity-cost and source rate-distortion functions, respectively.\footnote{While for memoryless sources and channels, $C(\alpha) = \mathbb C(\alpha)$ and $R(d) = \mathbb R_{\mathsf S}(d)$ given by \eqref{eq:Capacity} and \eqref{eq:RR(d)} evaluated with single-letter distributions, it is important to distinguish between the operational definitions and the extremal mutual information quantities, since the core results in this paper allow for memory. %Similarly, Definition \ref{defn:dispersion} gives the operational definition of the JSCC rate-dispersion function, while in \eqref{eq:JSCCdispersion} in Section \ref{sec:2order} we find an expression for it that holds under certain regularity conditions.
}

The distortion-dispersion function of joint source-channel coding is defined as
\begin{equation}
 \mathscr W(R, \alpha)= \lim_{\epsilon \rightarrow 0} \limsup_{n \rightarrow \infty}
\frac {  n \left( D\left(\frac {C(\alpha)}{R}\right) - D(nR, n,\epsilon, \alpha) \right)^2 }{ 2 \log_e \frac 1 \epsilon }\label{eq:distortiondispersiondef}
\end{equation}
where $D(\cdot)$ is the distortion-rate function of the source. 
\label{defn:dispersion}
\end{defn}
If there is no cost constraint, we will simplify notation by dropping $\alpha$ from \eqref{eq:kstar}, \eqref{eq:Rstar}, \eqref{eq:Dstar}, \eqref{eq:Capacity}, \eqref{eq:dispersiondef} and \eqref{eq:distortiondispersiondef}.

\begin{defn}[$\mathsf d-$tilted information \cite{kostina2011fixed}] \label{defn:id}
For $d > d_{\min}$, the $\mathsf d-$tilted information in $s$ is defined as\footnote{All $\log$'s and $\exp$'s are in an arbitrary common base. }
 \begin{equation}
\jmath_{S}(s, d) =\log \frac 1 {\E{ \exp\left( \lambda^\star d - \lambda^\star \mathsf d(s, Z^\star)\right)}} \label{eq:idball}
\end{equation}
where the expectation is with respect to $P_{Z^\star}$, i.e. the unconditional distribution
%\footnote{Henceforth, $Y^\star$ denotes the rate-distortion-achieving reproduction random variable at distortion $d$, i.e. its distribution $P_Y^\star$ is the marginal of $P_{Y|X}^\star P_X$, where $P_{Y|X}^\star$ achieves the infimum in \eqref{eq:RR(d)}. } 
of the reproduction random variable that achieves the infimum in \eqref{eq:RR(d)}, and
\begin{equation}
 \lambda^\star = -\mathbb R_S^\prime (d) \label{eq:lambdastar}
\end{equation}
\end{defn}
The following properties of $\mathsf d-$tilted information, proven in \cite{csiszar1974extremum}, are used in the sequel.
\begin{align}
 & \jmath_{S}(s, d) = \imath_{S; Z^\star}(s; z) + \lambda^\star \mathsf d(s, z) - \lambda^\star d \label{eq:jddensity}\\
 & \E{\jmath_{S}(s, d)} = \mathbb R_S(d) \label{eq:Ejd}\\
  &\E{ \exp\left( \lambda^\star d - \lambda^\star \mathsf d(S, z) + \jmath_S(S, d)\right) } \leq 1 \label{eq:csiszar}
\end{align}
where \eqref{eq:jddensity} holds for $P_{Z}^\star$-almost every $z$, while \eqref{eq:csiszar} holds for all $z \in \widehat {\mathcal M}$, and

\begin{equation}
\imath_{S; Z}(s; z) = \log \frac{dP_{Z|S = s}}{d P_Z}(z) \label{eq:isz}
\end{equation}
denotes the information density of the joint distribution $P_{SZ}$ at $(s, z)$. We can define the right side of \eqref{eq:isz} for a given $(P_{Z|S}, P_Z)$ even if there is no $P_S$ such that the
marginal of $P_S P_{Z|S}$ is $P_Z$. 
We use the same notation $\imath_{S;Z}$ for that more general function. 
To extend Definition \ref{defn:id} to the lossless case, for discrete random variables we define $0$-tilted information as
\begin{equation}
 \jmath_S(s, 0) = \imath_S(s) \label{eq:j0}
\end{equation}
where
\begin{equation}
 \imath_S(s) = \log \frac 1 {P_S(s)} \label{eq:i}
\end{equation}
is the information in outcome $s \in \mathcal M$. 

The distortion $d$-ball centered at $s \in \mathcal M$ is denoted by
\beq
B_d(s) = \{ z \in \widehat {\mathcal M}\colon \mathsf d(s, z) \leq d \}.
\eeq

Given $(P_X, P_{Y|X})$, we write $P_{X} \to P_{Y|X} \to P_{Y}$ to indicate that $P_Y$ is the marginal of $P_X P_{Y|X}$, i.e. $P_{Y}(y) = \sum_{x \in \mathcal X} P_{Y|X}(y|x)P_{X}(x)$. \footnote{We write summations over alphabets for simplicity. Unless stated otherwise, all our results hold for abstract probability spaces.} 
 
So as not to clutter notation, in Sections \ref{sec:C} and \ref{sec:A} we assume that there are no cost constraints. However, all results in those sections generalize to the case of a maximal cost constraint by considering $X$ whose distribution is supported on the subset of allowable channel inputs: 
\begin{equation}
\mathcal F(\alpha) = \left\{x \in \mathcal X \colon \mathsf c(x) \leq \alpha \right\} \label{eq:F}
\end{equation}
rather than the entire channel input alphabet $\mathcal X$. 
\section{Converses}
\label{sec:C}
\subsection{Converses via $\mathsf d$-tilted information}
\label{ssec:Ctilted}
Our first result is a general converse bound. 
\begin{thm}[Converse]
\label{thm:C1}
The existence of a $(d, \epsilon)$ code for $S$ and $P_{Y|X}$ requires that
\begin{align}
 \epsilon \geq   &~\inf_{P_{X|S} } \sup_{\gamma > 0} \bigg\{\sup_{P_{\bar Y}}
  \Prob{ \jmath_S(S, d) - \imath_{X; \bar Y}(X; Y) \geq \gamma} 
  \notag\\ &~
 - \exp\left( -\gamma \right) \bigg\}\label{eq:C1}\\
\geq &~\sup_{\gamma > 0}\bigg\{ \sup_{P_{\bar Y}} \E{\inf_{x \in \mathcal X}  \Prob{ \jmath_S(S, d) -  \imath_{X; \bar Y}(x; Y) \geq \gamma \mid  S} } 
  \notag\\ &~
 - \exp\left( -\gamma \right) \bigg\}\label{eq:C11}
\end{align}
where in \eqref{eq:C1}, $S - X - Y$, and the conditional probability in \eqref{eq:C11} is with respect to $Y$ distributed
according to $P_{Y|X=x}$ (independent of $S$), and
\begin{equation}
     \imath_{X; \bar Y} (x; y)  = \log \frac{dP_{Y|X = x}}{dP_{\bar Y}}(y)
\end{equation}
\end{thm}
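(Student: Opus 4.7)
The plan is to adapt the Polyanskiy--Poor--Verd\'u meta-converse to the lossy JSCC setting, with the role played by $\log M$ in channel coding taken here by the $\mathsf d$-tilted information $\jmath_S(S,d)$. Fix any $(d,\epsilon)$ code; its (possibly randomized) encoder induces a conditional law $P_{X|S}$, which together with $P_{Y|X}$ produces the joint $P_{SXY}$ with $S-X-Y$. Set $Z=\mathsf g(Y)$. For arbitrary $\gamma>0$ and auxiliary output distribution $P_{\bar Y}$, define $A_\gamma = \{\jmath_S(S,d)-\imath_{X;\bar Y}(X;Y)\ge\gamma\}$. The starting point is
\[
\Prob{A_\gamma}\le\Prob{\mathsf d(S,Z)>d}+\Prob{A_\gamma,\,\mathsf d(S,Z)\le d}\le\epsilon+\Prob{A_\gamma,\,\mathsf d(S,Z)\le d},
\]
reducing the converse to the claim $\Prob{A_\gamma,\,\mathsf d(S,Z)\le d}\le\exp(-\gamma)$.

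For the key step, I would change measure on $Y$ from $P_{Y|X}$ to $P_{\bar Y}$. Since the Radon--Nikodym factor equals $\exp(\imath_{X;\bar Y}(X;Y))$,
\[
\Prob{A_\gamma,\,\mathsf d(S,Z)\le d}=\mathbb E_{P_SP_{X|S}P_{\bar Y}}\!\left[\exp(\imath_{X;\bar Y}(X;Y))\,\mathbf{1}\{A_\gamma\}\,\mathbf{1}\{\mathsf d(S,\mathsf g(Y))\le d\}\right].
\]
On $A_\gamma$ this factor is at most $\exp(\jmath_S(S,d)-\gamma)$. Under $P_SP_{X|S}P_{\bar Y}$, $S$ and $Y$ are independent, so $X$ integrates out; conditioning on $Y=y$ and setting $z=\mathsf g(y)$, it then suffices to show, for every $z$,
\[
\mathbb E_{P_S}\!\left[\exp(\jmath_S(S,d))\,\mathbf{1}\{\mathsf d(S,z)\le d\}\right]\le 1.
\]
This is precisely \eqref{eq:csiszar}, after multiplying the integrand by $\exp(\lambda^\star(d-\mathsf d(S,z)))\ge 1$ on $\{\mathsf d(S,z)\le d\}$; here I use $\lambda^\star\ge 0$, which follows from $\mathbb R_S$ being nonincreasing. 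This yields $\epsilon\ge\Prob{A_\gamma}-\exp(-\gamma)$ for the code's own $P_{X|S}$; taking $\sup_{\gamma}\sup_{P_{\bar Y}}$ and weakening by $\inf_{P_{X|S}}$ (the code's encoder is unknown) proves \eqref{eq:C1}.

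For \eqref{eq:C11}, I would lower-bound the probability in \eqref{eq:C1} pointwise: for any $P_{X|S}$,
\[
\Prob{A_\gamma}=\E{\Prob{A_\gamma\mid S,X}}\ge\E{\inf_{x\in\mathcal X}\,\Prob{\jmath_S(S,d)-\imath_{X;\bar Y}(x;Y)\ge\gamma\mid S}},
\]
where in the rightmost probability $Y\sim P_{Y|X=x}$ independently of $S$. The right-hand side no longer depends on $P_{X|S}$, so it lower-bounds $\inf_{P_{X|S}}\Prob{A_\gamma}$, yielding \eqref{eq:C11}. The main obstacle I anticipate is bookkeeping around the change of measure---ensuring absolute continuity of $P_{Y|X=x}$ with respect to $P_{\bar Y}$ where needed (on the set where $\imath_{X;\bar Y}=+\infty$ the indicator of $A_\gamma$ vanishes, so the bound is vacuous there) and keeping the source and channel randomness cleanly separated when the encoder is itself random.
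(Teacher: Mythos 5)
Your proposal is correct and follows essentially the same route as the paper's proof: split on the excess-distortion event, change measure from $P_{Y|X}$ to $P_{\bar Y}$ on the complementary event, invoke \eqref{eq:csiszar} together with $\lambda^\star \ge 0$ and the restriction to $B_d(S)$, and obtain \eqref{eq:C11} by replacing the randomized encoder with the per-$s$ worst deterministic input. The only cosmetic difference is that you take a deterministic decoder $Z=\mathsf g(Y)$ while the paper carries a general $P_{Z|Y}$ (handled by conditioning on $Z$ as well), which does not change the argument.
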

\begin{proof}
Fix $\gamma$ and the $(d, \epsilon)$ code $(P_{X|S}, P_{Z|Y})$.  Fix an arbitrary probability measure $P_{\bar Y}$ on $\mathcal Y$. Let  $P_{\bar Y} \to P_{Z|Y} \to P_{\bar Z}$. We can write the probability in the right side of \eqref{eq:C1} as 
\begin{align}
&~ \Prob{ \jmath_S(S, d) - \imath_{X; \bar Y}(X; Y) \geq \gamma } \notag\\
=&~  \Prob{ \jmath_S(S, d) - \imath_{X; \bar Y}(X; Y) \geq \gamma, \mathsf d(S; Z) > d } \notag\\
+&~ \Prob{ \jmath_S(S, d) - \imath_{X; \bar Y}(X; Y) \geq \gamma, \mathsf d(S; Z) \leq d } \label{eq:-Cga}\\
\leq&~ \epsilon 
\notag\\
+
&~ 
\sum_{s \in \mathcal M} P_S(s) \sum_{x \in \mathcal X} P_{X|S}(x|s) \sum_{y \in \mathcal Y} \sum_{z \in B_d(s)} P_{Z|Y}(z|y) 
\notag\\
\cdot&~  P_{Y|X}(y|x) 1\left\{ P_{Y|X}(y|x)  \leq  P_{\bar Y}(y) \exp\left( \jmath_S(s, d) - \gamma\right)\right\} \\
\leq&~ \epsilon 
\notag
+ \exp\left( -\gamma \right)  \sum_{s \in \mathcal M} P_S(s) \exp\left( \jmath_S(s, d) \right)  \sum_{y \in \mathcal Y}  P_{\bar Y}(y)  
\notag\\
\cdot&~
\sum_{z \in B_d(s)} P_{Z|Y}(z|y) \sum_{x \in \mathcal X} P_{X|S}(x|s)  \\
=&~ \epsilon 
+ \exp\left( -\gamma \right) \sum_{s \in \mathcal M} P_S(s) \exp\left( \jmath_S(s, d) \right)  \sum_{y \in \mathcal Y}  P_{\bar Y}(y)  
\notag\\
\cdot&~
\sum_{z \in B_d(s)} P_{Z|Y}(z|y)   \\
=&~ \epsilon + \exp\left( -\gamma \right) \sum_{s \in \mathcal M} P_S(s) \exp\left( \jmath_S(s, d) \right)  P_{\bar Z}(B_d(s))  \\
\leq&~ \epsilon + \exp\left( -\gamma \right) \sum_{z \in \widehat {\mathcal M}} P_{\bar Z}(z)\sum_{s \in \mathcal M} P_S(s) 
\notag\\
\cdot&~
\exp\left( \jmath_S(s, d) + \lambda^\star d - \lambda^\star \mathsf d(s, z)\right) \label{eq:-Cga1}\\
\leq&~ \epsilon + \exp\left( -\gamma \right) \label{eq:-Cgb}
\end{align}
where \eqref{eq:-Cgb} is due to \eqref{eq:csiszar}. Optimizing over $\gamma > 0$ and $P_{\bar Y}$, we get the best possible bound for a given encoder $P_{X|S}$. To obtain a code-independent converse, we simply choose $P_{X|S}$ that gives the weakest bound, and \eqref{eq:C1} follows. To show \eqref{eq:C11}, we weaken \eqref{eq:C1} as
\begin{align}
 \epsilon  \geq  &~\sup_{\gamma > 0} \bigg\{ \sup_{P_{\bar Y}} \inf_{P_{X|S}} 
  \Prob{ \jmath_S(S, d) - \imath_{X; \bar Y}(X; Y) \geq \gamma} 
  \notag\\ &~
 - \exp\left( -\gamma \right) \bigg\}\label{eq:C1w}
\end{align}
and observe that for any $P_{\bar Y}$,
\begin{align}
&~\inf_{P_{X|S}} \Prob{ \jmath_S(S, d) - \imath_{X; \bar Y}(X; Y) \geq \gamma} \notag\\
= &~\sum_{s \in \mathcal M} P_S(s) \inf_{P_{X|S = s}} \sum_{x \in \mathcal X} P_{X|S}(x|s) 
\notag\\ 
\cdot&~
\sum_{y \in \mathcal Y} P_{Y|X}(y|x) 1 \left\{ \jmath_S(s, d) - \imath_{X; \bar Y}(x; y) \geq \gamma\right\}\\
=&~ \sum_{s \in \mathcal M} P_S(s) 
\notag\\
\cdot&~
\inf_{x \in \mathcal X} \sum_{y \in \mathcal Y} P_{Y|X}(y|x) 1 \left\{ \jmath_S(s, d) - \imath_{X; \bar Y}(x; y) \geq \gamma\right\} \label{eq:-C1a}\\
=&~  \E{\inf_{x \in \mathcal X}  \Prob{ \jmath_S(S, d) -  \imath_{X; \bar Y}(x; Y) \geq \gamma \mid  S}}
\end{align}
\end{proof}
\pagebreak
An immediate corollary to Theorem \ref{thm:C1} is the following result.
\begin{thm}[Converse]
Assume that there exists a distribution $P_{\bar Y}$ such that the distribution of $\imath_{X; \bar Y}(x; Y)$ (according to $P_{Y|X = x}$) does not depend on the choice of $x \in \mathcal X$. If a $(d, \epsilon)$ code for $S$ and $P_{Y|X}$ exists, then
\begin{align}
 \epsilon \geq  \sup_{\gamma > 0}\bigg\{ \Prob{ \jmath_S(S, d) -  \imath_{X; \bar Y}(x; Y) \geq \gamma } 
 - \exp\left(-\gamma\right) \bigg\}\label{eq:C1sym}
\end{align}
for an arbitrary $x \in \mathcal X$. The probability measure $\mathbb P$ in \eqref{eq:C1sym} is generated by $ P_S P_{Y|X = x} $. 
\label{thm:C1sym}
\end{thm}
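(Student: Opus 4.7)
The plan is to derive Theorem \ref{thm:C1sym} directly from the tighter form \eqref{eq:C11} of the general converse in Theorem \ref{thm:C1}, using only the hypothesized symmetry of $P_{Y|X}$ relative to the auxiliary output distribution $P_{\bar Y}$. The key observation is that under the symmetry assumption, the inner minimization over $x$ in \eqref{eq:C11} collapses, because the random variable whose tail probability is being computed has a distribution that does not depend on the choice of $x$.

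In detail, I would first note that for each fixed realization $S = s$, the quantity $\jmath_S(s,d)$ is a deterministic constant, so
\begin{equation*}
\Prob{ \jmath_S(S,d) - \imath_{X; \bar Y}(x; Y) \geq \gamma \mid S = s }
\end{equation*}
(with $Y \sim P_{Y|X=x}$ independent of $S$) is determined entirely by the distribution of $\imath_{X;\bar Y}(x;Y)$ under $P_{Y|X=x}$. By the hypothesis of the theorem, this latter distribution is identical for every $x \in \mathcal X$. Consequently, the conditional probability does not depend on $x$, and the infimum $\inf_{x\in\mathcal X}$ inside the expectation in \eqref{eq:C11} can be replaced by evaluation at any single, arbitrarily chosen $x \in \mathcal X$.

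Once $x$ is fixed, the outer expectation over $S$ in \eqref{eq:C11} combines with the conditional probability (with $Y$ distributed as $P_{Y|X=x}$, independent of $S$) to give the joint probability
\begin{equation*}
\Prob{ \jmath_S(S,d) - \imath_{X; \bar Y}(x; Y) \geq \gamma }
\end{equation*}
under the product measure $P_S P_{Y|X=x}$. Substituting this back into \eqref{eq:C11} yields \eqref{eq:C1sym}. There is really no substantive obstacle here — the assertion is flagged as an immediate corollary, and the only care needed is to track that the symmetry hypothesis concerns the \emph{distribution} of $\imath_{X;\bar Y}(x;Y)$ rather than a pointwise equality, which is exactly what is required to remove the infimum.
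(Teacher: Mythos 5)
Your proof is correct and follows essentially the same route as the paper: the paper's own proof of this corollary is precisely the observation that, under the symmetry hypothesis, the conditional probability inside \eqref{eq:C11} is independent of $x$, so the infimum over $x$ collapses and the expectation over $S$ yields the joint probability under $P_S P_{Y|X=x}$ (the supremum over $P_{\bar Y}$ being lower-bounded by the particular $P_{\bar Y}$ of the hypothesis). No gaps.
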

\begin{proof}
Under the assumption, the conditional probability in the right side of \eqref{eq:C11} is the same regardless of the choice of $x \in \mathcal X$. 
\end{proof}

The next result generalizes Theorem \ref{thm:C1}. When we apply Theorem \ref{thm:CT} in Section \ref{sec:2order} to find the dispersion of JSCC, we will let $T$ be the number of channel input types, and we will let $W$ be the type of the channel input block. If $T = 1$, Theorem \ref{thm:CT} reduces to Theorem \ref{thm:C1}. 

\begin{thm}[Converse]
\label{thm:CT}
The existence of a $(d, \epsilon)$ code for $S$ and $P_{Y|X}$ requires that
 \begin{align}
 \epsilon \geq
 &~  \inf_{P_{X|S}} \max_{\gamma > 0, T} \bigg\{ - T \exp\left( -\gamma \right) 
 \notag \\
 + &~
\sup_{
   \substack{
 \bar Y, W \colon\\
S - (X,W) - Y
}
} 
\mathbb P \big [ \jmath_S(S, d) - \imath_{X; \bar Y | W}(X; Y | W) \geq \gamma \big ]
 \bigg\}\label{eq:CTa}\\
   \geq &~\max_{\gamma > 0, T}\bigg\{ - T\exp\left( -\gamma \right) 
\notag \\
+ &~
\sup_{
\bar Y, W
%   \substack{
% \bar Y, W \colon\\
%S - (X,W) - Y\\
%S - X - W
%}
}    
%\notag\\
%&~
  \mathbb E \Big[ \inf_{x \in \mathcal X}  
  \mathbb P \big[ \jmath_S(S, d)  -  \imath_{X; \bar Y |W}(x; Y | W) \geq \gamma \mid S\big] \Big]
 \bigg\}
 \label{eq:CT}
\end{align}
where $T$ is a positive integer, the random variable $W$ takes values on 
$\left\{1, \ldots, T\right\}$, 
and
\begin{equation}
 \imath_{X; \bar Y | W}(x; y | t) = \log \frac{ P_{Y|X = x, W = t}}{P_{\bar Y | W = t}} (y)
\end{equation}
and in \eqref{eq:CT}, the probability measure is generated by $P_S P_{W|X = x} P_{Y|X = x, W}$.
%, and the supremum is over all distributions of $\bar Y$ and $W$ such that the Markov chain relations $S - (X,W) - Y$ and $S - X - W$ hold. 
\end{thm}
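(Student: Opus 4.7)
The plan is to mirror the proof of Theorem~\ref{thm:C1}, but to replace the unconditional change of measure (from $P_{Y|X}$ to $P_{\bar Y}$) with one that is conditional on the auxiliary variable $W$ (from $P_{Y|X=x,W=t}$ to $P_{\bar Y|W=t}$). The factor $T$ multiplying $\exp(-\gamma)$ in the final bound will arise at the last step, from summing $T$ conditional estimates across the values of $W$.

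Concretely, I fix a $(d,\epsilon)$ code $(P_{X|S}, P_{Z|Y})$, $\gamma > 0$, a positive integer $T$, and auxiliary kernels $P_{W|X}$ and $P_{\bar Y|W}$ satisfying $S-(X,W)-Y$. Exactly as in Theorem~\ref{thm:C1}, I split
\begin{align*}
&\mathbb P[\jmath_S(S,d) - \imath_{X;\bar Y|W}(X;Y|W) \geq \gamma]\\
&\leq \epsilon + \mathbb P[\jmath_S(S,d) - \imath_{X;\bar Y|W}(X;Y|W) \geq \gamma,\; \mathsf d(S,Z) \leq d].
\end{align*}
On the event $\{W = t\}$, the threshold condition rewrites, via the definition of $\imath_{X;\bar Y|W}$, as the pointwise change-of-measure bound
\begin{equation*}
P_{Y|X=x, W=t}(y) \leq P_{\bar Y|W=t}(y)\, \exp(\jmath_S(s,d) - \gamma).
\end{equation*}
Plugging this into the second summand and marginalizing the decoder kernel $P_{Z|Y}$ over $z \in B_d(s)$ produces the family $P_{\bar Z|W=t}$ defined by $P_{\bar Y|W=t} \to P_{Z|Y} \to P_{\bar Z|W=t}$, leaving
\begin{equation*}
\epsilon + \exp(-\gamma)\sum_{t=1}^{T}\sum_{s} P_{SW}(s,t)\, \exp(\jmath_S(s,d))\, P_{\bar Z|W=t}(B_d(s)).
\end{equation*}

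Now the crude bound $P_{SW}(s,t) \leq P_S(s)$ decouples $s$ from $t$, and for each fixed $t$ the estimate $1\{\mathsf d(s,z) \leq d\} \leq \exp(\lambda^\star d - \lambda^\star \mathsf d(s,z))$ combined with \eqref{eq:csiszar} gives
\begin{equation*}
\sum_{s} P_S(s)\, \exp(\jmath_S(s,d))\, P_{\bar Z|W=t}(B_d(s)) \leq 1,
\end{equation*}
so summing these $T$ estimates yields the promised factor $T$. Rearranging produces a code-dependent bound; optimizing over $\gamma, T, (P_{W|X}, P_{\bar Y|W})$ and then infimizing over $P_{X|S}$ proves \eqref{eq:CTa}. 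The passage from \eqref{eq:CTa} to \eqref{eq:CT} mirrors that from \eqref{eq:C1} to \eqref{eq:C11}: swap $\inf_{P_{X|S}}$ past the outer $\sup$'s, and observe that the infimum decomposes into a pointwise one over $x \in \mathcal X$ for each $s$, producing the expected-infimum form.

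The only non-routine ingredient is recognizing where $T$ comes from: the coarse bound $P_{SW}(s,t)\leq P_S(s)$ discards joint information about $(S,W)$, but that is precisely what allows \eqref{eq:csiszar} to be applied independently on each slice $\{W = t\}$, each $P_{\bar Z|W=t}$ being a bona fide distribution on $\widehat{\mathcal M}$. A direct attempt to apply \eqref{eq:csiszar} to the joint law of $(S,W)$ would fail because $P_{S|W=t} \neq P_S$ in general.
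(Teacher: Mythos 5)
Your proposal is correct and follows essentially the same route as the paper: the conditional change of measure on each slice $\{W=t\}$, the decoupling bound $P_{SW}(s,t)\leq P_S(s)$ (the paper's step of dropping $P_{W|S}(t|s)\leq 1$ before summing over $t$), and the per-slice application of \eqref{eq:csiszar} yielding the factor $T\exp(-\gamma)$ are exactly the steps in the paper's proof of Theorem~\ref{thm:CT}. The passage to \eqref{eq:CT} also matches the paper's argument (swap $\inf_{P_{X|S}}$ with the suprema and decompose pointwise in $x$), with the implicit understanding—made explicit in the paper—that the supremum is restricted to $W$ generated by a kernel $P_{W|X}$, i.e.\ satisfying $S-X-W$, so that the inner infimum indeed reduces to a per-$s$ infimum over $x$.
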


\begin{proof}
Fix a possibly randomized $(d, \epsilon)$ code $\{P_{X|S}, P_{Z|Y}\}$, a positive scalar $\gamma$, a positive integer $T$, an auxiliary random variable $W$ that satisfies $S - (X, W) - Y$, and a conditional probability distribution $P_{\bar Y | W} \colon \{1, \ldots T\} \mapsto \mathcal Y$. Let  
$P_{\bar Y|W = t} \to P_{Z|Y} \to P_{\bar Z| W = t}$, i.e. 
$P_{\bar Z|W = t}(z) = \sum_{y \in \mathcal Y} P_{Z|Y}(z|y)P_{\bar Y|W = t}(y)$, for all $t$. Write
\begin{align}
&~ \Prob{ \jmath_S(S, d) - \imath_{X; Y | W }(X; Y | W ) \geq \gamma } \notag\\
\leq&~ \epsilon 
%\notag\\
+
%&~ 
\sum_{s \in \mathcal M} P_S(s) \sum_{t = 1}^T P_{W|S}(t|s) 
\sum_{x \in \mathcal X} P_{X|S, W}(x|s, t) 
\notag\\
\cdot&~
\sum_{y \in \mathcal Y} P_{Y|X, W}(y|x, t) \sum_{z \in B_d(s)} P_{Z|Y}(z|y) \notag\\
\cdot&~   1\left\{ P_{Y|X, W}(y|x, t)  \leq  P_{\bar Y | W = t}(y) \exp\left( \jmath_S(s, d) - \gamma\right)\right\} \\
\leq&~ \epsilon 
+ \exp\left( -\gamma \right) \sum_{s \in \mathcal M} P_S(s) \exp\left( \jmath_S(s, d) \right) \sum_{t = 1}^T P_{W|S}(t|s)   
\notag\\
\cdot&~
 \sum_{y \in \mathcal Y}  P_{\bar Y | W}(y|t)  
%\notag\\
%\cdot&~
\sum_{z \in B_d(s)} P_{Z|Y}(z|y) \sum_{x \in \mathcal X}  P_{X|S, W}(x|s, t)  \\
\leq&~ \epsilon 
+ \exp\left( -\gamma \right) \sum_{t = 1}^T \sum_{s \in \mathcal M} P_S(s) \exp\left( \jmath_S(s, d) \right)  
  \sum_{y \in \mathcal Y}  P_{\bar Y | W}(y|t) 
\notag\\
\cdot&~
\sum_{z \in B_d(s)} P_{Z|Y}(z|y)   \\
\leq&~ \epsilon + \exp\left( -\gamma \right) \sum_{t = 1}^T \sum_{s \in \mathcal M} P_S(s) \exp\left( \jmath_S(s, d) \right)  P_{\bar Z|W = t}(B_d(s)) \\
\leq&~ \epsilon + \exp\left( -\gamma \right) \sum_{t = 1}^T \sum_{s \in \mathcal M} P_S(s) \sum_{z \in \widehat {\mathcal M}} P_{\bar Z | W= t}(z)
\notag\\
\cdot&~
\exp\left( \jmath_S(s, d) + \lambda^\star d - \lambda^\star \mathsf d(s, z)\right) \label{eq:-CT}\\
\leq&~ \epsilon + T\exp\left( -\gamma \right) \label{eq:-CTa}
\end{align}
where \eqref{eq:-CTa} is due to \eqref{eq:csiszar}. Optimizing over $\gamma$, $T$ and the distributions of the auxiliary random variables $\bar Y$ and $W$, we obtain the best possible bound for a given encoder $P_{X|S}$. To obtain a code-independent converse, we simply choose $P_{X|S}$ that gives the weakest bound, and \eqref{eq:CTa} follows. To show \eqref{eq:CT}, we weaken \eqref{eq:CTa} by restricting the $\sup$ to $W$ satisfying $S - X - W$ and changing the order of $\inf$ and $\sup$ as follows:
\begin{equation}
 \max_{\gamma > 0, T}  \sup_{
   \substack{
\bar Y,  W \colon\\
S - (X,W) - Y\\
S - X - W
}
}   \inf_{P_{X|S}} 
\end{equation}
%\begin{align}
% \epsilon  \geq  &~\max_{\gamma > 0, T} \bigg\{ \sup_{
%   \substack{
%\bar Y,  W \colon\\
%S - (X,W) - Y\\
%S - X - W
%}
%}   \inf_{P_{X|S}} 
%  \Prob{ \jmath_S(S, d) - \imath_{X; \bar Y | W}(X; Y | W) \geq \gamma} 
% % \notag\\ &~
% - \exp\left( -\gamma \right) \bigg\}\label{eq:C1w}
%\end{align}
Observe that for any legitimate choice of $\bar Y$ and $W$,
\begin{align}
&~\inf_{P_{X|S}} \Prob{ \jmath_S(S, d) - \imath_{X; \bar Y | W}(X; Y | W) \geq \gamma}\\
= &~\sum_{s \in \mathcal M} P_S(s)  \inf_{P_{X|S = s}} \sum_{x \in \mathcal X} P_{X|S}(x|s) \sum_{t = 1}^T P_{W|X}(t|x)
\notag\\
\cdot &~
 \sum_{y \in \mathcal Y} P_{Y|X, W}(y|x, t) 1 \left\{ \jmath_S(s, d) - \imath_{X; \bar Y | W}(x; y | t) \geq \gamma\right\}\\
=&~ \sum_{s \in \mathcal M} P_S(s) \inf_{x \in \mathcal X} \sum_{t = 1}^T P_{W|X}(t|x) \sum_{y \in \mathcal Y} P_{Y|X, W}(y|x, t) 
\notag \\
\cdot &~
1 \left\{ \jmath_S(s, d) - \imath_{X; \bar Y |W }(x; y | t) \geq \gamma\right\} \label{eq:-C1a}
%\\ =&~  \E{\inf_{x \in \mathcal X}  \Prob{ \jmath_S(S, d) -  \imath_{X; Y | W}(x; Y | W) \geq \gamma \mid S}}
\end{align}
which is equal to the expectation on the right side of \eqref{eq:CT}. 
\end{proof}

\begin{remark}
Theorems \ref{thm:C1}, \ref{thm:C1sym} and \ref{thm:CT} still hold in the case  $d = 0$ and $d(x, y) = 1 \left\{ x \neq y \right\}$, which corresponds to almost-lossless data compression. Indeed, recalling \eqref{eq:j0}, it is easy to see that the proof of Theorem \ref{thm:C1} applies, skipping the now unnecessary step \eqref{eq:-Cga1}, and, therefore, \eqref{eq:C1} reduces to 
\begin{align}
 \epsilon \geq  \inf_{P_{X|S} } \sup_{\gamma > 0} \bigg\{&~ \sup_{P_{\bar Y}}
  \Prob{ \imath_S(S) - \imath_{X; \bar Y}(X; Y) \geq \gamma} 
  \notag\\ &~
 - \exp\left( -\gamma \right) \bigg\}\label{eq:C1lossless}
\end{align}
Similar modification can be applied to the proof of Theorem \ref{thm:CT}. 
\label{rem_Clossless}
\end{remark}
\begin{remark}
Our converse for lossy source coding in \cite[Theorem 7]{kostina2011fixed} can be viewed as a particular case of the result in Theorem \ref{thm:C1sym}. Indeed, if $\mathcal X = \mathcal Y = \{1, \ldots, M\}$ and $P_{Y|X}(m|m) = 1$, $P_Y(1) = \ldots = P_Y(M) = \frac 1 M$, then \eqref{eq:C1sym} becomes
\begin{equation}
 \epsilon \geq \sup_{\gamma>0}  \Prob{  \jmath_S(S, d) \geq \log M + \gamma } - \exp\left(-\gamma\right)
\end{equation}
which is precisely \cite[Theorem 7]{kostina2011fixed}. 
\end{remark}

\subsection{Converses via hypothesis testing and list decoding}
\label{sec:Cht}

To show a joint source-channel converse in  \cite{csiszar1982error}, Csisz{\'a}r used a list decoder, which outputs a list of $L$ elements drawn from $\mathcal M$. While traditionally list decoding has only been considered in the context of finite alphabet sources, we generalize the setting to sources with abstract alphabets. 
In our setup, the encoder is the random transformation $P_{X|S}$, and the decoder is defined as follows.
\begin{defn}[List decoder]
Let $L$ be a positive real number, and let $Q_S$ be a measure on $\mathcal M$. An $(L, Q_S)$ list decoder is a random transformation $P_{\tilde S|Y}$, where $\tilde S$ takes values on $Q_S$-measurable sets with $Q_S$-measure not exceeding $L$:
\begin{equation}
 Q_S\left(\tilde S\right) \leq L \label{eq:listsize}
\end{equation}
\end{defn}
Even though we keep the standard ``list'' terminology, the decoder output need not be a finite or countably infinite set. The error probability with this type of list decoding is the probability that the source outcome $S$ does not belong to the decoder output list for $Y$:
\begin{align}
%&~
1 - \sum_{x \in \mathcal X} \sum_{y \in \mathcal Y} \sum_{\tilde s \in \mathcal M^{(L)} }  \sum_{s \in \tilde s} P_{\tilde S|Y} (\tilde s | y) P_{Y|X}(y|x) 
%\notag\\ \cdot&~ 
P_{X|S}(x|s) P_S(s) \label{eq:listerror}
\end{align}
where $\mathcal M^{(L)}$ is the set of all $Q_S$-measurable subsets of $\mathcal M$ with $Q_S$-measure not exceeding $L$.

\begin{defn}[List code]
 An $(\epsilon, L, Q_S)$ list code is a pair of random transformations $(P_{X|S}, P_{\tilde S|Y})$ such that \eqref{eq:listsize} holds and the list error probability \eqref{eq:listerror} does not exceed $\epsilon$. 
\end{defn}
Of course, letting $Q_S = U_S$, where $U_S$ is the counting measure on $\mathcal M$, we recover the conventional list decoder definition where the smallest scalar that satisfies \eqref{eq:listsize} is an integer. The almost-lossless JSCC setting ($d = 0$) in Definition \ref{defn:jscc} corresponds to $L = 1$, $Q_S = U_S$. If the source is analog (has a continuous distribution), it is reasonable to let $Q_S$ be the Lebesgue measure.

Any converse for list decoding implies a converse for conventional decoding. To see why, observe that any $(d, \epsilon)$ lossy code can be converted to a list code with list error probability not exceeding $\epsilon$ by feeding the lossy decoder output to a function that outputs the set of all source outcomes $s$ within distortion $d$ from the output $z \in \widehat {\mathcal M}$ of the original lossy decoder. In this sense, the set of all $(d, \epsilon)$ lossy codes is included in the set of all list codes with list error probability $\leq \epsilon$ and list size
\begin{equation}
 L = \max_{z \in \widehat {\mathcal M}} Q_S\left( \left\{ s \colon \mathsf d(s, z) \leq d \right\}\right) \label{eq:lossylistsize} 
\end{equation}

Denote by
\begin{equation}
\label{eq:beta}
\beta_{\alpha}(P, Q) = \min_{\substack{P_{W|X}\colon \\ \Prob{W = 1} \geq \alpha}} \mathbb Q \left[ W = 1\right]
\end{equation}
the optimal performance achievable among all randomized tests $P_{W|X}\colon \mathcal X \rightarrow \left\{ 0, 1\right\}$ between probability distributions $P$ and $Q$ on $\mathcal X$ ($1$ indicates that the test chooses $P$).\footnote{Throughout, $P$, $Q$ denote distributions, whereas $\mathbb P$, $\mathbb Q$ are used for the corresponding probabilities of events on the underlying probability space.} In fact, $Q$ need not be a probability measure, it just needs to be $\sigma$-finite in order for the Neyman-Pearson lemma and related results to hold. 

The hypothesis testing converse for channel coding \cite[Theorem 27]{polyanskiy2010channel} can be generalized to joint source-channel coding with list decoding as follows.

\begin{thm}[Converse]
Fix $P_S$ and $P_{Y|X}$, and let $Q_{S}$ be a $\sigma$-finite measure. The existence of an $(\epsilon, L, Q_S)$ list code requires that
 \beq
 \inf_{P_{X|S}} \sup_{P_{\bar Y}}\beta_{1 - \epsilon} (P_S P_{X|S} P_{Y|X}, Q_S P_{X|S} P_{\bar Y}) \leq L  \label{eq:Cht}
 \eeq
  where the supremum is over all probability measures $P_{\bar Y}$ defined on the channel output alphabet $\mathcal Y$.
 \label{thm:Cht}
 \end{thm}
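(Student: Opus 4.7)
The plan is to extract from any $(\epsilon, L, Q_S)$ list code a randomized binary hypothesis test that distinguishes $P = P_S P_{X|S} P_{Y|X}$ from $Q = Q_S P_{X|S} P_{\bar Y}$, whose acceptance probability under $P$ is at least $1-\epsilon$ and whose acceptance ``probability'' under $Q$ is at most $L$. By the definition \eqref{eq:beta} of $\beta_{1-\epsilon}$, this immediately forces $\beta_{1-\epsilon}(P,Q) \leq L$, and then optimizing over the free measure $P_{\bar Y}$ and the code's encoder $P_{X|S}$ yields \eqref{eq:Cht}.

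Concretely, I would fix an $(\epsilon, L, Q_S)$ list code $(P_{X|S}, P_{\tilde S|Y})$ and an arbitrary probability measure $P_{\bar Y}$ on $\mathcal Y$. The natural test statistic is
$$
g(s,y) \;=\; \sum_{\tilde s \ni s} P_{\tilde S|Y}(\tilde s \mid y),
$$
i.e.\ the probability that the list decoder, on input $y$, outputs a set containing $s$. Define the randomized test that, on observation $(s,x,y)$, declares ``$P$'' with probability $g(s,y)$ and ``$Q$'' otherwise. Note that the test ignores $x$, which is fine and in fact what one wants so that $P_{X|S}$ integrates out trivially under $Q$.

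The Type-I analysis is immediate: $\mathbb E_P[g(S,Y)]$ is exactly the probability that the source outcome lies in the decoded list, which by the $\epsilon$-list-error guarantee is at least $1-\epsilon$. For the Type-II side, using Fubini and the bound $Q_S(\tilde s)\leq L$ that holds for every admissible list outcome,
$$
\int dQ_S(s)\,\sum_{\tilde s \ni s} P_{\tilde S|Y}(\tilde s\mid y)\;=\;\sum_{\tilde s} P_{\tilde S|Y}(\tilde s\mid y)\,Q_S(\tilde s)\;\leq\;L.
$$
Because the test does not depend on $x$, the encoder kernel $P_{X|S}$ integrates to $1$ under $Q$, so $\mathbb E_Q[g(S,Y)] = \int dP_{\bar Y}(y)\int dQ_S(s)\,g(s,y) \leq L$. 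Thus the test meets the $1-\epsilon$ constraint under $P$ with $Q$-cost at most $L$, so $\beta_{1-\epsilon}(P_S P_{X|S} P_{Y|X},\, Q_S P_{X|S} P_{\bar Y}) \leq L$. Taking $\sup$ over $P_{\bar Y}$ (the inequality is preserved since it holds for every $P_{\bar Y}$) and $\inf$ over $P_{X|S}$ (any code must realize some such encoder) yields \eqref{eq:Cht}.

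The main obstacle is a bookkeeping one rather than a conceptual one: making the manipulations above rigorous on abstract alphabets where $Q_S$ is only $\sigma$-finite, where $\tilde S$ is a random measurable subset of $\mathcal M$, and where the ``sum'' over $\tilde s \ni s$ must be interpreted as an integral against the decoder kernel $P_{\tilde S|Y}(\cdot\mid y)$ on the space of $Q_S$-measurable sets. One must verify joint measurability of $g(s,y)$ and check that Fubini applies to the $Q_S$-integral of $g$; the $Q_S$-measure bound on the list ensures the $\sigma$-finiteness needed for the generalized Neyman--Pearson lemma to justify treating $\beta_{1-\epsilon}$ exactly as in the probability-measure case.
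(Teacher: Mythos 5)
Your proposal is correct and follows essentially the same route as the paper: the test that accepts exactly when (or, in your randomized phrasing, with the probability that) the source outcome lies in the decoded list, giving Type-I probability at least $1-\epsilon$ and $Q$-measure at most $L$ via the list-size bound, then optimizing over $P_{\bar Y}$ and $P_{X|S}$. The only cosmetic difference is that the paper first states the bound with a general auxiliary $Q_{Y|XS}$ before specializing to $P_{\bar Y}$, which does not change the substance.
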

\begin{proof}
Fix $Q_S$, the encoder $P_{X|S}$, and an auxiliary $\sigma$-finite conditional measure $Q_{Y|XS}$. Consider the (not necessarily optimal) test for deciding between $P_{SXY} = P_S P_{X|S} P_{Y|X}$ and $Q_{SXY} = Q_S P_{X|S} Q_{Y|XS}$ which chooses $P_{SXY}$ if $S$ belongs to the decoder output list. Note that this is a hypothetical test, which has access to both the source outcome and the decoder output. 

According to $\mathbb P$, the probability measure generated by $P_{SXY}$, the probability that the test chooses $P_{SXY}$ is given by
\begin{equation}
  \Prob{S \in \tilde S} \geq 1 - \epsilon
  \end{equation}
Since $\Qrob{ S \in \tilde S }$ is the measure of the event that the test chooses $P_{SXY}$ when $Q_{SXY}$ is true, and the optimal test cannot perform worse than the possibly suboptimal one that we selected, it follows that
  \beq
  \beta_{1 - \epsilon} (P_S P_{X|S} P_{Y|X}, Q_S P_{X|S} Q_{Y|XS}) \leq \Qrob{ S \in \tilde S } \label{eq:Cmeta}
  \eeq
Now, fix an arbitrary probability measure $P_{\bar Y}$ on $\mathcal Y$. Choosing $Q_{Y|XS} = P_{\bar Y}$, the inequality in \eqref{eq:Cmeta} can be weakened as follows. 
\begin{align}
  &~
  \Qrob{ S \in \tilde S } 
  \notag\\
  =
  &~
    \sum_{y \in \mathcal Y} P_{\bar Y}(y) \sum_{\tilde s \in \mathcal M^{(L)}} P_{\tilde S|Y} ( \tilde s | y) \sum_{s \in \tilde s} Q_S(s)\sum_{x \in \mathcal X} P_{X|S}(x|s)\\
  =&~  \sum_{y \in \mathcal Y} P_{\bar Y}(y) \sum_{\tilde s \in \mathcal M^{(L)}} P_{\tilde S|Y} ( \tilde s | y) \sum_{s \in \tilde s} Q_S(s)\\
  \leq&~ \sum_{y \in \mathcal Y} P_{\bar Y}(y) \sum_{\tilde s \in \mathcal M^{(L)}} P_{\tilde S|Y} ( \tilde s | y) L\\
  =&~ L
\end{align}
Optimizing the bound over $P_{\bar Y}$ and choosing $P_{X|S}$ that yields the weakest bound in order to obtain a code-independent converse,  \eqref{eq:Cht} follows. 
\end{proof}

\begin{remark}
Similar to how Wolfowitz's converse for channel coding can be obtained from the meta-converse for channel coding \cite{polyanskiy2010channel}, the converse for almost-lossless joint source-channel coding in \eqref{eq:C1lossless} can be obtained by appropriately weakening \eqref{eq:Cht} with $L = 1$. Indeed, invoking \cite{polyanskiy2010channel} 
\beq
\beta_{\alpha}(P, Q) \geq \frac 1 {\gamma} \left( \alpha - \Prob{\frac{dP}{dQ} > \gamma}\right)
\eeq
and letting $Q_S = U_S$ in \eqref{eq:Cht}, where $U_S$ is the counting measure on $\mathcal M$, we have
\begin{align}
1 &\geq \inf_{P_{X|S}} \sup_{P_{\bar Y}}\beta_{1 - \epsilon} (P_S P_{X|S} P_{Y|X}, U_S P_{X|S} P_{\bar Y})\label{eq:-Cmeta}\\
%&\geq \inf_{P_{X|S}} \frac 1 {\gamma |\mathcal M|} \left( 1 - \epsilon - \Prob{\frac{P_S(S) P_{Y|X}(Y|X)}{\frac 1 {|\mathcal M|}  P_{Y}(Y)} > \gamma |\mathcal M|}\right)\\
&\geq \inf_{P_{X|S}} \sup_{P_{\bar Y}}\sup_{\gamma > 0}\frac 1 {\gamma} \left( 1 - \epsilon - \Prob{  \imath_{X; \bar Y}(X; Y)\! -\! \imath_S(S) \!>\! \log \gamma}\right)
\end{align}
which upon rearranging yields \eqref{eq:C1lossless}. 
\end{remark}

In general, computing the infimum in \eqref{eq:Cht} is challenging. However, if the channel is symmetric (in a sense formalized in the next result), $\beta_{1 - \epsilon} (P_S P_{X|S} P_{Y|X}, U_S P_{X|S} P_{\bar Y})$ is independent of $P_{X|S}$.
%with a transition probability matrix whose rows are permutations of each other,  $\beta_{1 - \epsilon} (P_S P_{X|S} P_{Y|X}, U_S P_{X|S} U_{Y})$ is independent of $P_{X|S}$, and 

  \begin{thm}[Converse]
 Fix a probability measure $P_{\bar Y}$. Assume that the distribution of  $\imath_{X; \bar Y}(x; Y)$ does not depend on $x \in \mathcal X$ under either $P_{Y|X = x}$ or $P_{\bar Y}$. 
  Then, the existence of an $(\epsilon, L, Q_S)$ list code requires that
  \begin{equation}
  \beta_{1 - \epsilon} (P_S P_{Y|X = x}, Q_S P_{\bar Y}) \leq L \label{eq:Chtsym}
  \end{equation}
  where $x \in \mathcal X$ is arbitrary.
  \label{thm:Chtsym}
  \end{thm}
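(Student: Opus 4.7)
The plan is to derive Theorem~\ref{thm:Chtsym} directly from the meta-converse Theorem~\ref{thm:Cht} by exploiting the symmetry hypothesis to eliminate the dependence on the encoder $P_{X|S}$ and on the outer supremum over $P_{\bar Y}$. Concretely, for the specific $P_{\bar Y}$ named in the statement, I would first specialize the sup over $P_{\bar Y}$ in \eqref{eq:Cht} to that single choice, obtaining
\begin{equation*}
\inf_{P_{X|S}} \beta_{1-\epsilon}\bigl(P_S P_{X|S} P_{Y|X},\, Q_S P_{X|S} P_{\bar Y}\bigr) \;\leq\; L,
\end{equation*}
since the inner supremum is no smaller than the value at any individual $P_{\bar Y}$. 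It then suffices to show that the integrand on the left is in fact independent of $P_{X|S}$ and equals $\beta_{1-\epsilon}(P_S P_{Y|X=x}, Q_S P_{\bar Y})$ for an arbitrary $x$.

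The key calculation is the Radon--Nikodym derivative between the two measures on $\mathcal M \times \mathcal X \times \mathcal Y$:
\begin{equation*}
\log \frac{d(P_S P_{X|S} P_{Y|X})}{d(Q_S P_{X|S} P_{\bar Y})}(s,x,y) \;=\; \log\frac{P_S(s)}{Q_S(s)} \;+\; \imath_{X;\bar Y}(x;y),
\end{equation*}
a sum of a function of $s$ only and the information density $\imath_{X;\bar Y}(X;Y)$. By the symmetry hypothesis, $\imath_{X;\bar Y}(x;Y)$ has the same distribution for every $x$ both when $Y \sim P_{Y|X=x}$ and when $Y \sim P_{\bar Y}$. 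Thus, under $P_S P_{X|S} P_{Y|X}$, the joint law of $(S, \imath_{X;\bar Y}(X;Y))$ is the product of $P_S$ with that common ``numerator'' distribution; under $Q_S P_{X|S} P_{\bar Y}$, the joint law is analogously a product involving $Q_S$ and the common ``denominator'' distribution. In both cases the choice of $P_{X|S}$ drops out after marginalization over $X$, and hence so does the distribution of the log-likelihood ratio under each measure.

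Since $\beta_\alpha(P,Q)$ depends on $(P,Q)$ only through the distributions of $\log\frac{dP}{dQ}$ under $P$ and under $Q$ (a direct consequence of the Neyman--Pearson lemma, which produces an optimal randomized threshold test of $\log\frac{dP}{dQ}$), the displayed infimum is attained by any $P_{X|S}$, in particular by a point mass at any $x \in \mathcal X$; the corresponding value is exactly $\beta_{1-\epsilon}(P_S P_{Y|X=x}, Q_S P_{\bar Y})$, giving \eqref{eq:Chtsym}.

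The only mildly delicate point is that $Q_S$ is merely $\sigma$-finite, so the object $Q_S P_{X|S} P_{\bar Y}$ is not a probability measure; I would note that $\beta$ is nevertheless defined for such $Q$ (as already used in Theorem~\ref{thm:Cht}) and that the Neyman--Pearson characterization extends verbatim, so the ``$\beta$ depends only on the law of $\log\frac{dP}{dQ}$'' principle continues to hold. I expect this verification to be the main technical obstacle, but it is essentially bookkeeping rather than a substantive difficulty.
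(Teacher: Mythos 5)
Your proof is correct and takes essentially the same route as the paper: both reduce \eqref{eq:Chtsym} to Theorem \ref{thm:Cht} and use the Neyman--Pearson lemma together with the symmetry hypothesis to show that $\beta_{1-\epsilon}(P_S P_{X|S} P_{Y|X},\, Q_S P_{X|S} P_{\bar Y})$ is independent of the encoder, so it can be evaluated at a point mass $P_{X|S} = 1_X(x)$. The only (inessential) difference is packaging: the paper establishes encoder-independence through the conditional probabilities of the optimal test given $(S,X)$, whereas you establish it through the law of the log-likelihood ratio $\log\frac{dP_S}{dQ_S}(S) + \imath_{X;\bar Y}(X;Y)$ under both measures, which is the same underlying argument.
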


\begin{proof}
The Neyman-Pearson lemma (e.g. \cite{poor1994introduction}) implies that the outcome of the optimum binary hypothesis test between $P$ and $Q$ only depends on the observation through $\frac {dP}{dQ}$. In particular, the optimum binary hypothesis test $W^\star$ for deciding between $P_S P_{X|S} P_{Y|X}$ and $Q_S P_{X|S} P_{\bar Y}$ satisfies
\begin{equation}
W^\star - (S,  \imath_{X; \bar Y}(X; Y) ) - (S, X, Y) \label{eq:-CsymW}
\end{equation}
For all $s \in \mathcal M$, $x \in \mathcal X$, we have
   \begin{align}
   &~\Prob{W^\star = 1|  S = s, X = x}
   \notag \\
   =&~  \E{\Prob{W^\star = 1| X = x, S = s, Y}} \\
   =&~  \E{\Prob{W^\star = 1|  S = s,  \imath_{X; \bar Y}(X; Y) = \imath_{X; \bar Y}(x; Y)}}\label{eq:-Csym1} \\
   =&~ \sum_{y \in \mathcal Y} P_{Y|X}(y|x) P_{W^\star|S, \, \imath_{X; \bar Y}(X; Y)}(1 |s, \imath_{X; \bar Y}(x; y) ) \label{eq:-Csym2}\\
   =&~ \Prob{W^\star = 1|  S = s} \label{eq:-Csym3}
   \end{align}
   and
   \begin{equation}
    \mathbb Q\left[W^\star = 1|  S = s, X = x\right] = \mathbb Q\left[W^\star = 1|  S = s \right] \label{eq:-Csym3a}
   \end{equation}
   where
\begin{itemize}
 \item \eqref{eq:-Csym1} is due to \eqref{eq:-CsymW},
 \item \eqref{eq:-Csym2} uses the Markov property $S - X - Y$,
 \item \eqref{eq:-Csym3} follows from the symmetry assumption on the distribution of $\imath_{X; \bar Y}(x, Y)$,
  \item \eqref{eq:-Csym3a} is obtained similarly to \eqref{eq:-Csym2}.
\end{itemize}
 Since \eqref{eq:-Csym3}, \eqref{eq:-Csym3a} imply that the optimal test achieves the same performance (that is, the same $\Prob{W^\star = 1}$ and $\mathbb Q\left[W^\star = 1\right]$) regardless of $P_{X|S}$, we choose $P_{X|S} = 1_X(x)$ for some $x \in\mathcal X$ in the left side of \eqref{eq:Cht} to obtain \eqref{eq:Chtsym}. 
\end{proof}
\begin{remark}
 In the case of finite channel input and output alphabets, the channel symmetry assumption of Theorem \ref{thm:Chtsym} holds, in particular, if the rows of the channel transition probability matrix are permutations of each other, and $P_{\bar Y^n}$ is the equiprobable distribution on the ($n$-dimensional) channel output alphabet,  which, coincidentally, is also the capacity-achieving output distribution. For Gaussian channels with equal power constraint, which corresponds to requiring the channel inputs to lie on the power sphere, any spherically-symmetric $P_{\bar Y^n}$ satisfies the assumption of Theorem \ref{thm:Chtsym}. 
\end{remark}

\section{Achievability}
\label{sec:A}
Given a source code $(\mathsf f_{\mathrm s}^{(M)}, \mathsf g_{\mathrm s}^{(M)})$ of size $M$, and a channel code  $(\mathsf f_{\mathrm c}^{(M)}, \mathsf g_{\mathrm c}^{(M)})$ of size $M$, we may concatenate them to obtain the following sub-class of the source-channel codes introduced in Definition \ref{defn:jscc}:
\begin{defn}
 An $(M, d, \epsilon)$ source-channel code is a $(d, \epsilon)$ source-channel code such that the encoder and decoder mappings satisfy
\begin{align}
 \mathsf f &= \mathsf f_{\mathrm c}^{(M)} \circ \mathsf f_{\mathrm s}^{(M)}\\
 \mathsf g &= \mathsf g_{\mathrm c}^{(M)} \circ \mathsf g_{\mathrm s}^{(M)}
\end{align}
where
\begin{align}
 \mathsf f_{\mathrm s}^{(M)} &\colon \mathcal M \mapsto \left\{1, \ldots, M \right\}\\
 \mathsf f_{\mathrm c}^{(M)} &\colon \left\{1, \ldots, M \right\} \mapsto \mathcal X\\
 \mathsf g_{\mathrm c}^{(M)} &\colon \mathcal Y \mapsto \left\{1, \ldots, M \right\}\\
 \mathsf g_{\mathrm s}^{(M)} &\colon \left\{1, \ldots, M \right\} \mapsto \widehat {\mathcal M}
\end{align}
(see Fig. \ref{fig:Mjscc}).
\label{defn:jsccM}
\end{defn}
\begin{figure}[htbp]
\begin{center}
    \epsfig{file=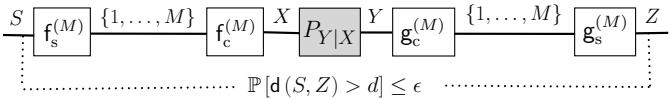,width= 1\linewidth}
\caption{An $(M, d, \epsilon)$ joint source-channel code. }
\label{fig:Mjscc}
\end{center}
\end{figure}
Note that an $(M, d, \epsilon)$ code is an $(M+1, d, \epsilon)$ code. 

The conventional separate source-channel coding paradigm corresponds to the special case of Definition \ref{defn:jsccM} in which the source code $(\mathsf f_{\mathrm s}^{(M)}, \mathsf g_{\mathrm s}^{(M)})$ is chosen without knowledge of $P_{Y|X}$ and the channel code $(\mathsf f_{\mathrm c}^{(M)}, \mathsf g_{\mathrm c}^{(M)})$ is chosen without knowledge of $P_{S}$ and the distortion measure $\mathsf d$. A pair of source and channel codes is separation-optimal if the source code is chosen so as to minimize the distortion (average or excess) when there is no channel, whereas the channel code is chosen so as to minimize the worst-case (over source distributions) average error probability:
\begin{equation}
 \max_{P_U} \Prob{ U \neq \mathsf g_{\mathrm c}^{(M)}(Y)} \label{eq:worstcaseerror}
\end{equation}
where $X = \mathsf f_{\mathrm c}^{(M)}(U)$ and $U$ takes values on $\left\{1, \ldots, M\right\}$. 
If both the source and the channel code are chosen separation-optimally for their given sizes, the separation principle guarantees that under certain quite general conditions (which encompass the memoryless setting, see \cite{vembu1995sourcechannel}) the asymptotic fundamental limit of joint source-channel coding is achievable. In the finite blocklength regime, however, such SSCC construction is, in general, only suboptimal. Within the SSCC paradigm, we can obtain an achievability result by further optimizing with respect to the choice of $M$:

\begin{thm}[Achievability, SSCC]
Fix $P_{Y|X}$, $\mathsf{d}$ and $P_S$.
Denote by $\epsilon^\star ( M )$ the minimum achievable worst-case average error probability among all transmission codes of size $M$, and the minimum achievable probability of exceeding distortion $d$ with a source code of size $M$ by
$\epsilon^\star ( M, d)$.

Then, there exists a $(d, \epsilon)$ source-channel code with
\begin{equation}
\epsilon \leq \min_M \{ \epsilon^\star (M) +\epsilon^\star ( M, d) \} \label{eq:ASeparate}
\end{equation}
\end{thm}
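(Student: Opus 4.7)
The plan is a straightforward union-bound argument combining a separation-optimal source code with a separation-optimal channel code, exploiting precisely the worst-case definition of $\epsilon^\star(M)$ in \eqref{eq:worstcaseerror}.

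Fix $M$. Let $(\mathsf f_{\mathrm s}^{(M)}, \mathsf g_{\mathrm s}^{(M)})$ be a size-$M$ source code attaining $\Prob{\mathsf d(S, \mathsf g_{\mathrm s}^{(M)}(\mathsf f_{\mathrm s}^{(M)}(S))) > d} = \epsilon^\star(M, d)$, and let $(\mathsf f_{\mathrm c}^{(M)}, \mathsf g_{\mathrm c}^{(M)})$ be a size-$M$ channel code attaining worst-case average error probability $\epsilon^\star(M)$. Concatenate them into a $(d,\epsilon)$ source-channel code as in Definition \ref{defn:jsccM}. Write $U = \mathsf f_{\mathrm s}^{(M)}(S)$, $X = \mathsf f_{\mathrm c}^{(M)}(U)$, $\hat U = \mathsf g_{\mathrm c}^{(M)}(Y)$, and observe that the end-to-end reproduction is $\hat S = \mathsf g_{\mathrm s}^{(M)}(\hat U)$.

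Next I would identify the two error events: the channel event $\mathcal E_{\mathrm c} = \{\hat U \neq U\}$ and the source event $\mathcal E_{\mathrm s} = \{\mathsf d(S, \mathsf g_{\mathrm s}^{(M)}(U)) > d\}$. On the complement of $\mathcal E_{\mathrm c} \cup \mathcal E_{\mathrm s}$ one has $\hat U = U$, hence $\mathsf d(S, \hat S) = \mathsf d(S, \mathsf g_{\mathrm s}^{(M)}(U)) \leq d$. Therefore
\begin{align}
\Prob{\mathsf d(S, \hat S) > d}
&\leq \Prob{\mathcal E_{\mathrm c}} + \Prob{\mathcal E_{\mathrm s}}.
\end{align}
By construction, $\Prob{\mathcal E_{\mathrm s}} = \epsilon^\star(M,d)$. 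The crucial point for the first term is that $U$ has \emph{some} distribution on $\{1,\ldots,M\}$ induced by $P_S$ through $\mathsf f_{\mathrm s}^{(M)}$; hence $\Prob{\mathcal E_{\mathrm c}}$ is upper bounded by the maximum over all input distributions $P_U$ of $\Prob{U \neq \mathsf g_{\mathrm c}^{(M)}(Y)}$, which is exactly $\epsilon^\star(M)$ by \eqref{eq:worstcaseerror}. Combining the two bounds and minimizing over $M$ yields \eqref{eq:ASeparate}.

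There is essentially no obstacle here; the only non-trivial step is recognizing that the ordinary average-error-probability definition of $\epsilon^\star(M)$ (uniform messages) would be insufficient, because the source encoder generally does not produce uniform outputs, whereas the worst-case formulation in \eqref{eq:worstcaseerror} is precisely what is needed to decouple the source and channel analyses.
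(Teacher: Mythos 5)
Your proof is correct and is exactly the argument the paper intends: the theorem is stated without an explicit proof precisely because this concatenation-plus-union-bound reasoning is immediate, with the worst-case error definition in \eqref{eq:worstcaseerror} handling the non-uniform source-encoder output as you note. Nothing is missing.
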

Bounds on $\epsilon^\star(M)$ and $\epsilon^\star ( M, d)$ have been obtained recently in \cite{polyanskiy2010channel} and \cite{kostina2011fixed}, respectively.\footnote{As the maximal (over source outputs) error probability cannot be lower than the worst-case error probability, the maximal error probability achievability bounds of \cite{polyanskiy2010channel} apply to bound $\epsilon^\star ( M )$. Moreover, the random coding union (RCU) bound on average error probability of \cite{polyanskiy2010channel}, although stated assuming equiprobable source, is oblivious to the distribution of the source and thus upper-bounds the worst-case average error probability 
$\epsilon^\star ( M )$ as well.  }

Definition \ref{defn:jsccM} does not rule out choosing the source code based on the knowledge of $P_{Y|X}$ or the channel code based on the knowledge of $P_S$, $\mathsf d$ and $d$. 
One of the interesting conclusions in the present paper is that the optimal dispersion of JSCC is achievable within the class of $(M, d, \epsilon)$ source-channel codes introduced in Definition \ref{defn:jsccM}. However, the dispersion achieved by the conventional SSCC approach is in fact suboptimal. 

To shed light on the reason behind the suboptimality of SSCC at finite blocklength despite its asymptotic optimality, we recall the reason SSCC achieves the asymptotic fundamental limit. The output of the optimum source encoder is, for large $k$,  approximately equiprobable over a set of roughly $\exp\left(k R(d)\right)$ distinct messages, which allow to represent most of the source outcomes within distortion $d$.  From the channel coding theorem we know that there exists a channel code that is capable of distinguishing, with high probability, $M = \exp\left(k R(d)\right) < \exp\left(n C\right)$ messages when equipped with the maximum likelihood decoder. Therefore, a simple concatenation of the source code and the channel code achieves vanishing probability of distortion exceeding $d$, for any $d > D\left(\frac{nC}{k}\right)$.  
However, at finite $n$, the output of the optimum source encoder need not be nearly equiprobable, so there is no reason to expect that a separated scheme employing a  maximum-likelihood channel decoder, which does not exploit unequal message probabilities, would achieve near-optimal non-asymptotic performance. Indeed, in the non-asymptotic regime the gain afforded by taking into account the residual encoded source redundancy at the channel decoder is appreciable. The following achievability result, obtained using independent random source codes and random channel codes within the paradigm of Definition \ref{defn:jsccM}, capitalizes on this intuition.

\begin{thm}[Achievability]
There exists a $(d, \epsilon)$ source-channel code with
\begin{align}
\epsilon \leq \inf_{P_{X}, P_{Z}, P_{W|S}} \bigg\{&
\E{ \exp\left( - \left| \imath_{X; Y}(X; Y )  - \log W
\right|^+ \right)}
\notag\\&
+  \E{  \left(1 - P_{Z}(B_d(S)) \right)^{W} } \bigg\}\label{eq:Ag}
\end{align}
where the expectations are with respect to 
$P_S P_{X} P_{Y|X} P_{Z} P_{W|S}$
defined on 
$\mathcal M \times \mathcal X \times \mathcal Y \times \widehat {\mathcal M} \times \mathbb N$, where $\mathbb N$ is the set of natural numbers.
\label{thm:Ag}
\end{thm}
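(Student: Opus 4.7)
The plan is a random-coding argument that designs an ensemble of correlated source-representation and channel-input codewords, with the auxiliary variable $W$ acting as a source-dependent effective codebook size for unequal error protection. Generate an i.i.d.\ sequence of codeword pairs $\{(X_i, Z_i)\}_{i \geq 1}$ from $P_X \otimes P_Z$. Given source realization $s$, the encoder draws $W \sim P_{W\mid S = s}$ from shared randomness and searches for the smallest $m \in \{1, \dots, W\}$ with $Z_m \in B_d(s)$, transmitting $X_m$ if such an index exists. The decoder applies maximum information-density decoding restricted to the first $W$ indices, $\hat m = \arg\max_{m' \leq W} \imath_{X;Y}(X_{m'}; y)$, and outputs $Z_{\hat m}$.

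The excess-distortion event is contained in the union of a covering-failure event $A = \{Z_i \notin B_d(S)\colon \forall\, i \leq W\}$ and a channel-failure event $\{A^c,\ Z_{\hat m} \notin B_d(S)\}$. Since the $Z_i$ are drawn i.i.d.\ $\sim P_Z$ independently of $(S, W)$, one has $\Prob{A \mid S, W} = (1 - P_Z(B_d(S)))^W$, which after averaging over $S$ and $W$ produces the second term of \eqref{eq:Ag}. For the channel-failure event, condition on $A^c$ and on the transmitted index $m$: the event implies that some other $m' \in \{1, \dots, W\} \setminus \{m\}$ satisfies $\imath_{X;Y}(X_{m'}; Y) \geq \imath_{X;Y}(X_m; Y)$. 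A union bound over $m'$ combined with the standard change-of-measure bound $\Prob{\imath_{X;Y}(\bar X; Y) \geq \gamma \mid Y} \leq \exp(-\gamma)$ for an independent $\bar X \sim P_X$ yields a conditional probability at most $\min(1,\, W \exp(-\imath_{X;Y}(X;Y))) = \exp(-|\imath_{X;Y}(X;Y) - \log W|^+)$; averaging over the ensemble contributes the first term of \eqref{eq:Ag}. Taking the infimum over $(P_X, P_Z, P_{W\mid S})$ completes the proof.

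The main obstacle is the clean separation of source- and channel-failure contributions when $W$ is itself random and tied to $S$. One must verify that the RCU-style union bound continues to produce the compact form $\exp(-|\imath_{X;Y}(X;Y) - \log W|^+)$ when the number of competing codewords is a random variable, which is handled by conditioning on $(S, W)$ and using that the codeword ensemble $\{(X_i, Z_i)\}$ is generated independently of $(S, W)$, so that the usual RCU analysis applies pathwise. A secondary subtlety is the decoder's knowledge of $W$, which is accommodated by common randomness and incurs no additional cost in the bound.
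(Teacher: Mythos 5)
Your error analysis of the two terms is essentially sound, but the scheme itself is not a legitimate $(d,\epsilon)$ source-channel code, and the flaw is precisely at the point you dismiss as a ``secondary subtlety.'' Your decoder computes $\hat m = \arg\max_{m' \leq W} \imath_{X;Y}(X_{m'}; y)$, i.e., it restricts attention to the first $W$ indices. But $W$ is drawn from $P_{W|S=s}$ and is therefore correlated with the source realization $S$ (indeed, in the application that makes Theorem \ref{thm:Ag} useful, Theorem \ref{thm:A} takes $W = \lfloor \gamma / P_Z(B_d(S))\rfloor$, a deterministic function of $S$). Common randomness shared between encoder and decoder is by definition independent of $S$; it can supply the auxiliary uniform variable used to realize $P_{W|S}$, but the realization $W = f(S,V)$ still requires $S$, which the decoder does not have. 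Giving the decoder $W$ is therefore a genie argument: it hands the decoder side information about $S$ that is not conveyed through the channel, and it cannot be removed by the usual derandomization over codebooks because $W$ is drawn per source realization, not fixed with the code. Without knowledge of $W$, your decoder degenerates to ML over all $M$ candidates, and the union bound produces $\exp(-|\imath_{X;Y}(X;Y)-\log M|^+)$ rather than $\exp(-|\imath_{X;Y}(X;Y)-\log W|^+)$ --- i.e., only the SSCC-type bound \eqref{eq:AgSeparate}, which is exactly what the theorem is meant to improve upon.

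The paper's proof avoids this by never letting the decoder use $W$: the channel decoder is the approximate MAP rule \eqref{eq:MAPapprox}, which weights $P_{Y|X}(y|x_j)$ by the induced message prior $P_{U|Z^M}(j|z^M)$ --- a quantity computable from the codebooks and the known distributions $P_S$, $P_{W|S}$, with no dependence on the realized $(S,W)$. The gain from $\log M$ down to $\log W$ then emerges in the analysis (via the union bound with weights $P_{U|Z^M}(j|z^M)/P_{U|Z^M}(m|z^M)$, followed by Jensen's inequality and the Markov identity $P_{U|X,Y,Z^M,1\{U\le W\}} = P_{U|Z^M,1\{U\le W\}}$), not from decoder side information. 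So the missing idea in your proposal is the unequal-error-protection decoder that exploits the skewed prior on the message index; your covering-failure term and the pathwise RCU computation conditioned on $(S,W)$ are fine and mirror the paper, but they only yield \eqref{eq:Ag} once the decoder is replaced by one that is actually implementable from $Y$ alone.
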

\begin{proof}
Fix a positive integer $M$.  Fix a positive integer-valued random variable $W$ that depends on other random variables only through $S$ and that satisfies $ W \leq M$. We will construct a code with separate
encoders for source and channel and separate
decoders for source and channel as in Definition \ref{defn:jsccM}. 
We will perform a random coding analysis by choosing random independent source and channel codes which will lead to the conclusion that there exists an $(M, d, \epsilon)$ code with error probability $\epsilon$ guaranteed in \eqref{eq:Ag} with $W \leq M$. Observing that increasing $M$ can only tighten the bound in \eqref{eq:Ag} in which $W$ is restricted to not exceed $M$, we will let $M \to \infty$ and conclude, by invoking the bounded convergence theorem, that the support of $W$ in \eqref{eq:Ag} need not be bounded. 
\par
\textit{Source Encoder.}
Given an ordered list of
representation points 
$z^M = (z_1, \ldots , z_M ) \in \widehat {\mathcal M}^M$,
and having observed the source outcome $s$, the (probabilistic) source encoder generates $W$ from $P_{W|S = s}$ and
selects the lowest index 
$m \in \{1, \ldots , W\}$
such that  $s$ is within distance $d$ of $z_m$. If no such index can be found, the source encoder outputs a pre-selected arbitrary index, e.g. $M$. Therefore,
\begin{equation}
 \mathsf f_{\mathrm s}^{(M)}(s) = 
\begin{cases}
 \min\{m, W\} & \mathsf d(s, z_m) \leq d < \displaystyle{\min_{i = 1, \ldots, m -1} }\mathsf d(s, z_i)\\
M & d < \min_{i = 1, \ldots, W} \mathsf d(s, z_i)
\end{cases}
\label{eq:-Afs}
\end{equation}
In a good $(M, d, \epsilon)$ JSCC code, $M$ would be chosen so large that with overwhelming probability, a source outcome would be encoded successfully within distortion $d$. It might seem counterproductive to let the source encoder in \eqref{eq:-Afs} give up before reaching the end of the list of representation points, but in fact, such behavior helps the channel decoder by skewing the distribution of $\mathsf f_{\mathrm s}^{(M)}(S)$. 
\par
\textit{Channel Encoder.}
Given a codebook $(x_1, \ldots, x_M) \in \mathcal{X}^M $,
the channel encoder outputs $x_m$ if
$m$ is the output of the source encoder: 
\begin{equation}
 \mathsf f_{\mathrm c}^{(M)}(m) = x_m
\end{equation}
\par
\textit{Channel Decoder.}
Define the random variable $U \in \left\{ 1, \ldots, M+1 \right\}$ which is a function of $S$, $W$ and $z^{M}$ only:
\begin{equation}
 U =
\begin{cases}
  \mathsf f_{\mathrm s}^{(M)}(S) & \mathsf d(S, \mathsf g_{\mathrm s} (\mathsf f_{\mathrm s}(S) ) \leq d\\
  M+1 & \text{otherwise}
\end{cases}
\end{equation}
Having observed $y \in \mathcal Y$, the channel decoder chooses arbitrarily among the members of the set:
\beq
\mathsf g_{\mathrm c}^{(M)}(y) = m \in \arg\max_{j \in \{1, \ldots M\}} P_{U|Z^M}(j | z^M) P_{Y|X}(y|x_j) \label{eq:MAPapprox}
\eeq
A MAP decoder would multiply $P_{Y|X}(y|x_j)$ by $P_{X}(x_j)$. While that decoder would be too hard to analyze, the product in \eqref{eq:MAPapprox} is a good approximation because $P_{U|Z^M}(j|z^M)$ and $P_X(x_j)$ are related by
\begin{align}
P_X(x_j) 
&= \sum_{m \colon x_m = x_j} P_{U|Z^M}(m|z^M)
\notag \\ 
 &+ P_{U|Z^M}(M+1|z^M) 1\left\{ j = M\right\}
\end{align}
so the decoder in \eqref{eq:MAPapprox} differs from a MAP decoder only when either several $x_m$ are identical, or there is no representation point among the first $W$ points within distortion $d$ of the source, both unusual events. 
\par
\textit{Source Decoder.} The source decoder outputs $z_{m}$ if $m$ is the output of the channel decoder:
\begin{equation}
\mathsf g_{\mathrm s}^{(M)} (m) = z_m 
\end{equation}
\par
\textit{Error Probability Analysis.}
We now proceed to analyze the performance of the code described above.
If there were no source encoding error, a channel decoding error can occur if and only if
\begin{align}
&~\exists j \neq m \colon 
\notag \\
&~ P_{U|Z^M}(j | z^M) P_{Y|X}(Y|x_j) \geq  P_{U|Z^M}(m | z^M) P_{Y|X}(Y|x_{m}) 
\end{align}
 Let the channel codebook $(X_1, \ldots, X_{M})$ be drawn i.i.d. from $P_X$, and independent of the source codebook $(Z_1, \ldots, Z_{M}) $, which is drawn i.i.d. from $P_{Z}$.  Denote by $\epsilon(x^{M}, z^{M})$ the excess-distortion probability attained with the source codebook $z^{M}$ and the channel codebook $x^{M}$. 
 Conditioned on the event $\left\{\mathsf d(S, \mathsf g_{\mathrm s} (\mathsf f_{\mathrm s}(S) ) \leq d\right\} = \left\{ U \leq W \right\} = \left\{ U \neq M + 1\right\} $ (no failure at the source encoder), the probability of excess distortion is upper bounded by the probability that the channel decoder does not choose $\mathsf f_{\mathrm s}^{(M)}(S)$, so
\begin{align}
&~\epsilon(x^{M}, z^{M})  
\notag \\
\leq
&~
\sum_{m = 1}^M P_{U|Z^M} \left(m | z^m\right)
\notag \\
\cdot &~\Prob{\bigcup_{j \neq m} \left\{\frac{ P_{U|Z^M}(j|z^M) P_{Y|X}(Y|x_j)}{P_{U|Z^M}(m|z^M) P_{Y|X}(Y|x_m)} \geq 1 \right\}\mid X = x_m} 
\notag\\&
+ P_{U | Z^{M} }(U > W | z^M)  \label{eq:-Aa}
\end{align}
We now average \eqref{eq:-Aa} over the source and channel codebooks. 
Averaging the $m$-th term of the sum in \eqref{eq:-Aa} with respect to the channel codebook yields
 \beq
P_{U|Z^M} \left(m | z^m\right) \Prob{\bigcup_{j \neq m} \left\{\! \frac{P_{U|Z^M}(j | z^M) P_{Y|X}(Y|X_j)}{P_{U|Z^M}(m |z^M) P_{Y|X}(Y|X_m)} \geq 1 \!\right\}\!} \label{eq:-Ab}
 \eeq
 where $Y, X_1, \ldots, X_M$ are distributed according to
\begin{equation}
 P_{Y X_1 \ldots X_m}(y, x_1, \ldots, x_M) = P_{Y|X_m}(y|x_m) \prod_{j \neq m} P_X(x_j)
\end{equation}
 
  Letting $\bar X$ be an independent copy of $X$ and applying the union bound to the probability in \eqref{eq:-Ab}, we have that for any given $(m, z^M)$, 
 \begin{align}
 &~\Prob{\bigcup_{j \neq m} \left\{\frac{P_{U | Z^M}(j | z^M) P_{Y|X}(Y|X_j)}{P_{U | Z^M}(m | z^M) P_{Y|X}(Y|X_m)} \geq 1\right\} } \notag\\
 \leq&~ 
 \mathbb E \Bigg[ \min\Bigg\{ 1, 
 \notag\\
 &~ 
 \sum_{j = 1}^M \Prob{\frac{P_{U | Z^M}(j | z^M) P_{Y|X}(Y|\bar X)}{P_{U | Z^M}(m | z^M) P_{Y|X}(Y|X)} \geq 1 \mid X, Y}\Bigg\}
 \Bigg]\\
 \leq&~ \E{ \min\left\{ 1, \sum_{j = 1}^M \frac{P_{U | Z^M}(j | z^M)}{P_{U | Z^M}(m | z^M)} \frac{\E{ P_{Y|X}(Y|\bar X)|Y}}{ P_{Y|X}(Y|X)}\right\} } \label{eq:-Ac}\\
 =&~  \E{ \min\left\{ 1, \sum_{j = 1}^M \frac{P_{U | Z^M}(j | z^M)}{P_{U | Z^M}(m | z^M)} \frac{P_Y(Y)}{ P_{Y|X}(Y|X)}\right\}} \\
 =&~ \E{ \min\left\{ 1, \frac{ \Prob{U \leq W \mid Z^M = z^M}}{P_{U | Z^M}(m | z^M)} \frac{P_{Y}(Y)}{P_{Y|X}(Y|X)}\right\}} \label{eq:-Ach1}\\
  =&~ \E{ \min\left\{ 1, \frac{1}{P_{U | Z^M, 1\left\{U \leq W\right\} }(m | z^M, 1)} 
  \frac{P_{Y}(Y)}{P_{Y|X}(Y|X)}\right\}
  } \label{eq:-Ach}
 \end{align}
 where \eqref{eq:-Ac} is due to $1\{a \geq 1\} \leq a$.
 
 Applying \eqref{eq:-Ach} to \eqref{eq:-Aa} and averaging with respect to the source codebook, we may write
\begin{align}
&~\E{ \epsilon(X^{M}, Z^{M})} 
%\notag\\
\leq
%&~ 
\E{ \min\left\{ 1, G \right\} 
} 
%\notag\\
+
%&~ 
\Prob{U > W }  \label{eq:-Abasic}
\end{align}
where for brevity we denoted the random variable
\begin{equation}
 G = \frac{1}{P_{U | Z^M, 1\left\{U \leq W\right\} }(U | Z^M, 1)}   \frac{P_{Y}(Y)}{P_{Y|X}(Y|X)}   
 \end{equation}
The expectation in the right side of \eqref{eq:-Abasic} is with respect to $P_{Z^M}P_{U|Z^M}  P_{W|UZ^M} P_{X} P_{Y|X} $. It is equal to
\begin{align}
%&~\E{ \min\left\{ 1, G \right\} } \notag\\
&~ \E{ \E{\min\left\{ 1, G \right\}  \mid X, Y, Z^M,  1 \left\{ U \leq W \right\}}  } \notag \\
\leq&~
\E{ \min \left\{ 1, \E{G \,\mid X, Y, Z^M, 1 \left\{ U \leq W \right\} \right\}   }  } \label{eq:-Aminconcave}\\
= &~ \E{ \min \left\{ 1, W \frac{P_{Y}(Y)}{P_{Y|X}(Y|X)}   \right\}}  \label{eq:-AUmarkov}\\
=&~ \E{\exp\left( - \left| \imath_{X; Y}(X; Y) - \log W \right|^+ \right) } \label{eq:-AE}
\end{align}
 where 
\begin{itemize}
 \item \eqref{eq:-Aminconcave} applies Jensen's inequality to the concave function $\min\{1, a\}$;
 \item \eqref{eq:-AUmarkov} uses 
$P_{U| X,Y, Z^M, 1\left\{U \leq W\right\}}
 = P_{U|Z^M, 1\left\{U \leq W\right\}}
 $;
  \item \eqref{eq:-AE} is due to $\min\{1, a\} = \exp \left(- \left| \log \frac 1 a \right|^+\right)$, where $a$ is nonnegative.
\end{itemize}
To evaluate the probability in the right side of \eqref{eq:-Abasic}, note that conditioned on $S = s$, $W = w$, $U$ is distributed as:
\begin{equation}
 P_{U|S, W}(m|s, w) = 
\begin{cases}
 \rho(s) (1 - \rho(s))^{m - 1} &m = 1, 2, \ldots, w\\
 (1 - \rho(s))^w &m = M + 1
\end{cases}
   \label{eq:-As} \\
\end{equation}
where we denoted for brevity
\beq
\rho(s) = P_{Z}(B_d(s)) \label{eq:rho}
\eeq
Therefore, 
\begin{align}
 \Prob{U > W } &= \E{ \Prob{U > W| S, W} }\\
 &= \E{  \left(1 - \rho(S) \right)^{W} } \label{eq:-AProb}
\end{align}
Applying \eqref{eq:-AE} and \eqref{eq:-AProb} to \eqref{eq:-Abasic} and invoking Shannon's random coding argument, \eqref{eq:Ag} follows.
 \end{proof}
\begin{remark}
 As we saw in the proof of Theorem \ref{thm:Ag}, if we restrict $W$ to take values on $\left\{1, \ldots, M\right\}$, then the bound on the error probability $\epsilon$ in \eqref{eq:Ag} is achieved in the class of $(M, d, \epsilon)$ codes. The code size $M$ that leads to tight achievability bounds following from Theorem \ref{thm:Ag} is in general much larger than the size that achieves the minimum in \eqref{eq:ASeparate}. In that case, $M$ is chosen so that $\log M$ lies  between $k R(d)$  and $n C$ so as to minimize the sum of source and channel decoding error probabilities without the benefit of
a channel decoder that exploits residual source redundancy. 
In contrast, Theorem \ref{thm:A} is obtained with an approximate MAP decoder that allows a larger choice for $\log M$, even beyond $nC$.
Still we can achieve a good $(d, \epsilon)$ tradeoff because the channel code employs unequal error protection: those codewords with higher probabilities are more reliably decoded. 
\label{rem:Mcodes}
\end{remark}
 
\begin{remark}
Had we used the ML channel decoder in lieu of \eqref{eq:MAPapprox} in the proof of Theorem \ref{thm:Ag}, we would conclude that a $(d, \epsilon)$ code exists with
 \begin{align}
 \epsilon \leq \inf_{P_{X}, P_{Z}, M} \bigg\{&
\E{ \exp\left( - \left| \imath_{X; Y}(X; Y )  - \log (M-1) 
\right|^+ \right)}
\notag\\&
+  \E{  \left(1 - P_{Z}(B_d(S)) \right)^M } \bigg\}\label{eq:AgSeparate}
\end{align}
 which corresponds to the SSCC bound in \eqref{eq:ASeparate} with  the worst-case average channel error probability $\epsilon^\star(M)$ upper bounded using the random coding union (RCU) bound of \cite{polyanskiy2010channel} and the source error probability $\epsilon^\star(M, d)$ upper bounded using the random coding achievability bound of \cite{kostina2011fixed}.
 \end{remark}
\begin{remark}
Weakening \eqref{eq:Ag} by letting $W = M$, we obtain a slightly looser version of \eqref{eq:AgSeparate} in which $M-1$ in the exponent is replaced by $M$. To get a generally tighter bound than that afforded by SSCC, a more intelligent choice of $W$ is needed, as detailed next in Theorem \ref{thm:A}. 
\label{rem:Aweaken}
\end{remark}
\begin{thm}[Achievability]
There exists a $(d, \epsilon)$ source-channel code with
\begin{align}
\epsilon \leq 
\notag \\
\inf_{P_{X}, P_{Z}, \gamma > 0} \bigg\{ &
\E{ \exp\left( - \left| \imath_{X; Y}(X; Y )  - \log \frac{\gamma}{P_{Z}(B_d(S))}
\right|^+ \right)}
 \notag\\&
 + e^{1-\gamma}
 \bigg\} \label{eq:A}
\end{align}
where the expectation is with respect to 
$P_S P_{X} P_{Y|X} P_{Z}$
defined on 
$\mathcal M \times \mathcal X \times \mathcal Y \times \widehat {\mathcal M}$.
\label{thm:A}
\end{thm}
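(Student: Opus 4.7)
The plan is to obtain Theorem \ref{thm:A} as a specialization of Theorem \ref{thm:Ag} by a particular deterministic choice of $P_{W|S}$. First, I would observe that the asserted inequality is vacuous when $\gamma \leq 1$, because then $e^{1-\gamma} \geq 1 \geq \epsilon$; so only the case $\gamma > 1$ requires work.

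With $\rho(S) = P_{Z}(B_d(S))$ as in \eqref{eq:rho}, I would take $W$ to be the smallest positive integer satisfying $W\rho(S) \geq \gamma - 1$, namely
\begin{equation*}
W \;=\; \left\lceil \frac{\gamma - 1}{\rho(S)} \right\rceil,
\end{equation*}
with the convention $W = 1$ on the (typically null) event $\{\rho(S) = 0\}$, where the first term of \eqref{eq:A} already evaluates to $1$. The point of this choice is that it straddles the tradeoff between the two summands of \eqref{eq:Ag}. On one hand, combining $W\rho(S) \geq \gamma - 1$ with the elementary bound $(1-\rho)^W \leq e^{-W\rho}$ yields the pointwise inequality
\begin{equation*}
(1 - \rho(S))^W \;\leq\; e^{1-\gamma}.
\end{equation*}
On the other hand, the ceiling and the trivial bound $\rho(S) \leq 1$ give $W \leq (\gamma-1)/\rho(S) + 1 \leq \gamma/\rho(S)$, so $\log W \leq \log(\gamma/\rho(S))$; since $b \mapsto \exp(-|a - b|^{+}) = \min(1, e^{b-a})$ is nondecreasing in $b$, this delivers
\begin{equation*}
\exp\!\left(-\left|\imath_{X; Y}(X; Y) - \log W\right|^{+}\right) \;\leq\; \exp\!\left(-\left|\imath_{X; Y}(X; Y) - \log\tfrac{\gamma}{\rho(S)}\right|^{+}\right).
\end{equation*}

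Taking expectations of both pointwise bounds and substituting into \eqref{eq:Ag} collapses the second summand to the deterministic constant $e^{1-\gamma}$ and bounds the first summand by the expected exponential that appears in \eqref{eq:A}; the theorem then follows upon taking the infimum over $(P_X, P_Z, \gamma)$. The only substantive point to verify is that the interval $[(\gamma-1)/\rho(S),\, \gamma/\rho(S)]$, which has length $1/\rho(S) \geq 1$, always contains an integer, so the above $W$ is a legitimate $\mathbb{N}$-valued choice of the auxiliary variable in Theorem \ref{thm:Ag}. The main conceptual step -- and the only place ingenuity is needed -- is recognizing that the ``$-1$'' shift inside the ceiling is precisely what forces the exponent to be $1 - \gamma$ rather than the less informative $-\gamma$ one would obtain from the naive choice $W = \lceil \gamma/\rho(S)\rceil$.
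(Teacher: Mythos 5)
Your proof is correct and follows essentially the same route as the paper: the paper also specializes Theorem \ref{thm:Ag} with an $S$-measurable choice of $W$ of size roughly $\gamma/\rho(S)$, namely $W=\lfloor \gamma/\rho(S)\rfloor$ with the chain $(1-\rho(s))^{\lfloor \gamma/\rho(s)\rfloor}\le (1-\rho(s))^{\gamma/\rho(s)-1}\le e^{1-\gamma}$, so your shifted ceiling $W=\lceil (\gamma-1)/\rho(S)\rceil$ together with $W\le \gamma/\rho(S)$ and monotonicity of $\exp\left(-\left|\imath_{X;Y}-\log W\right|^+\right)$ in $W$ is only a cosmetic variation. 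If anything, your explicit handling of the edge cases $\gamma\le 1$ and $\rho(S)=0$ is slightly more careful than the paper's.
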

\begin{proof}
We fix an arbitrary $\gamma > 0$ and choose
\begin{equation}
 W = \left \lfloor \frac{\gamma}{\rho\left(S\right)} \right \rfloor \label{eq:-Aw}
\end{equation}
where $\rho(\cdot)$ is defined in \eqref{eq:rho}. Observing that
 \begin{align}
  (1 - \rho(s))^{\left \lfloor \frac {\gamma}{\rho(s)}\right \rfloor} &\leq (1 - \rho(s))^{\frac {\gamma}{\rho(s)} - 1}\\
  &\leq  e^{- \rho(s) \left( \frac{\gamma}{\rho(s)} - 1\right)} \\
  &\leq e^{1 -\gamma} \label{eq:-Af}
  \end{align}
we obtain \eqref{eq:A} by weakening \eqref{eq:Ag} using \eqref{eq:-Aw} and \eqref{eq:-Af}. 
\end{proof}

 In the case of almost-lossless JSCC, the bound in Theorem \ref{thm:A} can be sharpened as shown recently by Tauste Campo et al. \cite{campo2011random}.

\begin{thm}[Achievability, almost-lossless JSCC \cite{campo2011random}]
 There exists a $(0, \epsilon)$ code with
\begin{equation}
\epsilon \leq \inf_{P_X } \E{\exp\left( -| \imath_{X; Y}(X; Y) - \imath_S (S)|^+ \right)} \label{eq:Alossless}
\end{equation}
where the expectation is with respect to $P_S P_X P_{Y|X}$ defined on $\mathcal M \times \mathcal X \times \mathcal Y$.
\label{thm:Alossless}
\end{thm}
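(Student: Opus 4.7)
The plan is to bypass the two-step source-then-channel construction of Theorem \ref{thm:Ag} and employ a direct random-coding construction tailored to the lossless setting. For each source outcome $s \in \mathcal M$, independently draw a channel input $X(s) \sim P_X$, and decode by maximum a posteriori, $\mathsf g(y) \in \arg\max_{\hat s} P_S(\hat s) P_{Y|X}(y \mid X(\hat s))$. Since $d = 0$ with Hamming distortion forces $B_d(s) = \{s\}$, no quantization layer is needed, each source symbol gets its own codeword, and the MAP rule lets the decoder exploit the unequal source probabilities directly---which is precisely the feature that sharpens the lossless bound beyond the specialization of \eqref{eq:Ag} (that specialization would leave an $e^{-1}$ residual from the covering term).

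For the error analysis, condition on $S = s$, $X(s) = x$, $Y = y$, and apply the Gallager-style truncated union bound $\Prob{\bigcup_i A_i} \leq \min\{1, \sum_i \Prob{A_i}\}$. A decoding error requires some $\tilde s \neq s$ with $P_S(\tilde s) P_{Y|X}(y \mid X(\tilde s)) \geq P_S(s) P_{Y|X}(y \mid x)$, so letting $\bar X$ denote an independent copy drawn from $P_X$,
\[
\Prob{\mathrm{err} \mid s, x, y} \leq \min\left\{1, \; \sum_{\tilde s \neq s} \Prob{P_{Y|X}(y \mid \bar X) \geq \tfrac{P_S(s)}{P_S(\tilde s)} P_{Y|X}(y \mid x)}\right\}.
\]
Markov's inequality bounds each term by $\tfrac{P_S(\tilde s)}{P_S(s)} \tfrac{P_Y(y)}{P_{Y|X}(y \mid x)}$. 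The sum collapses because $\sum_{\tilde s \neq s} P_S(\tilde s) \leq 1$, yielding $\tfrac{1}{P_S(s)} \tfrac{P_Y(y)}{P_{Y|X}(y\mid x)} = \exp\bigl(-(\imath_{X;Y}(x;y) - \imath_S(s))\bigr)$. Combining with the leading $1$ and using $\min\{1, e^{-a}\} = \exp(-|a|^+)$ gives $\exp(-|\imath_{X;Y}(x;y) - \imath_S(s)|^+)$. Averaging over $(S,X,Y) \sim P_S P_X P_{Y|X}$, invoking Shannon's argument that at least one deterministic code matches the ensemble average, and optimising over $P_X$ delivers \eqref{eq:Alossless}.

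The main obstacle is purely one of order of operations: the truncation $\min\{1,\cdot\}$ has to be applied \emph{before} Markov's inequality, otherwise one is left with separate regimes on the two sides of $|\cdot|^+$ and the bound fails to recombine into the clean exponential form. A secondary subtlety is handling abstract source and channel alphabets: the sums over $\tilde s$ and $\bar x$ must be interpreted as integrals against $P_S$ and $P_X$, while the ratios $P_Y(y)/P_{Y|X}(y\mid x)$ and $1/P_S(s)$ are read as Radon--Nikodym derivatives, but the structural argument carries over verbatim.
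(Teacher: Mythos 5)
Your proposal is correct. The paper itself does not prove Theorem \ref{thm:Alossless}; it is stated as a cited result from Tauste Campo et al.\ \cite{campo2011random}, and your argument is essentially the proof given there: one codeword $X(s)\sim P_X$ per source outcome, MAP decoding that weights likelihoods by $P_S$, and the truncated union bound followed by Markov's inequality, so that the factor $M-1$ of the ordinary RCU bound is replaced by $\sum_{\tilde s \neq s} P_S(\tilde s)\le 1$, yielding $\E{\exp\left(-|\imath_{X;Y}(X;Y)-\imath_S(S)|^+\right)}$. Your side remarks are also accurate: applying $\min\{1,\cdot\}$ before Markov is what recombines the bound into the $|\cdot|^+$ form, and specializing Theorem \ref{thm:Ag} with $d=0$, $P_Z=P_S$ would indeed leave the extra $e^{1-\gamma}$ covering term, which this direct construction avoids.
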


%\begin{lemma} %WRONG LEMMA
%\begin{equation}
%\Prob{P_V(V) \gamma < P_{\hat S}(B_d(S))} \leq e^{-\gamma}
%\end{equation}
%\end{lemma}
%\begin{proof}
% Denote for brevity
%\beq
%p_S = P_{\hat S}(B_d(S))
%\eeq
%Fix arbitrary $\gamma > 0$. Note
%\begin{align}
%\Prob{ V > \frac{\gamma}{p_S}\mid S} &\leq (1 - p_S)^{\left \lfloor \frac {\gamma}{p_S}\right \rfloor}\\
%&\leq (1 - p_S)^{\frac {\gamma}{p_S} - 1}\\
%&\leq e^{- p_S \left( \frac{\gamma}{p_S} - 1\right)}\\
%&\leq e^{-\gamma}
%\end{align}
%Therefore, since
%\begin{equation}
% P_V(1) \geq P_V(2) \geq P_V(3) \ldots
%\end{equation}
%we have
%\begin{equation}
% \Prob{ P_V(V) < P_V\left( \left \lfloor \frac{\gamma}{p_S} \right \rfloor \right)} \leq e^{-\gamma}
%\end{equation}
%On the other hand (next is WRONG!)
%\begin{align}
% \Prob{P_V(V) \gamma < p_S} &= \Prob{P_V(V) \gamma < p_S}\\
% &= \Prob{P_V(V) \gamma < p_S, P_V(V) < P_V\left(\left \lfloor \frac{\gamma}{p_S} \right \rfloor\right)} + \Prob{P_V(V) \gamma < p_S, P_V(V) \geq P_V\left( \left \lfloor \frac{\gamma}{p_S} \right \rfloor \right)}\\
% &\leq e^{-\gamma} + \Prob{\frac{p_S}{\gamma} > P_V(V) \geq P_V \left(\left \lfloor \frac{\gamma}{p_S} \right \rfloor\right) }\\
% &= e^{-\gamma}
%\end{align}
%\end{proof}

\section{Gaussian Approximation}
\label{sec:2order}

In addition to the basic conditions \eqref{item:a}-\eqref{item:c} of Section \ref{sec:prelim}, in this section we impose the following restrictions.
\begin{enumerate}[(i)]
\item The channel is stationary and memoryless, $P_{Y^n | X^n}  = P_{\mathsf Y| \mathsf X} \times \ldots \times P_{\mathsf Y | \mathsf X}$. If the channel has an input cost function then it satisfies 
$\mathsf c_n(x^n) = \frac 1 n \sum_{i= 1}^n \mathsf c(x_i)$. 
%, either discrete without cost constraints and finite $\mathcal A$, $\mathcal B$ (discrete memoryless channel, DMC), or Gaussian with either equal or maximal power constraint.
 \label{item:ch}
\item The source is stationary and memoryless,  $P_{S^k}  = P_{\mathsf S} \times \ldots \times P_{\mathsf S}$, and the distortion measure is separable,  $\mathsf d_k(s^k, z^k) = \frac 1 k \sum_{i = 1}^k \mathsf d(s_i, z_i)$. \label{item:s}
\item The distortion level satisfies $d_{\min} < d < d_{\max}$, where $d_{\min}$ is defined in \eqref{eq:dmin}, and $d_{\max} =\inf_{\mathsf z \in \hat{\mathcal S}} \E{\mathsf d(\mathsf S, \mathsf z)}$, where the average is with respect to the unconditional distribution of $\mathsf S$. The excess-distortion probability satisfies $0 < \epsilon < 1$. \label{item:dminmax}
%\item The random variable $\jmath_{\mathsf X}(\mathsf X,d)$ has finite third moment. \label{item:last} 
 \item $
 \E{\mathsf d^{9}(\mathsf S, \mathsf Z^\star)} < \infty
$
where the average is with respect to $P_{\mathsf S} \times P_{\mathsf Z^\star}$ and $P_{\mathsf Z^\star}$ is the output distribution corresponding to the minimizer in \eqref{eq:RR(d)}. \label{item:last} 
\end{enumerate}
The technical condition \eqref{item:last} ensures applicability of the Gaussian approximation in the following result. 

%Before we present Theorem \ref{thm:2order}, let us elucidate notations $O(\cdot)$ and $o(\cdot)$ in  its statement.  For two functions $\theta$ and $\zeta$ defined on the set of natural numbers, inequality $\theta(n) \leq \bigo{\zeta(n)}$ $\left( \theta(n) \geq \bigo{\zeta(n)}\right)$ indicates that $\limsup_{n \to \infty} \frac{\theta(n)}{\zeta(n)} \leq \infty$ $\left(\liminf_{n \to \infty} \frac{\theta(n)}{\zeta(n)} > -\infty\right)$. Likewise, $\theta(n) \leq \smallo{\zeta(n)}$ $\left( \theta(n) \geq \smallo{\zeta(n)}\right)$ stands for $ \limsup_{n \to \infty} \frac{\theta(n)}{\zeta(n)} \leq 0$ $\left(  \liminf_{n \to \infty} \frac{\theta(n)}{\zeta(n)} \geq 0 \right)$.

\begin{thm}[Gaussian approximation]
\label{thm:2order}
Under restrictions \eqref{item:ch}--\eqref{item:last}, the parameters of the optimal $(k, n, d, \epsilon)$ code satisfy
\begin{equation}
 nC - k R(d) = \sqrt{n V + k \mathcal V(d)}\Qinv{ \epsilon} + \theta\left(n\right) \label{eq:2order}
 \end{equation}
 where
 \begin{enumerate}[1.]
 \item $\mathcal V(d)$ is the source dispersion given by
 \begin{equation}
\mathcal V(d) = {\rm Var}\left[\jmath_{\mathsf S}(\mathsf S, d) \right] \label{eq:sdispersion}
\end{equation}
\item $V$ is the channel dispersion given by:
\begin{enumerate}[a)]
\item If $\mathcal A$ and $\mathcal B$ are finite and the channel has no cost constraints,
\begin{align}
 V &= {\rm Var}\left[\imath_{\mathsf X; \mathsf Y}^\star(\mathsf X^\star; \mathsf Y^\star) \right]\label{eq:cdispersion}\\
 \imath_{\mathsf X; \mathsf Y}^\star (x; y) &= \log \frac{dP_{\mathsf Y| \mathsf X = x}}{dP_{\mathsf Y^\star}}(y)
\end{align}
where $\mathsf X^\star$, $\mathsf Y^\star$ are the capacity-achieving input and output random variables.
\item If the channel is Gaussian with either equal or maximal power constraint, 
\begin{equation}
  V = \frac 1 2 \left( 1 - \frac 1 {\left(1+P\right)^2}\right) \log^2 e\label{eq:DispersionAWGN}
\end{equation}
where $P$ is the signal-to-noise ratio. 
\label{item:2orderGauss}
\end{enumerate}
\item The remainder term $\theta(n)$ satisfies:

\begin{enumerate}
\item  If $\mathcal A$ and $\mathcal B$ are finite, the channel has no cost constraints and $V > 0$, 
\begin{align}
-\underline c \log n + \bigo{1} &\leq \theta\left(n\right) \label{eq:Cremainder}\\
&\leq \bar c \log n + \log\log n + \bigo{1} 	\label{eq:Aremainder}
%&\leq \bar c \log k + \log\log k + \bigo{1} 	\label{eq:Aremainder}
\end{align}
where 
 \begin{align}
 \underline c &= |\mathcal A| - \frac 1 2\\
 \bar c &= 1 + \frac{\Var{\Lambda^{\prime}_{\mathsf Z^\star}(\mathsf S, \lambda^\star) }}{ \E{\left| \Lambda^{\prime\prime}_{\mathsf Z^\star}(\mathsf S, \lambda^\star)\right|} \log e}
 \label{eq:Cbar}
\end{align}
In \eqref{eq:Cbar}, $(\cdot)^\prime$ denotes differentiation with respect to $\lambda$, $\Lambda_{\mathsf Z^\star}(\mathsf s, \lambda)$ is defined by 
\begin{equation}
 \Lambda_{\mathsf Z^\star}( \mathsf s, \lambda) = \log \frac 1 {\E{\exp\left( \lambda d -\lambda d(\mathsf s, \mathsf Z^\star) \right) }} \label{eq:jlambdagen}
\end{equation}
(cf.  Definition \ref{defn:id}) and $\lambda^\star = - R^\prime(d)$. 
\label{item:2orderV>0}
\item If $\mathcal A$ and $\mathcal B$ are finite, the channel has no cost constraints and $V = 0$, \eqref{eq:Aremainder} still holds, while \eqref{eq:Cremainder} is replaced with
\begin{equation}
%\smallo{\sqrt n} \leq \theta\left(n\right)  \label{eq:Cremainder0}
\liminf_{n \to \infty} \frac{ \theta\left(n\right) }{\sqrt n} \geq 0 \label{eq:Cremainder0}
\end{equation}
\label{item:2orderV=0}
\item If the channel is such that the (conditional) distribution of $\imath_{\mathsf X; \mathsf Y}^\star( \mathsf x; \mathsf Y)$ does not depend on $x \in \mathcal A$ (no cost constraint), then $\underline c = \frac 1 2$. \footnote{
Note added in proof: we have shown recently that the symmetricity condition is actually superfluous.
%In view of the recent result in \cite{tomamichel2012thirdorder}, it is natural to conjecture that the symmetricity condition is superfluous.
}
\label{item:2ordersym}
\item If the channel is Gaussian with equal or maximal power constraint, \eqref{eq:Aremainder} still holds, and \eqref{eq:Cremainder} holds with $\underline c = \frac 1 2$. 
\label{item:2ordergauss}
\item In the almost-lossless case, $R(d) = H(\mathsf S)$,  and provided that the third absolute moment of $\imath_{\mathsf S}\mathsf (\mathsf S)$ is finite, \eqref{eq:2order} and \eqref{eq:Cremainder} still hold, while \eqref{eq:Aremainder} strengthens to
\begin{equation}
\theta\left(n \right)  \leq \frac 1 2 \log n + \bigo{1} \label{eq:Aremainderlossless}
\end{equation} \label{item:2orderlossless}
\end{enumerate}
\end{enumerate}
\end{thm}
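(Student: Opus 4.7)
The plan is to apply the general one-shot achievability bound of Theorem \ref{thm:A} for the upper bound on $\theta(n)$ and the general one-shot converse of Theorem \ref{thm:CT} (or Theorem \ref{thm:C1sym} in symmetric cases) for the lower bound, and in both directions to reduce the resulting expressions to sums of $n+k$ independent random variables to which a Berry--Esseen-type central limit theorem can be applied.

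For \textbf{achievability}, I would evaluate \eqref{eq:A} with $P_{X^n} = P_{\mathsf X^\star}^{\otimes n}$, $P_{Z^k} = P_{\mathsf Z^\star}^{\otimes k}$, and $\log \gamma = nC + \sqrt{nV+k\mathcal V(d)}\,\Qinv{\epsilon} - \bar c \log n - \log\log n + O(1)$, choosing $k$ as the largest integer satisfying \eqref{eq:2order}. Under this product choice, $\imath_{X^n;Y^n}(X^n;Y^n) = \sum_{i=1}^n \imath^\star_{\mathsf X;\mathsf Y}(\mathsf X^\star_i;\mathsf Y^\star_i)$ is a sum of i.i.d.\ terms with mean $C$ and variance $V$. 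To replace $-\log P_{Z^k}(B_d(S^k))$ inside \eqref{eq:A} by $\sum_{i=1}^k \jmath_{\mathsf S}(\mathsf S_i,d)$, I would invoke the lossy AEP-type concentration inequality for $d$-balls under product measure proved in \cite{kostina2011fixed} (which uses \eqref{item:last} and $d<d_{\max}$), so that, up to an $O(\log k)$ correction that is absorbed in the $\bar c \log n + \log\log n$ term, the argument of $|\cdot|^+$ in \eqref{eq:A} is a centered sum of $n+k$ independent random variables with variance $nV+k\mathcal V(d)$. Applying the Berry--Esseen theorem and the standard change-of-measure identity $\E{\exp(-|Z|^+)} \leq \Prob{Z \leq 0} + O(1/\sqrt n)$ then yields \eqref{eq:Aremainder}. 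For the lossless case, $B_0(S^k)=\{S^k\}$ exactly, so the AEP step is unnecessary and one loses only the $\tfrac12 \log n$ factor from Berry--Esseen, giving \eqref{eq:Aremainderlossless}.

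For the \textbf{converse}, I would apply Theorem \ref{thm:CT} with $W$ equal to the type of $X^n$ (so $T = \binom{n+|\mathcal A|-1}{|\mathcal A|-1} = O(n^{|\mathcal A|-1})$) and $P_{\bar Y^n | W=t} = P_{\mathsf Y^\star | \mathsf X = \hat x^n(t)} \circ \ldots$, more precisely a product output distribution tied to the empirical input type. Then conditioned on $W=t$, the infimum over $x$ of $\imath_{X;\bar Y|W}(x;Y|t)$ concentrates around $nC$ uniformly in the type (this is the channel-coding step of \cite{polyanskiy2010channel}, yielding the $-(|\mathcal A|-\tfrac12)\log n$ term once we pay the $T\exp(-\gamma)$ penalty and optimize $\gamma$). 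The source-side term $\jmath_{S^k}(S^k,d) = \sum_{i=1}^k \jmath_{\mathsf S}(\mathsf S_i,d)$ is independent of everything channel-related and has mean $kR(d)$ and variance $k\mathcal V(d)$ by \eqref{eq:Ejd} and \eqref{eq:sdispersion}. Berry--Esseen applied to the combined sum of $n+k$ independent terms gives \eqref{eq:Cremainder}. In the symmetric cases (item \eqref{item:2ordersym}) and for the AWGN channel I would replace Theorem \ref{thm:CT} by Theorem \ref{thm:C1sym} (respectively the hypothesis-testing converse of Theorem \ref{thm:Chtsym} with a spherically-symmetric $P_{\bar Y^n}$), which eliminates the type argument and the associated $|\mathcal A|\log n$ penalty, leaving only the $\tfrac12 \log n$ loss from Berry--Esseen, hence $\underline c = \tfrac12$.

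The \textbf{main obstacle} is the converse in the non-symmetric case: one must carefully control the behavior of $\inf_x \Prob{\jmath_S - \imath_{X;\bar Y|W}(x;Y|W) \geq \gamma \mid S}$ uniformly over the input type $t$, including those $t$ far from the capacity-achieving composition, and balance the number of types $T$ against the slack $\gamma$. The case $V=0$ is delicate because the Gaussian approximation in the channel direction degenerates and one must argue via \eqref{eq:2orderCC} with $V=0$ that a $\sqrt n$-order looseness cannot occur, giving the weaker \eqref{eq:Cremainder0}. A second technical hurdle is justifying the replacement of $-\log P_{\mathsf Z^\star}^{\otimes k}(B_d(S^k))$ by the tilted-information sum in the achievability direction with an $O(\log k)$ error; this requires moment hypothesis \eqref{item:last} in order to invoke the sharp lossy-AEP bound of \cite{kostina2011fixed} and to obtain the explicit constant $\bar c$ in \eqref{eq:Cbar}.
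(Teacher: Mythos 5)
Your overall architecture coincides with the paper's: achievability via Theorem \ref{thm:A} with i.i.d.\ capacity-achieving channel inputs, i.i.d.\ $P_{\mathsf Z^\star}$ reproduction distribution and the lossy-AEP lemma of \cite{kostina2011fixed} (Lemma \ref{lemma:aepA}); converse via Theorem \ref{thm:CT} with $W=\mathrm{type}(X^n)$ and per-type product output distributions; Theorem \ref{thm:C1sym} in the symmetric and Gaussian cases; Berry--Esseen in both directions. However, your achievability parameter choice fails as stated: in \eqref{eq:A} the scalar $\gamma$ must be small (the paper takes $\gamma=\tfrac12\log_e k+1$, so that $\log\gamma=O(\log\log n)$ while $e^{1-\gamma}=O(1/\sqrt k)$), because $\log\gamma$ enters additively into the threshold $\log\frac{\gamma}{P_{Z^k}(B_d(S^k))}\approx \log\gamma+\sum_{i=1}^k\jmath_{\mathsf S}(S_i,d)+O(\log k)$ against which $\imath_{X^n;Y^n}$ is compared. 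With your choice $\log\gamma=nC+\sqrt{nV+k\mathcal V(d)}\,\Qinv{\epsilon}-\bar c\log n-\log\log n+O(1)$, the threshold exceeds $nC$ by roughly $kR(d)$, so the first expectation in \eqref{eq:A} is close to $1$ and the bound is vacuous; the $\log\log n$ in \eqref{eq:Aremainder} must come \emph{from} $\log\gamma$, not be subtracted from it. Two further achievability items are outside your plan: for the Gaussian channel an i.i.d.\ input violates the power constraint---the paper places $X^n$ on the power sphere and uses $\imath_{X^n;Y^n}\geq G_n-\log F$ with $F$ a uniform bound on a density ratio---and in the lossless case Theorem \ref{thm:A} still carries the $\Theta(\log\log n)$ penalty from $\log\gamma$, so obtaining \eqref{eq:Aremainderlossless} requires switching to Theorem \ref{thm:Alossless}, as the paper does.

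On the converse your plan matches the paper's, but the step you dismiss as ``the channel-coding step of \cite{polyanskiy2010channel}'' is precisely where the paper must do new work: since the comparison threshold is the random quantity $\sum_{i}\jmath_{\mathsf S}(S_i,d)$, one needs $\min_{x^n}\Prob{\sum_{i=1}^n\imath_{\mathsf X;\mathsf Y}(x_i;Y_i)\leq n(C-\Delta)}$ to be within $O(1/\sqrt n)$ of its value at the nearest-type approximation of a capacity-achieving composition, \emph{uniformly} over an interval of $\Delta$ of constant width covering the source fluctuations (and, when the capacity-achieving distribution is non-unique, with the minimizing composition chosen in $\mathcal P^\star_{\min}$ or $\mathcal P^\star_{\max}$ according to the sign of $\Delta$); this is Lemma \ref{lemma:C} with its supporting optimization lemmas in Appendix \ref{appx:lemmaC}, not a statement quotable from \cite{polyanskiy2010channel}. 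Likewise your treatment of $V=0$ (``argue via \eqref{eq:2orderCC}'') is not a proof; the paper handles it by the second part of Lemma \ref{lemma:C}, a Chebyshev bound exploiting $V(P_{\mathsf X})\leq L_2\left|P_{\mathsf X}-P_{\mathsf X^\star}\right|$ near the capacity-achieving set together with the quadratic gap \eqref{eq:nonflat} and $\Delta\sim n^{-1/2-\alpha}$, which is what yields \eqref{eq:Cremainder0}. So the skeleton is the paper's, but the two load-bearing ingredients---the uniform-in-type minimization lemma and the correct choice of $\gamma$ in \eqref{eq:A}---are respectively missing and mis-specified.
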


\begin{proof}
\begin{itemize}
 \item Appendices \ref{appx:2orderV>0} and \ref{appx:2orderV=0} show the converses in \eqref{eq:Cremainder} and \eqref{eq:Cremainder0} for cases $V > 0$ and $V = 0$, respectively, using Theorem \ref{thm:CT}.
 \item Appendix \ref{appx:2orderCsym} shows the converse for the symmetric channel (\ref{item:2ordersym}) using Theorem \ref{thm:C1sym}.
 \item Appendix \ref{appx:2orderCgauss} shows the converse for the Gaussian channel (\ref{item:2ordergauss}) using Theorem \ref{thm:C1sym}.
 \item Appendix \ref{appx:2orderAlossless} shows the achievability result for almost lossless coding (\ref{item:2orderlossless}) using Theorem \ref{thm:Alossless}.
 \item Appendix \ref{appx:2orderAlossy} shows the achievability result in \eqref{eq:Aremainder} for the DMC using Theorem \ref{thm:A}. 
 \item Appendix \ref{appx:2orderAgauss} shows the achievability result for the Gaussian channel (\ref{item:2ordergauss}) using Theorem \ref{thm:A}.
\end{itemize}
\end{proof}

\begin{remark}
If the channel and the data compression codes are designed separately, we can invoke channel coding \cite{polyanskiy2010channel} and lossy compression \cite{kostina2011fixed} results in \eqref{eq:2orderCC} and \eqref{eq:2orderSC} to show that (cf. \eqref{eq:2orderSSCCintro})
\begin{align}
 nC - kR(d) &\leq \min_{\eta + \zeta \leq \epsilon} \left\{ \sqrt{nV} \Qinv{\eta} + \sqrt{k\mathcal V(d)} \Qinv{\zeta}\right\} \notag\\
 &+ \bigo{\log n} \label{eq:2orderSeparate}
\end{align} 
\label{rem_SSCC} 
\end{remark}
Comparing \eqref{eq:2orderSeparate} to \eqref{eq:2order}, observe that if either the channel or the source (or both) have zero dispersion,  the joint source-channel coding dispersion can be achieved by separate coding. In that special case, either the $\mathsf d$-tilted information or the channel information density are so close to being deterministic that there is no need to account for the true distributions of these random variables, as a good joint source-channel code would do. 

The Gaussian approximations of JSCC and SSCC in \eqref{eq:2order} and \eqref{eq:2orderSeparate}, respectively, admit the following heuristic interpretation when $n$ is large (and thus, so is $k$): since the source is stationary and memoryless, the normalized $\mathsf d$-tilted information
$
 J = \frac 1 n \jmath_{S^k}\left( S^k , d\right)
$
becomes approximately Gaussian with mean $\frac k n R(d)$ and variance $\frac{k}{n}\frac{\mathcal V(d)}{n}$. Likewise, the conditional normalized channel information density
$
 I = \frac 1 n \imath_{X^n; Y^n}^\star(x^{n}; Y^{n \star})
$
is, for large $k$, $n$, approximately Gaussian with mean $C$ and variance $\frac{V}{n}$ for all $x^n \in\mathcal A^n$ typical according to the capacity-achieving distribution. Since a good encoder chooses such inputs for (almost) all source realizations, and the source and the channel are independent, the random variable $ I -  J$ is approximately Gaussian with mean $C - \frac k n R(d)$ and variance $\frac 1 n \left( \frac k n \mathcal V(d) + V \right)$, and \eqref{eq:2order} reflects the intuition that under JSCC, the source is reconstructed successfully within distortion $d$ if and only if the channel information density exceeds the source $\mathsf d$-tilted information, that is, $\left\{ I >  J\right\}$. 
In contrast, in SSCC, the source is reconstructed successfully with high probability if $( I,  J)$ falls in the intersection of half-planes  
$
\left\{I > r \right\} \cap \left\{J < r \right\}
$
for some $r = \frac {\log M}{n}$, which is the capacity of the noiseless link between the source and the channel code block that can be chosen so as to maximize the probability of that intersection, as reflected in \eqref{eq:2orderSeparate}. Since in JSCC the successful transmission event is strictly larger than in SSCC, i.e. $\left\{I > r \right\} \cap \left\{J < r \right\}  \subset \left\{I > J\right\}$, separate source/channel code design incurs a performance loss. It is worth pointing out that $\left\{I > J\right\}$ leads to successful reconstruction even within the paradigm of the codes in Definition \ref{defn:jsccM} because, as explained in Remark \ref{rem:Mcodes}, unlike the SSCC case, it is not necessary that 
$\frac{\log M}{n}$ lie between $I$ and $J$ for successful reconstruction.  

\begin{remark}
 Using Theorem \ref{thm:2order}, it can be shown that
\begin{equation}
 R(n, d, \epsilon) = \frac{C}{R(d)} - \sqrt{\frac{\mathscr V(d)}{n}}\Qinv{\epsilon} - \frac 1 {R(d)}\frac{\theta(n)}{n}\label{eq:2orderR}
\end{equation}
where the rate-dispersion function of JSCC is found as (recall Definition \ref{defn:dispersion}),
\begin{equation}
 \mathscr V(d) = \frac{R(d) V + C \mathcal V(d) }{R^3(d)} \label{eq:JSCCdispersion}
\end{equation}
\end{remark}
\begin{remark}
Under regularity conditions similar to those in \cite[Theorem 14]{kostina2011fixed}, it can be shown that
\begin{equation}
 D(nR, n, \epsilon) = D\left(\! \frac{C}{R} \!\right) + \sqrt{\frac{\mathscr W(R)}{n}}\Qinv{\epsilon} + \frac{\partial}{\partial R} D\left( \!\frac{C}{R} \!\right) \frac{\theta(n)}{n} \label{eq:2orderD}
\end{equation}
 where the distortion-dispersion function of JSCC is given by
\begin{align}
\mathscr W(R) &= \left( \frac{\partial}{\partial R} D \left(\frac{C}{R} \right) \right)^2 \left( V + R \mathcal V\left( D\left(\frac{C}{R} \right)\right)  \right) 
%\mathscr W(R) &= \left(\frac{D^\prime\left(D(C)\right)}{R}\right)^2 \left( V + R \mathcal V\left(D(C)\right) \right) 
\label{eq:JSCCdispersiondistortion}
\end{align}
\end{remark} 
\begin{remark}
If the basic conditions \eqref{item:b} and/or \eqref{item:c} fail so that there are several distributions $P_{\mathsf Z^\star | \mathsf S}$ and/or several $P_{\mathsf X^\star}$ that achieve the rate-distortion function and the capacity, respectively, then, for $\epsilon < \frac 1 2$,  
\begin{align}
 \mathscr V(d) &\leq \min \mathscr V_{\mathsf Z^\star; \mathsf X^\star}(d)\\
  \mathscr W(R) &\leq \min \mathscr W_{\mathsf Z^\star; \mathsf X^\star}(R)
\end{align}
where the minimum is taken over $P_{\mathsf Z^\star | \mathsf S}$ and $P_{\mathsf X^\star}$, and $\mathscr V_{\mathsf Z^\star; \mathsf X^\star}(d)$ (resp. $\mathscr W_{\mathsf Z^\star; \mathsf X^\star}(R)$) denotes \eqref{eq:JSCCdispersion} (resp. \eqref{eq:JSCCdispersiondistortion}) computed with $P_{\mathsf Z^\star | \mathsf S}$ and $P_{\mathsf X^\star}$. The reason for possibly lower achievable dispersion in this case is that we have the freedom to map the unlikely source realizations leading to high probability of failure to those codewords resulting in the maximum variance so as to increase the probability that the channel output escapes the decoding failure region. 
\end{remark}
\begin{remark}
The dispersion of the Gaussian channel is given by \eqref{eq:DispersionAWGN}, regardless of whether an equal or a maximal power constraint is imposed.
An equal power constraint corresponds to the subset of allowable channel inputs being the power sphere:
\begin{equation}
 \mathcal F \left(P\right) = \left\{ x^n \in \mathbb R^n \colon \frac{|x^n|^2}{\sigma_{\mathsf N}^2} = nP \right\} \label{eq:equalP}
\end{equation}
where $\sigma_{\mathsf N}^2$ is the noise power.  In a maximal power constraint, \eqref{eq:equalP} is relaxed  replacing `$=$' with `$\leq$'. 

Specifying the nature of the power constraint in the subscript, we remark that the bounds for the maximal constraint can be obtained from the bounds for the equal power constraint via the following relation
\begin{equation}
 k^\star_{\mathrm{eq}}(n, d, \epsilon) \leq k^\star_{\max}(n, d, \epsilon) \leq  k^\star_{\mathrm{eq}}(n+1, d, \epsilon) \label{eq:kstareqmax}
\end{equation}
where the right-most inequality is due to the following idea dating back to Shannon:  a $(k, n, d, \epsilon)$ code with a maximal power constraint can be converted to a $(k, n+1, d, \epsilon)$ code with an equal power constraint by appending an $(n+1)$-th coordinate to each codeword to equalize its total power to $n  \sigma_{\mathsf N}^2 P$. From \eqref{eq:kstareqmax} it is immediate that the channel dispersions for maximal or equal power constraints must be the same.
\label{rem:Gauss}
\end{remark}

\section{Lossy transmission of a BMS over a BSC}
\label{sec:BMS_BSC}
In this section we particularize the bounds in Sections \ref{sec:C}, \ref{sec:A} and the approximation in Section \ref{sec:2order} to the transmission of a BMS with bias $p$ over a BSC with crossover probability $\delta$. The target bit error rate satisfies $d \leq p$. 

The rate-distortion function of the source and the channel capacity are given by, respectively,
\begin{align}
 R(d) &= h(p) - h(d) \label{eq:R(d)BMS}\\
 C &= 1 - h(\delta) \label{eq:CapacityBSC}
\end{align}
The source and the channel dispersions are given by \cite{polyanskiy2010channel,kostina2011fixed}:
\begin{align}
 \mathcal V(d) &= p(1-p)\log^2 \frac{1 - p}{p} \label{eq:DispersionBMS}\\
 V &= \delta(1-\delta)\log^2 \frac{1 - \delta}{\delta} \label{eq:DispersionBSC}
\end{align}
where note that \eqref{eq:DispersionBMS} does not depend on $d$. The rate-dispersion function in \eqref{eq:JSCCdispersion} is plotted in Fig. \ref{fig:binarydispersion}. 
\begin{figure}[htbp]
    \epsfig{file=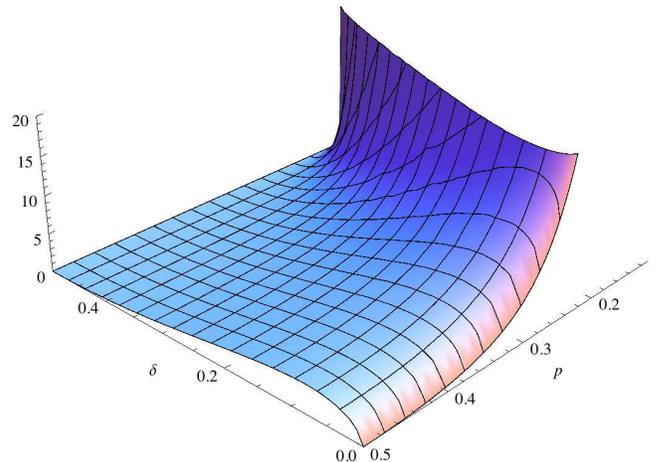,width=1\linewidth}
\caption{The rate-dispersion function for the transmission of a BMS over a BSC with $d = 0.11$ as a function of $(\delta, p)$ in $\left(0, \frac 1 2 \right) \times \left(d, \frac 1 2 \right)$. It increases unboundedly as $p \to d$, and vanishes as $\delta \to \frac 1 2$ or $(\delta, p) \to \left( 0, \frac 1 2 \right)$. }
\label{fig:binarydispersion}
\end{figure}

Throughout this section, $w(a^\ell)$ denotes the Hamming weight of the binary $\ell$-vector $a^\ell$, and $T_\alpha^\ell$ denotes a binomial random variable with parameters $\ell$ and $\alpha$, independent of all other random variables. 

For convenience, we define the discrete random variable $U_{\alpha, \beta}$ by
\begin{equation}
 U_{\alpha, \beta} =  \left( T_{\alpha}^k  - kp\right) \log \frac{1 - p}{p} +  \left( T_{\beta}^n  - n \delta \right) \log \frac{1 - \delta}{\delta}
\label{eq:binaryU}
\end{equation}
In particular, substituting $\alpha = p$ and $\beta = \delta$ in \eqref{eq:binaryU}, we observe that the terms in the right side of \eqref{eq:binaryU} are zero-mean random variables whose variances are equal to $k \mathcal V(d)$ and $n V$, respectively. 

Furthermore, the binomial sum is denoted by
\begin{equation}
 \binosum{k}{\ell } = \sum_{i = 0}^{\ell} {k \choose i}
\end{equation}

A straightforward particularization of the $\mathsf d$-tilted information converse in Theorem \ref{thm:C1sym} leads to the following result.  
\begin{thm}[Converse, BMS-BSC]
  Any $(k, n, d, \epsilon)$ code for transmission of a BMS with bias $p$ over a BSC with bias $\delta$ must satisfy
  \begin{align}
 \epsilon \geq  \sup_{\gamma \geq 0} \bigg\{ &~ \Prob{ U_{p, \delta} \geq nC - kR(d)+ \gamma} 
 %\notag\\&~
 - \exp\left(-\gamma\right) \bigg\}\label{eq:CBMS-BSC}
\end{align}
\label{thm:CBMS-BSC}
\end{thm}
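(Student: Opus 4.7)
The plan is to particularize Theorem \ref{thm:C1sym} by making a symmetric choice for the auxiliary output distribution $P_{\bar Y^n}$ and then computing both $\imath_{X^n;\bar Y^n}$ and $\jmath_{S^k}(\cdot,d)$ in closed form, exploiting the fact that both the BMS and BSC are very well-behaved under exponential tilting.

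First I would verify the symmetry hypothesis of Theorem \ref{thm:C1sym}. Choosing $P_{\bar Y^n}$ to be equiprobable on $\{0,1\}^n$ (the capacity-achieving output distribution of the BSC), a direct calculation gives
\begin{equation*}
\imath_{X^n;\bar Y^n}(x^n;y^n) = n\log 2 + w(y^n\oplus x^n)\log\delta + (n-w(y^n\oplus x^n))\log(1-\delta).
\end{equation*}
Under $P_{Y^n\mid X^n=x^n}$, the Hamming weight $w(Y^n\oplus x^n)$ is distributed as $T_\delta^n$ for any $x^n\in\{0,1\}^n$, so the distribution of $\imath_{X^n;\bar Y^n}(x^n;Y^n)$ is independent of $x^n$. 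Rewriting using $nC=n-nh(\delta)$ yields the clean identity
\begin{equation*}
\imath_{X^n;\bar Y^n}(x^n;Y^n) = nC - (T_\delta^n - n\delta)\log\tfrac{1-\delta}{\delta}.
\end{equation*}

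Next I would compute the $\mathsf{d}$-tilted information of the BMS. Using $\lambda^\star = -R'(d) = \log\frac{1-d}{d}$ and the expression \eqref{eq:idball}, or equivalently appealing to the fact that for the BMS one has the single-letter form $\jmath_{\mathsf S}(s,d)=\imath_{\mathsf S}(s)-h(d)$ (easily checked by verifying both \eqref{eq:Ejd} and the variance identity \eqref{eq:DispersionBMS}), the additivity over coordinates gives
\begin{equation*}
\jmath_{S^k}(S^k,d) = \sum_{i=1}^k \jmath_{\mathsf S}(S_i,d) = kR(d) + (T_p^k - kp)\log\tfrac{1-p}{p}.
\end{equation*}

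Subtracting the two expressions, the random variable $\jmath_{S^k}(S^k,d)-\imath_{X^n;\bar Y^n}(x^n;Y^n)$ collapses to exactly $U_{p,\delta} - (nC - kR(d))$, with $U_{p,\delta}$ as in \eqref{eq:binaryU}. Plugging into Theorem \ref{thm:C1sym} immediately yields \eqref{eq:CBMS-BSC}. There is no real obstacle beyond bookkeeping; the only mild pitfall is making sure the symmetry hypothesis is invoked correctly (the sum of independent binomial-based terms arising from $T_p^k$ on the source side and $T_\delta^n$ on the channel side must be handled jointly, and the independence of $S^k$ from the channel noise under the probability measure of Theorem \ref{thm:C1sym} is what lets $U_{p,\delta}$ appear as the sum of two independent centered binomials).
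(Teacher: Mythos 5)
Your proposal is correct and follows essentially the same route as the paper's proof: choose $P_{\bar Y^n}$ equiprobable on $\{0,1\}^n$, note that $w(Y^n\oplus x^n)\sim T_\delta^n$ for every $x^n$ so the symmetry hypothesis of Theorem \ref{thm:C1sym} holds, use $\jmath_{S^k}(s^k,d)=\imath_{S^k}(s^k)-kh(d)$, and observe that the resulting difference is exactly $U_{p,\delta}-(nC-kR(d))$. The closed-form computations you give match the paper's \eqref{eq:jBMS}--\eqref{eq:ixyBSC}, so no gap remains.
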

\begin{proof}
Let $P_{\bar Y^n} = P_{Y^{n \star} }$, which is the equiprobable distribution on $\{0, 1\}^n$.  An easy exercise reveals that
\begin{align}
 \jmath_{S^k}(s^k, d) &= \imath_{S^k}(s^k) - k h(d) \label{eq:jBMS}\\
  \imath_{S^k}(s^k) &= k h(p) + \left( w(s^k) - k p \right)\log \frac{1 - p}{p}  \label{eq:iBMS}\\
 \imath_{X^n; Y^{n \star}}(x^n; y^n) &= n \left( \log 2 - h(\delta) \right) 
 \notag \\
 &- \left(  w(y^n - x^n) - n \delta\right)  \log \frac{1 - \delta}{\delta}\label{eq:ixyBSC}
\end{align}
Since $w(Y^n - x^n)$  is distributed as $T_{\delta}^n$ regardless of $x^n \in \{0, 1\}^n$, and  $w(S^k)$ is distributed as $T_{p}^k$, the condition in Theorem \ref{thm:C1sym} is satisfied, and \eqref{eq:C1sym} becomes \eqref{eq:CBMS-BSC}. 
\end{proof}
The hypothesis-testing converse in Theorem \ref{thm:Cht} particularizes to the following result:
  \begin{thm}[Converse, BMS-BSC]
 Any $(k, n, d, \epsilon)$ code for transmission of a BMS with bias $p$ over a BSC with bias $\delta$ must satisfy
  \beq
    \Prob{ U_{\frac 1 2, \frac 1 2} < r} + \lambda \Prob{U_{\frac 1 2, \frac 1 2} = r} \leq  \binosum{k}{\lfloor kd \rfloor }2^{-k} \label{eq:ChtBMS-BSC}
  \eeq
  where $0 \leq \lambda < 1$ and scalar $r$ are uniquely defined by
  \beq
 \Prob{ U_{p, \delta}  < r} + \lambda \Prob{  U_{p, \delta}  = r} = 1 - \epsilon
  \eeq
  \label{thm:ChtBMS-BSC}
  \end{thm}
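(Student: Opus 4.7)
The plan is to specialize the hypothesis-testing converse to this setting. Because the BSC is symmetric, Theorem~\ref{thm:Chtsym} applies with $P_{\bar Y^n}$ equal to the equiprobable distribution on $\{0,1\}^n$: indeed, \eqref{eq:ixyBSC} shows that under $P_{Y^n|X^n=x^n}$, the random variable $\imath_{X^n;\bar Y^n}(x^n;Y^n)$ is a deterministic affine function of $w(Y^n-x^n)\sim T_\delta^n$, whose distribution is independent of $x^n$ (and symmetrically so under $P_{\bar Y^n}$). Accordingly I will take $Q_{S^k}=U_{S^k}$, the counting measure on $\{0,1\}^k$, so that \eqref{eq:lossylistsize} yields the list size $L=\binosum{k}{\lfloor kd\rfloor}$, since the Hamming $d$-ball around any $z^k$ contains exactly $\binosum{k}{\lfloor kd\rfloor}$ source strings.

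Next I compute the log-likelihood ratio between $P=P_{S^k}P_{Y^n|X^n=x^n}$ and $Q=U_{S^k}\otimes P_{\bar Y^n}$. Using \eqref{eq:iBMS} and \eqref{eq:ixyBSC}, a short calculation gives
\begin{equation}
\log\tfrac{dP}{dQ}(s^k,y^n)=nC-kh(p)-V(s^k,y^n),
\end{equation}
where $V(s^k,y^n)=(w(s^k)-kp)\log\tfrac{1-p}{p}+(w(y^n-x^n)-n\delta)\log\tfrac{1-\delta}{\delta}$. Under $P$, the pair $(w(S^k),w(Y^n-x^n))$ is binomial$(k,p)\otimes$binomial$(n,\delta)$, so $V\stackrel{d}{=}U_{p,\delta}$; under $Q$ it is binomial$(k,\tfrac12)\otimes$binomial$(n,\tfrac12)$, so $V\stackrel{d}{=}U_{\frac12,\frac12}$.

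By the Neyman--Pearson lemma, the optimal test that decides in favour of $P$ is a threshold test on $dP/dQ$, equivalently a (reversed) threshold test on $V$; the randomisation parameter $\lambda\in[0,1)$ and threshold $r$ are the unique values that equalise the $P$-probability of acceptance to $1-\epsilon$, i.e.\
\begin{equation}
\Prob{U_{p,\delta}<r}+\lambda\Prob{U_{p,\delta}=r}=1-\epsilon,
\end{equation}
matching the statement. The optimal $Q$-mass of the acceptance region equals
\begin{equation}
\sum_{s^k,y^n:\,V<r}2^{-n}+\lambda\!\!\sum_{s^k,y^n:\,V=r}2^{-n}=2^k\bigl[\Prob{U_{\frac12,\frac12}<r}+\lambda\Prob{U_{\frac12,\frac12}=r}\bigr],
\end{equation}
since under the probability measure obtained by normalising $Q$ (uniform on $\{0,1\}^k\times\{0,1\}^n$), $V\stackrel{d}{=}U_{\frac12,\frac12}$. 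Applying the bound $\beta_{1-\epsilon}(P,Q)\le L=\binosum{k}{\lfloor kd\rfloor}$ from Theorem~\ref{thm:Chtsym} and dividing by $2^k$ yields \eqref{eq:ChtBMS-BSC}.

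The only subtle point is that $Q$ is $\sigma$-finite rather than a probability measure, so some care is needed in invoking Neyman--Pearson and in keeping track of the factor $2^k$ when passing between the $Q$-mass of $\{V<r\}$ and a probability computation under $U_{\frac12,\frac12}$; the remaining steps are bookkeeping.
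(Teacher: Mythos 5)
Your proposal is correct and follows essentially the same route as the paper: apply the symmetric hypothesis-testing converse with the equiprobable output distribution, express the log-likelihood ratio through the Hamming weights $w(S^k)$ and $w(Y^n-x^n)$ to identify the binomial laws $U_{p,\delta}$ and $U_{\frac 1 2,\frac 1 2}$, and match the list size to the Hamming $d$-ball. The only difference is cosmetic: you take $Q_{S^k}$ to be the counting measure (so $L=\binosum{k}{\lfloor kd\rfloor}$ and a factor $2^k$ sits in $\beta_{1-\epsilon}$), whereas the paper takes $Q_{S^k}$ equiprobable (so $L=\binosum{k}{\lfloor kd\rfloor}2^{-k}$); the two choices are equivalent up to this common rescaling.
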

\begin{proof}
As in the proof of Theorem \ref{thm:CBMS-BSC}, we let $P_{\bar Y^n}$ be the equiprobable distribution on $\{0, 1\}^n$, $P_{\bar Y^n} = P_{Y^{ n \star}}$. Since under $P_{Y^n|X^n = x^n}$, 
$w\left(Y^n - x^n\right)$
  is distributed as $T_{\delta}^n$, and under $P_{Y^{n\star}}$, 
  $w\left(Y^n - x^n\right)$
  is distributed as $T_{\frac 1 2}^n$, irrespective of the choice of $x^n \in A^n$, the distribution of the information density in \eqref{eq:ixyBSC} does not depend on the choice of $x^n$ under either measure, so Theorem \ref{thm:Chtsym} can be applied. Further, we choose $Q_{S^k}$ to be the equiprobable distribution on $\{0, 1\}^k$ and observe that under $P_{S^k}$, the random variable $w(S^k)$ in \eqref{eq:iBMS} has the same distribution as $T_{p}^{k}$, while under $Q_{S^k}$ it has the same distribution as $T_{\frac 1 2 }^k$. Therefore, the log-likelihood ratio for testing between $P_{S^k}P_{Y^n|X^n = x^n}$ and $Q_{S^k} P_{Y^{n \star}}$ has the same distribution as (`$\sim$' denotes equality in distribution)
\begin{align}
 &~
 \log \frac{P_{S^k}(S^k)P_{Y^n | X^n = x^n}(Y^n)}{Q_{S^k}(S^k) P_{Y^{n \star}}(Y^n)} 
 \notag \\
 =
 &~
 \imath_{X^n; Y^{n \star}}(x^n; Y^n) -  \imath_{S^k}(S^k)  + k \log 2  \\
\sim&~ 
n\log 2 - n h(\delta) - k h(p) 
\notag \\
-&~ 
\begin{cases}
  U_{p, \delta} & \text{under } P_{S^k}P_{Y^n|X^n = x^n}\\
  U_{\frac 1 2, \frac 1 2} & \text{under } Q_{S^k}P_{Y^{n \star}}
\end{cases}
\end{align}
so $\beta_{1-\epsilon}(P_{S^k}P_{Y^n|X^n = x^n}, Q_{S^k}P_{Y^{n \star}})$ is equal to the left side of \eqref{eq:ChtBMS-BSC}. Finally, matching the size of the list to the fidelity of reproduction using \eqref{eq:lossylistsize}, we find that $L$ is equal to the right side of \eqref{eq:ChtBMS-BSC}.
\end{proof}

If the source is equiprobable, the bound in Theorem \ref{thm:ChtBMS-BSC} becomes particularly simple, as the following result details.  
 \begin{thm}[Converse, EBMS-BSC]
For $p = \frac 1 2$, if there exists a $(k, n, d, \epsilon)$ joint source-channel code, then
\beq
\lambda {n \choose r^\star + 1} + \binosum{n}{r^\star} \leq \binosum{k}{ \lfloor kd \rfloor }2^{n-k} \label{eq:ChtEBMS-BSC}
\eeq
where
\beq
r^\star = {\max }\left\{ r: \ \sum_{t = 0}^r {n \choose t} \delta^t (1 - \delta)^{n-t} \leq 1 - \epsilon \right\}\label{eq:ChtEBMS-BSCrstar}
\eeq
and $\lambda \in [0, 1)$ is the solution to
\beq
\sum_{j = 0}^{r^\star} {n \choose t} \delta^t (1 - \delta)^{n-t} + \lambda  \delta^{r^\star + 1} (1 - \delta)^{n - r^\star - 1} {n \choose r^\star + 1} = 1 - \epsilon 
\eeq
\label{thm:ChtEBMS-BSC}
\end{thm}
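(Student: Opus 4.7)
The plan is to obtain Theorem~\ref{thm:ChtEBMS-BSC} as a direct specialization of Theorem~\ref{thm:ChtBMS-BSC} by setting $p=\tfrac12$ and exploiting the simplifications this produces in the auxiliary random variables $U_{p,\delta}$ and $U_{\frac12,\frac12}$ defined in \eqref{eq:binaryU}. Since $\log\frac{1-p}{p}=0$ when $p=\tfrac12$, the source contribution to $U_{\alpha,\beta}$ collapses, and both $U_{p,\delta}$ and $U_{\frac12,\frac12}$ become affine functions of a single binomial random variable (with parameters $\delta$ and $\tfrac12$, respectively) scaled by $\log\frac{1-\delta}{\delta}$. This reduces the randomized inequalities in \eqref{eq:ChtBMS-BSC} from the level of the auxiliary variables to the level of binomial weight counts.

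First I would substitute $p=\tfrac12$ in \eqref{eq:binaryU} to write
\[
U_{\frac12,\delta} \;=\; (T_\delta^n - n\delta)\log\tfrac{1-\delta}{\delta},\qquad U_{\frac12,\frac12} \;=\; (T_{\frac12}^{\,n}-n\delta)\log\tfrac{1-\delta}{\delta}.
\]
Both are monotone affine images of an integer-valued binomial, so for $\delta<\tfrac12$ the event $\{U_{\frac12,\delta}<r\}$ coincides with an event $\{T_\delta^n\le r^\star\}$ for some integer $r^\star\in\{-1,0,\dots,n\}$, and $\{U_{\frac12,\delta}=r\}$ with $\{T_\delta^n=r^\star+1\}$. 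The defining equation for $(r,\lambda)$ in Theorem~\ref{thm:ChtBMS-BSC} then becomes
\[
\sum_{t=0}^{r^\star}\binom{n}{t}\delta^t(1-\delta)^{n-t}\;+\;\lambda\binom{n}{r^\star+1}\delta^{r^\star+1}(1-\delta)^{n-r^\star-1}\;=\;1-\epsilon,
\]
which is exactly the characterization of $r^\star$ and $\lambda$ given in \eqref{eq:ChtEBMS-BSCrstar} and the displayed equation below it. In particular, $r^\star$ is the largest integer for which the partial binomial cdf does not exceed $1-\epsilon$, and $\lambda\in[0,1)$ fills the gap.

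Next I would evaluate the left side of \eqref{eq:ChtBMS-BSC} under $P_{\bar Y^n}$ equidistributed, which replaces $T_\delta^n$ by $T_{\frac12}^{\,n}$ while preserving the same integer threshold $r^\star$:
\[
\Prob{U_{\frac12,\frac12}<r}+\lambda\,\Prob{U_{\frac12,\frac12}=r}\;=\;2^{-n}\Bigl(\textstyle\binosum{n}{r^\star}+\lambda\binom{n}{r^\star+1}\Bigr).
\]
The right-hand side of \eqref{eq:ChtBMS-BSC} is $\binosum{k}{\lfloor kd\rfloor}2^{-k}$. Multiplying through by $2^n$ yields \eqref{eq:ChtEBMS-BSC}.

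There is no real obstacle: the argument is entirely a bookkeeping specialization of the already-proven Theorem~\ref{thm:ChtBMS-BSC}. The only minor care required is to identify the correct integer threshold $r^\star$ as we translate the randomized test from the real-valued $U$-variables to the integer-valued binomial counts, and to handle the edge cases $r^\star=-1$ (the partial sum is empty) and $r^\star=n$ (which would force $\lambda=0$), so that the formulas in \eqref{eq:ChtEBMS-BSC} and \eqref{eq:ChtEBMS-BSCrstar} are well-defined for all $\epsilon\in(0,1)$.
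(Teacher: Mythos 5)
Your proposal is correct and coincides with the paper's own route: Theorem~\ref{thm:ChtEBMS-BSC} is stated there precisely as the specialization of Theorem~\ref{thm:ChtBMS-BSC} to $p=\tfrac12$, where $\log\frac{1-p}{p}=0$ kills the source term in $U_{\alpha,\beta}$ so both $U_{p,\delta}$ and $U_{\frac12,\frac12}$ become the same affine image of a binomial and the randomized threshold translates to the integer threshold $r^\star$ with the stated $\lambda$. Your bookkeeping, including the identification of $(r,\lambda)$ with $(r^\star,\lambda)$ and multiplying through by $2^n$, matches what the paper leaves implicit.
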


The achievability result in Theorem \ref{thm:A} is particularized as follows. 
  \begin{thm}[Achievability, BMS-BSC]
  There exists an $(k, n, d, \epsilon)$ joint source-channel code with
\begin{align}
\epsilon \leq 
\inf_{\gamma >0} \bigg\{&
\E{ \exp\left( - \left| U - \log \gamma
\right|^+ \right)}
%\notag\\&
+   e^{1 -\gamma}
 \bigg\}
\label{eq:ABMS-BSC}
\end{align}
where
\begin{equation}
 U = n C - \left(T_\delta^n - n \delta \right) \log \frac{1 - \delta}{\delta}  - \log \frac{1}{ \rho(T_p^k)} 
\end{equation}
and $\rho \colon \{0, 1, \ldots, k\} \mapsto [0, 1]$ is defined as
\begin{equation}
 \rho(T) = \sum_{t = 0}^k  L(T, t) q^t (1 - q)^{k - t}
\end{equation}
with
\begin{align}
 L(T, t) &= 
\begin{cases}
 {T \choose t_0} { k - T \choose t - t_0} & t - kd \leq T \leq t + kd\\
 0 & \text{otherwise}
\end{cases}
\label{eq_ABMSL}\\
t_0 &= \left \lceil\frac{t+T-kd}{2}\right\rceil^+\\
q &= \frac {p - d}{1 - 2d} \label{eq_q}
\end{align}
\label{thm:ABMS-BSC}
 \end{thm}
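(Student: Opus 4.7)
The plan is to specialize the general achievability bound of Theorem \ref{thm:A} by making the two natural choices: $P_{X^n}$ equiprobable on $\{0,1\}^n$ (the capacity-achieving input distribution for the BSC) and $P_{Z^k}$ the $k$-fold product of Bernoulli$(q)$ with $q = (p-d)/(1-2d)$ (the marginal of the rate-distortion-optimal reproduction for the BMS under bit error rate distortion). The auxiliary parameter $\gamma > 0$ in \eqref{eq:A} is kept free and passed directly into the infimum in \eqref{eq:ABMS-BSC}.

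First I would compute the channel information density. Since the uniform input induces a uniform output distribution on $\{0,1\}^n$, the calculation is identical to the converse argument leading to \eqref{eq:ixyBSC}, yielding
\begin{equation*}
\imath_{X^n;Y^n}(x^n; y^n) = nC - (w(y^n \oplus x^n) - n\delta)\log\frac{1-\delta}{\delta},
\end{equation*}
and the crucial observation is that $w(Y^n \oplus x^n) \sim T_\delta^n$ under $P_{Y^n \mid X^n = x^n}$ for every $x^n \in \{0,1\}^n$. Hence $\imath_{X^n;Y^n}(X^n;Y^n)$ is distributed as $nC - (T_\delta^n - n\delta)\log\frac{1-\delta}{\delta}$, independent of the source.

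Next I would evaluate $P_{Z^k}(B_d(s^k))$, which by the Hamming symmetry depends on $s^k$ only through $T = w(s^k)$. Conditioning on the weight $t$ of $Z^k$ and on the overlap $t_0$ between the supports of $Z^k$ and $s^k$, the Hamming distance equals $T + t - 2t_0$, which is consistent with $d_H \le kd$ if and only if $|T-t| \le kd$; when so, the smallest feasible overlap is $t_0 = \lceil(T+t-kd)/2\rceil^+$, and there are $\binom{T}{t_0}\binom{k-T}{t-t_0}$ matching strings at that overlap. Retaining only this single-$t_0$ contribution furnishes the expression $\rho(T)$ in the statement, which is thus a \emph{lower} bound on the true ball probability. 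Since the right side of \eqref{eq:A} is monotonically non-increasing in $P_Z(B_d(S))$, substituting $\rho(T)$ for the exact ball probability preserves validity (at the price of some looseness). Plugging both computations into Theorem \ref{thm:A}, the argument of $|\,\cdot\,|^+$ becomes $U - \log \gamma$ with $U$ as defined, and since $w(S^k) \sim T_p^k$ is independent of $T_\delta^n$, the bound \eqref{eq:ABMS-BSC} follows. The only nontrivial step is the combinatorial identification of the Hamming-ball contents --- namely deriving the non-vanishing range $|T-t| \le kd$ and the minimum feasible overlap $t_0 = \lceil(T+t-kd)/2\rceil^+$, with the $(\cdot)^+$ handling the degenerate case $T+t < kd$; everything else is direct substitution.
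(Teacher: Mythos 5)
Your proposal is correct and follows essentially the same route as the paper: specialize Theorem \ref{thm:A} with $P_{X^n}$ equiprobable on $\{0,1\}^n$ and $P_{Z^k}$ i.i.d.\ Bernoulli$(q)$, note that $w(Y^n\oplus X^n)\sim T_\delta^n$ and $w(S^k)\sim T_p^k$, and replace $P_{Z^k}(B_d(S^k))$ by the lower bound $\rho(w(S^k))$, which (as you correctly argue) only loosens the right side of \eqref{eq:A}. The only difference is that you derive the ball-probability bound $P_{Z^k}(B_d(s^k))\geq\rho(w(s^k))$ inline via the overlap-counting argument, whereas the paper simply cites the proof of Theorem 21 in \cite{kostina2011fixed}, which contains the same computation.
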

 \begin{proof}
We weaken the infima over $P_{X^n}$ and $P_{Z^k}$ in \eqref{eq:A} by choosing them to be the product distributions generated by the capacity-achieving channel input distribution and the rate-distortion function-achieving reproduction distribution, respectively, i.e. $P_{X^n}$ is equiprobable on $\{0, 1\}^n$, and $P_{Z^k} = P_{\mathsf Z^\star} \times \ldots \times P_{\mathsf Z^\star}$, where $P_{\mathsf Z^\star} (1) = q$.  As shown in \cite[proof of Theorem 21]{kostina2011fixed}, 
\begin{equation}
 P_{Z^k}\left( B_d(s^k)\right) \geq \rho(w(s^k)) \label{eq:-ABMS-BSCdball}
\end{equation}
On the other hand, $|Y^n - X^n|_0$ is distributed as $T_{\delta}^n$,  so \eqref{eq:ABMS-BSC} follows by substituting \eqref{eq:ixyBSC} and \eqref{eq:-ABMS-BSCdball} into \eqref{eq:A}.
\end{proof}

In the special case of the BMS-BSC, Theorem \ref{thm:2order} can be strengthened as follows. 
\begin{thm}[Gaussian approximation, BMS-BSC]
\label{thm:2orderBMS-BSC}
The parameters of the optimal $(k, n, d, \epsilon)$ code satisfy \eqref{eq:2order}
 where $R(d)$, $C$, $\mathcal V(d)$, $V$ are given by \eqref{eq:R(d)BMS}, \eqref{eq:CapacityBSC}, \eqref{eq:DispersionBMS}, \eqref{eq:DispersionBSC}, respectively, and the remainder term in \eqref{eq:2order} satisfies
\begin{align}
 \bigo{1} &\leq \theta \left( n\right) \label{eq:CremainderBMS-BSC}\\
 &\leq \log n + \log \log n + \bigo{1} \label{eq:AremainderBMS-BSC}
\end{align}
if $0 < d < p$, and
\begin{align}
- \frac 1 2 \log n + \bigo{1} &\leq \theta \left( n\right) \label{eq:CremainderBMS-BSClossless}\\
 &\leq \frac 1 2 \log n + \bigo{1} \label{eq:AremainderBMS-BSClossless}
\end{align}
if $d = 0$.
\end{thm}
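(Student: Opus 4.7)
The plan is to specialize Theorems \ref{thm:CBMS-BSC}, \ref{thm:ChtBMS-BSC}, \ref{thm:ABMS-BSC} and \ref{thm:Alossless} to the BMS--BSC setting, and then apply the Berry--Esseen inequality to the sums of i.i.d.\ binary-valued random variables that drive the resulting bounds. The central object is $U_{p,\delta}$ in \eqref{eq:binaryU}, a scaled sum of two independent binomials, and hence a bounded lattice random variable with zero mean, variance exactly $nV + k\mathcal V(d)$, and finite moments of all orders --- which makes it an ideal vehicle for Gaussian approximation.

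For the converse \eqref{eq:CremainderBMS-BSC}, I would use the hypothesis-testing converse of Theorem \ref{thm:ChtBMS-BSC} rather than Theorem \ref{thm:CBMS-BSC}, since a direct optimization of \eqref{eq:CBMS-BSC} in $\gamma$ yields only the generic $-\tfrac12\log n$ of Theorem \ref{thm:2order}. Writing $nC - kR(d) = \sqrt{nV + k\mathcal V(d)}\,Q^{-1}(\epsilon) + \theta(n)$, a Berry--Esseen expansion under the ``true'' distribution $P$ of the equation $\mathbb P[U_{p,\delta} < r] + \lambda \mathbb P[U_{p,\delta} = r] = 1-\epsilon$ determines $r$ up to $O(1)$, and a Gaussian/large-deviations estimate of $\mathbb P[U_{\frac12,\frac12} \leq r]$ under the alternative $Q$, matched against the explicit right-hand side $\binosum{k}{\lfloor kd\rfloor}2^{-k} \sim 2^{-k(1-h(d))}/\sqrt{k}$ of \eqref{eq:ChtBMS-BSC}, closes the gap and yields $\theta(n) \geq O(1)$. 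For the lossless converse \eqref{eq:CremainderBMS-BSClossless}, Theorem \ref{thm:CBMS-BSC} already suffices: at $d=0$ one has $\jmath_{S^k}(S^k,0) = \imath_{S^k}(S^k)$ by \eqref{eq:j0}, and a standard Berry--Esseen optimisation with $\gamma = \tfrac12\log n$ delivers $\theta(n) \geq -\tfrac12\log n + O(1)$.

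For the achievability \eqref{eq:AremainderBMS-BSC}, I would start from \eqref{eq:ABMS-BSC} and decompose $\mathbb E[\exp(-|U - \log\gamma|^+)] \leq \mathbb P[U \leq \log\gamma + a] + e^{-a}$ for an auxiliary $a > 0$. The random variable $U$ is governed by the two independent binomials $T_\delta^n$ and $T_p^k$; the only non-trivial piece is $\log(1/\rho(T_p^k))$, which, by the analysis of the random-coding source bound in \cite{kostina2011fixed}, is within $\log k + O(1)$ of the $\mathsf d$-tilted information $\jmath_{S^k}(S^k,d)$ with probability $1-o(1)$. Consequently $U$ sits within a $\log n$ window of a Gaussian with mean $nC - kR(d)$ and variance $(1+o(1))(nV + k\mathcal V(d))$, so Berry--Esseen applies. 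Taking $\log\gamma = nC - kR(d) - \sqrt{nV + k\mathcal V(d)}\,Q^{-1}(\epsilon) - \log n - \log\log n - O(1)$ and $a = \log\log n$ balances $\mathbb P[U \leq \log\gamma + a]$, $e^{-a}$ and $e^{1-\gamma}$ at the target level $\epsilon$, yielding \eqref{eq:AremainderBMS-BSC}. The lossless upper bound \eqref{eq:AremainderBMS-BSClossless} follows from Theorem \ref{thm:Alossless} by a direct Gaussian approximation of $\imath_{S^k}(S^k) + \imath_{X^n;Y^{n\star}}(X^n;Y^{n\star})$, exactly as in item~\eqref{item:2orderlossless} of Theorem \ref{thm:2order}.

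The main obstacle is the refinement from the generic $-\tfrac12\log n$ converse of Theorem \ref{thm:2order} to the constant $O(1)$ in \eqref{eq:CremainderBMS-BSC}. The sub-$\log n$ precision needed is obtained only by a joint Berry--Esseen/local-CLT treatment of the two lattice distributions of $U_{p,\delta}$ and $U_{\frac12,\frac12}$ under $P$ and $Q$ (which have different means and variances), with careful bookkeeping of the $\sqrt{k}$ prefactor hidden inside $\binosum{k}{\lfloor kd\rfloor}2^{-k}$; it is the resulting $-\tfrac12\log k$ from that prefactor that exactly cancels the Berry--Esseen $-\tfrac12\log n$ penalty and leaves only the desired $O(1)$ residue.
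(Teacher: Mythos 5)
Your converse plan (for both $0<d<p$ and $d=0$) and the lossless achievability are sound and essentially coincide with the paper's route: the $\bigo{1}$ converse comes from a Berry--Esseen/local-CLT evaluation of both measures in Theorem \ref{thm:ChtBMS-BSC}, with the $\Theta(1/\sqrt k)$ prefactor hidden in $\binosum{k}{\lfloor kd \rfloor}2^{-k}$ providing exactly the cancellation you describe (the paper carries this out as in the proof of Theorem 23 of \cite{kostina2011fixed}), and \eqref{eq:AremainderBMS-BSClossless} is just \eqref{eq:Aremainderlossless}. The genuine problem is the bookkeeping in your lossy achievability step. In the decomposition $\E{\exp(-|U-\log\gamma|^+)}\leq \Prob{U\leq \log\gamma+a}+e^{-a}$, every additive slack term --- $e^{-a}$, $e^{1-\gamma}$, and the probability that your bound relating $\log\left(1/\rho(T_p^k)\right)$ to $\jmath_{S^k}(S^k,d)$ fails --- must be subtracted from $\epsilon$ inside the Gaussian quantile, and a slack of size $x$ shifts the threshold by roughly $\sqrt{nV+k\mathcal V(d)}\,x=\Theta(\sqrt n\, x)$. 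With $a=\log\log n$ you have $e^{-a}=1/\log n$, which costs $\Theta(\sqrt n/\log n)$ in $\theta(n)$ and destroys the claimed $\log n+\log\log n+\bigo{1}$; with your parameter choices the total error probability comes out as $\epsilon+\Theta(1/\log n)$, not $\epsilon$. All slacks must be kept of order $n^{-1/2}$ (up to logarithmic factors), which forces $a=\tfrac12\log n+\bigo{\log \log n}$ --- this is precisely the extra $\tfrac12\log k$ that appears in the paper's generic proof (Appendix \ref{appx:2orderAlossy}) as the difference between the $\bar c\log k$ in the rate condition and the $(\bar c-\tfrac12)\log k$ in Lemma \ref{lemma:aepA}.

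Once $a\approx\tfrac12\log n$, your budget no longer closes with the gap you claim, $\log\left(1/\rho(T_p^k)\right)\leq \jmath_{S^k}(S^k,d)+\log k+\bigo{1}$: that accounting yields $\theta(n)\leq\tfrac32\log n+\cdots$, which overshoots the statement. The substance of the BMS-specific theorem --- and the reason the paper routes the analysis through the binary computation of \cite[Appendix G]{kostina2011fixed} rather than the generic Lemma \ref{lemma:aepA}, whose constant $\bar c-\tfrac12$ exceeds $\tfrac12$ for $p\neq\tfrac12$ by \eqref{eq:Cbar} --- is that for the binary source the $d$-ball estimate sharpens to $\log\left(1/\rho(T_p^k)\right)\leq \jmath_{S^k}(S^k,d)+\tfrac12\log k+\bigo{1}$ with failure probability $\bigo{k^{-1/2}}$. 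The correct split of the remainder is then $\tfrac12\log k$ (d-ball gap) plus $\tfrac12\log n$ (the $e^{-a}$ tail) plus $\log\log n$ (from $\log\gamma$ with $\gamma\asymp\log n$, so that $e^{1-\gamma}=\bigo{n^{-1/2}}$), which equals $\log n+\log\log n+\bigo{1}$ since $k=\bigo{n}$ for the codes in question. Note also that $\gamma$ in \eqref{eq:ABMS-BSC} plays the same role as in Theorem \ref{thm:A} and should stay logarithmic; setting $\log\gamma$ equal to essentially the whole threshold $nC-kR(d)-\sqrt{nV+k\mathcal V(d)}\Qinv{\epsilon}-\log n-\log\log n$, as you propose, makes $e^{1-\gamma}$ negligible but leaves you with no mechanism to absorb the $1/\log n$ tail slack described above.
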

\begin{proof}
 An asymptotic analysis of the converse bound in Theorem \ref{thm:ChtBMS-BSC} akin to that found in \cite[proof of Theorem 23]{kostina2011fixed} leads to  \eqref{eq:CremainderBMS-BSC} and \eqref{eq:CremainderBMS-BSClossless}.  An asymptotic analysis of the achievability bound in Theorem \ref{thm:ABMS-BSC} similar to the one found in \cite[Appendix G]{kostina2011fixed} leads to \eqref{eq:AremainderBMS-BSC}. Finally,  \eqref{eq:AremainderBMS-BSClossless} is the same as \eqref{eq:Aremainderlossless}.
\end{proof}
The bounds and the Gaussian approximation (in which we take 
$\theta\left( n \right) = 0$) are plotted in Fig. \ref{fig:BMS-BSClossless} ($d = 0$), Fig. \ref{fig_EBMS-BSC} (fair binary source, $d > 0$) and Fig. \ref{fig_BMS-BSC} (biased binary source, $d > 0$). A source of fair coin flips has zero dispersion, and as anticipated in Remark \ref{rem_SSCC}, JSSC does not afford much gain in the finite blocklength regime (Fig. \ref{fig_EBMS-BSC}). Moreover, in that case the JSCC achievability bound in Theorem \ref{thm:A} is worse than the SSCC achievability bound. However, the more general achievability bound in Theorem \ref{thm:Ag} with the choice $W = M$, as detailed in Remark \ref{rem:Aweaken}, nearly coincides with the SSCC curve in Fig. \ref{fig_EBMS-BSC}, providing an improvement over Theorem \ref{thm:A}. The situation is different if the source is biased, with JSCC showing significant gain over SSCC (Figures \ref{fig:BMS-BSClossless} and \ref{fig_BMS-BSC}). 
\begin{figure}[htbp]
    \epsfig{file=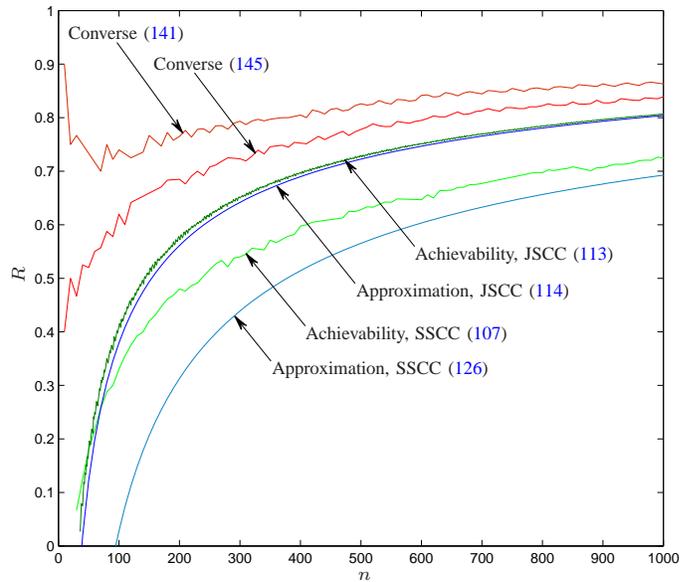,width=1\linewidth}
\caption{ Rate-blocklength tradeoff for the transmission of a BMS with bias $p = 0.11$ over a BSC with crossover probability $\delta = p = 0.11$ and $d = 0$, $\epsilon = 10^{-2}$. }
\label{fig:BMS-BSClossless}
\end{figure}

\begin{figure}[htbp]
    \epsfig{file=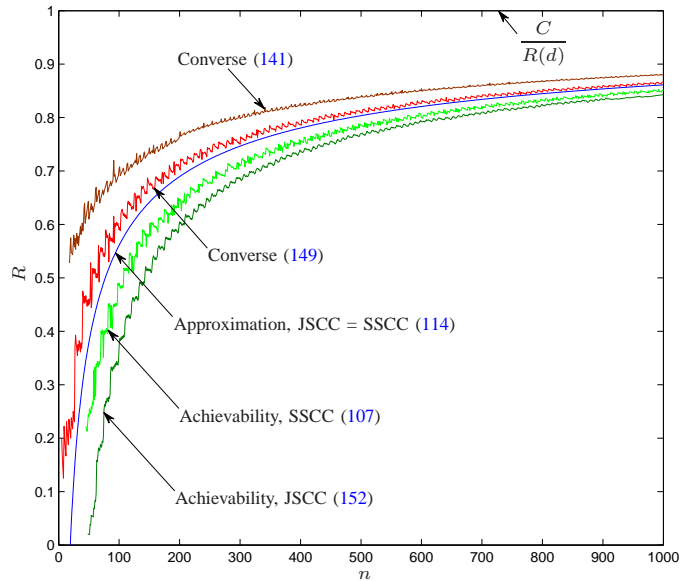,width=1\linewidth}
\caption{ Rate-blocklength tradeoff for the transmission of a fair BMS over a BSC with crossover probability $\delta = d = 0.11$ and $\epsilon = 10^{-2}$. }
\label{fig_EBMS-BSC}
\end{figure}
%
%\begin{figure}[htbp]
%    \epsfig{file=EBMS-BSC1e-4.eps,width=1\linewidth}
%        %\includegraphics[bb=0 0 906 919, width=0.3\linewidth]{ARA.pdf}   %
%\caption{ Rate-blocklength tradeoff for the transmission of a fair BMS over a BSC with crossover probability $\delta = d = 0.11$ and $\epsilon = 10^{-4}$, $d = 0$. }
%\label{fig:EBMS-BSC1e-4}
%\end{figure}

%\begin{figure}[htbp]
%    \epsfig{file=EBMS-BSCdlarger.eps,width=1\linewidth}
%        %\includegraphics[bb=0 0 906 919, width=0.3\linewidth]{ARA.pdf}   %
%\caption{ Rate-blocklength tradeoff for the transmission of a fair BMS over a BSC with crossover probability $d = 0.11$, $\delta = \frac {d}{2}$ and $\epsilon = 10^{-2}$ }
%\label{fig_EBMS-BSC}
%\end{figure}

%\begin{figure}[htbp]
%    \epsfig{file=EBMS-BSCdsmaller.eps,width=1\linewidth}
%        %\includegraphics[bb=0 0 906 919, width=0.3\linewidth]{ARA.pdf}   %
%\caption{ Rate-blocklength tradeoff for the transmission of a fair BMS over a BSC with crossover probability $\delta =  0.11$, $d = \frac {\delta}{2}$ and $\epsilon = 10^{-2}$ }
%\label{fig:EBMS-BSCdsmaller}
%\end{figure}
%

\begin{figure}[htbp]
    \epsfig{file=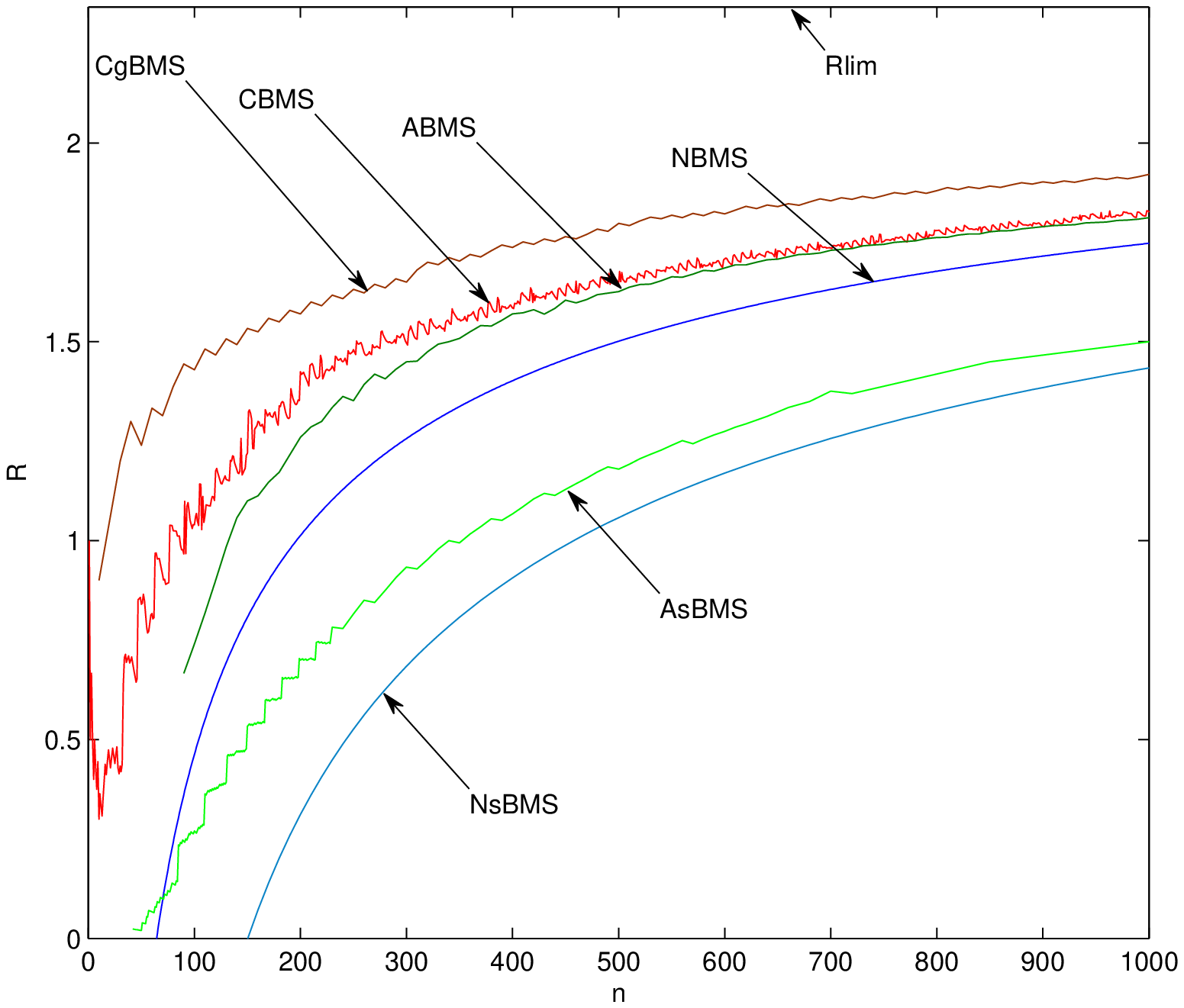,width=1\linewidth}
\caption{ Rate-blocklength tradeoff for the transmission of a BMS with bias $p = 0.11$ over a BSC with crossover probability $\delta = p = 0.11$ and $d = 0.05$, $\epsilon = 10^{-2}$. }
\label{fig_BMS-BSC}
\end{figure}

%%%%%%%%%%%%%%%%%%%%%%%%%%%%%%%%%%%%%%%%%%%%%%%%%%%%%
\section{Transmission of a GMS \\over an AWGN channel}
\label{sec:GMS_AWGN}
%%%%%%%%%%%%%%%%%%%%%%%%%%%%%%%%%%%%%%%%%%%%%%%%%%%%%

In this section we analyze the setup where the Gaussian memoryless source $S_i \sim \mathcal N(0, \sigma_{\mathsf S}^2)$ is transmitted over an AWGN channel, which, upon receiving an input $x^n$,  outputs $Y^n = x^n + N^n$, where $N^n \sim \mathcal N(0, \sigma_{\mathsf N}^2 \mathbf I)$. The encoder/decoder must satisfy two constraints, the fidelity constraint and the cost constraint:
\begin{itemize}
 \item the MSE distortion exceeds $0 \leq d \leq  \sigma_{\mathsf S}^2$ with probability no greater than $0 < \epsilon < 1$;
 \item each channel codeword satisfies the equal power constraint in \eqref{eq:equalP}.\footnote{See Remark \ref{rem:Gauss} in Section \ref{sec:2order} for a discussion of the close relation between an equal and a maximal power constraint.} 
 \end{itemize}

The capacity-cost function and the rate-distortion function are given by
\begin{align}
 R(d) &= \frac 1 2 \log \left(\frac {\sigma_\mathsf S^2}{d}\right) \label{eq:R(d)GMS}\\
  C(P) &= \frac 1 2 \log \left(1 + P\right) \label{eq:CapacityAWGN}
\end{align}
The source dispersion is given by \cite{kostina2011fixed}:
\begin{equation}
 \mathcal V(d) = \frac 1 2 \log^2 e\label{eq:DispersionGMS}
\end{equation}
while the channel dispersion is given by \eqref{eq:DispersionAWGN} \cite{polyanskiy2010channel}. 

In the rest of the section, $W^{\ell}_{\lambda}$ denotes a noncentral chi-square distributed random variable with $\ell$ degrees of freedom and non-centrality parameter $\lambda$, independent of all other random variables, and $f_{W_\lambda^\ell}$ denotes its probability density function.

A straightforward particularization of the $\mathsf d$-tilted information converse in Theorem \ref{thm:C1sym} leads to the following result.  
\begin{thm}[Converse, GMS-AWGN]
  If there exists a $(k, n, d, \epsilon)$ code, then
  \begin{align}
 \epsilon \geq  \sup_{\gamma \geq 0} \bigg\{ &~ \Prob{  U
 \geq n C(P) - k R(d) + 
  \gamma} 
 %\notag\\&~
 - \exp\left(-\gamma\right) \bigg\}\label{eq:C1GMS-AWGN}
\end{align}
where
\begin{equation}
 U = \frac {\log e}{2} \left( W^{k}_0 - k \right) + \frac {\log e}{2} \left(\frac{P}{1+P} W^{n}_{\frac n P} -  n  \right)
\end{equation}
\label{thm:C1GMS-AWGN}
\end{thm}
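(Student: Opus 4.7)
The plan is to apply Theorem \ref{thm:C1sym} with $P_{\bar Y^n}$ chosen as the capacity-achieving output distribution, $\mathcal N(0,\sigma_{\mathsf N}^2(1+P)\mathbf I)$, which is spherically symmetric. Under the equal-power constraint \eqref{eq:equalP} the admissible inputs lie on the sphere $|x^n|^2 = n\sigma_{\mathsf N}^2 P$. A direct calculation gives
\[
\imath_{X^n;\bar Y^n}(x^n;y^n) = nC(P) + \frac{\log e}{2\sigma_{\mathsf N}^2}\left[-|y^n-x^n|^2 + \frac{|y^n|^2}{1+P}\right],
\]
so under $Y^n = x^n + N^n$ this is a function of the pair $(|N^n|^2,\,|x^n+N^n|^2)$, whose joint law depends on $x^n$ only through $|x^n|^2 = n\sigma_{\mathsf N}^2 P$ by rotational invariance of $N^n$. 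Hence the symmetry hypothesis of Theorem \ref{thm:C1sym} is satisfied.

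With symmetry in hand, Theorem \ref{thm:C1sym} yields, for any fixed $x^n$ on the sphere,
\[
\epsilon \geq \sup_{\gamma\geq 0}\bigl\{\mathbb P[\jmath_{S^k}(S^k,d)-\imath_{X^n;\bar Y^n}(x^n;Y^n)\geq\gamma] - e^{-\gamma}\bigr\},
\]
and it remains to identify $\jmath_{S^k}(S^k,d)-\imath_{X^n;\bar Y^n}(x^n;Y^n)$ with $U-(nC(P)-kR(d))$ for the random variable $U$ appearing in the statement. On the source side, the Gaussian rate-distortion solution $Z^\star\sim\mathcal N(0,\sigma_{\mathsf S}^2-d)$ with $\lambda^\star = (\log e)/(2d)$ plugged into Definition \ref{defn:id} gives the well-known expression
\[
\jmath_{S^k}(S^k,d) = kR(d) + \tfrac{\log e}{2}\bigl(|S^k|^2/\sigma_{\mathsf S}^2 - k\bigr),
\]
and $|S^k|^2/\sigma_{\mathsf S}^2 \sim W_0^k$ supplies the first summand of $U$.

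The main step, and the main obstacle, is the channel-side bookkeeping. Substituting $Y^n=x^n+N^n$ into the displayed formula for $\imath_{X^n;\bar Y^n}$ and completing the square around $x^n/P$ (rather than around $x^n$) yields the identity
\[
|N^n|^2 - \frac{|x^n+N^n|^2}{1+P} = \frac{P}{1+P}\,|N^n - x^n/P|^2 - n\sigma_{\mathsf N}^2.
\]
After normalising by $\sigma_{\mathsf N}^2$, the random variable $|N^n/\sigma_{\mathsf N} - x^n/(P\sigma_{\mathsf N})|^2$ is noncentral chi-square with $n$ degrees of freedom and noncentrality $|x^n|^2/(P^2\sigma_{\mathsf N}^2) = n/P$, i.e.\ $W^n_{n/P}$. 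The subtle point is precisely this choice of center: the ``obvious'' center $x^n$ would produce noncentrality $nP$, while $x^n/P$ converts it into $n/P$ and, with the prefactor $P/(1+P)$, yields the channel variance $nV$ via \eqref{eq:DispersionAWGN}. Assembling the two contributions and invoking Theorem \ref{thm:C1sym} produces \eqref{eq:C1GMS-AWGN}.
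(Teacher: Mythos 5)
Your proposal is correct and follows essentially the same route as the paper: apply Theorem \ref{thm:C1sym} with the spherically symmetric capacity-achieving output distribution $\mathcal N(0,\sigma_{\mathsf N}^2(1+P)\mathbf I)$, use the Gaussian $\mathsf d$-tilted information \eqref{eq:dtiltedGMS}, and identify the conditional law of $\imath_{X^n;\bar Y^n}(x^n;Y^n)$ on the power sphere with the noncentral chi-square expression \eqref{eq:ixyAWGNconditional}. The only difference is that you derive that distributional identity explicitly (your completing-the-square computation is correct, with the $-n\sigma_{\mathsf N}^2$ term valid precisely because $|x^n|^2=n\sigma_{\mathsf N}^2P$), whereas the paper simply cites \cite[(205)]{polyanskiy2010channel}.
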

Observe that the terms to the left of the `$\geq$' sign inside the probability in \eqref{eq:C1GMS-AWGN} are zero-mean random variables whose variances are equal to $k \mathcal V(d)$ and $n V$, respectively. 
\begin{proof}
The spherically-symmetric 
$P_{\bar Y^n} = P_{Y^{ n \star}} = P_{\mathsf Y^\star} \times \ldots \times P_{\mathsf Y^\star}$, where $\mathsf Y^\star \sim \mathcal N(0, \sigma_{\mathsf N}^2(1 + P))$ is the capacity-achieving output distribution, satisfies the symmetry assumption of Theorem \ref{thm:C1sym}. More precisely, it is not hard to show (see \cite[(205)]{polyanskiy2010channel}) that for all $x^n \in \mathcal F (\alpha)$,   $\imath_{X^n;  Y^{n \star}}(x^n; Y^n)$ has the same distribution under $P_{Y^{n \star}|X^n = x^n}$ as
\begin{equation}
 \frac n 2 \log\left(1+P\right) - \frac{\log e}{2}\left( \frac{P}{1 + P} W^{n}_{\frac n P} - n\right)\label{eq:ixyAWGNconditional}
\end{equation}
The $\mathsf d$-tilted information in $s^k$ is given by
\begin{equation}
 \jmath_{S^k}(s^k, d) =\frac k 2 \log \frac{\sigma^2_{\mathsf S}}{d} + \left( \frac{|s^k|^2}{\sigma^2_{\mathsf S}} - k\right) \frac{\log e}{2} \label{eq:dtiltedGMS}
\end{equation}
 Plugging \eqref{eq:ixyAWGNconditional} and \eqref{eq:dtiltedGMS} into \eqref{eq:C1sym}, \eqref{eq:C1GMS-AWGN} follows. 
 \end{proof}
The hypothesis testing converse in Theorem \ref{thm:Chtsym} is particularized as follows.
 \begin{thm}[Converse, GMS-AWGN]
\begin{equation}
k\int_0^\infty r^{k-1} \Prob{ P W^{n}_{n\left(1 + \frac 1 P\right)} + k \frac d {\sigma^2} r^2 \leq n \tau} dr\leq 1 \label{eq:ChtGMS-AWGN}
\end{equation}
where $\tau$ is the solution to
\begin{equation}
 \Prob{ \frac{P}{1+P} W^{n}_{\frac n { P} } + W^{k}_{0}  \leq n\tau} = 1 - \epsilon\label{eq:ChtGMS-AWGNa}
\end{equation}
\label{thm:ChtGMS-AWGN}
\end{thm}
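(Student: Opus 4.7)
The plan is to specialize the symmetric hypothesis-testing converse of Theorem~\ref{thm:Chtsym} to the GMS--AWGN pair. I would take $P_{\bar Y^n}$ to be the capacity-achieving output $P_{\mathsf Y^\star}\times\cdots\times P_{\mathsf Y^\star}$ with $\mathsf Y^\star\sim\mathcal N(0,\sigma_{\mathsf N}^2(1+P))$, and $Q_{S^k}$ to be the Lebesgue measure on $\mathbb R^k$, following the suggestion in the comments preceding Theorem~\ref{thm:Cht}. Since $P_{\bar Y^n}$ is spherically symmetric and every $x^n\in\mathcal F(P)$ lies on the same power sphere, the distribution of $\imath_{X^n;\bar Y^n}(x^n;Y^n)$ does not depend on $x^n$ under either $P_{Y^n|X^n=x^n}$ or $P_{\bar Y^n}$, so the symmetry hypothesis of Theorem~\ref{thm:Chtsym} holds. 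The bound $L$ from \eqref{eq:lossylistsize} then reduces, by translation invariance of Lebesgue measure, to the common volume of any MSE-$d$-ball $\{s^k:\|s^k-z^k\|^2\leq kd\}$, namely $L=\omega_k(kd)^{k/2}$, where $\omega_k=\pi^{k/2}/\Gamma(k/2+1)$.

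\textbf{LLR and threshold.} Next, I would decompose the log-likelihood ratio between $P_{S^k}P_{Y^n|X^n=x^n}$ and $Q_{S^k}P_{\bar Y^n}$ as $\log\bigl(dP_{S^k}/dQ_{S^k}\bigr)(s^k)+\imath_{X^n;\bar Y^n}(x^n;y^n)$. The source piece equals the constant $-\tfrac{k}{2}\log(2\pi\sigma_{\mathsf S}^2)$ minus $\tfrac{\log e}{2}\|s^k\|^2/\sigma_{\mathsf S}^2$, with $\|S^k\|^2/\sigma_{\mathsf S}^2\sim W^k_0$ under the null. The channel piece was already computed in \eqref{eq:ixyAWGNconditional} under $P_{Y^n|X^n=x^n}$, and the LLR is consequently a strictly decreasing affine function of $T_P:=W^k_0+\tfrac{P}{1+P}W^n_{n/P}$; equating $\mathbb P[T_P\leq n\tau]=1-\epsilon$ uniquely determines $\tau$ and yields \eqref{eq:ChtGMS-AWGNa}. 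For the alternative side, I would carry out the analogous completion of the square under $P_{\bar Y^n}$, writing $Y^n=\sigma_{\mathsf N}\sqrt{1+P}\,V^n$ with $V^n\sim\mathcal N(0,I)$ and absorbing the cross term with $x^n$ into a shifted squared norm, obtaining the distributional identity $\imath_{X^n;\bar Y^n}(x^n;Y^n)\stackrel{d}{=}\tfrac{n}{2}\log(1+P)+\tfrac{\log e}{2}\bigl(n-PW^n_{n(1+1/P)}\bigr)$. The noncentrality parameter changes from $n/P$ to $n(1+1/P)$ because here the shift $x^n$ is measured against variance $\sigma_{\mathsf N}^2(1+P)$ rather than $\sigma_{\mathsf N}^2$. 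Since the resulting LLR is the same affine function of $T_Q:=\|s^k\|^2/\sigma_{\mathsf S}^2+PW^n_{n(1+1/P)}$ as it was of $T_P$ (the leading constants match), the LLR threshold test translates under $Q$ to $\{T_Q<n\tau\}$ with the identical $\tau$.

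\textbf{Evaluating $\beta$ and concluding.} Theorem~\ref{thm:Chtsym} then reads
\begin{equation*}
\int_{\mathbb R^k}\Prob{PW^n_{n(1+1/P)}+\|s^k\|^2/\sigma_{\mathsf S}^2\leq n\tau}\,ds^k=\beta_{1-\epsilon}\leq L.
\end{equation*}
Passing to spherical coordinates in $\mathbb R^k$ and substituting $r\mapsto\sqrt{kd}\,r$ factors exactly $\omega_k(kd)^{k/2}=L$ out of the integral; dividing both sides by $L$ turns the bound into \eqref{eq:ChtGMS-AWGN}.

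\textbf{Main obstacle.} The only step calling for nontrivial computation is the completion-of-the-square identification of the distribution of $\imath_{X^n;\bar Y^n}(x^n;Y^n)$ under the unconditional output measure $P_{\bar Y^n}$ in terms of the noncentral chi-square $W^n_{n(1+1/P)}$; once this mirror of \eqref{eq:ixyAWGNconditional} is in hand, the symmetry check, the list-size computation, and the radial change of variables are mechanical.
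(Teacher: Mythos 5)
Your proposal is correct and follows essentially the same route as the paper: Theorem \ref{thm:Chtsym} with the capacity-achieving spherically-symmetric $P_{\bar Y^n}$ and Lebesgue $Q_{S^k}$, the distributional identities \eqref{eq:ixyAWGNconditional} and $\imath_{X^n;\bar Y^n}(x^n;Y^n)\sim\tfrac n2\log(1+P)-\tfrac{\log e}{2}\bigl(PW^n_{n(1+1/P)}-n\bigr)$ under the two measures, calibration of the common threshold $\tau$, and the radial evaluation of the Lebesgue-side probability against the $d$-ball volume $L$. The only cosmetic difference is that the paper imports the noncentral chi-square identity under $P_{\bar Y^n}$ from \cite[(204)]{polyanskiy2010channel} rather than re-deriving it by completing the square, and states the final step as integrating the threshold indicator under both measures rather than spelling out the spherical change of variables.
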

\begin{proof}
As in the proof of Theorem \ref{thm:C1GMS-AWGN}, we let $\bar Y^n \sim Y^{n \star} \sim \mathcal N(0, \sigma_{\mathsf N}^2(1 + P)\mathbf I )$. Under $P_{Y^n| X^n = x^n}$, the distribution of $\imath_{X^n; Y^{n^\star}} (x^n; Y^{n \star} )$ is that of  \eqref{eq:ixyAWGNconditional}, while under $P_{Y^{n \star}}$, it has the same distribution as (cf. \cite[(204)]{polyanskiy2010channel})
\begin{equation}
 \frac n 2 \log(1+P) - \frac {\log e} 2 \left(  P W^{n}_{n\left( 1 + \frac 1 P\right)  } - n\right)\label{eq:ixyAWGNunconditional}
\end{equation}
Since the distribution of  $\imath_{X^n; Y^{n \star}} (x^n; Y^{n \star} )$ does not depend on the choice of $x^n \in \mathbb R^n$ according to either measure, Theorem \ref{thm:Chtsym} applies.
Further, choosing $Q_{S^k}$ to be the Lebesgue measure on $\mathbb R^k$, i.e. $dQ_{S^k} = d s^k$, observe that
\begin{equation}
 \log f_{S^k} (s^k) = \log \frac{dP_{S^k}(s^k)}{d s^k} = -\frac k 2 \log\left( 2 \pi \sigma^2_{\mathsf S} \right) - \frac {\log e}{2 \sigma^2_{\mathsf S} } |s^k|^2 
\end{equation}
Now, \eqref{eq:ChtGMS-AWGN} and \eqref{eq:ChtGMS-AWGNa} are obtained by integrating  
\begin{align}
 1\bigg\{  &\log f_{S^k} (s^k) + \imath_{X^n; Y^{n \star}} (x^n; y^n ) > 
 \notag \\
 &\frac n 2 \log(1 + P) + \frac n 2 \log e - \frac k 2 \log(2 \pi \sigma_{\mathsf S}^2) - \frac {\log e}{2} n\tau \bigg\}
\end{align} 
with respect to $d s^k dP_{Y^{n \star}}(y^n\!) \text{ and } dP_{S^k}(s^k\!)dP_{Y^n | X^n = x^n}(y^n\!)$, respectively. 
\end{proof}

The bound in Theorem \ref{thm:A} can be computed as follows.  
\begin{thm}[Achievability, GMS-AWGN]
There exists a $\left(k, n, d, \epsilon\right)$ code such that
\begin{align}
\epsilon &~\leq \inf_{ \gamma >0} \Bigg\{ \E{ \exp 
\left\{- \left|  U - \log \gamma \right|^+\right\}}
%\notag\\ &~
+ e^{1-\gamma}    \Bigg\} \label{eq:AGMS-AWGN}
\end{align}
where 
 \begin{align}
 U &=nC(P) - \frac{\log e}{2}\left( \frac{P}{1 + P} W^{n}_{ \frac n P} - n \right)
- \log \frac{F}{\rho(W^{k}_{0})}\\
 F &= \max_{n \in \mathbb N, t \in \mathbb R^+}  \frac{ f_{W^{n}_{nP}}\left(t\right)  }{ f_{W^{n}_{ 0}}\left(\frac{t}{1 + P}\right) } < \infty \label{eq:AGMS-AWGNderivativebound}
\end{align}
and $\rho: \mathbb R^+ \mapsto [0, 1]$ is defined by
\begin{equation}
 \rho(t) = 
\frac{\Gamma \left( \frac k 2 + 1\right)}{\sqrt \pi k \Gamma \left( \frac{k-1}{2} + 1\right) }\left( 1 - L\left( \sqrt{\frac{t}{k}}\right) \right) ^{\frac{k-1} 2}
\end{equation}
where
\begin{equation}
 L(r) = 
\begin{cases}
0 & r < \sqrt{\frac{d}{\sigma^2_{\mathsf S}}} - \sqrt{1 - \frac{d}{\sigma^2_{\mathsf S}}}\\
1 & \left| r- \sqrt{1 - \frac{d}{\sigma^2_{\mathsf S}}}\right| > \sqrt{\frac{d}{\sigma^2_{\mathsf S}}}\\
  \frac{\left(1 + r^2 - 2\frac d {\sigma^2_{\mathsf S}}\right)^2}{4 \left(1 - \frac d {\sigma^2_{\mathsf S}}\right) r^2} & \text{otherwise}
\end{cases}
\end{equation}
\label{thm:AGMS-AWGN}
\end{thm}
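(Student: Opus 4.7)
The plan is to particularize Theorem \ref{thm:A} to the Gaussian source over AWGN channel setting with judicious spherically-symmetric choices of the auxiliary distributions. Specifically, I would take $P_{X^n}$ to be uniform on the power sphere $\{x^n \in \mathbb R^n : |x^n|^2 = n \sigma_{\mathsf N}^2 P\}$, and $P_{Z^k}$ to be uniform on the sphere $\{z^k \in \mathbb R^k : |z^k|^2 = k(\sigma_{\mathsf S}^2 - d)\}$; these are the natural Gaussian analogs of the capacity-achieving input and rate-distortion-achieving reproduction distributions, concentrated on their typical sets, and spherical symmetry will ensure that the resulting random variables depend on $(X^n, Y^n, S^k)$ only through scalar magnitudes.

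The first step is to handle the channel information density under the non-Gaussian $P_{X^n}$. Since the induced output $P_{Y^n}$ is not Gaussian, $\imath_{X^n;Y^n}$ is inconvenient to work with directly, so I would replace $P_{Y^n}$ by the capacity-achieving $P_{Y^{n\star}} = \mathcal N(0, \sigma_{\mathsf N}^2(1+P)\mathbf I)$ at the price of a multiplicative factor $F$. Both $P_{Y^n}$ and $P_{Y^{n\star}}$ are spherically symmetric, so the Radon-Nikodym derivative $dP_{Y^n}/dP_{Y^{n\star}}$ depends on $y^n$ only through $|y^n|^2$. Writing the densities of $|Y^n|^2/\sigma_{\mathsf N}^2$ under the two laws as $f_{W^n_{nP}}$ and a rescaling of $f_{W^n_0}$, respectively, yields the pointwise bound $dP_{Y^n}/dP_{Y^{n\star}} \leq F$ with $F$ as in \eqref{eq:AGMS-AWGNderivativebound}, and hence
\[
\exp\left(-\imath_{X^n;Y^n}(X^n;Y^n)\right) \;\leq\; F \cdot \exp\left(-\imath_{X^n;Y^{n\star}}(X^n;Y^n)\right).
\]
By spherical symmetry and \eqref{eq:ixyAWGNconditional}, the right-hand exponent has, for every $x^n$ on the power sphere under $P_{Y^n|X^n=x^n}$, the distribution of $nC(P) - \frac{\log e}{2}\bigl(\frac{P}{1+P}W^n_{n/P} - n\bigr)$.

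The second step is to lower-bound $P_{Z^k}(B_d(S^k))$. Because $Z^k$ is uniform on a sphere in $\mathbb R^k$, the event $\{|S^k - Z^k|^2 \leq kd\}$ corresponds to $Z^k$ lying in a spherical cap whose half-angle, via the law of cosines, depends on $S^k$ only through $|S^k|$. Expressing the normalized $(k-1)$-sphere surface area of that cap as an incomplete beta integral and simplifying identifies the sine-squared of the half-angle with $L(\cdot)$ and the surface-area normalization with $\Gamma(k/2+1)/(\sqrt\pi\, k\, \Gamma((k-1)/2+1))$, producing the function $\rho$. Since $|S^k|^2/\sigma_{\mathsf S}^2 \sim W^k_0$ when $S^k \sim \mathcal N(0, \sigma_{\mathsf S}^2 \mathbf I)$, we obtain $P_{Z^k}(B_d(S^k)) = \rho(W^k_0)$ in distribution.

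Substituting these two ingredients into \eqref{eq:A} and absorbing $\log F$ into the expression inside $|\cdot|^+$ produces precisely the random variable $U$ in the statement. The main obstacle is proving $F < \infty$: this requires uniform-in-$n$ control on a ratio of a non-central to a rescaled central chi-square density, which I would establish by writing $f_{W^n_{nP}}$ as a Poisson mixture of central $\chi^2_{n+2j}$ densities, comparing each term against $f_{W^n_0}(\cdot/(1+P))$, and summing via Stirling-type estimates; outside a neighborhood of the joint mode the ratio decays rapidly, while near the mode the two densities have comparable order. The spherical-cap computation, though tedious, is entirely routine.
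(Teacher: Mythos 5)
Your proposal follows essentially the same route as the paper's proof: particularize Theorem \ref{thm:A} with $P_{X^n}$ supported on the power sphere and $P_{Z^k}$ uniform on the sphere of radius $\sqrt{k(\sigma_{\mathsf S}^2-d)}$, split $\imath_{X^n;Y^n}$ through the capacity-achieving output $P_{Y^{n\star}}$ and bound the spherically symmetric derivative $dP_{Y^n}/dP_{Y^{n\star}}$ by $F$ (the paper simply cites \cite[(425), (435)]{polyanskiy2010channel} for its uniform boundedness, and cites \cite[proof of Theorem 37]{kostina2011fixed} for the spherical-cap bound you propose to rederive). The only nitpick is that $\rho(W_0^k)$ is a lower bound on $P_{Z^k}(B_d(S^k))$ rather than an equality in distribution, but since the right side of \eqref{eq:A} is monotone in $P_{Z^k}(B_d(S^k))$ this does not affect the conclusion.
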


\begin{proof}
We compute an upper bound to \eqref{eq:A} for the specific case of the transmission of a GMS over an AWGN channel. First, we weaken the infimum over $P_{Z^k}$ in \eqref{eq:A} by choosing $P_{Z^k}$  to be the uniform distribution on the surface of the $k$-dimensional sphere with center at $\mathbf 0$ and radius 
$
 r_0 = \sqrt k \sigma \sqrt{1 - \frac{d}{\sigma^2_{\mathsf S}}}
$. 
We showed in \cite[proof of Theorem 37]{kostina2011fixed} (see also \cite{wyner1968communication,sakrison1968geometric}) that
\begin{align}
 P_{Z^k}\left( B_d(s^k) \right)  &\geq \rho\left( |s^k|^2\right) \label{eq:GMSdball}
\end{align}
which takes care of the source random variable in \eqref{eq:A}. 

We proceed to analyze the channel random variable $\imath_{X^n; Y^n}(X^n; Y^n)$. Observe that since $X^n$ lies on the power sphere and the noise is spherically symmetric, $|Y^n|^2 = |X^n + N^n|^2$ has the same distribution as $|x_0^n + N^n|^2$, where $x_0^n$ is an arbitrary point on the surface of the power sphere. Letting $x_0^n = \sigma_{\mathsf N} \sqrt P \left( 1, 1, \ldots, 1 \right)$, we see that $\frac 1 {\sigma_{\mathsf N}}|x_0^n + N^n|^2 = \sum_{i = 1}^n \left(\frac{1}{\sigma_{\mathsf N}^2} N_i+ \sqrt P\right)^2$ has the non-central chi-squared distribution with $n$ degrees of freedom and noncentrality parameter $nP$. To simplify calculations, we express the information density as
\begin{equation}
\imath_{X^n; Y^n}(x_0^n; y^n ) = \imath_{X^n; Y^{n \star}}(x_0^n; y^n )  - \log \frac{dP_{Y^n}}{dP_{Y^{n \star}}}\left( y^n\right) \label{eq:-AGMS_AWGNa}
\end{equation}
where $Y^{n \star} \sim \mathcal N(0, \sigma_{\mathsf N}^2(1 + P)\mathbf I )$. 
The distribution of 
$\imath_{X^n; Y^{n \star}}(x_0^n; Y^n )$
 is the same as \eqref{eq:ixyAWGNconditional}.
 Further, due to the spherical symmetry of both $P_{Y^n}$ and $P_{Y^{n \star}}$, as discussed above, we have (recall that `$\sim$' denotes equality in distribution)
\begin{equation}
\frac{dP_{Y^n}}{dP_{Y^{n \star}}}\left( Y^n \right) \sim \frac{f_{W^{n}_{nP}}\left(W^{n}_{nP}\right)  }{ f_{W^{n}_{0}}\left(\frac{W^{n}_{nP}}{1 + P}\right) }
\end{equation}
which is bounded uniformly in $n$ as observed in \cite[(425), (435)]{polyanskiy2010channel}, thus \eqref{eq:AGMS-AWGNderivativebound} is finite, and  \eqref{eq:AGMS-AWGN} follows. 
\end{proof}
The following result strengthens Theorem \ref{thm:2order} in the special case of the GMS-AWGN.
\begin{thm}[Gaussian approximation, GMS-AWGN]
\label{thm:2orderGMS-AWGN}
The parameters of the optimal $(k, n, d, \epsilon)$ code satisfy \eqref{eq:2order}
 where $R(d)$, $C$, $\mathcal V(d)$, $V$ are given by \eqref{eq:R(d)GMS}, \eqref{eq:CapacityAWGN}, \eqref{eq:DispersionGMS}, \eqref{eq:DispersionAWGN}, respectively, and the remainder term in \eqref{eq:2order} satisfies
\begin{align}
 \bigo{1} &\leq \theta \left( n\right) \label{eq:CremainderGMS-AWGN}\\
 &\leq \log n + \log \log n + \bigo{1} \label{eq:AremainderGMS-AWGN}
\end{align}
\end{thm}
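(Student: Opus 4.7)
The plan is to mirror the proof of the general Theorem \ref{thm:2order} by plugging the explicit GMS-AWGN bounds of Theorems \ref{thm:C1GMS-AWGN} and \ref{thm:AGMS-AWGN} into a Berry-Esseen-style Gaussian approximation. The main simplification over Theorem \ref{thm:2order} is that the relevant auxiliary random variables now have explicit chi-square distributions, so moment bounds are trivial and the converse/achievability remainders can be tightened to match.

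For the converse, I would apply Theorem \ref{thm:C1GMS-AWGN}: any $(k,n,d,\epsilon)$ code satisfies $\epsilon \geq \Prob{U \geq nC(P) - kR(d) + \gamma} - \exp(-\gamma)$, with $U$ a sum of two independent zero-mean chi-square deviations whose variances sum exactly to $nV + k\mathcal{V}(d)$. All moments of these (suitably normalized) chi-squares are uniformly bounded in $n$, so the Berry-Esseen inequality yields $\Prob{U \geq t} = Q\bigl(t/\sqrt{nV + k\mathcal{V}(d)}\bigr) + \bigo{1/\sqrt n}$. Taking $\gamma$ to be a positive constant and Taylor-expanding $\Qinv{\cdot}$ around $\epsilon$ then gives $nC - kR(d) \geq \sqrt{nV+k\mathcal{V}(d)}\,\Qinv{\epsilon} - \bigo{1}$, which is \eqref{eq:CremainderGMS-AWGN}.

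For achievability, I would apply Theorem \ref{thm:AGMS-AWGN}. The effort concentrates on the term $\log(F/\rho(W_0^k))$: here $F$ is a bounded constant by \eqref{eq:AGMS-AWGNderivativebound}, while a sharp asymptotic analysis of $\rho$, built on the spherical-cap computation carried out in \cite{kostina2011fixed} for the lossy compression of the GMS, shows that $\log(1/\rho(W_0^k))$ equals $kR(d) + \frac{\log e}{2}(W_0^k - k)$ up to an $\bigo{\log k}$ additive term on a high-probability event. Thus the argument $U$ of \eqref{eq:AGMS-AWGN} rewrites as a sum of two independent, zero-mean, Berry-Esseen-amenable random variables (the source chi-square deviation and the channel chi-square deviation) plus a bounded deterministic shift, with total variance $nV + k\mathcal{V}(d)$. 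I would then choose $\log\gamma = \log n + \log\log n$, so that $e^{1-\gamma} = \bigo{1/(n\log n)}$, and combine Berry-Esseen with the standard estimate $\E{\exp(-|Z|^+)} \leq \Prob{Z \leq 0} + \bigo{1/\sqrt n}$ for centred near-Gaussian $Z$ to bound the expectation in \eqref{eq:AGMS-AWGN} by $Q\bigl((nC - kR(d) - \log\gamma)/\sqrt{nV + k\mathcal{V}(d)}\bigr)$ plus lower-order terms. Inverting around $\epsilon$ delivers \eqref{eq:AremainderGMS-AWGN}.

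The principal obstacle is the achievability direction, since two asymptotic objects must be handled in tandem: the precise expansion of $\rho(W_0^k)$, which rests on a spherical-cap volume estimate rather than on an IID product formula, and the integral $\E{\exp(-|U - \log\gamma|^+)}$, which must be controlled sharply enough not to swamp the $\log n$ term. Matching these two error sources and treating the rare-event contribution from the tails of $W_0^k$ (where the spherical-cap asymptotics degrade) is precisely what produces the additional $\log\log n$ penalty in \eqref{eq:AremainderGMS-AWGN}, as opposed to the cleaner $\tfrac{1}{2}\log n$ form seen in the almost-lossless case of Theorem \ref{thm:2order}(\ref{item:2orderlossless}).
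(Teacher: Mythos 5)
Your achievability argument is essentially the paper's: Theorem \ref{thm:AGMS-AWGN} with $F$ bounded via \eqref{eq:AGMS-AWGNderivativebound}, a spherical-cap expansion of $\log\frac{1}{\rho(W_0^k)}$ as $kR(d)+\frac{\log e}{2}(W_0^k-k)+\bigo{\log k}$ on a high-probability set (this is exactly the analysis the paper imports from \cite[Appendix~K]{kostina2011fixed}), and a Berry--Esseen step for the two independent chi-square sums. One bookkeeping correction: the $\log\log n$ term comes from $\log\gamma$ with $\gamma \asymp \log n$ (chosen so that $e^{1-\gamma}$ is polynomially small), as in the DMC proof where $\gamma=\tfrac12\log_e k+1$; it is not produced by the tails of $W_0^k$. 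Your choice $\log\gamma=\log n+\log\log n$ spends the entire budget on the threshold shift, and once you also pay the $\bigo{\log k}$ cap slack and the extra $\tfrac12\log n$ needed to bound $\E{\exp(-|\cdot|^+)}$ on the typical set, you would overshoot $\log n+\log\log n+\bigo{1}$.

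The converse, however, has a genuine gap: Theorem \ref{thm:C1GMS-AWGN} cannot yield \eqref{eq:CremainderGMS-AWGN}. With $\gamma$ a positive constant, the slack $\exp(-\gamma)$ is a constant, so Berry--Esseen gives $nC(P)-kR(d)+\gamma \geq \sqrt{nV+k\mathcal V(d)}\,\Qinv{\epsilon+\exp(-\gamma)+B/\sqrt n}$; the constant shift inside $\Qinv{\cdot}$ perturbs its value by a constant, which after multiplication by $\sqrt{nV+k\mathcal V(d)}=\Theta(\sqrt n)$ costs $\Theta(\sqrt n)$, not $\bigo{1}$. To neutralize the slack you must take $\gamma\gtrsim\tfrac12\log n$, and then this route only gives $\theta(n)\geq -\tfrac12\log n+\bigo{1}$ --- precisely the weaker bound already proved for the Gaussian channel in Theorem \ref{thm:2order} via Theorem \ref{thm:C1sym} (Appendix \ref{appx:2orderCgauss}); no choice of $\gamma$ in \eqref{eq:C1GMS-AWGN} removes the $\tfrac12\log n$ penalty, since the bound never contradicts $\theta(n)=-\tfrac12\log n$. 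The strengthening to an $\bigo{1}$ lower bound is exactly what requires the hypothesis-testing converse, Theorem \ref{thm:ChtGMS-AWGN}, which carries no $\exp(-\gamma)$ slack; the paper obtains \eqref{eq:CremainderGMS-AWGN} by an asymptotic analysis of that bound along the lines of \cite[proof of Theorem 40]{kostina2011fixed}, and your proposal would need to be rerouted through it.
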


\begin{proof}
 An asymptotic analysis of the converse bound in Theorem \ref{thm:ChtGMS-AWGN} similar to that found in \cite[proof of Theorem 40]{kostina2011fixed} leads to  \eqref{eq:CremainderGMS-AWGN}.  An asymptotic analysis of the achievability bound in Theorem \ref{thm:AGMS-AWGN} similar to \cite[Appendix K]{kostina2011fixed} leads to \eqref{eq:AremainderGMS-AWGN}. 
 \end{proof}

Numerical evaluation of the bounds reveals that JSCC noticeably outperforms SSCC in the displayed region of blocklengths (Fig. \ref{fig_GMS-AWGN}).

\begin{figure}[htbp]
    \epsfig{file=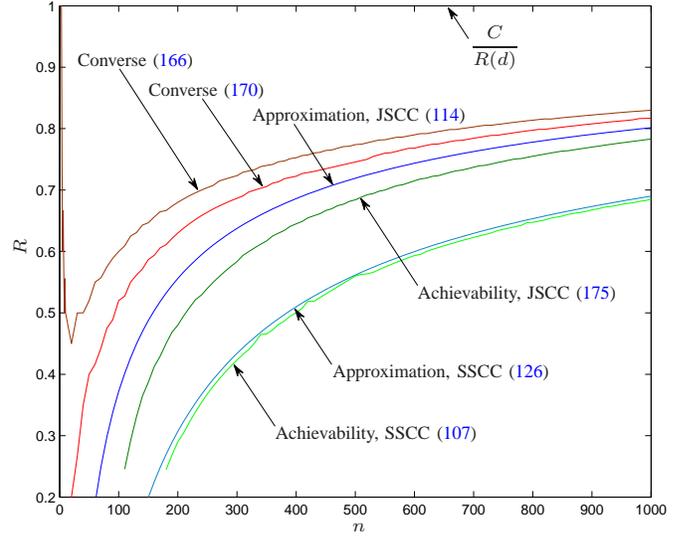,width=1\linewidth}
\caption{ Rate-blocklength tradeoff for the transmission of a GMS with $\frac{d}{\sigma_s^2} = 0.5$ over an AWGN channel with $P= 1$, $\epsilon = 10^{-2}$. }
\label{fig_GMS-AWGN}
\end{figure}

\section{To code or not to code}
\label{sec:tocodeornot}
Our goal in this section is to compare the excess distortion performance of the optimal code of rate $1$ at channel blocklength $n$ with that of the optimal symbol-by-symbol code, evaluated after  $n$ channel uses, leveraging the bounds in Sections  \ref{sec:C} and \ref{sec:A} and the approximation in Section  \ref{sec:2order}. We show certain examples in which symbol-by-symbol coding is, in fact, either optimal or very close to being optimal.  A general conclusion drawn from this section is that even when no coding is asymptotically suboptimal
it can be a very attractive choice for short blocklengths \cite{kostina2012tocodeornot}. 

\subsection{Performance of symbol-by-symbol source-channel codes}

\begin{defn}
An $(n, d, \epsilon, \alpha)$ symbol-by-symbol code is an $(n,n, d, \epsilon, \alpha)$ code $\left(\mathsf f, \mathsf g \right)$ (according to Definition \ref{defn:jscckn}) that satisfies 
\begin{align}
\mathsf f(s^n) &= \left(\mathsf f_1(s_1), \ldots, \mathsf f_1(s_n) \right)\\
\mathsf g(y^n) &= \left(\mathsf g_1(y_1), \ldots, \mathsf g_1(y_n) \right)
\end{align}
for some pair of functions $\mathsf f_1 \colon \mathcal S \mapsto \mathcal A$ and $\mathsf g_1 \colon \mathcal B \mapsto \hat{\mathcal S}$.

The minimum excess distortion achievable with symbol-by-symbol codes at channel blocklength $n$, excess probability $\epsilon$ and cost $\alpha$ is defined by
\begin{equation}
D_1(n, \epsilon, \alpha) =  \inf\left\{ d \colon \exists (n, d, \epsilon, \alpha) \text{ symbol-by-symbol code}\right\}.
 \label{eq:Dstar1ne}
\end{equation}
\label{defn:JSCC1to1code}
\end{defn}

\begin{defn}
The distortion-dispersion function of symbol-by-symbol joint source-channel coding is defined as
\begin{equation}
 \mathscr W_1(\alpha) = \lim_{\epsilon \rightarrow 0} \limsup_{n \rightarrow \infty}
\frac {  n \left(  D\left( C(\alpha)\right) - D_1(n, \epsilon, \alpha)\right)^2}{2 \log_e \frac 1 \epsilon} \label{eq:distortiondispersiondef1}
\end{equation}
where $D(\cdot)$ is the distortion-rate function of the source.  
\end{defn}
As before, if there is no channel input-cost constraint ($\mathsf c^n(x^n) = 0$ for all $x^n \in \mathcal A^n$), we will simplify the notation and write $D_1(n, \epsilon)$ for $D_1(n, \epsilon, \alpha)$ and $\mathscr W_1$ for $ \mathscr W_1(\alpha)$.

In addition to restrictions \eqref{item:ch}--\eqref{item:last} of Section \ref{sec:2order}, we assume that the channel and the source are probabilistically matched in the following sense (cf. \cite{gastpar2003tocodeornot}). 
\begin{enumerate}[(i)]
\setcounter{enumi}{4}
\item There exist $\alpha$, 
$P_{\mathsf X^\star| \mathsf S}$ and $P_{\mathsf Z^\star| \mathsf Y}$
such that $P_{\mathsf X^\star}$ and $P_{\mathsf Z^\star| \mathsf S}$ generated by the joint distribution 
$P_{\mathsf S} P_{\mathsf X^\star| \mathsf S} P_{\mathsf Y| \mathsf X} P_{\mathsf Z^\star| \mathsf Y}$
 achieve the capacity-cost function $C(\alpha)$ and the distortion-rate function $D\left( C(\alpha) \right)$, respectively. 
 \label{item:match}
\end{enumerate}
Condition \eqref{item:match} ensures that symbol-by-symbol transmission attains the minimum average (over source realizations) distortion achievable among all codes of any blocklength. The following results pertain to the full distribution of the distortion incurred at the receiver output and not just its mean. 

\begin{thm}[Achievability, symbol-by-symbol code]
Under restrictions \eqref{item:ch}-\eqref{item:match}, if 
\begin{equation}
 \Prob{\sum_{i = 1}^n \mathsf d(S_i, Z_i^\star) > nd} \leq \epsilon \label{eq:A1}
\end{equation}
 where $P_{ Z^{n \star} | S^n} = P_{\mathsf Z^\star | \mathsf S} \times \ldots \times P_{\mathsf Z^\star | \mathsf S}$, and $P_{\mathsf Z^\star|\mathsf S}$ achieves $D\left( C(\alpha) \right)$, 
 then there exists an $(n, d, \epsilon, \alpha)$ symbol-by-symbol code (average cost constraint). 
 \label{thm:d1}
\end{thm}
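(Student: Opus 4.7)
The plan is to exhibit an explicit symbol-by-symbol code built directly from the probabilistic matching condition (v), and then verify both the cost constraint and the excess-distortion constraint by symmetry of the construction.

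First I would pick, per condition (v), the (possibly randomized) per-letter encoder $\mathsf f_1 \colon \mathcal S \to \mathcal A$ whose input/output transition is $P_{\mathsf X^\star|\mathsf S}$ and the (possibly randomized) per-letter decoder $\mathsf g_1 \colon \mathcal B \to \hat{\mathcal S}$ whose input/output transition is $P_{\mathsf Z^\star|\mathsf Y}$. Applying these symbol-by-symbol, i.e. $\mathsf f(s^n) = (\mathsf f_1(s_1),\ldots,\mathsf f_1(s_n))$ and $\mathsf g(y^n) = (\mathsf g_1(y_1),\ldots,\mathsf g_1(y_n))$, and using the hypothesis that both the source and the channel are memoryless (condition (i) and (ii)), the induced joint distribution on $(S^n, X^n, Y^n, Z^n)$ factorizes as a product of $n$ i.i.d. copies of $P_{\mathsf S}P_{\mathsf X^\star|\mathsf S}P_{\mathsf Y|\mathsf X}P_{\mathsf Z^\star|\mathsf Y}$. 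In particular, each $(S_i, Z_i)$ has marginal distribution $P_{\mathsf S}P_{\mathsf Z^\star|\mathsf S}$, which by condition (v) is precisely the joint distribution that achieves the distortion-rate function $D(C(\alpha))$.

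Next I would verify the average cost constraint: since the $X_i$ are i.i.d.\ with marginal $P_{\mathsf X^\star}$, separability of the cost function gives
\begin{equation}
\E{\mathsf c_n(X^n)} = \frac{1}{n}\sum_{i=1}^n \E{\mathsf c(X_i)} = \E{\mathsf c(\mathsf X^\star)} \leq \alpha,
\end{equation}
where the final inequality holds because $P_{\mathsf X^\star}$ is the capacity-cost achieving input distribution in \eqref{eq:Capacity}. The excess-distortion probability then follows from the separability of $\mathsf d_k$ and the identification made above: the random variables $(S_i, Z_i)$ produced by the code have the same joint distribution as the $(S_i, Z_i^\star)$ appearing in the hypothesis, so
\begin{equation}
\Prob{\mathsf d_n(S^n, Z^n) > d} = \Prob{\sum_{i=1}^n \mathsf d(S_i, Z_i^\star) > nd} \leq \epsilon
\end{equation}
by assumption \eqref{eq:A1}. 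This establishes that $(\mathsf f, \mathsf g)$ is an $(n, d, \epsilon, \alpha)$ symbol-by-symbol code.

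There is really no hard obstacle here: the probabilistic matching condition (v) does all the heavy lifting, forcing the symbol-by-symbol construction to realize the $D(C(\alpha))$-achieving joint distribution $P_{\mathsf S}P_{\mathsf Z^\star|\mathsf S}$ letter-wise, after which memorylessness and separability make the cost and distortion verifications immediate. The only minor subtlety is allowing $\mathsf f_1$ and $\mathsf g_1$ to be randomized transitions rather than deterministic functions when $P_{\mathsf X^\star|\mathsf S}$ or $P_{\mathsf Z^\star|\mathsf Y}$ are not delta kernels, which is accommodated by Definition \ref{defn:jscc}'s allowance for possibly randomized encoder/decoder mappings.
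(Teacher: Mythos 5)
Your proposal is correct and follows essentially the same route as the paper: condition (v) supplies per-letter maps realizing $P_{\mathsf X^\star|\mathsf S}$ and $P_{\mathsf Z^\star|\mathsf Y}$, so the symbol-by-symbol code induces the product of i.i.d. copies of $P_{\mathsf S}P_{\mathsf Z^\star|\mathsf S}$ and the excess-distortion probability is exactly the left side of \eqref{eq:A1}. Your explicit check of the average cost constraint via $\E{\mathsf c(\mathsf X^\star)} \leq \alpha$ is a detail the paper leaves implicit, but it is the same argument.
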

\begin{proof}
If \eqref{item:match} holds, then there exist a symbol-by-symbol encoder and decoder such that the conditional distribution of the output of the decoder given the source outcome coincides with distribution $P_{\mathsf Z^\star | \mathsf S}$, so the excess-distortion probability of this symbol-by-symbol code is given by the left side of \eqref{eq:A1}. 
\end{proof}

\begin{thm}[Converse, symbol-by-symbol code]
Under restriction \eqref{item:ch} and separable distortion measure, the parameters of any $(n, d, \epsilon, \alpha)$ symbol-by-symbol code (average cost constraint) must satisfy
\begin{equation}
 \epsilon \geq \inf_{ \substack{P_{\mathsf Z| \mathsf S} \colon\\I(\mathsf S; \mathsf Z) \leq C(\alpha) }}\Prob{ \sum_{i = 1}^n \mathsf d(S_i, Z_i) > nd} \label{eq:Cuncoded}
\end{equation}
where  $P_{Z^n|S^n} = P_{\mathsf Z | \mathsf S} \times \ldots \times P_{\mathsf Z | \mathsf S}$. 
\end{thm}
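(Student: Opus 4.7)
The plan is to show that every $(n,d,\epsilon,\alpha)$ symbol-by-symbol code induces, through its componentwise structure and the memoryless channel, a single-letter conditional $P_{\mathsf Z \mid \mathsf S}$ that (a) is feasible in the infimum on the right side of \eqref{eq:Cuncoded} and (b) makes the argument of that infimum coincide exactly with the code's actual excess-distortion probability. The desired lower bound then follows immediately.

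First I would fix an arbitrary symbol-by-symbol code with component maps $\mathsf f_1\colon\mathcal S\to\mathcal A$ and $\mathsf g_1\colon\mathcal B\to\hat{\mathcal S}$ (possibly randomized), and set $X_i=\mathsf f_1(S_i)$, $Z_i=\mathsf g_1(Y_i)$. Because of the memoryless channel restriction \eqref{item:ch} and the componentwise encoder/decoder, the joint conditional factorizes as
\begin{equation*}
P_{Z^n\mid S^n}(z^n\mid s^n)=\prod_{i=1}^n P_{\mathsf Z\mid\mathsf S}(z_i\mid s_i),
\end{equation*}
where the single-letter kernel $P_{\mathsf Z\mid\mathsf S}$ is the composition of $\mathsf f_1$, $P_{\mathsf Y\mid\mathsf X}$, and $\mathsf g_1$ at one symbol, and does not depend on $i$. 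Separability of the distortion measure then makes the code's excess-distortion probability exactly $\mathbb P\!\left[\sum_{i=1}^n \mathsf d(S_i,Z_i)>nd\right]$ computed under $P_{S^n}\prod_i P_{\mathsf Z\mid\mathsf S}$, which is precisely the form of the infimand in \eqref{eq:Cuncoded}.

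Next I would verify feasibility, namely $I(\mathsf S;\mathsf Z)\leq C(\alpha)$. Let $P_{\mathsf X}$ be the common marginal of $X_i=\mathsf f_1(S_i)$ (same across $i$ by stationarity of the source). The average cost constraint $\frac 1 n\sum_i \E{\mathsf c(X_i)}\leq\alpha$ reduces to $\E{\mathsf c(\mathsf X)}\leq\alpha$, so $P_{\mathsf X}$ is admissible in the definition \eqref{eq:Capacity} of $C(\alpha)$. The Markov chain $\mathsf S-\mathsf X-\mathsf Y-\mathsf Z$ together with the data processing inequality then yields
\begin{equation*}
I(\mathsf S;\mathsf Z)\leq I(\mathsf X;\mathsf Y)\leq C(\alpha),
\end{equation*}
so the induced kernel is a feasible point of the infimum.

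Combining the two steps, $\epsilon$ equals the infimand evaluated at a feasible kernel and is therefore at least the infimum, proving \eqref{eq:Cuncoded}. I do not anticipate any real obstacle: after the memoryless-channel assumption reduces the block quantities to their single-letter counterparts, the whole argument is essentially one application of data processing. The only care needed is to allow randomized component maps $\mathsf f_1$, $\mathsf g_1$, in which case they are interpreted as stochastic kernels and the composition identity above still holds verbatim.
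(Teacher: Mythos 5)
Your proposal is correct and follows essentially the same route as the paper's proof: the paper likewise views the symbol-by-symbol code as a single-letter pair $(P_{\mathsf X|\mathsf S}, P_{\mathsf Z|\mathsf Y})$, uses the Markov chain $\mathsf S - \mathsf X - \mathsf Y - \mathsf Z$ with data processing and $\E{\mathsf c(\mathsf X)} \leq \alpha$ to get $I(\mathsf S;\mathsf Z) \leq I(\mathsf X;\mathsf Y) \leq C(\alpha)$, and then lower bounds the excess-distortion probability by the infimum over feasible single-letter kernels, exactly as you do. The only cosmetic difference is that the paper phrases this as a chain of infima over single-letter codes rather than evaluating the infimand at the induced kernel, which is the same argument.
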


\begin{proof}
The excess-distortion probability at blocklength $n$, distortion $d$ and cost $\alpha$ achievable among all single-letter codes 
$\left(P_{\mathsf X| \mathsf S},~ P_{\mathsf Z|\mathsf Y} \right)$ 
must satisfy
\begin{align}
 \epsilon &\geq \inf_{ \substack{P_{\mathsf X| \mathsf S}, P_{\mathsf Z|\mathsf Y}  \colon\\ \mathsf S - \mathsf X - \mathsf Y - \mathsf Z\\
 \E{\mathsf c(\mathsf X)} \leq \alpha }}\Prob{ \mathsf d_n(S^n, Z^n) > d }\\
 &\geq \inf_{ \substack{P_{\mathsf X| \mathsf S}, P_{\mathsf Z|\mathsf Y}  \colon\\ \E{\mathsf c(\mathsf X)} \leq \alpha\\ I(\mathsf S; \mathsf Z ) \leq I(\mathsf X; \mathsf Y) }}\Prob{\mathsf d_n(S^n, Z^n) > d } \label{eq:-C1a}
 \end{align}
where \eqref{eq:-C1a} holds since $\mathsf S - \mathsf X - \mathsf Y - \mathsf Z$ implies $ I(\mathsf S; \mathsf Z ) \leq I(\mathsf X; \mathsf Y)$ by the data processing inequality. The right side of \eqref{eq:-C1a} is lower bounded by the right side of \eqref{eq:Cuncoded} because $I(\mathsf X; \mathsf Y) \leq C(\alpha)$ holds for all $P_{\mathsf X}$ with $\E{\mathsf c(\mathsf X)} \leq \alpha$, and the distortion measure is separable. 
 \end{proof}
\begin{thm}[Gaussian approximation, optimal symbol-by-symbol code]
Assume $\E{\mathsf d^3\left( \mathsf S, \mathsf Z^\star\right)} < \infty$. Under restrictions \eqref{item:ch}-\eqref{item:match},
\begin{align}
 D_1(n, \epsilon, \alpha) &= D\left( C(\alpha) \right) + \sqrt \frac{\mathscr W_1(\alpha) }{n} \Qinv{\epsilon} + \frac {\theta_1(n)} n  \label{eq:2order1}\\
\mathscr W_1(\alpha) &= \Var{\mathsf d(\mathsf S, \mathsf Z^\star)}\label{eq:dispersion1}
\end{align}
where
\begin{equation}
 \theta_1(n) \leq \bigo{1} \label{eq:Aremainder1}
\end{equation}
 
 Moreover, if there is no power constraint,
\begin{align}
\theta_1(n) &\geq \frac{D^\prime(R)}{R^2}\theta(n)  \label{eq:Cremainder1}\\
\mathscr W_1 &= \mathscr W(1) \label{eq:V=V1}
\end{align}
where $\theta(n)$ is that in Theorem \ref{thm:2order}.

If $\Var{\mathsf d\left( \mathsf S, \mathsf Z^\star\right)} > 0$ and $\mathcal S$, $\hat {\mathcal {S}}$ are finite, then
 \begin{equation}
 \theta_1(n) \geq \bigo{1}\label{eq:Cremainder1finite}
\end{equation}
 \label{thm:2order1}
\end{thm}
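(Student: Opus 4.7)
The plan is to split the claim into an achievability part, derived from Theorem \ref{thm:d1} combined with the Berry-Esseen inequality, and converse parts inherited from the general-JSCC Gaussian approximation (Theorem \ref{thm:2order}), exploiting the elementary observation that any symbol-by-symbol code is an $(n, n, d, \epsilon, \alpha)$ code in the sense of Definition \ref{defn:jscckn}, so $D_1(n, \epsilon, \alpha) \geq D(n, n, \epsilon, \alpha)$.

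For achievability \eqref{eq:Aremainder1}, Theorem \ref{thm:d1} reduces the task to finding $d$ satisfying
\begin{equation*}
\Prob{\sum_{i=1}^n \mathsf d(S_i, Z_i^\star) > n d} \leq \epsilon,
\end{equation*}
where the summands are i.i.d.\ with $\E{\mathsf d(\mathsf S, \mathsf Z^\star)} = D(C(\alpha))$ by the matching condition \eqref{item:match} (since $P_{\mathsf Z^\star \mid \mathsf S}$ achieves $D(C(\alpha))$), common variance $\mathscr W_1(\alpha)$ by \eqref{eq:dispersion1}, and finite third absolute moment by hypothesis. The Berry-Esseen theorem then yields
\begin{equation*}
\Prob{\sum_{i=1}^n \mathsf d(S_i, Z_i^\star) > n d} \leq Q\!\left(\sqrt{n/\mathscr W_1(\alpha)}\,(d - D(C(\alpha)))\right) + \frac{B}{\sqrt n}
\end{equation*}
for some constant $B$ determined by the third moment. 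Setting the right side equal to $\epsilon$, inverting for $d$, and Taylor-expanding $\Qinv{\epsilon - B/\sqrt n}$ about $\epsilon$ gives \eqref{eq:2order1} with $\theta_1(n) \leq \bigo{1}$.

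For the converse \eqref{eq:Cremainder1} in the unconstrained case, evaluating \eqref{eq:2orderD} at rate $R = 1$ gives
\begin{equation*}
D(n, n, \epsilon) = D(C) + \sqrt{\mathscr W(1)/n}\,\Qinv{\epsilon} + c\,\frac{\theta(n)}{n}
\end{equation*}
with $c = \frac{\partial}{\partial R}D(C/R)\big|_{R=1}$ a nonzero constant. Combining this with $D_1(n,\epsilon) \geq D(n, n, \epsilon)$ and matching the zeroth-, first-, and second-order terms with \eqref{eq:2order1} yields \eqref{eq:V=V1} and \eqref{eq:Cremainder1} simultaneously. In the finite-alphabet subcase, the $V > 0$ converse in Theorem \ref{thm:2order} delivers $\theta(n) \geq \bigo{1}$, which propagates through the same inequality to give \eqref{eq:Cremainder1finite}.

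The main obstacle is validating the identity $\mathscr W_1(\alpha) = \mathscr W(1)$ in \eqref{eq:V=V1}, which on its face asserts the non-obvious equality $\Var{\mathsf d(\mathsf S, \mathsf Z^\star)} = (C D'(C))^2\bigl(V + \mathcal V(D(C))\bigr)$. The cleanest route is indirect: the achievability above pins the leading coefficient of $D_1$ at $\mathscr W_1(\alpha)$, while the general-JSCC converse independently pins it at $\mathscr W(1)$, forcing the two to agree. A direct verification is also available: matching forces the pointwise equality $\imath_{\mathsf X^\star;\mathsf Y^\star}(\mathsf X^\star; \mathsf Y^\star) = \imath_{\mathsf S;\mathsf Z^\star}(\mathsf S; \mathsf Z^\star)$ along the Markov chain $\mathsf S - \mathsf X^\star - \mathsf Y^\star - \mathsf Z^\star$, and combined with the decomposition \eqref{eq:jddensity} and the identity $\lambda^\star = -1/D'(C)$ this reduces \eqref{eq:V=V1} to a straightforward variance computation.
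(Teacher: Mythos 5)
Your achievability argument (Berry--Esseen applied to \eqref{eq:A1}) and your derivation of \eqref{eq:Cremainder1} from the inclusion $D(n,n,\epsilon)\le D_1(n,\epsilon)$ together with the converse part of \eqref{eq:2orderD} are exactly the paper's route. Two caveats on the \eqref{eq:V=V1} step: ``matching the terms'' of a one-sided inequality does not by itself force equality of the $1/\sqrt n$ coefficients; for a fixed $\epsilon$ the sandwich only gives $\mathscr W(1)\le \mathscr W_1$ when $\Qinv{\epsilon}>0$, and you must also run the argument for $\epsilon>\tfrac12$ (where $\Qinv{\epsilon}<0$) to get the reverse inequality --- the paper makes this sign observation explicit. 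Your proposed ``direct verification'' is not straightforward: probabilistic matching does not obviously yield pointwise equality of the channel and source information densities, and even granting it, reducing $\Var{\mathsf d(\mathsf S,\mathsf Z^\star)}$ via \eqref{eq:jddensity} leaves a cross-covariance between $\jmath_{\mathsf S}(\mathsf S,d)$ and the channel information density that must be shown to vanish; that such an identity really hinges on the absence of a cost constraint is illustrated by the GMS--AWGN example, where \eqref{eq:GMS-AWGN W1>W} shows $\mathscr W(1,P)>\mathscr W_1(P)$. Since the indirect route suffices, this is secondary.

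The genuine gap is \eqref{eq:Cremainder1finite}. The $V>0$ converse of Theorem \ref{thm:2order} does \emph{not} deliver $\theta(n)\ge \bigo{1}$; it gives $\theta(n)\ge -(|\mathcal A|-\tfrac12)\log n+\bigo{1}$ (at best $-\tfrac12\log n$ in the symmetric and Gaussian cases). Moreover, the coefficient $\frac{D^\prime(R)}{R^2}$ in \eqref{eq:Cremainder1} is negative, so lower-bounding $\theta_1(n)$ through that inequality requires an \emph{upper} bound on $\theta(n)$, namely \eqref{eq:Aremainder}; either way the JSCC route only yields $\theta_1(n)\ge -\bigo{\log n}$, strictly weaker than \eqref{eq:Cremainder1finite}. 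The paper proves \eqref{eq:Cremainder1finite} by a separate argument: apply the Berry--Esseen bound inside the symbol-by-symbol converse \eqref{eq:Cuncoded}, uniformly over single-letter kernels $P_{\mathsf Z|\mathsf S}$ with $I(\mathsf S;\mathsf Z)\le C(\alpha)$, and then invoke Lemma \ref{lemma:maxsum} --- using finiteness of $\mathcal S,\hat{\mathcal S}$ so that $\E{\mathsf d(\mathsf S,\mathsf Z)}$ is linear and $\sqrt{\Var{\mathsf d(\mathsf S,\mathsf Z)}}$ is Lipschitz on the compact set of joint distributions --- to conclude $D_1(n,\epsilon,\alpha)\ge D(C(\alpha))+\sqrt{\mathscr W_1(\alpha)/n}\,\Qinv{\epsilon+B/\sqrt n}$, i.e., $\theta_1(n)\ge\bigo{1}$. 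Note this argument also covers the cost-constrained case, which your route cannot, since \eqref{eq:V=V1} fails under a cost constraint. You need to supply this dedicated single-letter converse analysis.
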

\begin{proof} 
Since the third absolute moment of $\mathsf d(S_i, Z_i^\star)$ is finite, the achievability part of the result, namely, \eqref{eq:2order1} with the remainder satisfying \eqref{eq:Aremainder1},
 follows by a straightforward application of the Berry-Esseen bound to \eqref{eq:A1}, provided that $\Var{\mathsf d(S_i, Z_i^\star)} > 0$. If $\Var{\mathsf d(S_i, Z_i^\star)} = 0$, it follows trivially from \eqref{eq:A1}. 

To show the converse in \eqref{eq:Cremainder1}, observe that since the set of all $(n, n, d, \epsilon)$ codes includes all $(n, d, \epsilon)$ symbol-by-symbol codes, we have $D(n, n, \epsilon) \leq D_1(n, \epsilon)$. Since $\Qinv{\epsilon}$ is positive or negative depending on whether $\epsilon < \frac 1 2$ or $\epsilon > \frac 1 2$, using \eqref{eq:JSCCdispersiondistortion} we conclude that we must necessarily have \eqref{eq:V=V1}, which is, in fact, a consequence of conditions \eqref{item:b}, \eqref{item:c} in Section \ref{sec:prelim} and \eqref{item:match}.  Now, \eqref{eq:Cremainder1} is simply the converse part of \eqref{eq:2orderD}. 

The proof of the refined converse in \eqref{eq:Cremainder1finite} is relegated to Appendix \ref{appx:2order1}. 
\end{proof}

In the absence of a cost constraint, Theorem \ref{thm:2order1} shows that if the source and the channel are probabilistically matched in the sense of \cite{gastpar2003tocodeornot}, then not only does symbol-by-symbol transmission achieve the minimum average distortion, but also the dispersion of JSCC (see \eqref{eq:V=V1}). In other words, not only do such symbol-by-symbol codes attain the minimum average distortion but also the variance of distortions at the decoder's output is the minimum achievable among all codes operating at that average distortion. In contrast, if there is an average cost constraint, the symbol-by-symbol codes considered in Theorem \ref{thm:2order1} probably do not attain the minimum excess distortion achievable among all blocklength-$n$ codes, not even asymptotically. Indeed, as observed in \cite{polyanskiy2010thesis}, for the transmission of an equiprobable source over an AWGN channel under the average power constraint and the average block error probability performance criterion, the strong converse does not hold and the second-order term is of order $n^{-\frac 1 3}$, not $n^{-\frac 1 2}$, as in \eqref{eq:2order1}. 

Two conspicuous examples that satisfy the probabilistic matching condition \eqref{item:match}, so that symbol-by-symbol coding is optimal in terms of average distortion, are the transmission of a binary equiprobable source over a binary-symmetric channel provided the desired bit error rate is equal to the crossover probability of the channel \cite[Sec.11.8]{jelinek1968probabilistic}, \cite[Problem 7.16]{csiszar2011information}, and the transmission of a Gaussian source over an additive white Gaussian noise channel under the mean-square error distortion criterion, provided that the tolerable source signal-to-noise ratio attainable by an estimator is equal to the signal-to-noise ratio at the output of the channel \cite{Goblick1965theoretical}. We dissect these two examples next. After that, we will discuss two additional examples where uncoded transmission is optimal. 

\subsection{Uncoded transmission of a BMS over a BSC}
In the setup of Section \ref{sec:BMS_BSC}, if the binary source is unbiased $\left(p = \frac 1 2\right)$, then
$C = 1 - h(\delta)$,
$R(d) = 1 - h(d)$, 
and $D(C) = \delta$. If the encoder and the decoder are both identity mappings (uncoded transmission), the resulting joint distribution satisfies condition \eqref{item:match}. As is well known, regardless of the blocklength, the uncoded symbol-by-symbol scheme achieves the minimum bit error rate (averaged over source and channel). Here, we are interested instead in examining the excess distortion probability criterion.
For example, consider an application where, if the fraction of erroneously received bits exceeds a certain threshold, then the entire output packet is useless.

Using \eqref{eq:JSCCdispersiondistortion} and \eqref{eq:dispersion1}, it is easy to verify that 
\begin{equation}
 \mathscr W(1) = \mathscr W_1 = \delta (1 - \delta)
\end{equation}
 that is, uncoded transmission is optimal in terms of dispersion, as anticipated in \eqref{eq:V=V1}. Moreover, the uncoded transmission attains the minimum bit error rate threshold $D(n, n, \epsilon)$ achievable among all codes operating at blocklength $n$, regardless of the allowed $\epsilon$, as the following result demonstrates.
 
 \begin{figure}[htbp]
    \epsfig{file=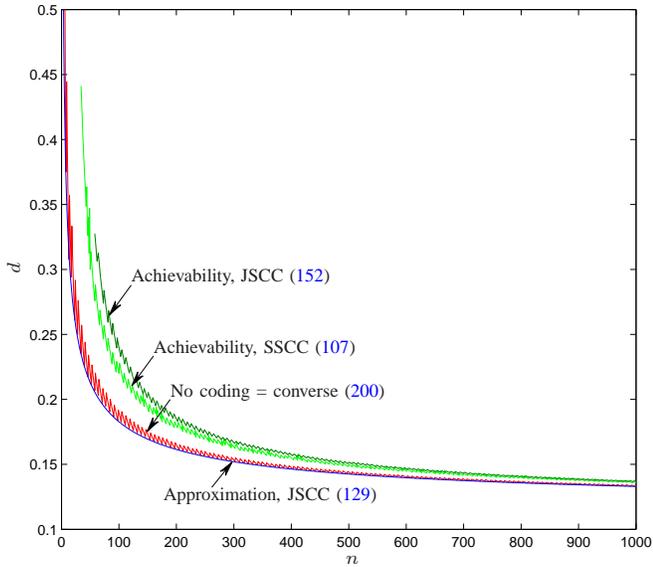,width=1\linewidth}
\caption{ Distortion-blocklength tradeoff for the transmission of a fair BMS over a BSC with crossover probability $\delta = 0.11$ and $R = 1$, $\epsilon = 10^{-2}$. }
\label{fig:DEBMS-BSC}
\end{figure}

\begin{figure}[htbp]
\subfigure[]{
    \epsfig{file=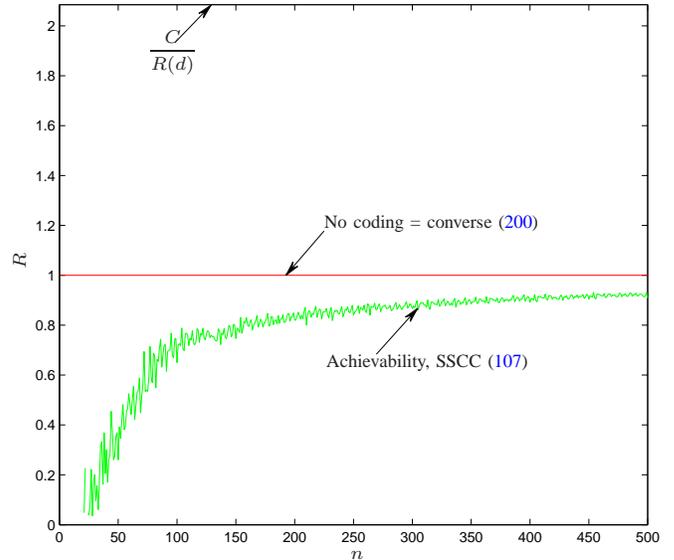, width=1\linewidth}
        \label{subfig:EBMS-BSC_b_R}
}
\subfigure[]{
    \epsfig{file=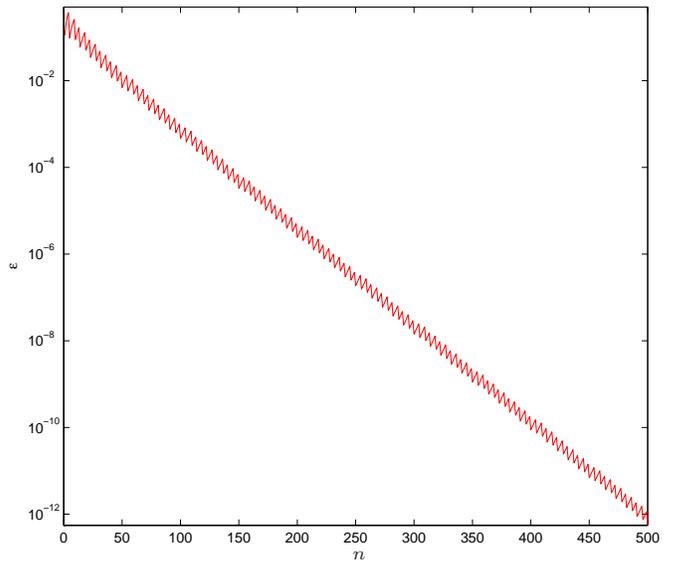, width=1\linewidth}
            \label{subfig:EBMS-BSC_b_E}
}
\caption{ Rate-blocklength  tradeoff \subref{subfig:EBMS-BSC_b_R} for the transmission of a fair BMS over a BSC with crossover probability $\delta = 0.11$ and $d = 0.22$. The excess-distortion probability $\epsilon$ is set to be the one achieved by the uncoded scheme \subref{subfig:EBMS-BSC_b_E}.  }
\label{fig:Special_EBMS-BSC_b}
\end{figure}

\begin{figure}[htbp]
    \epsfig{file=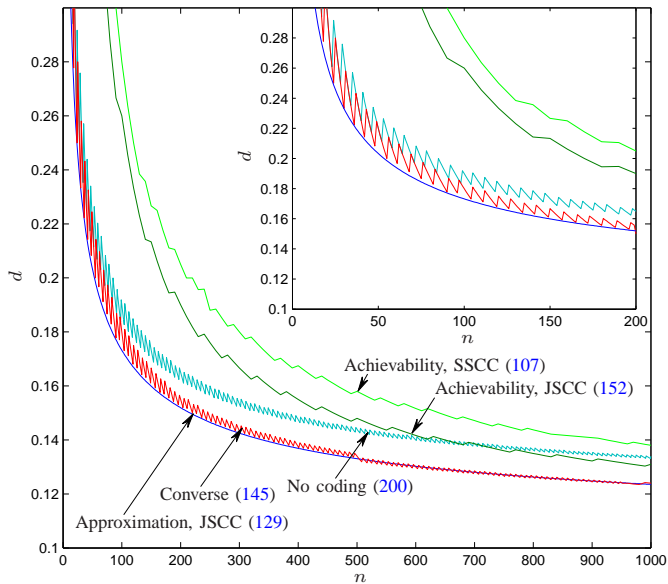,width=1\linewidth}
\caption{ Distortion-blocklength tradeoff for the transmission of a BMS with $p = \frac 2 5$ over a BSC with crossover probability $\delta = 0.11$ and $R = 1$, $\epsilon = 10^{-2}$. }
\label{fig:DBMS-BSC}
\end{figure}

 \begin{thm}[BMS-BSC, symbol-by-symbol code]
 Consider the the symbol-by-symbol scheme which is uncoded if $p \geq \delta$ and whose decoder always outputs the all-zero vector if $p < \delta$. It achieves, at blocklength $n$ and excess distortion probability $\epsilon$, regardless of $0 \leq p \leq \frac 1 2$, $\delta \leq \frac 1 2$,
 \begin{align}
&D_1(n, \epsilon) = \notag \\
 &\min \left\{ d \colon \sum_{t = 0}^{\lfloor nd \rfloor} {n \choose t} \min\{ p, \delta\}^t (1 - \min\{ p, \delta\})^{n - t} \geq 1 - \epsilon \right\} \label{eq:A1_BMS-BSC}
\end{align}
Moreover, if the source is equiprobable $\left(p = \frac 1 2\right)$, 
\begin{equation}
 D_1(n, \epsilon)  = D(n, n, \epsilon) \label{eq:optEBMS}
\end{equation}
\label{thm:A1_BMS-BSC}
\end{thm}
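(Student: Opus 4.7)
The plan is to sandwich $D_1(n,\epsilon)$ by matching single-letter achievability and converse bounds to obtain \eqref{eq:A1_BMS-BSC}, and then, when $p=1/2$, upgrade the converse to all $(n,n,d,\epsilon)$ codes via Theorem \ref{thm:ChtEBMS-BSC}. Achievability is immediate: when $p\geq\delta$, uncoded transmission yields $\mathsf d(S_i,Z_i)=S_i\oplus Y_i=N_i\sim\mathrm{Bern}(\delta)$; when $p<\delta$, the constant-zero decoder gives $\mathsf d(S_i,Z_i)=S_i\sim\mathrm{Bern}(p)$. In either regime the coordinates are i.i.d., so the number of distorted symbols has the distribution of $T^n_{\min\{p,\delta\}}$, and requiring its tail at $nd$ to be at most $\epsilon$ produces the right side of \eqref{eq:A1_BMS-BSC}.

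For the matching single-letter converse on $D_1(n,\epsilon)$, I would enumerate the sixteen deterministic pairs $(\mathsf f_1,\mathsf g_1)$ with $\mathsf f_1,\mathsf g_1\colon\{0,1\}\to\{0,1\}$; each map is a bit-permutation or a constant. A direct computation shows that the induced per-symbol error probability $q=\Prob{Z_i\neq S_i}$ always belongs to the finite set $\{\delta,\,1-\delta,\,p,\,1-p,\,p*\delta,\,1-p*\delta\}$, where $p*\delta=p+\delta-2p\delta$. Under $p,\delta\leq 1/2$, the identities $p*\delta-p=\delta(1-2p)\geq 0$, $p*\delta-\delta=p(1-2\delta)\geq 0$, and the consequence $p*\delta\leq 1/2$ single out $\min\{p,\delta\}$ as the smallest element. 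Since $T^n_q$ is stochastically nondecreasing in $q$, the tail $\Prob{T^n_q>nd}$ is minimized at $q=\min\{p,\delta\}$, matching the achievability bound and proving \eqref{eq:A1_BMS-BSC}.

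For \eqref{eq:optEBMS}, set $p=1/2$. The inequality $D(n,n,\epsilon)\leq D_1(n,\epsilon)$ is trivial. For the reverse direction, I would invoke Theorem \ref{thm:ChtEBMS-BSC} with $k=n$, which forces any $(n,n,d,\epsilon)$ code to obey
\[
\binosum{n}{r^\star}+\lambda\binom{n}{r^\star+1}\leq\binosum{n}{\lfloor nd\rfloor},
\]
where $r^\star$ and $\lambda\in[0,1)$ are determined by the $\delta$-binomial quantile equation. Splitting on $\lambda$: if $\lambda=0$ the inequality reduces to $\binosum{n}{r^\star}\leq\binosum{n}{\lfloor nd\rfloor}$ and forces $\lfloor nd\rfloor\geq r^\star$, while if $\lambda>0$ the strict inequality $\binosum{n}{r^\star}+\lambda\binom{n}{r^\star+1}>\binosum{n}{r^\star}$ forces $\lfloor nd\rfloor\geq r^\star+1$. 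In each case the minimal admissible $d$ coincides exactly with the value produced by uncoded transmission, giving $D(n,n,\epsilon)\geq D_1(n,\epsilon)$. The substantive work is bookkeeping; I expect the most delicate step to be aligning the hypothesis-testing threshold $(r^\star,\lambda)$ with the integer-valued optima, in particular checking that in the tight case $\lambda=0$ the value $d=r^\star/n$ is genuinely achievable rather than merely limiting, which is direct since $\lfloor n\cdot r^\star/n\rfloor=r^\star$.
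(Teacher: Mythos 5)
Your proof is correct and takes essentially the same route as the paper: the achievability part is the same direct binomial-tail computation (your enumeration of the sixteen deterministic single-letter maps merely makes explicit the optimality over symbol-by-symbol codes that the paper subsumes under ``direct calculation''), and the $p=\frac 1 2$ claim is proved exactly as in the paper, by applying Theorem \ref{thm:ChtEBMS-BSC} with $k=n$ and splitting on $\lambda=0$ versus $\lambda>0$ to force $\lfloor nd\rfloor \geq n D_1(n,\epsilon)$ for any $(n,n,d,\epsilon)$ code. Your $\lambda>0$ argument (the strict inequality forces $\lfloor nd\rfloor\geq r^\star+1$) is the contrapositive of the paper's observation that $d<d^\star$ would require $\lambda\leq 0$, so there is no substantive difference.
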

\begin{proof}
Direct calculation yields \eqref{eq:A1_BMS-BSC}. To show \eqref{eq:optEBMS}, let us compare $d^\star = D_1(n,\epsilon)$ with the conditions imposed on $d$ by Theorem \ref{thm:ChtEBMS-BSC}.
Comparing \eqref{eq:A1_BMS-BSC} to \eqref{eq:ChtEBMS-BSCrstar}, we see that either 
\begin{enumerate}[(a)]
 \item equality in \eqref{eq:A1_BMS-BSC} is achieved, $r^\star = nd^\star$, $\lambda = 0$, and (plugging $k = n$ into \eqref{eq:ChtEBMS-BSC})  
\begin{equation}
  \binosum{n}{ nd^\star} \leq \binosum{n}{ \lfloor nd \rfloor }
\end{equation}
 thereby implying that $d \geq d^\star$, or
 \item $r^\star =  nd^\star - 1$, $\lambda > 0$, and \eqref{eq:ChtEBMS-BSC} becomes
 \begin{equation}
 \lambda{ n \choose nd^\star   } + \binosum{n}{nd^\star - 1}  \leq \binosum{n}{ \lfloor nd \rfloor} \label{eq:-nocoding}
\end{equation}
which also implies $d \geq d^\star$. 
To see this, note that $d < d^\star$ would imply
$ \lfloor nd \rfloor \leq n d^\star - 1$
since $nd^\star$ is an integer, which in turn would require (according to \eqref{eq:-nocoding}) that $\lambda \leq 0$, which is impossible.
\end{enumerate}
\end{proof}

For the transmission of the fair binary source over a BSC, Fig. \ref{fig:DEBMS-BSC} shows the distortion achieved by the uncoded scheme, the separated scheme and the JSCC scheme of Theorem \ref{thm:ABMS-BSC} versus $n$ for a fixed excess-distortion probability $\epsilon = 0.01$. The no coding / converse curve in Fig. \ref{fig:DEBMS-BSC} depicts one of those singular cases where the non-asymptotic fundamental limit can be computed precisely. As evidenced by this curve, the fundamental limits need not be monotonic with blocklength. 

Figure  \ref{subfig:EBMS-BSC_b_R} shows the rate achieved by separate coding when $d > \delta$ is fixed, and the excess-distortion probability $\epsilon$, shown in Fig. \ref{subfig:EBMS-BSC_b_E}, is set to be the one achieved by uncoded transmission, namely, \eqref{eq:A1_BMS-BSC}.  Figure \ref{subfig:EBMS-BSC_b_R}  highlights the fact that at short blocklengths (say $n \leq 100$) separate source/channel coding is vastly suboptimal. As the blocklength increases, the performance of the separated scheme approaches that of the no-coding scheme, but according to Theorem \ref{thm:A1_BMS-BSC} it can never outperform it. Had we allowed the excess distortion probability to vanish sufficiently slowly, the JSCC curve would have approached the Shannon limit as $n \to \infty$.
However, in Figure  \ref{subfig:EBMS-BSC_b_R}, the exponential decay in $\epsilon$ is such that there is indeed an asymptotic rate penalty as predicted in \cite{csiszar1982error}.

For the biased binary source with $p = \frac 2 5$ and BSC with crossover probability $0.11$, Figure \ref{fig:DBMS-BSC} plots the maximum distortion achieved with probability $0.99$ by the uncoded scheme, which in this case is asymptotically suboptimal. Nevertheless, uncoded transmission performs remarkably well in the displayed range of blocklengths, achieving the converse almost exactly at blocklengths less than $100$, and outperforming the JSCC achievability result in Theorem \ref{thm:ABMS-BSC} at blocklengths as long as $700$. This example substantiates that even in the absence of a probabilistic match between the source and the channel,  symbol-by-symbol transmission, though asymptotically suboptimal, might outperform SSCC and even our random JSCC achievability bound in the finite blocklength regime.

%\begin{figure}[htbp]
%\subfigure[]{
%    \epsfig{file=Special_R_EBMS_a.eps, width=.8\linewidth}
%        %\includegraphics[bb=0 0 906 919, width=0.3\linewidth]{ARA.pdf}   %
%        \label{subfig:R}
%}
%\subfigure[]{
%    \epsfig{file=Special_E_EBMS_a.eps, width=.8\linewidth}
%            \label{subfig:E}
%}
%\caption{ Rate-blocklength  tradeoff \subref{subfig:R} for the transmission of a fair BMS over a BSC with crossover probability $\delta = 0.11$ and $d = 0.12$. The excess-distortion probability $\epsilon$ is set to be the one achieved given by the no-coding scheme \subref{subfig:E}.  }
%\label{fig:Special_EBMS-BSC_a}
%\end{figure}

\subsection{Symbol-by-symbol coding for lossy transmission of a GMS over an AWGN channel}
In the setup of Section \ref{sec:GMS_AWGN}, using \eqref{eq:R(d)GMS} and \eqref{eq:CapacityAWGN}, we find that
\begin{equation}
 D(C(P)) = \frac{\sigma_{\mathsf S}^2}{1 + P} \label{eq:dbarGMS-AWGN}
\end{equation}
The next result characterizes the distribution of the distortion incurred by the symbol-by-symbol scheme that attains the minimum average distortion.
\begin{thm}[GMS-AWGN, symbol-by-symbol code]
The following  symbol-by-symbol transmission scheme in which the encoder and the decoder are the amplifiers:
\begin{align}
 \mathsf  f_1( \mathsf s) &= a \mathsf s, ~ a^2 = \frac{P \sigma_{\mathsf N}^2}{\sigma_{\mathsf S}^2}\\
 \mathsf g_1(\mathsf y) &= b \mathsf y, ~b = \frac{a \sigma_{\mathsf S}^2}{a^2 \sigma_{\mathsf S}^2 + \sigma_{\mathsf N}^2}
\end{align}
is an $(n, d, \epsilon, P)$ symbol-by-symbol code (with average cost constraint) such that
\begin{equation}
 \Prob{W_0^n D(C(P))> n d } = \epsilon \label{eq:d1GMS-AWGN}
\end{equation}
where $W_0^n$ is chi-square distributed with $n$ degrees of freedom. 
\label{thm:GMS-AWGN_d1}
\end{thm}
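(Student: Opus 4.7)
The plan is a direct calculation: verify that the proposed amplifier pair meets the average power constraint, compute the joint distribution of the encoder--decoder error, and recognize the resulting sum of squares as a scaled chi-square random variable.

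First I would check the cost: since $\mathsf X = a\mathsf S$ with $\mathsf S \sim \mathcal N(0,\sigma_{\mathsf S}^2)$ and $a^2 = P\sigma_{\mathsf N}^2/\sigma_{\mathsf S}^2$, we get $\mathbb E[\mathsf X^2]/\sigma_{\mathsf N}^2 = P$, so the average cost constraint $\mathbb E[\mathsf c(\mathsf X)] \le P$ holds with equality (per coordinate, hence also after averaging across the block).

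Next I would write $\mathsf Z = b\mathsf Y = b(a\mathsf S + \mathsf N) = ab\mathsf S + b\mathsf N$ and compute $ab = a^2\sigma_{\mathsf S}^2/(a^2\sigma_{\mathsf S}^2 + \sigma_{\mathsf N}^2) = P/(1+P)$, so that the per-symbol error $\mathsf S - \mathsf Z = (1-ab)\mathsf S - b\mathsf N$ is a zero-mean Gaussian (being a linear combination of independent Gaussians) with variance
\begin{equation*}
(1-ab)^2 \sigma_{\mathsf S}^2 + b^2 \sigma_{\mathsf N}^2
= \frac{\sigma_{\mathsf S}^2}{(1+P)^2} + \frac{P\,\sigma_{\mathsf S}^2}{(1+P)^2}
= \frac{\sigma_{\mathsf S}^2}{1+P} = D(C(P)),
\end{equation*}
where I use \eqref{eq:dbarGMS-AWGN} in the last step. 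This is simply the classical orthogonality principle: $b$ is the Wiener filter coefficient, so the MMSE is attained and is equal to $D(C(P))$.

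Finally, because $\{(S_i,N_i)\}_{i=1}^n$ are i.i.d., the errors $\{S_i - Z_i\}_{i=1}^n$ are i.i.d.\ $\mathcal N\bigl(0, D(C(P))\bigr)$, and consequently
\begin{equation*}
\sum_{i=1}^n \mathsf d(S_i, Z_i) = \sum_{i=1}^n (S_i - Z_i)^2 \stackrel{\mathrm{d}}{=} D(C(P))\, W_0^n,
\end{equation*}
with $W_0^n$ chi-square distributed with $n$ degrees of freedom. Equating the excess-distortion probability to $\epsilon$ gives \eqref{eq:d1GMS-AWGN}. There is no real obstacle here; the only thing worth being careful about is keeping the cost normalization consistent (i.e., interpreting $P$ as SNR so that $\mathbb E[\mathsf X^2] = P\sigma_{\mathsf N}^2$) and noting that the Gaussianity of $\mathsf S - \mathsf Z$ is what upgrades the mean-distortion result of \cite{Goblick1965theoretical} to a clean distributional statement.
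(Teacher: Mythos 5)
Your proof is correct and matches what the paper intends: the paper states the result as a direct particularization of its symbol-by-symbol achievability bound \eqref{eq:A1}, which for the GMS-AWGN amounts to exactly your computation that the amplifier pair realizes the optimal test channel $P_{\mathsf Z^\star|\mathsf S}$, making the per-symbol error i.i.d.\ $\mathcal N\bigl(0, D(C(P))\bigr)$ so that the block distortion is $D(C(P))\,W_0^n/n$. Your verification of the average power constraint and the variance algebra are both accurate, so nothing is missing.
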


Note that \eqref{eq:d1GMS-AWGN} is a particularization of \eqref{eq:A1}. Using \eqref{eq:d1GMS-AWGN}, we find that 
\begin{equation}
 \mathscr W_1(P) = 2 \frac{\sigma_{\mathsf S}^4}{\left( 1 + P\right)^2}
\end{equation}
On the other hand, using \eqref{eq:JSCCdispersiondistortion}, we compute
\begin{align}
 \mathscr W(1, P) &= 2 \frac{\sigma_{\mathsf S}^4}{\left( 1 + P\right)^2} \left(2 - \frac 1 {\left( 1+P\right)^2} \right) \\
 &>  \mathscr W_1(P) \label{eq:GMS-AWGN W1>W}
\end{align}
The difference between \eqref{eq:GMS-AWGN W1>W} and \eqref{eq:V=V1}  is due to the fact that the optimal symbol-by-symbol code in Theorem \ref{thm:GMS-AWGN_d1} obeys an average power constraint, rather than the more stringent maximal power constraint of Theorem \ref{thm:2order}, so it is not surprising that for the practically interesting case $\epsilon < \frac 1 2$ the symbol-by-symbol code can outperform the best code obeying the maximal power constraint. Indeed, in the range of blocklenghts displayed in Figure \ref{fig:DGMS_AWGN}, the symbol-by-symbol code even outperforms the converse for codes operating under a maximal power constraint. 
\begin{figure}[htbp]
    \epsfig{file=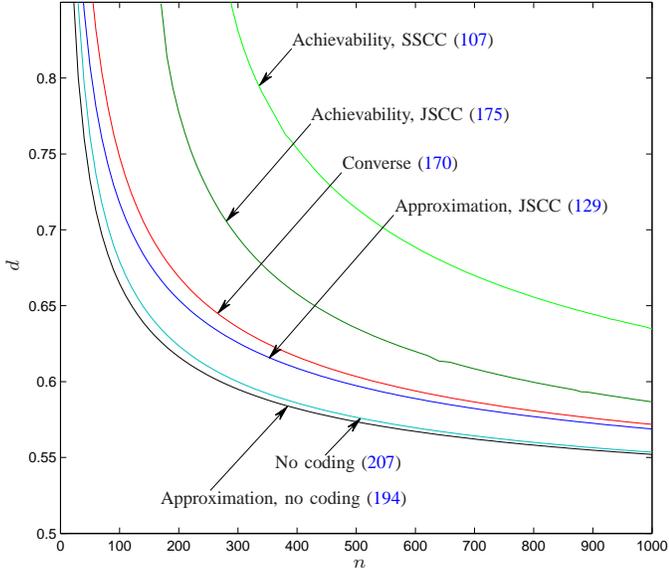,width=1\linewidth}
\caption{ Distortion-blocklength tradeoff for the transmission of a GMS over an AWGN channel with $\frac {P}{\sigma_{\mathsf N}^2} = 1$ and $R = 1$, $\epsilon = 10^{-2}$. }
\label{fig:DGMS_AWGN}
\end{figure}

\subsection{Uncoded transmission of a discrete memoryless source (DMS) over a discrete erasure channel (DEC) under erasure distortion measure}
\label{sec:erasureuncoded}
For a discrete source, the single-letter erasure distortion measure is defined as the following mapping $\mathsf d \colon \mathcal S \times \{\mathcal S, \mathsf e\} \mapsto [0, \infty]$:\footnote{The distortion measure in \eqref{eq:derasure} is a scaled version of the erasure distortion measure found in literature, e.g. \cite{gallager1968information}.}  
\begin{equation}
 \mathsf d(\mathsf s, \mathsf z) = 
\begin{cases}
 0 & \mathsf z = \mathsf s\\
H(\mathsf S) & \mathsf z = \mathsf e\\
 \infty & \text{otherwise} \label{eq:derasure}
\end{cases}
\end{equation}
For any $0 \leq d \leq H(\mathsf S)$, the rate-distortion function of the equiprobable source is achieved by 
\begin{equation}
 P_{\mathsf Z^\star| \mathsf S = \mathsf s}(\mathsf z) = 
\begin{cases}
 1 - \frac{d}{H(\mathsf S)} & \mathsf z = \mathsf s\\
 \frac{d}{H(\mathsf S)} & \mathsf z = \mathsf e \label{eq:PY|Xstarerasure}
\end{cases}
\end{equation}
The rate-distortion function and the $\mathsf d$-tilted information for the equiprobable source with the erasure distortion measure are given by, respectively, 
\begin{align}
  R(d) &= H(\mathsf S) - d \label{eq:Rderasure}\\
 \jmath_{\mathsf S}(\mathsf s, d) &= \imath_{\mathsf S}(\mathsf s) - d \label{eq:dtiltederasure}
\end{align}
%Note that \eqref{eq:dtiltederasure} uses the extended definition of $\mathsf d$-tilted information in \cite[(255)]{kostina2011fixed} because Definition \ref{defn:id} is not applicable to distortion measures that can take infinite values. 
Note that, trivially, $\jmath_{\mathsf S}(\mathsf S, d) = R(d) = \log |\mathcal S| - d$ a.s.
The channel that is matched to the equiprobable DMS with the erasure distortion measure is the DEC, whose single-letter transition probability kernel $ P_{\mathsf Y| \mathsf X} \colon \mathcal A \mapsto \{\mathcal A, \mathsf e\}$ is
\begin{equation}
 P_{\mathsf Y| \mathsf X = \mathsf x}(\mathsf y) = 
\begin{cases}
 1 - \delta & \mathsf y = \mathsf x\\
\delta & \mathsf y = \mathsf e 
\end{cases}
\label{eq:PY|Xstarerasure}
\end{equation}
and whose capacity is given by $C = \log |\mathcal A| - \delta$, achieved by  equiprobable $P_{\mathsf X^\star}$. For $P_{\mathsf S}$ equiprobable on $\mathcal S = \mathcal A$, we find that $D(C) = \delta \log \left| \mathcal S\right|$, and 
\begin{equation}
\mathscr W_1 =   \delta \left( 1 - \delta \right) \log^2 \left| \mathcal S\right| \label{eq:W1erasure}
\end{equation}

\subsection{Symbol-by-symbol transmission of a DMS over a DEC under logarithmic loss}
\label{sec:loglossuncoded}
Let the source alphabet $\mathcal S$ be finite, and let the reproduction alphabet $\hat {\mathcal S}$ be the set of all probability distributions on $\mathcal S$. The single-letter logarithmic loss distortion measure
$\mathsf d \colon \mathcal S \times \hat{\mathcal S} \mapsto \mathbb R^+$ 
is defined by \cite{harremoes2007information,courtade2011multiterminal}
\begin{equation}
\mathsf d( s, P_{\mathsf Z}) = \imath_{\mathsf Z}(s)
\end{equation}
 
Curiously, for any $0 \leq d \leq H(\mathsf S)$, the rate-distortion function and the $\mathsf d$-tilted information are given respectively by \eqref{eq:Rderasure} and \eqref{eq:dtiltederasure}, even if the source is not equiprobable.  In fact, the rate-distortion function is achieved by, 
\begin{equation}
 P_{P_{\mathsf Z}^\star | \mathsf S = s}(P_{\mathsf Z}) = 
\begin{cases}
 \frac{d}{H(\mathsf S)}  & P_{\mathsf Z} = P_{\mathsf S}  \\
 1 -  \frac{d}{H(\mathsf S)}  & P_{\mathsf Z} = 1_{\mathsf S}(\mathsf s)
\end{cases}
\end{equation}
and the channel that is matched to the equiprobable source under logarithmic loss is exactly the DEC in \eqref{eq:PY|Xstarerasure}. Of course, unlike Section \ref{sec:erasureuncoded}, the decoder we need is a simple one-to-one function that outputs $P_{\mathsf S}$ if the channel output is $\mathsf e$, and $1_{\mathsf S}(\mathsf y)$ otherwise, where $\mathsf y \neq \mathsf e$ is the output of the DEC. Finally, it is easy to verify that the distortion-dispersion function of symbol-by-symbol coding under logarithmic loss is the same as that under erasure distortion and is given by \eqref{eq:W1erasure}. 

\section{Conclusion}
\label{sec:conclusion}
%To estimate the maximum source coding rate sustainable as a function of channel blocklength and the probability of exceeding a given distortion level, we have shown a new converse bound  (Theorem \ref{thm:CT}) and a new achievability bound (Theorem \ref{thm:A}) applicable in full generality. As evidenced by the numerical results, the converse in Theorem \ref{thm:Chtsym}, which applies to the channels satisfying a certain symmetry condition, can outperform the general converse in Theorem \ref{thm:CT}. The asymptotic analysis of the new bounds leads to the Gaussian approximation in Theorem \ref{thm:2order}, which holds for both discrete and Gaussian channels. Unless either the source or the channel has zero dispersion, joint code design offers significant advantage in the finite blocklength regime.  
In this paper we gave a non-asymptotic analysis of joint source-channel coding including several achievability and converse bounds, which hold
in wide generality and are tight enough to determine the dispersion of joint source-channel coding for the transmission of an abstract memoryless source over either a DMC or a Gaussian channel, under an arbitrary fidelity measure.  We also investigated the penalty incurred by separate source-channel coding using both the source-channel dispersion and the particularization of our new bounds to (i) the binary source and the binary symmetric channel with bit error rate fidelity criterion and (ii) the Gaussian source and Gaussian channel under mean-square error distortion. Finally, we showed cases where symbol-by-symbol (uncoded) transmission beats any other known scheme in the finite blocklength regime even when the source-channel matching condition is not satisfied.

The approach taken in this paper to analyze the non-asymptotic fundamental limits of lossy joint source-channel coding is two-fold. Our new achievability and converse bounds apply to  abstract sources and channels and allow for memory, while the asymptotic analysis of the new bounds leading to the dispersion of JSCC is focused on the most basic scenario of transmitting a stationary memoryless source over a stationary memoryless channel. 

The major results and conclusions are the following. 

\begin{enumerate}[1)]
 \item A general new converse bound (Theorem \ref{thm:CT}) leverages the concept of $\mathsf d$-tilted information (Definition \ref{defn:id}), a random variable which corresponds (in a sense that can be formalized \cite{kontoyiannis2000pointwise,kostina2011fixed}) to the number of bits required to represent a given source outcome within distortion $d$ and whose role in lossy compression is on a par with that of information (in \eqref{eq:i}) in lossless compression. 
 \item The converse result in Theorem \ref{thm:Cht} capitalizes on two simple observations, namely, that any $(d, \epsilon)$ lossy code can be converted to a list code with list error probability $\epsilon$, and that a binary hypothesis test between $P_{SXY}$ and an auxiliary distribution on the same space can be constructed by choosing $P_{SXY}$ when there is no list error. We have generalized the conventional notion of list, to allow
the decoder to output a possibly uncountable set of source realizations.
\item As evidenced by our numerical results, the converse result in Theorem \ref{thm:Chtsym}, which applies to those channels satisfying a certain symmetry condition and which is a consequence of the hypothesis testing converse in Theorem \ref{thm:Cht}, can outperform the $\mathsf d$-tilted information converse in Theorem \ref{thm:CT}. Nevertheless, it is Theorem \ref{thm:CT} that lends itself to analysis more easily and that leads to the JSCC dispersion for the general DMC.   
\item Our random-coding-based achievability bound (Theorem \ref{thm:Ag}) provides insights into the degree of separation between the source and the channel codes required for optimal performance in the finite blocklength regime. More precisely, it reveals that the dispersion of JSCC can be achieved in the class of $(M, d, \epsilon)$ JSCC codes (Definition \ref{defn:jsccM}). As in separate source/channel coding, in $(M, d, \epsilon)$ coding the inner channel coding block is connected to the outer source coding block by a noiseless link of capacity $\log M$, but unlike SSCC, the channel (resp. source) code can be chosen based on the knowledge of the source (resp. channel). The conventional SSCC in which the source code is chosen without knowledge of the channel and the channel code is chosen without knowledge of the source, although known to achieve the asymptotic fundamental limit of joint source-channel coding under certain quite weak conditions, is in general suboptimal in the finite blocklength regime.
\item Since 
$ \E{\mathsf d(S, Z)} =\int_0^\infty \Prob{\mathsf d(S, Z) > \xi } d\xi$,
bounds for average distortion can be obtained by integrating our bounds on excess distortion. Note, however, that the code that minimizes $\Prob{\mathsf d(S, Z) > \xi }$ depends on $\xi$.
Since the distortion cdf of any single code does not majorize the cdfs of all possible codes, the converse bound on the average distortion obtained through this approach, although asymptotically tight, may be loose at short blocklengths. Likewise, regarding achievability bounds (e.g. \eqref{eq:A}), the optimization over channel and source random codes, $P_X$ and $P_Z$, must be performed after the integration, so that the choice of code does not depend on the distortion threshold $\xi$.  
\item For the transmission of a stationary memoryless source over a stationary memoryless channel, the Gaussian approximation in Theorem \ref{thm:2order} (neglecting the remainder $\theta(n)$) provides a simple estimate of the maximal non-asymptotically achievable joint source-channel coding rate. Appealingly, the dispersion of joint source-channel coding decomposes into two terms, the channel dispersion and the source dispersion. Thus, only two channel attributes, the capacity and dispersion, and two source attributes, the rate-distortion and rate-dispersion functions, are required to compute the Gaussian approximation to the maximal JSCC rate. 
\item In those curious cases where the source and the channel are probabilistically matched so that symbol-by-symbol coding attains the minimum possible average distortion, Theorem \ref{thm:2order1} ensures that it also attains the dispersion of joint source-channel coding, that is, symbol-by-symbol coding results in the minimum variance of distortions among all codes operating at that average distortion.
\item Even in the absence of a probabilistic match between the source and the channel, symbol-by-symbol transmission, though asymptotically suboptimal, might outperform separate source-channel coding and joint source-channel random coding in the finite blocklength regime.
\end{enumerate}
\section*{Acknowledgement}
The authors are grateful to Dr. Oliver Kosut for offering numerous comments, and,  in particular, suggesting the simplification of the achievability bound in \cite{kostina2012ISITjscc} with the tighter version in Theorem \ref{thm:A}. 
\appendices
\section{The Berry-Esseen theorem}
\label{appx:Berry-Esseen}
The following result is an important tool in the Gaussian approximation analysis.
\begin{thm}[{Berry-Esseen CLT, e.g. \cite[Ch. XVI.5 Theorem 2]{feller1971introduction}}]
\label{thm:Berry-Esseen}
Fix a positive integer $n$. Let $W_i$, $i = 1, \ldots, n$ be independent. Then, for any real $t$
\begin{equation}
\left| \mathbb P \left[ \sum_{i = 1}^n W_i > n \left( D_n + t \sqrt {\frac{V_n}{ n}}\right) \right]  - Q(t) \right| \leq \frac {B_n}{\sqrt n},
\label{eq:BerryEsseen}
\end{equation}
where
\begin{align}
D_n &= \frac 1 n \sum_{i = 1}^n \E{ W_i} \label{eq:BerryEsseenDn} \\
V_n &= \frac 1 n \sum_{i = 1}^n \Var{W_i} \label{eq:BerryEsseenVn} \\
T_n &= \frac 1 n \sum_{i = 1}^n \E{ |W_i - \E{W_i} |^3 } \label{eq:BerryEsseenTn} \\
B_n &=  \frac{c_0 T_n}{V_n^{3/2}} \label{eq:BerryEsseenBn}
\end{align}
and $0.4097 \leq c_0 \leq 0.5600$ ($0.4097 \leq c_0 < 0.4784$ for identically distributed $W_i$). 
\end{thm}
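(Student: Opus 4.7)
The Berry--Esseen CLT as stated is a classical result, explicitly cited from Feller, so my ``proof proposal'' is really a plan for invoking it rather than reproving it: the paper only needs the statement as a black-box input for the dispersion analyses in the subsequent appendices. In the excerpt the theorem is used to turn a sum of independent scores (such as sums of $\mathsf d$-tilted informations $\jmath_{\mathsf S}(\mathsf S_i,d)$ and information densities $\imath^\star_{\mathsf X;\mathsf Y}(\mathsf X_i;\mathsf Y_i)$) into a Gaussian tail plus a $\mathcal O(1/\sqrt n)$ remainder, and that is all the machinery required.

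If a self-contained proof were demanded, I would follow the standard Esseen-smoothing approach. First, pass to the normalized variables $\tilde W_i = (W_i - \mathbb E[W_i])/\sqrt{n V_n}$, so that $S_n = \sum_i \tilde W_i$ has zero mean and unit variance, and the target becomes $|\mathbb P[S_n > t] - Q(t)| \le B_n/\sqrt n$. Let $F_n$ be the cdf of $S_n$ with characteristic function $\phi_n(u) = \prod_i \mathbb E[e^{iu\tilde W_i}]$, and let $\Phi$ be the standard normal cdf with characteristic function $e^{-u^2/2}$. The plan is then:
\begin{enumerate}
\item Apply Esseen's smoothing inequality, which states that for any $T>0$,
\begin{equation*}
\sup_t |F_n(t) - \Phi(t)| \;\le\; \frac{1}{\pi}\int_{-T}^{T} \left|\frac{\phi_n(u) - e^{-u^2/2}}{u}\right| du \;+\; \frac{c}{T}
\end{equation*}
for an absolute constant $c$ coming from the maximum of $|\Phi'|$.
\item On the window $|u| \le T \asymp \sqrt n/B_n$, use the Taylor expansion
\begin{equation*}
\log \mathbb E\bigl[e^{iu\tilde W_i}\bigr] \;=\; -\tfrac{1}{2}u^2\,\mathrm{Var}[\tilde W_i] + \rho_i(u), \qquad |\rho_i(u)| \le \tfrac{1}{6}|u|^3\,\mathbb E\bigl[|\tilde W_i|^3\bigr],
\end{equation*}
sum over $i$, and use $|e^{a}-e^{b}| \le |a-b|\max(e^{\mathrm{Re}\,a},e^{\mathrm{Re}\,b})$ to obtain $|\phi_n(u)-e^{-u^2/2}| \lesssim |u|^3 \,T_n/V_n^{3/2}$ against a factor $e^{-u^2/4}$.
\item Integrate the resulting bound in $u$, which yields an $\mathcal O(T_n/(V_n^{3/2}\sqrt n))$ contribution from the window and $\mathcal O(B_n/\sqrt n)$ from the boundary term once $T$ is chosen optimally.
\item Choose the constants so that the two contributions combine into a single $B_n/\sqrt n$ estimate with $B_n = c_0 T_n / V_n^{3/2}$.
\end{enumerate}

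The genuinely hard part is not the existence of such a bound (the route above gives \emph{some} absolute constant rather painlessly) but the \emph{explicit numerical constants} $c_0 \in [0.4097, 0.5600]$ (and the sharper $c_0 < 0.4784$ in the i.i.d.\ case) quoted in the statement. Obtaining these requires careful optimization of the cutoff $T$, the Esseen smoothing kernel, and the Taylor remainder; the specific values reflect the cumulative refinements of Esseen, van Beek, Shiganov, and Shevtsova. For the purposes of this paper the crude form with any absolute constant would already suffice for the $\bigo{\log n}$ remainder terms $\theta(n)$ in Theorem~\ref{thm:2order}, so I would simply invoke Feller's textbook statement without attempting to derive the sharp constant.
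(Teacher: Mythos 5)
Your treatment matches the paper exactly: the paper states this Berry--Esseen theorem purely as a cited classical result from Feller and never proves it, and you correctly identify that it is needed only as a black box for converting sums of independent tilted-information and information-density terms into Gaussian tails with an $O(1/\sqrt{n})$ correction. Your optional Esseen-smoothing sketch is the standard route and your remark that the quoted numerical constants $c_0$ come from the refinements of Esseen, van Beek, Shiganov, and Shevtsova (and are not needed at the paper's level of precision) is accurate.
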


\section{Auxiliary result on the minimization of the information spectrum}
\label{appx:lemmaC}
Given a finite set $\mathcal{A}$, we say that $x^n \in \mathcal A^n$ has type $P_{\mathsf X}$ if the number of times each letter $a \in \mathcal A$ is encountered in $x^n$ is $n P_{\mathsf X}(a)$. Let $\mathcal P$ be the set of all distributions on $\mathcal A$, which is simply the standard $|\mathcal A| -1$ simplex in $\mathbb R^{|\mathcal A|}$. For an arbitrary subset $\mathcal D \subseteq \mathcal P$, denote by $\mathcal D_{[n]}$ the set of distributions in $\mathcal D$ that are also $n$-types, that is, 
\begin{equation}
%  \mathcal D_{[n]} = \left\{ P_{\mathsf X} \in \mathcal D \colon \exists x^n \in \mathcal A^n \colon \mathrm{type}(x^n) = P_{\mathsf X} \right\} \label{eq:-Dn}
  \mathcal D_{[n]} = \left\{ P_{\mathsf X} \in \mathcal D \colon \forall  a \in \mathcal A,~ n P_{\mathsf X}(a) \text{ is an integer}  \right\} \label{eq:-Dn}
\end{equation}
Denote by $\Pi(P_{\mathsf X})$ the minimum Euclidean distance approximation of $P_{\mathsf X} \in \mathcal P$ in the set of $n$-types, that is,  
\begin{equation}
\Pi(P_{\mathsf X}) = \arg\min_{\hat P_{\mathsf X} \in \mathcal P_{[n]} } \left|P_{\mathsf X} - \hat P_{\mathsf X}\right| \label{eq:-Pi}
\end{equation}

Let $\mathcal P^\star$ be the set of capacity-achieving distributions:\footnote{In this appendix, we dispose of the assumption \eqref{item:c} in Section \ref{sec:prelim} that the capacity-achieving input distribution is unique. }
\begin{equation}
 \mathcal P^\star = \left\{ P_{\mathsf X} \in \mathcal P \colon I(\mathsf X; \mathsf Y) = C\right\}
\end{equation}
Denote the minimum (maximum) information variances achieved by the distributions in $\mathcal P^\star$ by
\begin{align}
 V_{\min} &= \min_{P_{\mathsf X} \in  \mathcal P^\star} \Var{\imath_{\mathsf X; \mathsf Y}(\mathsf X; \mathsf Y)}\\
  V_{\max} &= \max_{P_{\mathsf X} \in  \mathcal P^\star} \Var{\imath_{\mathsf X; \mathsf Y}(\mathsf X; \mathsf Y)}
\end{align}
and let $\mathcal P^\star_{\min} \subseteq \mathcal P^\star$ be the set of capacity-achieving distributions that achieve the minimum information variance:
\begin{equation}
 \mathcal P^\star_{\min} = \left\{ P_{\mathsf X} \in \mathcal P^\star \colon \Var{\imath_{\mathsf X; \mathsf Y}(\mathsf X; \mathsf Y)} = V_{\min} \right\}
\end{equation}
and analogously $\mathcal P^\star_{\max}$ for the distributions in $\mathcal P^\star$ with maximal variance.
Lemma \ref{lemma:C} below allows to show that in the memoryless case, the infimum inside the expectation in \eqref{eq:CT} with $W = \mathrm{type}\left(X^n\right)$ and $P_{\bar Y^n|W = P_{\mathsf X}} = P_{\mathsf Y} \times \ldots \times P_{\mathsf Y}$, where $P_{\mathsf Y}$ is the output distribution induced by the type $P_{\mathsf X}$,  is approximately attained by those sequences whose type is closest to the capacity-achieving distribution $P_{\mathsf X^\star}$ (if it is non-unique, $P_{\mathsf X^\star}$ is chosen appropriately based on the information variance it achieves). This technical result is the key to proving the converse part of Theorem \ref{thm:2order}. 

\begin{lemma} 
There exist $\bar \Delta > 0$ such that for all sufficiently large $n$: 
\begin{enumerate}
 \item 
 If $V_{\min} > 0$,  then there exists $K > 0$ such that for $\left|\Delta\right| \leq \bar \Delta$,  
\begin{align}
 &~\min_{x^n \in \mathcal A^n} \Prob{\sum_{i = 1}^n \imath_{\mathsf X; \mathsf Y}(x_i; Y_i) \leq  n\left(C - \Delta\right)} 
 \notag\\
 \geq&~   \Prob{\sum_{i = 1}^n \imath_{\mathsf X; \mathsf Y}(x_i^\star; Y_i) \leq  n\left(C - \Delta\right)} -  \frac {K} {\sqrt n} \label{eq:lemmaC}
\end{align}
where \eqref{eq:lemmaC} holds for any $x^{n \star}$ with $\mathrm{type}(x^{n \star}) = \Pi(P_{\mathsf X}^\star)$ for $P_{\mathsf X}^\star \in \mathcal P^\star_{\min}$ if $\Delta \geq 0$ and $P_{\mathsf X}^\star \in \mathcal P^\star_{\max}$ if $\Delta < 0$. 
\item
If $V_{\max} = 0$,  then for all $0 < \alpha < \frac 3 2$ and 
$\Delta \geq \frac{\bar \Delta}{n^{\frac 1 2 + \alpha}}$, 
\begin{equation}
 \min_{x^n \in \mathcal A^n} \Prob{\sum_{i = 1}^n \imath_{\mathsf X; \mathsf Y}(x_i; Y_i) \leq  n\left(C + \Delta\right)} \geq   1  -  \frac {1} {n^{ \frac 1 4 - \frac 3 2 \alpha}} \label{eq:lemmaC0}
\end{equation}
\end{enumerate}
The information densities in the left sides of 
\eqref{eq:lemmaC} and \eqref{eq:lemmaC0}
are computed with $\left\{P_{\mathsf Y| \mathsf X = x_i}, P_{\mathsf Y}\right\}$, where $P_{\mathsf Y}$ is induced by the type of $x^n$, i.e.
$\mathrm{type}(x^{n}) = P_{\mathsf X} \to P_{\mathsf Y| \mathsf X} \to P_{\mathsf Y}$,
and that in the right side of \eqref{eq:lemmaC} is computed with $\left\{P_{\mathsf Y| \mathsf X = x_i^\star}, P_{\mathsf Y}\right\}$, where $P_{\mathsf Y}$ is induced by the type of $x^{n\star}$, i.e.
$\mathrm{type}(x^{n \star}) = P_{\mathsf X} \to P_{\mathsf Y| \mathsf X} \to P_{\mathsf Y}$. The independent random variables $Y_i$ in the left sides of \eqref{eq:lemmaC} and \eqref{eq:lemmaC0} have distribution $P_{\mathsf Y| \mathsf X =x_i}$, while $Y_i$ in the right side of \eqref{eq:lemmaC} have distribution 
$P_{\mathsf Y| \mathsf X = x^\star_i}$. 
 \label{lemma:C}
 \end{lemma}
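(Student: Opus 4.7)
The plan is to exploit the fact that the probability in question depends on $x^n$ only through its type $P_{\mathsf X} = \mathrm{type}(x^n)$, reducing the minimization over $x^n \in \mathcal A^n$ to one over the set of $n$-types $\mathcal P_{[n]}$. I would then split $\mathcal P_{[n]}$ into a small Euclidean neighborhood $\mathcal N_\eta$ of the compact set $\mathcal P^\star$ and its complement, analyze each regime separately, and show that the infimum is attained (up to an $O(1/\sqrt n)$ correction) by the projection onto $\mathcal P_{[n]}$ of a capacity-achieving distribution with the appropriate extremal variance.

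For types outside $\mathcal N_\eta$, continuity of $P_{\mathsf X} \mapsto I(\mathsf X;\mathsf Y)$ on the compact simplex, together with the characterization of $\mathcal P^\star$ as the set of maximizers, gives $I(\mathsf X;\mathsf Y) \leq C - \mu(\eta)$ for some $\mu(\eta) > 0$. Since $\imath_{\mathsf X;\mathsf Y}$ is bounded on finite alphabets, a Chernoff bound applied to $\sum_i \imath_{\mathsf X;\mathsf Y}(x_i;Y_i)$ yields $\Prob{\sum_i \imath_{\mathsf X;\mathsf Y}(x_i;Y_i) > n(C - \Delta)} \leq \exp(-nE)$ for some $E > 0$ whenever $|\Delta| \leq \mu(\eta)/2$; the latter is absorbed into $\bar\Delta$. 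Hence the probability in \eqref{eq:lemmaC} equals $1 - o(1/\sqrt n)$ for such types, ruling them out as minimizers.

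For types in $\mathcal N_\eta$, let $P^\star \in \mathcal P^\star$ denote a nearest capacity-achieving distribution. The KKT conditions characterizing $\mathcal P^\star$ (namely $D(P_{\mathsf Y|\mathsf X=a}\|P_{\mathsf Y^\star}) = C$ on the support of $P^\star$ and $\leq C$ off it) force the first-order variation of $I$ in feasible directions to vanish, giving $C - I(\mathsf X;\mathsf Y) = O\bigl(|P_{\mathsf X} - P^\star|^2\bigr)$ uniformly, while $\Var{\imath_{\mathsf X;\mathsf Y}(\mathsf X;\mathsf Y)}$ is Lipschitz in $P_{\mathsf X}$. Applying Theorem \ref{thm:Berry-Esseen} to the independent (non-identically distributed) summands yields $\Prob{\sum \imath \leq n(C-\Delta)} = \Phi\bigl(\sqrt n\,(I-(C-\Delta))/\sqrt{V(P_{\mathsf X})}\bigr) + O(1/\sqrt n)$. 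Substituting $I = C - O(|P_{\mathsf X} - P^\star|^2)$ and choosing $P_{\mathsf X} = \Pi(P^\star)$, whose distance to $P^\star$ is $O(1/n)$, the quadratic term contributes $O(n^{-3/2})$ inside $\Phi$ and is negligible. Minimizing over $P^\star \in \mathcal P^\star$ selects $V_{\min}$ for $\Delta \geq 0$ and $V_{\max}$ for $\Delta < 0$, since $Q(\Delta\sqrt{n/V})$ is monotone in $V$ with opposite signs depending on the sign of $\Delta$, yielding \eqref{eq:lemmaC}.

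For the degenerate case $V_{\max} = 0$, the information density is $P_{\mathsf Y|\mathsf X = x^\star}$-almost-surely equal to $C$ for each $x^\star$ in the support of a capacity-achieving input, while $V(P_{\mathsf X}) = O\bigl(|P_{\mathsf X} - P^\star|^2\bigr)$ near $\mathcal P^\star$. Chebyshev then bounds $\Prob{\sum \imath > n(C+\Delta)}$ by a quantity of order $\eta^2/(n\Delta^2)$ once the denominator is positive; balancing $\eta$ against the Chernoff bound for the far-from-$\mathcal P^\star$ types with $\Delta \asymp n^{-1/2-\alpha}$ and optimizing the neighborhood radius in $\alpha$ delivers \eqref{eq:lemmaC0}. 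The main obstacle will be establishing the quadratic Taylor expansion with constants uniform over the (possibly continuum) convex set $\mathcal P^\star$ and simultaneously controlling the Berry-Esseen third-moment ratio $B_n$ in \eqref{eq:BerryEsseenBn}, which has $V(P_{\mathsf X})^{3/2}$ in its denominator and must be handled delicately as $P_{\mathsf X}$ drifts with $n$ towards a boundary of $\mathcal P^\star$ on which the variance may vanish.
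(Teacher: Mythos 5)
Your high-level architecture coincides with the paper's: reduce to types, split the type space into a neighborhood of $\mathcal P^\star$ and its complement, dispose of the far types by concentration, use the Berry-Esseen theorem near $\mathcal P^\star$, and select $\mathcal P^\star_{\min}$ or $\mathcal P^\star_{\max}$ according to the sign of $\Delta$. But the central step is missing, and the expansion you invoke points the wrong way. You claim $C-I(\mathsf X;\mathsf Y)=O\bigl(|P_{\mathsf X}-P^\star|^2\bigr)$ uniformly from the KKT conditions; this is false in general (moving mass onto an input $a$ with $D(P_{\mathsf Y|\mathsf X=a}\|P_{\mathsf Y^\star})<C$ decreases $I$ linearly in the perturbation), and in any case it is not the inequality the proof needs. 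What is needed is the reverse, quadratic \emph{lower} bound $C-I(P_{\mathsf X})\geq \ell_1|P_{\mathsf X}-P^\star|^2$ (Property \ref{p:nonflat}), together with the Lipschitz bounds \eqref{eq:L_1}--\eqref{eq:L_2}. The reason is that the lemma is a statement about the minimum over \emph{all} types in the neighborhood, not only over capacity-achieving ones: a type with $I(P_{\mathsf X})<C$ but smaller conditional variance can push the normalized threshold $\nu(P_{\mathsf X})=\sqrt n\,(I(P_{\mathsf X})-C+\Delta)/\sqrt{V(P_{\mathsf X})}$ above $\nu(\Pi(P_{\mathsf X^\star}))$, and your step ``minimizing over $P^\star\in\mathcal P^\star$ selects $V_{\min}$ (resp.\ $V_{\max}$)'' never rules this out. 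Quantifying this trade-off is exactly what the paper's Lemma \ref{lemma:maxsum} does (variance gain linear in the distance versus information loss at least quadratic), yielding $\max\nu\leq \sqrt n\,\Delta/\sqrt{V(P_{\mathsf X^\star})}+\bar L\sqrt n\,\Delta^2$ as in \eqref{eq:tmaxupper}; since the excess $\bar L\sqrt n\,\Delta^2$ is unbounded in $n$ for fixed $\Delta$, one further needs the uniform $Q$-function estimate of Lemma \ref{lemma:Qbound} (plus the transfer Lemma \ref{lemma_fg} between the true probability and its Gaussian approximation) to convert it into an $O(1/\sqrt n)$ loss in probability. None of this machinery, which is the heart of the proof of \eqref{eq:lemmaC}, appears in your sketch; evaluating the CLT at $\Pi(P^\star)$ and noting $|\Pi(P^\star)-P^\star|=O(1/n)$ only controls the right side of \eqref{eq:lemmaC}, not the minimum on the left.

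Two further points. In the $V_{\max}=0$ case your claim $V(P_{\mathsf X})=O\bigl(|P_{\mathsf X}-P^\star|^2\bigr)$ is also generally false: putting mass $\zeta$ on an input outside the support of the capacity-achieving distributions contributes to $V(P_{\mathsf X})$ linearly in $\zeta$, so only $V(P_{\mathsf X})\leq L_2|P_{\mathsf X}-P^\star|$ is available; the exponent $\tfrac14-\tfrac32\alpha$ in \eqref{eq:lemmaC0} comes from maximizing $L_2\zeta/\bigl(n(\ell_1\zeta^2+\Delta)^2\bigr)$ over the distance $\zeta$ of the worst type (again using the quadratic lower bound on the capacity gap, cf.\ \eqref{eq:badb}--\eqref{eq:badc}); there is no free ``neighborhood radius'' to optimize, since every type inside the neighborhood must be beaten. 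Finally, the obstacle you flag at the end --- the Berry-Esseen ratio blowing up as the variance degenerates --- is resolved in case 1 simply by shrinking $\delta$ until $V(P_{\mathsf X})\geq V_{\min}/2$ on all of $\mathcal P^\star_\delta$ (possible because $V_{\min}>0$ and $V(\cdot)$ is continuous), and in case 2 no CLT is used at all, only Chebyshev; also note that for the far types the information density under an arbitrary type is not uniformly bounded (only by roughly $\log n$), so the paper's Chebyshev bound is the cleaner route than your Chernoff bound, though this last point is minor.
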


In order to prove Lemma \ref{lemma:C}, we first show three auxiliary lemmas. The first two deal with approximate optimization of functions. 

If $f$ and $g$ approximate each other, and the minimum of $f$ is approximately attained at $x$, then $g$ is also approximately minimized at $x$, as the following lemma formalizes. 
\begin{lemma}
Fix $\eta > 0$, $\xi > 0$. Let $\mathcal D$ be an arbitrary set, and let $f\colon \mathcal {D} \mapsto \mathbb R$ and $g \colon \mathcal D \mapsto \mathbb R$ be such that
\begin{align}
 \sup_{x \in \mathcal {D}} \left| f(x) - g(x)\right| &\leq \eta \label{eq:lemmafg1}
\end{align}
Further, assume that $f$ and $g$ attain their minima. Then,  
\begin{equation}
 g(x) \leq \min_{y \in \mathcal D} g(y) + \xi + 2\eta \label{eq:lemmafg}
\end{equation}
as long as $x$ satisfies
\begin{equation}
 f(x) \leq \min_{y \in \mathcal D} f(y) + \xi \label{eq:lemmafg2}
\end{equation}
(see Fig. \ref{fig:lemma}).
\label{lemma_fg}
\end{lemma}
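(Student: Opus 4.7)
The plan is to chain together a handful of elementary inequalities, all of which follow immediately from the uniform approximation hypothesis \eqref{eq:lemmafg1} and the near-optimality hypothesis \eqref{eq:lemmafg2}. First I would invoke the assumption that $g$ attains its minimum, fixing a minimizer $y^\star \in \arg\min_{y \in \mathcal D} g(y)$. The goal is then simply to bound $g(x)$ in terms of $g(y^\star)$.

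The core of the argument is the following telescoping: starting from $g(x)$, apply \eqref{eq:lemmafg1} to pass to $f$, i.e.\ $g(x) \leq f(x) + \eta$; then invoke \eqref{eq:lemmafg2} to get $f(x) \leq \min_{y \in \mathcal D} f(y) + \xi \leq f(y^\star) + \xi$; and finally apply \eqref{eq:lemmafg1} once more in the opposite direction to obtain $f(y^\star) \leq g(y^\star) + \eta$. Concatenating,
\begin{equation}
g(x) \;\leq\; f(x) + \eta \;\leq\; f(y^\star) + \xi + \eta \;\leq\; g(y^\star) + \xi + 2\eta,
\end{equation}
which is exactly \eqref{eq:lemmafg} since $g(y^\star) = \min_{y \in \mathcal D} g(y)$.

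There is no real obstacle here: the statement is the standard observation that an $\eta$-uniform approximation transfers $\xi$-approximate minimizers to $(\xi + 2\eta)$-approximate minimizers, and the only ingredients used are the triangle inequality and the existence of a minimizer of $g$. I would keep the proof to essentially the one-line chain displayed above, perhaps preceded by a sentence naming $y^\star$, and accompany it by the figure referenced in the statement to visualize the $\eta$-tube around $f$ containing $g$.
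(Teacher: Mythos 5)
Your proof is correct and is essentially the same argument as the paper's: both fix a minimizer of $g$, pay $\eta$ to pass from $g$ to $f$ at $x$, pay $\xi$ via the near-optimality of $x$ for $f$, and pay another $\eta$ to return from $f$ to $g$ at the minimizer. The paper merely writes the same chain with an add-and-subtract presentation, so no substantive difference.
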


\begin{figure}[htbp]
\begin{center}
    \epsfig{file=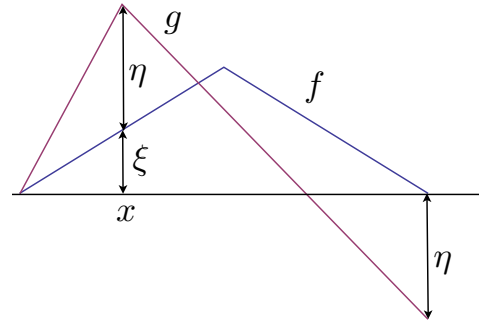,width=.7\linewidth}
\caption{An example where \eqref{eq:lemmafg} holds with equality. }
\label{fig:lemma}
\end{center}
\end{figure}

\begin{proof}[Proof of Lemma \ref{lemma_fg}]
Let $x^\star \in \mathcal D$ be such that $ g(x^\star) = \min_{y \in \mathcal D} g(y)$. Using \eqref{eq:lemmafg1} and \eqref{eq:lemmafg2}, write 
\begin{align}
g(x) &\leq \min_{y \in \mathcal D} f(y) + g(x) - f(x) + \xi\\
&\leq \min_{y \in \mathcal D} f(y)  + \eta + \xi\\
&\leq f(x^\star) + \eta + \xi\\
&= g(x^\star) - g(x^\star) + f(x^\star) + \eta + \xi\\
&\leq g(x^\star) + 2 \eta + \xi
\end{align}
\end{proof}

The following lemma is reminiscent of \cite[Lemma 64]
 %in Appendix J
{polyanskiy2010channel}. 
%%linear version of Lemma \ref{lemma:maxsum}
%\begin{lemma}
%Let $\mathcal D$ be a compact metric space, and let 
%$d \colon \mathcal{D}^2 \to \mathbb R^+$
% be a metric. Fix $f \colon \mathcal D \mapsto \mathbb R$ and $g \colon \mathcal D \mapsto \mathbb R$. Let 
%\begin{align}
% f^\star &= \max_{x \in \mathcal D} f(x)\\
% g^\star &= \sup_{x \in \mathcal D^\star} g(x)\\
% \mathcal D^\star &= \left\{x \in \mathcal{D} \colon f(x) = f^\star \right\}
% %\\
% % \mathcal D_\delta &= \left\{ x \colon d(x, \mathcal D^\star) \leq \delta\right\}
%\end{align}
%Suppose that for some constants $f_1 > 0, f_2 > 0$, we have
%\begin{align}
%f^\star - f(x) &\geq f_1 d(x, \mathcal D^\star) \label{eq:lemmaf}\\
% \left| g(x) - g^\star\right| &\leq f_2 d(x, \mathcal D^\star)\label{eq:lemmag}
%\end{align}
%for all $x \in \mathcal D$, where
%\begin{equation}
%d(x,  \mathcal D^\star ) = \min_{y \in \mathcal D^\star} d( x, y)
%\end{equation}
% For any positive scalars $\varphi_1, \varphi_2$ such that
%\begin{equation}
% f_2\varphi_2 \leq f_1\varphi_1 \label{eq:lemmaphi}
%\end{equation}
%we have
%\begin{equation}
% \max_{x \in \mathcal D} \left[ \varphi_1 f(x) + \varphi_2 g(x)\right] = \varphi_1 f^\star + \varphi_2 g^\star \label{eq:lemmamaxsum1}
%\end{equation}
%\label{lemma:maxsum1}
%\end{lemma}

\begin{lemma}
Let $\mathcal D$ be a compact metric space, and let 
$d \colon \mathcal{D}^2 \to \mathbb R^+$
 be a metric. Fix $f \colon \mathcal D \mapsto \mathbb R$ and $g \colon \mathcal D \mapsto \mathbb R$. Let 
\begin{equation}
 \mathcal D^\star = \left\{x \in \mathcal{D} \colon f(x) = \max_{y \in \mathcal D} f(y) \right\}
\end{equation}
Suppose that for some constants $\ell > 0, L > 0$, we have, for all $(x, x^\star) \in \mathcal D \times \mathcal D^\star$,
\begin{align}
f(x^\star) - f(x) &\geq \ell d^2(x, x^\star) \label{eq:lemmaf}\\
\left| g(x^\star) - g(x)\right| &\leq L d(x, x^\star)\label{eq:lemmag}
\end{align}
Then, for any positive scalars $\varphi, \psi$, 
\begin{equation}
 \max_{x \in \mathcal D} \left\{ \varphi f(x) \pm \psi g(x)\right\} \leq \varphi f(x^\star) \pm \psi g(x^\star) + \frac{L^2 \psi^2}{4 \ell \varphi}\label{eq:lemmamaxsum}
\end{equation}

Moreover, if, instead of \eqref{eq:lemmaf}, $f$ satisfies
\begin{equation}
 f(x^\star) - f(x) \geq \ell d(x, x^\star) \label{eq:lemmaflinear}\
\end{equation}
then, for any positive scalars $\psi$, $\varphi$ such that 
\begin{equation}
 L \psi \leq \ell \varphi \label{eq:lemmaphi}
\end{equation}
we have
\begin{equation}
 \max_{x \in \mathcal D} \left\{ \varphi f(x) \pm \psi g(x)\right\} = \varphi f(x^\star) \pm \psi g(x^\star)\label{eq:lemmamaxsumlinear}
\end{equation}
\label{lemma:maxsum}
\end{lemma}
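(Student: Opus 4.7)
The plan is to bound $\varphi f(x) \pm \psi g(x)$ uniformly in $x$ by expanding around an arbitrary $x^\star \in \mathcal{D}^\star$ and using the two structural hypotheses to reduce the problem to a one-variable optimization in the scalar $t = d(x,x^\star)$.

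First, I would write, for any $x \in \mathcal{D}$ and any $x^\star \in \mathcal{D}^\star$,
\begin{align}
\varphi f(x) \pm \psi g(x) &= \varphi f(x^\star) \pm \psi g(x^\star) \\
&\quad - \varphi\bigl(f(x^\star) - f(x)\bigr) \pm \psi\bigl(g(x) - g(x^\star)\bigr).
\end{align}
Applying \eqref{eq:lemmaf} to the $-\varphi(f(x^\star)-f(x))$ term and the Lipschitz estimate \eqref{eq:lemmag} (with either sign) to the $\pm\psi(g(x)-g(x^\star))$ term gives the pointwise bound
\begin{equation}
\varphi f(x) \pm \psi g(x) \leq \varphi f(x^\star) \pm \psi g(x^\star) - \varphi\ell\, t^2 + \psi L\, t,
\end{equation}
where $t = d(x,x^\star) \geq 0$.

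Next I would observe that the scalar function $t \mapsto -\varphi\ell\, t^2 + \psi L\, t$ is a downward-opening parabola with maximum $\frac{L^2\psi^2}{4\ell\varphi}$ attained at $t = \frac{\psi L}{2\varphi\ell}$. Since this bound holds for every $x\in\mathcal{D}$, taking the supremum in $x$ yields \eqref{eq:lemmamaxsum}. (Compactness of $\mathcal{D}$ is not strictly needed for the inequality itself, but it guarantees that the $\max$ is attained, which is implicit in the statement.)

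For the refined assertion under the linear growth condition \eqref{eq:lemmaflinear}, the same expansion gives
\begin{equation}
\varphi f(x) \pm \psi g(x) \leq \varphi f(x^\star) \pm \psi g(x^\star) + (L\psi - \ell\varphi)\, t,
\end{equation}
and the hypothesis \eqref{eq:lemmaphi} makes the coefficient of $t$ nonpositive, so the right-hand side is maximized at $t=0$, i.e.\ at $x = x^\star$, giving equality in \eqref{eq:lemmamaxsumlinear}. There is no genuine obstacle here; the proof is essentially an exercise in completing the square, with the only subtlety being to check that the sign choice in $\pm\psi g(x)$ is absorbed harmlessly into the absolute-value bound \eqref{eq:lemmag}.
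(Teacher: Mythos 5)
Your proof is correct and follows essentially the same route as the paper: both reduce the problem to the scalar bound $-\ell\varphi\, t^2 + L\psi\, t$ (resp.\ $(L\psi-\ell\varphi)t$) in $t = d(x,x^\star)$ and complete the square, the only cosmetic difference being that you bound pointwise in $x$ and take the supremum, while the paper evaluates the same inequality at a maximizer $x_0$.
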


\begin{proof}[Proof of Lemma \ref{lemma:maxsum}]
Let $x_0$ achieve the maximum on the left side of \eqref{eq:lemmamaxsum}. 
Using \eqref{eq:lemmaf} and \eqref{eq:lemmag}, we have, for all $x^\star \in \mathcal D^\star$, 
\begin{align}
 0 &\leq \varphi \left( f(x_0) - f(x^\star)\right) \pm \psi \left( g(x_0) - g(x^\star)\right) \\
 &\leq -\ell \varphi d^2(x_0, x^\star) + L \psi d(x_0, x^\star) \label{eq:-lemma1}\\
 &\leq \frac{L^2 \psi^2}{4 \ell \varphi} \label{eq:-lemma1a}
\end{align}
where \eqref{eq:-lemma1a} follows because the maximum of \eqref{eq:-lemma1} is achieved at $d(x_0, x^\star) = \frac{L \psi}{2 \ell \varphi}$. 

To show \eqref{eq:lemmamaxsumlinear}, observe using \eqref{eq:lemmaflinear} and \eqref{eq:lemmag} that
\begin{align}
 0 &\leq \varphi \left( f(x_0) - f(x^\star)\right) \pm \psi \left( g(x_0) - g(x^\star)\right) \\
  &\leq \left(-\ell \varphi + L \psi \right) d(x_0, x^\star) \\
 &\leq 0 \label{eq:-lemmalinear}
\end{align}
where \eqref{eq:-lemmalinear} follows from \eqref{eq:lemmaphi}. 

\end{proof}

The following lemma deals with asymptotic behavior of the $Q$-function. 
\begin{lemma}
Fix $a \geq 0$, $b \geq 0$. Then, there exists $q \geq 0$ (explicitly computed in the proof) such that for all $z \geq -\frac{\sqrt n}{2 b}$ and all $n$ large enough, 
\begin{equation}
  Q \left(z - \frac a {\sqrt n}\right)
 -
 Q \left( z + \frac {b}{\sqrt n} z^2   \right)
 \leq
  \frac{q}{\sqrt n}
\end{equation}
\label{lemma:Qbound}
\end{lemma}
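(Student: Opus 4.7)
\noindent\textbf{Proof sketch of Lemma \ref{lemma:Qbound}.}
The plan is to apply the mean value theorem to the $Q$-function and then to control the resulting Gaussian density by exploiting the hypothesis $z\geq -\sqrt n/(2b)$. Write
$u = z - a/\sqrt n$ and $v = z + (b/\sqrt n)z^2$ so that
$v-u = (a+bz^2)/\sqrt n \geq 0$.
Since $Q'(x) = -\varphi(x)$ with $\varphi$ the standard Gaussian density, there exists $\xi\in[u,v]$ with
\begin{equation}
Q(u)-Q(v) \;=\; \varphi(\xi)\,(v-u) \;=\; \frac{(a+bz^2)\,\varphi(\xi)}{\sqrt n}.
\end{equation}
Thus it suffices to exhibit a constant $q$ with $(a+bz^2)\varphi(\xi)\leq q\sqrt{2\pi}$ uniformly in admissible $z$ for all sufficiently large $n$.

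I would split the analysis according to the magnitude and sign of $z$. First, for $|z|\leq 1$ both $\varphi(\xi)\leq 1/\sqrt{2\pi}$ and $a+bz^2\leq a+b$ are bounded, so this regime contributes an $O(1)$ constant trivially. Next, for $z\geq 1$, the lower endpoint satisfies $u=z-a/\sqrt n\geq z/2$ for $n$ large, so $\xi\geq z/2$ and
\begin{equation}
\varphi(\xi)\,(a+bz^2) \;\leq\; \frac{1}{\sqrt{2\pi}}(a+bz^2)\,e^{-z^2/8},
\end{equation}
which is bounded on $[1,\infty)$ since exponential decay dominates polynomial growth.

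The delicate case is $z\leq -1$, and this is where the hypothesis $z\geq -\sqrt n/(2b)$ enters crucially. Under this constraint we have $b|z|/\sqrt n\leq 1/2$, hence $(b/\sqrt n)z^2 = (b|z|/\sqrt n)|z|\leq |z|/2$, so
\begin{equation}
v \;=\; z + \frac{b}{\sqrt n}z^2 \;\in\; [\,z,\; z/2\,] \quad (\text{recall } z<0),
\end{equation}
and of course $u\leq z\leq z/2$ as well. Therefore the whole interval $[u,v]$ lies to the left of $z/2$, which gives $|\xi|\geq |z|/2$ and consequently
\begin{equation}
\varphi(\xi)\,(a+bz^2) \;\leq\; \frac{1}{\sqrt{2\pi}}(a+bz^2)\,e^{-z^2/8},
\end{equation}
again bounded on $(-\infty,-1]$. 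Taking the maximum of the three case-by-case bounds yields the desired constant $q$.

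The only subtlety worth flagging is the need to prevent $\xi$ from wandering close to the origin when $z$ is very negative; without the assumption $z\geq -\sqrt n/(2b)$ the quadratic perturbation $(b/\sqrt n)z^2$ could overshoot $|z|$ and push $v$ to $+\infty$, at which point $\varphi(\xi)$ would not be small enough to kill the $bz^2$ factor. Once the inclusion $v\leq z/2$ is established, all three regimes reduce to uniform estimates of the form ``polynomial times Gaussian,'' which are classical.
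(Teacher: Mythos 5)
Your proof is correct and follows essentially the same route as the paper: both bound the decrement of $Q$ by the Gaussian density on the interval of length $(a+bz^2)/\sqrt n$ (you via the mean value theorem, the paper via a tangent-line/convexity inequality for $Q$), and both exploit $z \geq -\sqrt n/(2b)$ in exactly the same way, to keep the evaluation point at distance at least $|z|/2$ from the origin so that the factor $e^{-z^2/8}$ absorbs the $bz^2$ growth (the paper's step $(1-\tfrac{b}{\sqrt n}|z|)^2 \geq \tfrac14$). The only differences are cosmetic: your case split at $|z|=1$ versus the paper's at $z = a/\sqrt n$ and $0$, and the paper works out the explicit numerical constants.
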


\begin{proof}[Proof of Lemma \ref{lemma:Qbound}]
 
$Q(x)$ is convex for $x \geq 0$, and
$
Q^\prime(x) = - \frac 1 {\sqrt{2 \pi}} e^{-\frac {x^2}{2}} 
$,
so for $x \geq 0$, $\xi \geq 0$
\begin{equation}
Q(x + \xi) \geq Q(x) - \frac{\xi}{\sqrt{2 \pi}}  e^{-\frac {x^2}{2}} \label{eq:Q x>0 xi>0}
\end{equation}
while for arbitrary $x$ and $\xi \geq 0$, 
\begin{equation}
Q(x + \xi) \geq Q(x) - \frac{\xi}{\sqrt{2 \pi}} \label{eq:Q xi>0}
\end{equation} 

 If $z \geq \frac{a}{\sqrt n}$,  we use \eqref{eq:Q x>0 xi>0} to obtain
\begin{align}
 &~ Q \left(z - \frac a {\sqrt n}\right)
 -
 Q \left(  z + \frac {b}{\sqrt n} z^2 \right)\\
  \leq &~
  \frac{1}{\sqrt{2 \pi}}  e^{-\frac {\left( z - \frac a {\sqrt n}\right)^2 }{2}} \left( \frac{b}{\sqrt n} z^2 + \frac{a}{\sqrt{n} } \right) \label{eq:-Qa}\\
 \leq &~
  \frac{bz^2}{\sqrt{2 \pi n}}  e^{-\frac {\left( z - \frac a {\sqrt n}\right)^2 }{2}} + \frac{a}{\sqrt{2 \pi n} }\label{eq:-Qa}\\
  \leq &~   \frac{3 b\, e^{-1} + a}{\sqrt{2 \pi n}} \label{eq:-Qb}
\end{align}
where \eqref{eq:-Qb} holds for $n$ large enough because the maximum of \eqref{eq:-Qa} is attained at 
$z =\sqrt{2 + \frac a {4n}} + \frac a {2 \sqrt n}$. 

If $0 \leq z \leq \frac{a}{\sqrt n}$, we
 use \eqref{eq:Q xi>0} to obtain
\begin{align}
 &~Q \left(z - \frac a {\sqrt n}\right)
 -
 Q \left( z + \frac {b}{\sqrt n} z^2   \right)
 \notag\\
 \leq &~ \frac {1} {\sqrt{2 \pi}}\left( \frac{b}{\sqrt n} z^2 + \frac{a}{\sqrt{n} } \right) \\
 \leq&~ \frac {a} {\sqrt{2 \pi n}} \left( 1 + \frac {ab} n \right)
\end{align}
If
$
-\frac{\sqrt n}{2 b} \leq z \leq 0
$,
we use $Q(x) = 1 - Q(-x)$ to obtain
\begin{align}
 &~ Q \left(z - \frac a {\sqrt n}\right)
 -
 Q \left(  z + \frac {b}{\sqrt n} z^2 \right) \notag\\
 = &~ 
  Q \left(  |z| - \frac {b}{\sqrt n} z^2 \right)
  -
  Q \left(|z| + \frac a {\sqrt n}\right)
\\
  \leq &~
 \frac{1}{\sqrt{2 \pi}}  e^{-\frac {z^2 \left(1 - \frac b {\sqrt n} |z|\right)^2}{2}} \left( \frac{b}{\sqrt n} z^2 + \frac{a}{\sqrt{n} } \right) \\
   \leq &~
 \frac{bz^2}{\sqrt{2 \pi n}}  e^{-\frac {z^2 \left(1 - \frac b {\sqrt n} |z|\right)^2}{2}} + \frac a {\sqrt{2 \pi n}} \label{eq:-Qc}\\
 \leq &~  \frac{b z^2}{\sqrt{2 \pi n}} e^{ - \frac {z^2 }{8} } + \frac a {\sqrt{2 \pi n}}  \label{eq:-Qd}\\
 \leq &~  \frac{8 b e^{-1} + a}{\sqrt{2 \pi n}}\label{eq:-Qd1}
\end{align}
where \eqref{eq:-Qd} is due to $\left(1 - \frac b {\sqrt n} |z|\right)^2 \geq \frac 1 4$ in $|z| \leq \frac{\sqrt n}{2 b}$, and \eqref{eq:-Qd1} holds because the maximum of \eqref{eq:-Qd} is attained at $z^2 = 8$.
\end{proof}

We are now equipped to prove Lemma \ref{lemma:C}. 
\begin{proof}[Proof of Lemma \ref{lemma:C}]

Define the following functions $\mathcal P \mapsto \mathbb R_+$:
\begin{align}
I(P_{\mathsf X} ) &= I(\mathsf X; \mathsf Y) = \E{\imath_{\mathsf X; \mathsf Y}(\mathsf X; \mathsf Y) } \label{eq:IPx}\\
V( P_{\mathsf X}) &= \E{\Var{\imath_{\mathsf X; \mathsf Y}(\mathsf X; \mathsf Y) \mid \mathsf X} } \label{eq:VPx}\\
T(P_{\mathsf X}) &= \E{ \left| \imath_{\mathsf X; \mathsf Y}(\mathsf X; \mathsf Y) - \E{\imath_{\mathsf X; \mathsf Y}(\mathsf X; \mathsf Y) | \mathsf X}\right|^3\mid \mathsf X } \label{eq:TPx}
\end{align}
If $P_{\mathsf X} = \mathrm{type}(x^n)$, then for each $a \in \mathcal A$, there are $n P_{\mathsf X}(a)$ occurrences of $P_{\mathsf Y| \mathsf X = a}$ among the $\{ P_{\mathsf Y | \mathsf X = x_i}, i = 1, 2, \ldots, n \}$. In the sequel we will invoke  Theorem \ref{thm:Berry-Esseen} with
$W_i = \imath_{\mathsf X; \mathsf Y}(x_i; Y_i)$
where $x^n$ is a given sequence, and \eqref{eq:BerryEsseenDn}--\eqref{eq:BerryEsseenTn} become
 \begin{align}
 D_{n}
 &= \frac 1 n \sum_{a = 1}^{|\mathcal A|} n P_{\mathsf X}(a)\E{\imath_{\mathsf X; \mathsf Y}(a; \mathsf Y) \mid \mathsf X = a} \\
  &=I(P_{\mathsf X}) \\
 V_{n} &= \frac 1 {n} \left[\sum_{a = 1}^{|\mathcal A|} n P_{\mathsf X}(a)\Var{\imath_{\mathsf X; \mathsf Y}(a; \mathsf Y) \mid \mathsf X = a} \right] \\
  &=  V( P_{\mathsf X}) \\
  T_n &=  \frac 1 {n} \left[\sum_{a = 1}^{|\mathcal A|} n P_{\mathsf X}(a) \left| \imath_{\mathsf X; \mathsf Y}(a; \mathsf Y) - \E{\imath_{\mathsf X; \mathsf Y}(a; \mathsf Y) | \mathsf X = a}\right|^3 \right] \notag\\
  &= T(P_{\mathsf X})
\end{align}

Define the (Euclidean) $\delta$-neighborhood of the set of capacity-achieving distributions $\mathcal P^\star$,
\begin{equation}
 \mathcal P^\star_\delta = \left\{ P_{\mathsf X} \in \mathcal P \colon \min_{P_{\mathsf X^\star} \in \mathcal P^\star} \left| P_{\mathsf X} - P_{\mathsf X^\star}\right| \leq \delta \right\} \label{eq:Pstardelta}
\end{equation}

We split the domain of the minimization in the left side of \eqref{eq:lemmaC} into two sets, $\mathrm{type}(x^n) \in \mathcal P^\star_{\delta, [n]}$ and $\mathrm{type}(x^n) \in \mathcal P_{[n]} \backslash \mathcal P^\star_{\delta}$ (recall notation \eqref{eq:-Dn}), for an appropriately chosen $\delta > 0$.

We now show that \eqref{eq:lemmaC} holds for all $\Delta \leq \frac{\Delta_I}{2}$ if the minimization is restricted to types in $\mathcal P_{[n]} \backslash \mathcal P^\star_{\delta}$, where $\delta > 0$ is arbitrary, and
\begin{equation}
\Delta_I = C - \max_{P_{\mathsf X} \in \mathcal P_{[n]} \backslash \mathcal P^\star_{\delta}  } I(P_{\mathsf X}) > 0 \label{eq:Igap}
\end{equation}
By Chebyshev's inequality, for all $x^n$ whose type belongs to $\mathcal P_{[n]} \backslash \mathcal P^\star_{\delta}$,
\begin{align}
 &~\Prob{\sum_{i = 1}^n \imath_{\mathsf X; \mathsf Y}(x_i; Y_i) > n(C - \Delta)}\\
 =&~  \Prob{\sum_{i = 1}^n \imath_{\mathsf X; \mathsf Y}(x_i; Y_i) - n I(P_{\mathsf X}) > n(C - I(P_{\mathsf X}) )- n\Delta}\\
 \leq &~   \Prob{\sum_{i = 1}^n \imath_{\mathsf X; \mathsf Y}(x_i; Y_i) - n I(P_{\mathsf X}) > \frac{n \Delta_I}{2}} \label{eq:-Ccheb1}\\
 \leq&~ \Prob{\left(\sum_{i = 1}^n \imath_{\mathsf X; \mathsf Y}(x_i; Y_i) - n I(P_{\mathsf X})\right)^2 >\frac{n^2 \Delta_I^2}{4}}\\
 \leq&~ \frac{4 n V(P_{\mathsf X})}{ n^2 \Delta_I^2}\\
 \leq&~ \frac{4 \overline V}{n \Delta_I^2}
\end{align}
where in \eqref{eq:-Ccheb1} we used 
\begin{equation}
 \Delta \leq \frac 1 2 \Delta_I < \Delta_I \leq C - I(P_{\mathsf X})
\end{equation}
and
\begin{equation}
 \overline V = \max_{P_{\mathsf X} \in \mathcal P} V(P_{\mathsf X}) \label{eq:Vub} 
\end{equation}
Note that $\overline V < \infty$ by Property \ref{p:continuous} below.
Therefore,
\begin{align}
 &~ \min_{ 
 \mathrm{type}(x^n) \in \mathcal P_{[n]} \backslash \mathcal P^\star_{\delta}  
 }
 \Prob{\sum_{i = 1}^n \imath_{\mathsf X; \mathsf Y}(x_i; Y_i) \leq n(C - \Delta)} 
 %\notag 
 \\
 >&~ 1 -  \frac{4 \overline V}{n \Delta_I^2}\label{eq:-lemmaC1}\\
 \geq&~ \Prob{\sum_{i = 1}^n \imath_{\mathsf X; \mathsf Y}(x^{\star}_i; Y_i) \leq n(C - \Delta)} - \frac{4 \overline V}{n \Delta_I^2} \label{eq:-l2}
\end{align}
We conclude that \eqref{eq:lemmaC} holds if the minimization is restricted to types in $\mathcal P_{[n]} \backslash \mathcal P^\star_{\delta}$. 

Without loss of generality, we assume that all outputs in $\mathcal B$ are accessible (which implies that $P_{\mathsf Y^\star}(\mathsf y) > 0$ for all $\mathsf y \in \mathcal B$) and choose $\delta > 0$ so that for all $P_{\mathsf X} \in  \mathcal P^\star_\delta$ and $\mathsf y \in \mathcal B$,  
\begin{equation}
 P_{\mathsf Y}(\mathsf y) > 0  \label{eq:P_Y>0}
 \end{equation}
 where $P_{\mathsf X} \to P_{\mathsf Y| \mathsf X} \to P_{\mathsf Y} $. 
 We recall the following properties of the functions $I( \cdot )$, $V( \cdot )$ and $T( \cdot )$ from \cite[Appendices E and I]{polyanskiy2010channel}.

\begin{property}
The functions  $I(P_{\mathsf X})$, $V( P_{\mathsf X})$ and $T( P_{\mathsf X})$ are continuous on the compact set $\mathcal P$, and therefore bounded and achieve their extrema. 
\label{p:continuous}
\end{property}

\begin{property}
There exists $\ell_1 > 0$ such that for all $\left( P_{\mathsf X^\star},  P_{\mathsf X} \right) \in \mathcal P^\star \times \mathcal P^\star_\delta$,  
\begin{equation}
C -  I(P_{\mathsf X}) \geq \ell_1 \left| P_{\mathsf X} - P_{\mathsf X^\star}\right|^2 \label{eq:nonflat}
\end{equation}
\label{p:nonflat}
\end{property}

\begin{property}
In $\mathcal P^\star_\delta$, the functions $I(P_{\mathsf X})$, $V( P_{\mathsf X})$ and $T( P_{\mathsf X})$ are infinitely differentiable. 
\label{p:diff} 
\end{property}

\begin{property}
In  $\mathcal P^\star$, 
 $V(P_{\mathsf X}) = \Var{\imath_{\mathsf X; \mathsf Y}(\mathsf X; \mathsf Y)}$. 
\label{p:var} 
\end{property}

Due to Property \ref{p:diff}, there exist nonnegative constants $L_1$ and $L_2$ such that for all $\left(P_{\mathsf X}, P_{\mathsf X^\star}\right) \in \mathcal P_\delta^\star \times  \mathcal P^\star$,
\begin{align}
C - I(P_{\mathsf X}) 
 &\leq 
 L_1 \left| P_{\mathsf X} - P_{\mathsf X^\star}\right| \label{eq:L_1}\\
 \left | V(P_{\mathsf X})  - V(P_{\mathsf X^\star}) \right | &\leq L_2 \left| P_{\mathsf X} - P_{\mathsf X^\star}\right|\label{eq:L_2}
\end{align}

To treat the case  $x^n \in \mathcal P^\star_{\delta, [n]}$, we will need to choose $\delta > 0$ carefully and to consider the cases $V_{\min} > 0$ and $V_{\max}  = 0$ separately. 

\subsection{$V_{\min} > 0$.} 
We decrease $\delta$ until, in addition to \eqref{eq:P_Y>0}, 
\begin{equation}
  V_{\min} \leq 2 \min_{P_{\mathsf X} \in \mathcal P_\delta^\star} V\left(P_{\mathsf X}\right) \label{eq:Vlb}
  %\\
% \delta &\leq 
% \min\left\{ 
%\frac {L_2} {2 \sqrt{V_{\min}}},~
%\frac{\bar \Delta}{4L_1} \right\}\label{eq:deltaub}
\end{equation}
is satisfied.
%, where $L_1$, $L_2$ are defined in \eqref{eq:L_1} and \eqref{eq:L_2}, respectively.

  We now show that \eqref{eq:lemmaC} holds if the minimization is restricted to types in $\mathcal P^\star_{\delta, [n]}$, for all 
  $- \underline \Delta \leq \Delta \leq \frac{ \Delta_I}{2}$
  , for an appropriately chosen $\underline \Delta > 0$. 
Using \eqref{eq:Vlb} and boundedness of $T(P_{\mathsf X})$, write
\begin{equation}
 B = \max_{P_{\mathsf X} \in \mathcal P^\star_\delta} \frac {c_0T(P_{\mathsf X})}{V^\frac 3 2(P_{\mathsf X})} 
 \leq \frac{ 2^{\frac 3 2} c_0 \overline T }{V_{\min}^{\frac 3 2} } < \infty
\end{equation}
where
\begin{equation}
 \overline T =  \max_{P_{\mathsf X} \in \mathcal P^\star_\delta} T(P_{\mathsf X}) < \infty
\end{equation}
Therefore, for any $x^n$ with $\mathrm{type}(x^n) \in \mathcal P^\star_{\delta, [n]}$, the Berry-Esseen bound yields:
\begin{equation}
 \left| \Prob{\sum_{i = 1}^n \imath_{\mathsf X; \mathsf Y}(x_i; Y_i) \leq n(C - \Delta)} - Q\left( \nu(P_{\mathsf X})\right)\right| \leq \frac {B}{\sqrt n} \label{eq:-be}
\end{equation}
%or
%\begin{equation}
%\Prob{\sum_{i = 1}^n \imath_{\mathsf X; \mathsf Y}(x_i; Y_i) \leq n(C - \Delta)} \geq  \inf_{x^n \in \Pi_\delta} Q\left( t(P_{\mathsf X})\right) - \frac {B}{\sqrt n}
%\end{equation}
where
\begin{equation}
 \nu(P_{\mathsf X}) = \frac{n I(P_{\mathsf X}) - n C + n \Delta }{\sqrt{n V(P_{\mathsf X})}}
 \end{equation}
 
 We now apply Lemma \ref{lemma_fg} with $\mathcal D = \mathcal P^\star_{\delta, [n]}$ and \begin{align}
f(P_{\mathsf X}) &= Q\left( \nu(P_{\mathsf X})\right)\\
g(P_{\mathsf X}) &= \Prob{\sum_{i = 1}^n \imath_{\mathsf X; \mathsf Y}(x_i; Y_i) \leq n(C - \Delta)}
\end{align}
Condition \eqref{eq:lemmafg1} of Lemma \ref{lemma_fg} holds with $\eta = \frac B {\sqrt n}$ due to \eqref{eq:-be}. As will be shown in the sequel, the following version of condition \eqref{eq:lemmafg2} holds:
\begin{equation}
 Q(\nu(\Pi(P_{\mathsf X^\star}))) \leq \min_{P_{\mathsf X} \in \mathcal P^\star_{\delta, [n]}} Q(\nu(P_{\mathsf X})) + \frac{q}{\sqrt n} \label{eq:lemmafg2a}
\end{equation}
where $\Pi(P_{\mathsf X^\star})$, the minimum Euclidean distance approximation of $P_{\mathsf X^\star}$ in the set of $n$-types, is formally defined in \eqref{eq:-Pi}, and $q > 0$ will be chosen later. 
Applying Lemma \ref{lemma_fg},  we deduce from \eqref{eq:lemmafg} that 
\begin{align}
\min_{\mathrm{type}(x^n) \in \mathcal P^\star_{\delta, [n]}} &~\Prob{\sum_{i = 1}^n \imath_{\mathsf X; \mathsf Y}(x_i; Y_i) 
\leq n(C - \Delta)}  \notag \\
\geq &~\Prob{\sum_{i = 1}^n \imath_{\mathsf X; \mathsf Y}(x^{\star}_i; Y_i) \leq n(C - \Delta)}
- \frac{q + 2B}{\sqrt n}\label{eq:-l1}
\end{align}
We conclude that \eqref{eq:lemmaC} holds if minimization is restricted to types in $\mathcal P^\star_{\delta, [n]}$. 

We proceed to show \eqref{eq:lemmafg2a}. As will be proven later, for appropriately chosen $\underline L > 0$ and $\bar L > 0$ we can write
\begin{align}
\frac{\sqrt n \Delta}{\sqrt{V(P_{\mathsf X^\star}) } } - \frac{\underline L}{\sqrt n} \sqrt{|\mathcal A| (|\mathcal A| - 1)} \label{eq:tmaxlower} 
&\leq
\nu(\Pi\left( P_{\mathsf X^\star}\right)) \\
&\leq
 \max_{P_{\mathsf X} \in \mathcal P^\star_{\delta, [n]}} \nu(P_{\mathsf X})\\
 &\leq 
  \max_{P_{\mathsf X} \in \mathcal P^\star_{\delta}} \nu(P_{\mathsf X})\\
&\leq
 \frac {\sqrt n \Delta} { \sqrt{V(P_{\mathsf X^\star}) }  } + \sqrt n \bar L \Delta^2\label{eq:tmaxupper}
\end{align}
where $P_{\mathsf X^\star} \in  \mathcal P^\star_{\min}$ if $\Delta \geq 0$, and $P_{\mathsf X^\star} \in  \mathcal P^\star_{\max}$ if $\Delta < 0$. 

Denote
\begin{align}
a &= \underline L \sqrt{\left|\mathcal A \right| (|\mathcal A| - 1)} \\
 b &= V(P_{\mathsf X^\star}) \bar L \\
 z &= \frac{\sqrt n \Delta}{\sqrt{V(P_{\mathsf X^\star}) } } 
\end{align}
If
\begin{equation}
  \Delta \geq - \frac{1}{2 \bar L \sqrt{ V_{\max} }} = - \underline \Delta
\end{equation}
then $z \geq - \frac{\sqrt n}{2b}$,  and Lemma \ref{lemma:Qbound} applies to $z$. So, using   \eqref{eq:tmaxlower}, \eqref{eq:tmaxupper}, the fact that $Q(
\cdot)$ is monotonically decreasing and  Lemma \ref{lemma:Qbound}, we conclude that there exists $q > 0$ such that 
\begin{align}
 &~ Q \left(\nu(\Pi(P_{\mathsf X^\star}))\right) - \min_{P_{\mathsf X} \in \mathcal P^\star_{\delta, [n]}} Q(\nu(P_{\mathsf X})) \notag\\
 = &~  Q \left(\nu(\Pi(P_{\mathsf X^\star}))\right) -  Q\left(\max_{P_{\mathsf X} \in \mathcal P^\star_{\delta, [n]}} \nu(P_{\mathsf X})\right)\\
 \leq &~
 Q \left(z - \frac a {\sqrt n}\right)
 -
 Q \left(  z + \frac {b}{\sqrt n} z^2 \right) \\
 \leq &~ \frac{q}{\sqrt n}\label{eq:-Qbound}
 \end{align}
which is equivalent to \eqref{eq:lemmafg2a}. 

It remains to prove \eqref{eq:tmaxlower} and \eqref{eq:tmaxupper}. Observing that for $a, b > 0$
\begin{align}
\left| \frac 1 {\sqrt a} - \frac 1 {\sqrt b}\right| &= \frac{\left| a - b\right|}{\sqrt a \sqrt b \left( \sqrt a + \sqrt b \right) }\\
&\leq \frac{\left| a - b\right| }{2 \min \left\{ a, b\right\}^{\frac 3 2 }}
\end{align}
and using \eqref{eq:L_2} and \eqref{eq:Vlb}, we have, for all $\left(P_{\mathsf X}, P_{\mathsf X^\star}\right) \in \mathcal P_\delta^\star \times  \mathcal P^\star$, 
\begin{equation}
 \left| \frac 1 {\sqrt{V(P_{\mathsf X})}} - \frac 1 {\sqrt{V(P_{\mathsf X^\star})}} \right| \leq L \left| P_{\mathsf X} - P_{\mathsf X^\star}\right| \label{eq:LL2}
\end{equation}
where 
\begin{equation}
  L = L_2 \sqrt{ \frac{2}{V_{\min}^3 } } \label{eq:L}
\end{equation}
Thus, recalling \eqref{eq:L_1} and 
denoting $\zeta = |P_{\mathsf X} - P_{\mathsf X^\star}|$, we have
\begin{align}
 &~\frac{C - I(P_{\mathsf X}) - \Delta}{\sqrt{V(P_{\mathsf X}) }} 
 \notag \\
 \leq&~ \frac{L_1\zeta - \Delta}{\sqrt{V(P_{\mathsf X}) }}  \label{eq:tmaxlowera0} \\
 \leq&~ \frac{L_1\zeta - \Delta}{\sqrt{ V(P_{\mathsf X^\star}) } }  +  L \zeta\left( L_1 \zeta + |\Delta| \right) \\
 \leq&~ - \frac{\Delta}{\sqrt{ V(P_{\mathsf X^\star}) }} + \underline L \zeta \label{eq:tmaxlowera} 
 \end{align}
where
\begin{equation}
\underline L = \frac{L_1}{\sqrt {V_{\min}}} + 
L \max\left\{ \underline \Delta, \frac{\Delta_I}{2} \right\} 
+ L L_1 \delta 
\end{equation}
So, \eqref{eq:tmaxlower} follows by observing that for any $P_{\mathsf X} \in \mathcal P$,
\begin{equation}
  \left| P_{\mathsf X} - \Pi\left( P_{\mathsf X}\right) \right| \leq \frac 1 n \sqrt{|\mathcal A| (|\mathcal A| - 1)}  \label{eq:-Pibound}
\end{equation}
and letting $P_{\mathsf X} = \Pi(P_{\mathsf X^\star})$ in \eqref{eq:tmaxlowera0}--\eqref{eq:tmaxlowera}. 

To show \eqref{eq:tmaxupper}, we apply Lemma \ref{lemma:maxsum} with 
\begin{align}
 \mathcal D &= \mathcal P^\star_\delta\\
 \mathcal D^\star &= \mathcal P^\star \\
 \varphi &= \sqrt n\\
  \psi &= \sqrt{n} |\Delta|\\
 f\left( P_{\mathsf X}\right) &=  \frac{I(P_{\mathsf X}) - C} { \sqrt {V(P_{\mathsf X})} } \\
 g\left( P_{\mathsf X}\right) &= \frac{1}{ \sqrt{V(P_{\mathsf X})} }
\end{align}
We proceed to verify that conditions of Lemma \ref{lemma:maxsum} are met. 
Function $g$ satisfies condition \eqref{eq:lemmag} with $L$ defined in \eqref{eq:L}. 
Let us now show that function $f$ satisfies condition \eqref{eq:lemmaf} with
$\ell = \frac{\ell_1}{\sqrt {\overline V}}$,
where $\overline V$ and $\ell_1$ are defined in \eqref{eq:Vub} and \eqref{eq:nonflat}, respectively.  
For any $\left( P_{\mathsf X}, P_{\mathsf X^\star}\right) \in \mathcal P^\star_\delta \times \mathcal P^\star$, write
\begin{align}
 f(P_{\mathsf X^\star})  - f(P_{\mathsf X}) &= \frac{C - I(P_{\mathsf X})}{\sqrt{V(P_{\mathsf X})}}\\
 &\geq \frac{ C - I(P_{\mathsf X}) } {\sqrt{\overline V}} \label{eq:-fa}\\
&\geq \frac{\ell_1}{\sqrt {\overline V}} \left| P_{\mathsf X} - P_{\mathsf X^\star}\right|^2\label{eq:-fb}
\end{align}
where \eqref{eq:-fa} follows from \eqref{eq:Vub}, and \eqref{eq:-fb} applies \eqref{eq:nonflat}.
So, Lemma \ref{lemma:maxsum} applies to
$\nu(P_{\mathsf X}) = \varphi f(P_{\mathsf X}) + \mathrm{sign}(\Delta) \psi g(P_{\mathsf X})$, resulting in  \eqref{eq:tmaxupper} with
\begin{equation}
% \bar L = \frac{L^2\sqrt{\overline V}}{4 \ell_1}
 \bar L = \frac{L_2^2\sqrt{\overline V}}{2 \ell_1 V_{\min}^3}
\end{equation}
thereby completing the proof of \eqref{eq:-l1}.

Combining \eqref{eq:-l2} and \eqref{eq:-l1}, we conclude that \eqref{eq:lemmaC} holds for all $\Delta$ in the interval
\begin{equation}
- \frac{\ell_1 V_{\min}^3}{L_2^2 \sqrt{\overline V V_{\max}}} \leq \Delta \leq \frac{\Delta_I}{2}
\end{equation}

\subsection{$V_{\max} = 0$.} 
We choose $\delta$ so that \eqref{eq:P_Y>0} is satisfied. The case $\mathrm{type}(x^n) \notin \mathcal P^\star_{\delta, [n]}$ was covered in \eqref{eq:-l2}, so we only need to consider the minimization of the left side of \eqref{eq:lemmaC0} over $\mathcal P^\star_{\delta, [n]}$.  Fix $\alpha < \frac 3 2$. If 
\begin{equation}
 \Delta \geq \left( \frac{L_2^2 }{2^8  \ell_1}\right)^{\frac 1 3} \frac 3 {n^{\frac 1 2 + \alpha}}
\end{equation}
we have
\begin{align}
 &~\Prob{\sum_{i = 1}^n \imath_{\mathsf X; \mathsf Y}(x_i; Y_i) > n(C + \Delta)} \notag\\
 = &~  \Prob{\sum_{i = 1}^n \imath_{\mathsf X; \mathsf Y}(x_i; Y_i) - n I(P_{\mathsf X}) > n(C - I(P_{\mathsf X}) )+ n \Delta} 
 \\
 \leq&~ \frac{V(P_{\mathsf X})}{ n \left( C - I(P_{\mathsf X}) + \Delta \right)^2 } \label{eq:bada}\\
 \leq&~ \frac{L_2 |P_{\mathsf X} - P_{\mathsf X^\star}|} {n \left( \ell_1 |P_{\mathsf X} - P_{\mathsf X^\star}|^2 + \Delta \right)^2} \label{eq:badb}\\
 \leq&~ \frac{3 ^{\frac 3 2} L_2  }{  16 \ell_1^{\frac 1 2} } \frac 1 {n \Delta^{\frac 3 2}} \label{eq:badc}\\
 \leq&~ \frac{1}{n^{\frac 1 4  - \frac 3 2 \alpha}} \label{eq:badd}
\end{align}
where 
\begin{itemize}
\item \eqref{eq:bada} is by Chebyshev's inequality;
\item \eqref{eq:badb} uses \eqref{eq:nonflat}, \eqref{eq:L_2} and $V_{\max} = 0$;
\item \eqref{eq:badc} holds  because the maximum of its left side is attained at $|P_{\mathsf X} - P_{\mathsf X^\star}|^2 = \frac{\Delta}{3\ell_1}$.
\end{itemize}
\end{proof}

\section{Proof of the converse part of Theorem \ref{thm:2order}}
\label{appx:2orderC}

Note that for the converse, restriction \eqref{item:last} can be replaced by the following weaker one: 
%(as spelled out in \cite[(101) and (102)]{kostina2011fixed})  
\begin{enumerate}
 \item[(iv$^\prime$)] The random variable $\jmath_{\mathsf S}(\mathsf S,d)$ has finite absolute third moment. \label{item:jmoment} 
\end{enumerate}
To verify that \eqref{item:last} implies (iv$^\prime$), observe that by the concavity of the logarithm,
\begin{equation}
0 \leq \jmath_{\mathsf S}(\mathsf s, d) + \lambda^\star d\leq   \lambda^\star \E{\mathsf d(\mathsf s, \mathsf Z^\star)} 
\end{equation}
so
\begin{equation}
\E{ \left| \jmath_{\mathsf S}(\mathsf S, d) + \lambda^\star d \right|^3} \leq \lambda^{\star 3} \E{\mathsf d^3(\mathsf S, \mathsf Z^\star)} \label{eq:dtilted3dmoment}
\end{equation}
We now proceed to prove the converse by showing first
that we can eliminate all rates exceeding 
\begin{equation}
 \frac{k}{n} \geq \frac{C}{R(d) - 3 \tau} \label{eq:-2orderChighrate}
\end{equation}
 for any 
 $0 < \tau < \frac {R(d)} 3$.
  More precisely, we show that the excess-distortion probability of any code having such rate converges to $1$ as $n \to \infty$, and therefore for any $\epsilon < 1$, there is an $n_0$ such that for all $n \geq n_0$, no $(k, n, d, \epsilon)$ code can exist for $k$, $n$ satisfying \eqref{eq:-2orderChighrate}.

We weaken \eqref{eq:C11} by fixing $\gamma =  k \tau$ and choosing a particular output distribution, namely, $P_{\bar Y^n} = P_{Y^{n \star}} = P_{\mathsf Y^\star} \times \ldots \times  P_{\mathsf Y^\star}$. Due to restriction \eqref{item:s} in Section \ref{sec:2order}, $P_{Z^k}^\star = P_{\mathsf Z}^\star \times \ldots \times P_{\mathsf Z}^\star$, and the $\mathsf d-$tilted information single-letterizes, that is, for a.e. $s^k$,
\begin{equation}
\jmath_{S^k}(s^k,d) = \sum_{i = 1}^k \jmath_{\mathsf S}(s_i,d) \label{eq:id_iid}
\end{equation}
Theorem \ref{thm:C1} implies that error probability $\epsilon^\prime$ of every $(k, n, d, \epsilon^\prime)$ code must be lower bounded by
\begin{align}
&~ 
 \E{ \!\min_{x^n \in \mathcal A^n }  \Prob{ \sum_{i = 1}^k \jmath_{\mathsf S}(S_i, d) -  \sum_{j = 1}^n \imath_{\mathsf X; \mathsf Y^\star} (x_i; Y_i) \geq k\tau \mid S^k} \!} 
\notag\\ 
-
&~ 
\exp\left(-k\tau\right) \notag\\
\geq
&~ 
\min_{x^n \in \mathcal A^n }  \Prob{ \sum_{j = 1}^n \imath_{\mathsf X; \mathsf Y^\star} (x_i; Y_i) \leq nC + k\tau } 
\notag\\ 
\cdot
&~ 
\Prob{\sum_{i = 1}^k \jmath_{\mathsf S}(S_i, d) \geq nC + 2k\tau} - \exp\left(-k\tau\right)
 \\
  \geq
&~ 
\min_{x^n \in \mathcal A^n }  \Prob{ \sum_{j = 1}^n \imath_{\mathsf X; \mathsf Y^\star} (x_i; Y_i) \leq n C + n \tau^\prime } 
\notag\\ 
\cdot
&~ 
\Prob{\sum_{i = 1}^k \jmath_{\mathsf S}(S_i, d) \geq k R(d) - k\tau } - \exp\left(-k\tau\right)  \label{eq:Cjsscstrong}
\end{align}
where in \eqref{eq:Cjsscstrong}, we used \eqref{eq:-2orderChighrate} and 
$\tau^\prime = \frac {C\tau} {R(d) - 3 \tau} > 0$.
Recalling \eqref{eq:Ejd} and 
\begin{equation}
\E{\imath_{\mathsf X; \mathsf Y^\star} (\mathsf x; \mathsf Y)| \mathsf X = \mathsf x} \leq C \label{eq:E<Capacity}
\end{equation}
 with equality for $P_{\mathsf X^\star}$-a.e. $\mathsf x$, we conclude using the law of large numbers that \eqref{eq:Cjsscstrong} tends to $1$ as $k, n \to \infty$. 

We proceed to show that for all large enough $k, n$, if there is a sequence of $(k, n, d, \epsilon^\prime)$ codes such that
\begin{align}
 - 3 k \tau &\leq nC - k R(d)  \label{eq:-2orderCnlb}\\
 &\leq \sqrt{n V + k \mathcal V(d)}\Qinv{ \epsilon} + \theta\left(n\right) \label{eq:-2orderCnub}
\end{align}
then
$\epsilon^\prime \geq \epsilon$. 

Note that in general the bound in Theorem \ref{thm:C1} with the choice of $P_{\bar Y^n}$ as above does not lead to the correct channel dispersion term. We first consider the general case, in which we apply Theorem \ref{thm:CT}, and then we show the symmetric case, in which we apply Theorem \ref{thm:C1sym}. 

Recall that $x^n \in \mathcal A^n$ has type $P_{\mathsf X}$ if the number of times each letter $a \in \mathcal A$ is encountered in $x^n$ is $n P_{\mathsf X}(a)$. In Theorem \ref{thm:CT}, we weaken the supremum over $W$ by letting $W$ map $X^n$ to its type, $W = \mathrm{type}(X^n)$. Note that the total number of types satisfies (e.g. \cite{csiszar2011information}) $T \leq (n + 1)^{|\mathcal A| - 1}$. We weaken the supremum over $\bar Y^n$ in \eqref{eq:CT} by fixing $P_{\bar Y^n | W = P_{\mathsf X}} = P_{\mathsf Y} \times \ldots \times P_{\mathsf Y}$, where $P_{\mathsf X} \to P_{\mathsf Y| \mathsf X} \to P_{\mathsf Y}$, i.e. $P_{\mathsf Y}$ is the output distribution induced by the type $P_{\mathsf X}$. In this way, Theorem \ref{thm:CT} implies that the error probability of any $(k, n, d, \epsilon^\prime)$ code  must be lower bounded by
\begin{align}
  \epsilon^\prime \geq &~\E{\min_{x^n \in \mathcal A^n}
  \Prob{ \sum_{i = 1}^k \jmath_{\mathsf S}(S_i, d) - \sum_{i = 1}^n \imath_{\mathsf X; \mathsf Y}(x_i; Y_i) \geq \gamma \mid S^k} } \notag\\
   &~- (n + 1)^{|\mathcal A| - 1} \exp\left( -\gamma \right) \label{eq:-2orderCT}
\end{align}
 
Choose
\begin{align}
\gamma &= \left( |\mathcal A| - \frac 1 2\right) \log(n + 1) \label{eq:2orderCT_gamma}
%\\
 %\epsilon_n &= \epsilon + \frac {B_{n+k}}{\sqrt{n + k}} + \frac 1{\sqrt{n + 1}} 
\end{align}

At this point we consider two cases separately, $V > 0$ and $V = 0$.
\subsection{$V > 0$.}
\label{appx:2orderV>0}
 In order to apply Lemma \ref{lemma:C} in Appendix \ref{appx:lemmaC}, we isolate the typical set of source sequences:
\begin{equation}
 \mathcal T_{k, n} = \left\{s^k \in \mathcal S^k \colon  \left| \sum_{i = 1}^k \jmath_{\mathsf S}(s_i, d) - nC \right| \leq n \bar \Delta - \gamma \right\}
\end{equation}
Observe that 
\begin{align}
 &~\Prob{S^k \notin \mathcal T_{k, n}} 
 \notag\\
 =&~ \Prob{\left| \sum_{i = 1}^k \jmath_{\mathsf S}(S_i, d) - nC \right| > n \bar \Delta - \gamma } \\
\leq&~ \Prob{\left| \sum_{i = 1}^k \jmath_{\mathsf S}(S_i, d) - kR(d) \right| + \left| nC - kR(d)\right| + \gamma > n \bar \Delta } \\
 \leq&~ \Prob{\left| \sum_{i = 1}^k \jmath_{\mathsf S}(S_i, d) - kR(d)  \right| > k \frac{\bar \Delta R(d)}{2C} }\label{eq:-2orderTka}\\\
 \leq&~ \frac{4 C^2}{R^2(d) \bar \Delta^2}\frac {\mathcal V(d)} k \label{eq:-2orderTkd}
\end{align}
where 
\begin{itemize}
\item \eqref{eq:-2orderTka} follows by lower bounding 
\begin{align}
&~n \bar \Delta - \gamma - \left| nC - kR(d)\right| \notag\\
\geq&~ n \bar \Delta - \gamma - 3 k \tau \label{eq:-2orderTka1}\\
\geq &~  n \frac{3 \bar \Delta}{4} - 3 k \tau \label{eq:-2orderTka1a}\\
\geq&~ k \frac{3 \bar \Delta}{4 C}\left( R(d) - 3 \tau \right)  - 3 k \tau \label{eq:-2orderTka2}\\
\geq&~ k\frac{\bar \Delta R(d)}{2C} \label{eq:-2orderTka3}
\end{align}
where
\begin{itemize}
 \item \eqref{eq:-2orderTka1} holds for large enough $n$ due to \eqref{eq:-2orderCnlb} and \eqref{eq:-2orderCnub};
 \item \eqref{eq:-2orderTka1a} holds for large enough $n$ by the choice of $\gamma$ in \eqref{eq:2orderCT_gamma}; 
 \item \eqref{eq:-2orderTka2} lower bounds $n$ using \eqref{eq:-2orderCnlb}; 
  \item \eqref{eq:-2orderTka3} holds for a small enough $\tau > 0$.
\end{itemize}
 \item  \eqref{eq:-2orderTkd} is by Chebyshev's inequality. 
\end{itemize}
Now, we let
\begin{equation}
  \epsilon_{k, n} = \epsilon + \frac {B}{\sqrt{n + k}} + \frac 1{\sqrt{n+1}} 
  + \frac{4 C^2}{R^2(d) \bar \Delta^2}\frac {\mathcal V(d)} k \label{eq:-2orderCepsilonkn}
\end{equation}
where $B > 0$ will be chosen in the sequel, and $k, n$ are chosen so that both \eqref{eq:-2orderCnlb} and the following version of \eqref{eq:-2orderCnub} hold:
\begin{equation}
 nC - k R(d) \leq \sqrt{n V  + k \mathcal V(d) -  \frac {L_2 |\mathcal A| }{n + k } }\Qinv{ \epsilon_{k, n}} - \gamma \label{eq:-2orderCnub1}
\end{equation}
 where $L_2 < \infty$ is defined in \eqref{eq:L_2}. Denote for brevity
\begin{equation}
 r(x^n, y^n, s^k) = \sum_{i = 1}^n \imath_{\mathsf X; \mathsf Y}(x_i; y_i) -  \sum_{i = 1}^k \jmath_{\mathsf S}(s_i, d) 
\end{equation}
Weakening \eqref{eq:-2orderCT} using \eqref{eq:2orderCT_gamma} and Lemma \ref{lemma:C}, we can lower bound $\epsilon^\prime$ by
 \begin{align}
&~ \mathbb E \Bigg[ \min_{x^n \in \mathcal A^n}
  \Prob{  r(x^n, Y^n, S^k) \leq - \gamma \mid S^k} 
   %\notag \\&~
  \cdot 1\left\{ S^k  \in \mathcal T_{k, n} \right\} \Bigg] 
  \notag\\& 
  - \frac 1 {\sqrt{n+1}} \notag\\ %(n + 1)^{|\mathcal A| - 1} \exp\left( -\gamma \right)
   \geq&~ \mathbb E \Bigg[
  \Prob{ r(x^{n \star}, Y^n, S^k) \leq  - \gamma \mid  S^k}  \cdot 1\left\{ S^k  \in \mathcal T_{k, n} \right\} \Bigg] 
 \notag\\&
   - \frac{K}{\sqrt n} - \frac 1 {\sqrt{n+1}} \label{eq:-2orderCTa}\\
      =&~ 
  \Prob{  r(x^{n \star}, Y^n, S^k) \leq  - \gamma, ~ S^k \in \mathcal T_{k, n} } 
%\notag\\&
  - \frac{K}{\sqrt n} 
  - \frac 1 {\sqrt{n+1}} \label{eq:-2orderCTc}\\
       \geq&~ 
  \Prob{ r(x^{n \star}, Y^n, S^k) \leq  - \gamma }   
  - \Prob{S^k \notin \mathcal T_{k, n} }
   - \frac{K}{\sqrt n}
     \notag\\ &
    - \frac 1 {\sqrt{n+1}} \label{eq:-2orderCTd}\\  
          \geq&~ 
  \Prob{ r(x^{n \star}, Y^n, S^k) \leq  - \gamma }   
  -  \frac{4 C^2}{R^2(d) \bar \Delta^2}\frac {\mathcal V(d)} k - \frac{K}{\sqrt n}
    \notag\\ &
   - \frac 1 {\sqrt{n+1}} \label{eq:-2orderCTe}\\ 
   \geq&~ \epsilon \label{eq:-2orderCTf}
\end{align}
where \eqref{eq:-2orderCTa} is by Lemma \ref{lemma:C}, and \eqref{eq:-2orderCTd} is by the union bound. To justify \eqref{eq:-2orderCTf}, observe that the quantities in Theorem \ref{thm:Berry-Esseen} corresponding to the sum of independent random variables in \eqref{eq:-2orderCTe} are
\begin{align}
 D_{n+k}
 %&= \frac 1 {n+k} \left[\sum_{a = 1}^{|\mathcal A|} n \hat P_{\mathsf X^\star}(a)\E{\imath_{\mathsf X, \mathsf Y^\star}(a, \mathsf Y) \mid \mathsf X = a} + \sum_{i = 1}^{k} \E{\jmath_{\mathsf S}(S_i, d)} \right] \\
  &= \frac {n}{n+k}I( \Pi( P_{\mathsf X^\star})) - \frac k {n + k} R(d)\\
  &\leq \frac n {n + k} C  - \frac k {n + k} R(d) \label{eq:-2orderCIlb}\\
 V_{n+k} 
 %&= \frac 1 {n+k} \left[\sum_{a = 1}^{|\mathcal A|} n \hat P_{\mathsf X^\star}(a)\Var{\imath_{\mathsf X, \mathsf Y^\star}(a, \mathsf Y) \mid \mathsf X = a} + \sum_{i = 1}^k \Var{\jmath_{\mathsf S}(S_i, d)}\right] \\
  &=  \frac n {n+k} V( \Pi( P_{\mathsf X^\star})) + \frac k {n + k} \mathcal V(d) \\
  &\geq \frac n {n+k} V + \frac k {n + k} \mathcal V(d) - \frac {L_2 |\mathcal A| }{n + k } \label{eq:-2orderCVlb}\\
 T_{n +k} 
 %&=  \frac 1 {n+k} \left[\sum_{a = 1}^{|\mathcal A|} n \hat P_{\mathsf X^\star}(a) \left| \imath_{\mathsf X; \mathsf Y^\star}(a; \mathsf Y) - \E{\imath_{\mathsf X; \mathsf Y^\star}(a; \mathsf Y) | \mathsf X = a}\right|^3 \right] 
 %\notag\\ 
 %&+ \frac k {n + k} \E{\left|\jmath_{\mathsf S}(\mathsf S, d) - R(d) \right|^3}\\
  &=  \frac n {n + k} T(\Pi( P_{\mathsf X^\star})) + \frac k {n + k} \E{\left|\jmath_{\mathsf S}(\mathsf S, d) - R(d) \right|^3}
  \end{align}
where the functions $\Pi(\cdot)$, $I(\cdot)$, $V(\cdot)$, $T(\cdot)$ are defined in \eqref{eq:-Pi}, \eqref{eq:IPx}--\eqref{eq:TPx} in Appendix \ref{appx:lemmaC}. 
To show \eqref{eq:-2orderCVlb},  recall that $V(P_{\mathsf X^\star}) = V$ by Property \ref{p:var} in Appendix \ref{appx:lemmaC}, and use \eqref{eq:L_2} and \eqref{eq:-Pibound}.  
Further, $T_{n+k}$ is bounded uniformly in $P_{\mathsf X}$, so \eqref{eq:BerryEsseenBn} is upper bounded by some constant $B >0$. 
Finally, applying \eqref{eq:-2orderCIlb} and \eqref{eq:-2orderCVlb} to \eqref{eq:-2orderCnub1}, we conclude that
\begin{equation}
-\gamma \geq (n+k)D_{n + k} - \sqrt{(n+k)V_{n + k}}\Qinv{ \epsilon_{k, n}} \label{eq_-2ordergammaub}
\end{equation}
which enables us to lower bound the probability in \eqref{eq:-2orderCTe} invoking the Berry-Esseen bound (Theorem \ref{thm:Berry-Esseen}). In view of \eqref{eq:-2orderCepsilonkn}, the resulting bound is equal to $\epsilon$, and the proof of \eqref{eq:-2orderCTf} is complete.

\subsection{$V = 0$.}
\label{appx:2orderV=0}
 Fix $0 < \alpha < \frac 1 6$.
 
If $\mathcal V(d) > 0$, 
we choose $\gamma$ as in \eqref{eq:2orderCT_gamma}, and
\begin{equation}
  \epsilon_{k, n} = \epsilon + \frac {B}{\sqrt{k}} + (n + 1)^{|\mathcal A| - 1} \exp\left( -\gamma \right) +  \frac 1 {n^{\frac 1 4 - \frac 3 2 \alpha}}  \label{eq:-2orderCepsilonkn0}
\end{equation}
where $B > 0$ is the same as in \eqref{eq:-2orderCepsilonkn}, and $k, n$ are chosen so that 
%both \eqref{eq:-2orderCnlb} and 
the following version of \eqref{eq:-2orderCnub} hold:
\begin{equation}
 nC - k R(d) \leq \sqrt{k \mathcal V(d) }\Qinv{ \epsilon_{k, n}} - \gamma - \bar \Delta n^{\frac 1 2 - \alpha}\label{eq:-2orderCnub0}
\end{equation}
where $\bar \Delta > 0$ was defined in Lemma \ref{lemma:C}. 
Weakening \eqref{eq:-2orderCT} using \eqref{eq:lemmaC0}, we have
 \begin{align}
  \epsilon^\prime 
  &\geq\min_{x^n \in \mathcal A^n} \Prob{\imath_{\mathsf X; \mathsf Y}(x_i, Y_i) \geq nC + \bar \Delta n^{\frac 1 2 - \alpha}} 
  \notag \\
  &\cdot \Prob{\sum_{i = 1}^k \jmath_{\mathsf S}(S_i, d) \geq nC +  \bar \Delta n^{\frac 1 2 - \alpha} + \gamma }
  \notag \\
  &-(n + 1)^{|\mathcal A| - 1} \exp\left( -\gamma \right)\\
  &\geq \left( 1 - \frac 1 {n^{\frac 1 4 - \frac 3 2 \alpha}}\right)
  \notag \\&
  \cdot \Prob{\sum_{i = 1}^k \jmath_{\mathsf S}(S_i, d) \geq k R(d) + \sqrt{k \mathcal V(d)}\Qinv{\epsilon_{k, n}} }
  \notag \\
  &-(n + 1)^{|\mathcal A| - 1} \exp\left( -\gamma \right) \label{eq:-2orderCTa0}\\
   &\geq 
  \left( 1 - \frac 1 {n^{\frac 1 4 - \frac 3 2 \alpha}}\right) \left( \epsilon_{k, n} - \frac B {\sqrt k}\right) 
  \notag\\&
   -(n + 1)^{|\mathcal A| - 1} \exp\left( -\gamma \right)\label{eq:-2orderCTb0}\\  
          &\geq 
  \epsilon_{k, n} - \frac B {\sqrt k}  
   -  \frac{1}{n^{\frac 1 4 - \frac 3 2 \alpha} }-(n + 1)^{|\mathcal A| - 1} \exp\left( -\gamma \right) \\
   &= \epsilon 
\end{align}
where \eqref{eq:-2orderCTa0} uses \eqref{eq:lemmaC0} and \eqref{eq:-2orderCnub0}, and \eqref{eq:-2orderCTb0} is by the Berry-Esseen bound. 

If $\mathcal V(d) = 0$, which implies $\jmath_{S}( S_i, d) = R(d)$ a.s., we let
\begin{equation}
 \gamma = \left( |\mathcal A|- 1\right) \log (n + 1) - \log \left( 1 - \epsilon - \frac 1 {n^{\frac 1 4 - \frac 3 2 \alpha} } \right) \label{eq:-2ordergamma00}
\end{equation}
and choose $k, n$ that satisfy 
%\eqref{eq:-2orderCnlb} and 
\begin{equation}
 k R(d) - nC \geq \gamma + \bar \Delta n^{\frac 1 2 - \alpha}\label{eq:-2orderCnub00}
\end{equation}
Then, plugging $\jmath_{S}( S_i, d) = R(d)$ a.s. in \eqref{eq:-2orderCT}, we have
\begin{align}
  \epsilon^\prime \geq &~\min_{x^n \in \mathcal A^n}
  \Prob{ \sum_{i = 1}^n \imath_{\mathsf X; \mathsf Y}(x_i; Y_i)  \leq k R(d) - \gamma } 
  \notag\\
   &~
   - (n + 1)^{|\mathcal A| - 1} \exp\left( -\gamma \right) \\
   \geq &~
   \min_{x^n \in \mathcal A^n}
  \Prob{ \sum_{i = 1}^n \imath_{\mathsf X; \mathsf Y}(x_i; Y_i)  \leq nC + \bar \Delta n^{\frac 1 2 - \alpha} } 
  \notag\\
   &~
   - (n + 1)^{|\mathcal A| - 1} \exp\left( -\gamma \right) \label{eq:-CV00a}\\
   \geq &~ 1 - \frac 1 {n^{\frac 1 4 - \frac 3 2 \alpha}} - (n + 1)^{|\mathcal A| - 1} \exp\left( -\gamma \right)\label{eq:-CV00b} \\
   = &~ \epsilon\label{eq:-CV00c}
   \end{align}
   where \eqref{eq:-CV00a} is by the choice of $k, n$ in \eqref{eq:-2orderCnub00},  \eqref{eq:-CV00b} invokes \eqref{eq:lemmaC0}, and \eqref{eq:-CV00c} follows from the choice of $\gamma$ in \eqref{eq:-2ordergamma00}. 

\subsection{Symmetric channel.}
\label{appx:2orderCsym}
We show that if the channel is such that the distribution of 
$\imath_{\mathsf X; \mathsf Y^\star}(\mathsf x; \mathsf Y)$ (according to $P_{\mathsf Y|\mathsf X = \mathsf x}$)
does not depend on the choice $\mathsf x \in \mathcal A$, Theorem \ref{thm:C1sym} leads to a tighter third-order term than \eqref{eq:Cremainder}. 

If either $V > 0$ or $\mathcal V(d) > 0$, let
\begin{align}
 \gamma &= \frac 1 2 \log n \label{eq:-2orderCepsilonknsymgamma}\\
  \epsilon_{k, n} &= \epsilon + \frac {B}{\sqrt{n + k}} + \frac 1{\sqrt{n}} \label{eq:-2orderCepsilonknsym}
\end{align}
where $B > 0$ can be chosen as in \eqref{eq:-2orderCepsilonkn}, and let $k, n$ be such that the following version of \eqref{eq:-2orderCnub} (with the remainder $\theta(n)$ satisfying \eqref{eq:Cremainder} with $\underline c = \frac 1 2$) holds:
\begin{equation}
 nC - k R(d) \leq \sqrt{n V  + k \mathcal V(d)}\Qinv{ \epsilon_{k, n}} - \gamma \label{eq:-2orderCnub1sym}
\end{equation}
Theorem \ref{thm:C1sym} and Theorem \ref{thm:Berry-Esseen} imply that the error probability of every $(k, n, d, \epsilon^\prime)$ code must satisfy, for an arbitrary sequence $x^n \in \mathcal A^n$,
\begin{align}
%&~ 
\epsilon^\prime
%\notag \\
\geq
&~ 
 \Prob{ \sum_{i = 1}^k \jmath_{\mathsf S}(S_i, d) -  \sum_{j = 1}^n \imath_{\mathsf X; \mathsf Y^\star} (x_i; Y_i) \geq \gamma}  \label{eq:2orderC1sym}
 %\notag\\ 
-
%&~ 
\exp\left(-\gamma \right)\\
\geq
&~
\epsilon  \label{eq:2orderC1syma}
\end{align}

If both $V = 0$ and $\mathcal V(d) = 0$, choose $k, n$ to satisfy
\begin{align}
k R(d) - nC  &\geq \gamma\\
  &=  \log \frac 1 {1 - \epsilon} \label{eq:gammasym00}
\end{align}
Substituting \eqref{eq:gammasym00} and $\jmath_{\mathsf S}(S_i, d) = R(d)$, $\imath_{\mathsf X; \mathsf Y^\star} (x_i; Y_i) = C$ a.s. in \eqref{eq:2orderC1sym}, we conclude that the right side of \eqref{eq:2orderC1sym} equals $\epsilon$, so $\epsilon^\prime \geq \epsilon$ whenever a $(k, n, d, \epsilon^\prime)$ code exists.

\subsection{Gaussian channel}
\label{appx:2orderCgauss}
In view of Remark \ref{rem:Gauss}, it suffices to consider the equal power constraint \eqref{eq:equalP}. 
The spherically-symmetric 
$P_{\bar Y^n} = P_{Y^{ n \star}} = P_{\mathsf Y^\star} \times \ldots \times P_{\mathsf Y^\star}$, where $\mathsf Y^\star \sim \mathcal N(0, \sigma_{\mathsf N}^2(1 + P))$, satisfies the symmetry assumption of Theorem \ref{thm:C1sym}. In fact, for all $x^n \in \mathcal F (\alpha)$,   $\imath_{X^n;  Y^{n \star}}(x^n; Y^n)$ has the same distribution under $P_{Y^{n}|X^n = x^n}$ as (cf. \eqref{eq:ixyAWGNconditional})
\begin{equation}
 G_n = \frac n 2 \log\left(1+P\right) - \frac{\log e}{2}\left( \frac{P}{1 + P} \sum_{i = 1}^n \left( W_i - \frac 1 {\sqrt P} \right)^2 - n\right)\label{eq:ixyAWGNconditionalsum}
\end{equation}
where $W_i \sim \mathcal N\left( \frac 1 {\sqrt P}, 1\right)$, independent of each other. 
Since $G_n$ is a sum of i.i.d. random variables, the mean of $\frac {G_n}{n}$ is equal to $C =  \frac 1 2 \log\left(1+P\right)$ and its variance is equal to \eqref{eq:DispersionAWGN}, the result follows analogously to  \eqref{eq:-2orderCepsilonknsymgamma}--\eqref{eq:2orderC1syma}. 

\section{Proof of the achievability part of Theorem \ref{thm:2order}}
\label{appx:2orderA}

\subsection{Almost lossless coding  ($d = 0$) over a DMC.} 
\label{appx:2orderAlossless}
The proof consists of an asymptotic analysis of the bound in Theorem \ref{thm:Alossless} by means of Theorem \ref{thm:Berry-Esseen}. Weakening \eqref{eq:Alossless} by fixing $P_{X^n} = P_{X^n}^\star = P_{\mathsf X^\star} \times \ldots \times P_{\mathsf X^\star}$, we conclude that there exists a $(k, n, 0, \epsilon^\prime)$ code with
\begin{equation}
 \epsilon^\prime \leq \E{\exp\left( - \left| \sum_{i = 1}^n \imath_{\mathsf X; \mathsf Y}^\star\left( X_i^\star; Y_i^\star \right) - \sum_{i = 1}^k \imath_{\mathsf S}(S_i)\right|^+ \right)} \label{eq:-Alosslessa}
\end{equation}
where $(S^k, {X^n}^\star, {Y^n}^\star)$ are distributed according to $P_{S^k}P_{{X^n}^\star} P_{Y^n|X^n}$. 
The case of equiprobable $\mathsf S$ has been tackled in \cite{polyanskiy2010channel}. Here we assume that $\imath_{\mathsf S}(\mathsf S)$ is not a constant, that is, 
$\Var{\imath_{\mathsf S}(\mathsf S)} > 0$. 

Let $k$ and $n$ be such that
\beq
nC - kH(\mathsf S) \geq \sqrt{n V + k \mathcal V} \Qinv{\epsilon - \frac{B + 1}{\sqrt{n+k}}} + \frac 1 2 \log (n+k) \label{eq:-Alosslessb}
\eeq
where $\mathcal V = \Var{\imath_{\mathsf S}(\mathsf S)}$, and $B$ is the Berry-Esseen ratio \eqref{eq:BerryEsseenBn} for the sum of $n + k$ independent random variables appearing in the right side of \eqref{eq:-Alosslessa}. Note that $B$ is finite due to:
\begin{itemize}
\item  $\Var{\imath_{\mathsf S}(\mathsf S)} > 0$;
\item the third absolute moment of $\imath_{\mathsf S}(\mathsf S)$ is finite;
\item the third absolute moment of $\imath_{\mathsf X; \mathsf Y}^\star(\mathsf X^\star; \mathsf Y^\star)$ is finite, as observed in Appendix \ref{appx:lemmaC}. 
%\footnote{In fact, the proof of the achievability part of Theorem \ref{thm:2order} does not require the finiteness of the channel's input and output alphabets $\mathcal A$, $\mathcal B$. Moreover, it applies if the channel has average cost constraints, the only assumption that we need is that the third absolute moment of  $\imath_{\mathsf X; \mathsf Y}^\star(\mathsf X^\star; \mathsf Y^\star)$ is finite. }
\end{itemize}
Therefore,  \eqref{eq:-Alosslessb} can be written as \eqref{eq:2order} with the remainder therein satisfying \eqref{eq:Aremainderlossless}. So, it suffices to prove that if $k, n$ satisfy \eqref{eq:-Alosslessb}, then the right side of \eqref{eq:-Alosslessa} is upper bounded by $\epsilon$. Let
\begin{align}
&~ \mathcal T_{k, n} =
 %\notag\\
  \bigg\{
 \left(s^k, x^{n}, y^{n}\right) \in \mathcal S^k \times \mathcal A^n \times \mathcal B^n \colon
 \notag\\
 &~ \sum_{i = 1}^n \imath_{\mathsf X; \mathsf Y}^\star\left( x_i; y_i \right) 
  - \sum_{i = 1}^k \imath_{\mathsf S}(s_i) 
  \notag\\
   \geq&~ nC -  kH(\mathsf S) - \sqrt{n V + k \mathcal V} \Qinv{\epsilon - \frac{B + 1}{\sqrt{n+k}}} 
 \bigg\} \label{eq:-2orderATknlossless}
\end{align}
By the Berry-Esseen bound (Theorem \ref{thm:Berry-Esseen}), 
\begin{equation}
 \Prob{\left(S^k, X^{n \star}, Y^{n \star}\right) \notin \mathcal T_{k, n}} \leq \epsilon - \frac{1}{\sqrt{n + k}}
\end{equation}
We now further upper bound \eqref{eq:-Alosslessa} as
\begin{align}
\epsilon^\prime 
&\leq 
\mathbb E \Bigg[
\exp\left( - \left| \sum_{i = 1}^n \imath_{\mathsf X; \mathsf Y}^\star\left( X_i^\star; Y_i^\star \right) - \sum_{i = 1}^k \imath_{\mathsf S}(S_i)\right|^+  \right) 
\notag\\
&\cdot
1_{\mathcal T_{k, n}}\left(S^k, X^{n \star}, Y^{n \star}\right) 
\Bigg]
%\notag\\&
+ \Prob{\left(S^k, X^{n \star}, Y^{n \star}\right) \notin \mathcal T_{k, n}} \label{eq:-Alosslessc1}\\
&\leq \frac 1 {\sqrt {n + k}} \Prob{\left(S^k, X^{n \star}, Y^{n \star}\right)  \in T_{k, n}}
%\notag\\&
+ \epsilon - \frac{1}{\sqrt{n + k}}\label{eq:-Alosslessc}\\
&\leq \epsilon \label{eq:-Alosslessd}
\end{align}
where we invoked \eqref{eq:-Alosslessb} and \eqref{eq:-2orderATknlossless} to upper bound the exponent in the right side of \eqref{eq:-Alosslessc1}. 

\subsection{Lossy coding over a DMC.} 
\label{appx:2orderAlossy}
The proof consists of the asymptotic analysis of the bound in Theorem \ref{thm:A} using Theorem \ref{thm:Berry-Esseen} and Lemma \ref{lemma:aepA} below, which deals with asymptotic behavior of distortion $d$-balls. Note that Lemma \ref{lemma:aepA} is the only step that requires finiteness of the ninth absolute moment of $\mathsf d(\mathsf S, \mathsf Z^\star)$ as required by restriction \eqref{item:last} in Section \ref{sec:2order}. 
\begin{lemma}[{\cite[Lemma 2]{kostina2011fixed}}] Under restrictions \eqref{item:s}--\eqref{item:last}, there exist constants $k_0, c, K > 0$ such that for all $k \geq k_0$, 
\begin{align}
 & \Prob{ \log \frac 1 {P_{Z^{k\star}}(B_d(S^k))} \leq \sum_{i = 1}^k \jmath_{\mathsf S}(S_i, d) + \left( \bar c - \frac 1 2 \right) \log k + c } 
 \notag\\
  &\geq
  1 - \frac K {\sqrt k}
\end{align}
where $\bar c$ is given by \eqref{eq:Cbar}.
\label{lemma:aepA}
\end{lemma}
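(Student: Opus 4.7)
The plan is to bound $P_{Z^{k\star}}(B_d(s^k))$ from below via the exponential-tilting technique that underlies lossy large-deviations results. Introduce the tilted single-letter distribution $Q_s = P_{\mathsf Z^\star \mid \mathsf S = s}$, for which the identity
\begin{equation*}
\frac{dQ_s}{dP_{\mathsf Z^\star}}(z) = \exp\bigl(\jmath_{\mathsf S}(s,d) + \lambda^\star d - \lambda^\star \mathsf d(s,z)\bigr)
\end{equation*}
holds as a consequence of the optimality of $P_{\mathsf Z^\star \mid \mathsf S}$ in \eqref{eq:RR(d)}. Under the product measure $Q_{s^k} = Q_{s_1}\times\cdots\times Q_{s_k}$, the random variables $\mathsf d(s_i,Z_i)$ are independent with mean $d + \Lambda'_{\mathsf Z^\star}(s_i,\lambda^\star)$ and variance $|\Lambda''_{\mathsf Z^\star}(s_i,\lambda^\star)|$. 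Applying the change of measure yields
\begin{equation*}
P_{Z^{k\star}}(B_d(s^k)) = \exp\!\Bigl(-\sum_{i=1}^k \jmath_{\mathsf S}(s_i,d)\Bigr)\cdot \mathbb{E}_{Q_{s^k}}\!\Bigl[\exp\bigl(\lambda^\star \textstyle\sum_i(\mathsf d(s_i,Z_i)-d)\bigr)\,\mathbf{1}\{\mathsf d_k(s^k,Z^k)\leq d\}\Bigr],
\end{equation*}
so the task reduces to lower-bounding the tilted expectation on an event of $P_{\mathsf S}^k$-probability $\geq 1 - K/\sqrt k$.

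Next I would isolate a typical set of source realizations on which the tilt correction $\Sigma_k(s^k) = \sum_i \Lambda'_{\mathsf Z^\star}(s_i,\lambda^\star)$ is controlled. Because the summands $\Lambda'_{\mathsf Z^\star}(\mathsf S,\lambda^\star) = \mathbb E[\mathsf d(\mathsf S,\mathsf Z^\star)\mid \mathsf S] - d$ have mean zero under $P_{\mathsf S}$ (a standard orthogonality property of the rate-distortion optimal backward channel) and finite variance $\Var{\Lambda'_{\mathsf Z^\star}(\mathsf S,\lambda^\star)}$, a Chebyshev/Berry--Esseen argument gives $P_{\mathsf S}^k(|\Sigma_k| > r_k) \leq K_0/\sqrt k$ for an appropriate threshold $r_k = \Theta(\sqrt{k\log k})$. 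The hypothesis $\mathbb E[\mathsf d^9(\mathsf S,\mathsf Z^\star)] < \infty$ in (iv) secures the higher tilted moments needed below.

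On the typical set $\mathcal T_k = \{|\Sigma_k(s^k)|\leq r_k\}$, I would lower-bound the tilted expectation by restricting the integration to the window $\mathcal E_\tau = \{-\tau \leq \sum_i(\mathsf d(s_i,Z_i)-d) \leq 0\}$ of constant width $\tau$, on which the exponential factor is $\geq e^{-\lambda^\star \tau} = \Theta(1)$. A local Berry--Esseen bound applied to $\sum_i \mathsf d(s_i,Z_i)$ under $Q_{s^k}$ (whose mean equals $kd + \Sigma_k(s^k)$ and whose variance is $\sim k\,\mathbb E[|\Lambda''_{\mathsf Z^\star}(\mathsf S,\lambda^\star)|]$) then gives
\begin{equation*}
Q_{s^k}(\mathcal E_\tau) \geq \frac{C_1}{\sqrt k}\,\exp\!\Bigl(-\frac{\Sigma_k(s^k)^2}{2k\,\mathbb E[|\Lambda''_{\mathsf Z^\star}(\mathsf S,\lambda^\star)|]}\Bigr) - \frac{C_2}{\sqrt k}.
\end{equation*}
Selecting $r_k^2$ of order $\Var{\Lambda'_{\mathsf Z^\star}(\mathsf S,\lambda^\star)}\log k / \mathbb E[|\Lambda''_{\mathsf Z^\star}(\mathsf S,\lambda^\star)|]$ makes the Gaussian factor at the boundary contribute precisely $k^{-\Var{\Lambda'_{\mathsf Z^\star}(\mathsf S,\lambda^\star)}/(\mathbb E[|\Lambda''_{\mathsf Z^\star}(\mathsf S,\lambda^\star)|]\log e)}$, while the $1/\sqrt k$ from the window width absorbs the remaining $\tfrac12\log k$, producing together the $\bar c$ of \eqref{eq:Cbar} after taking logarithms.

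The hard part is executing this two-scale analysis with uniform error control. The Berry--Esseen bound must hold uniformly over $s^k \in \mathcal T_k$, which requires that the $Q_{s_i}$-third moment of $\mathsf d(s_i, Z_i)$ and the variance $|\Lambda''_{\mathsf Z^\star}(s_i,\lambda^\star)|$ be controlled on a $P_{\mathsf S}$-large set; it is precisely here that the ninth absolute moment assumption is used, since the tilting by $\lambda^\star$ inflates tail contributions and the Berry--Esseen ratio involves a third moment divided by a $3/2$-power of the variance. Balancing the window width $\tau$, the threshold $r_k$, the excluded-mass probability and the remainder term so that every error contribution is simultaneously $O(1/\sqrt k)$, while the exponent matches \eqref{eq:Cbar} exactly, is the delicate bookkeeping at the heart of the argument.
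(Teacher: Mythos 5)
Your setup is sound — the change of measure to $Q_{s}=P_{\mathsf Z^\star|\mathsf S=s}$ via \eqref{eq:jddensity}, the identification of the tilted mean $d+\Lambda'_{\mathsf Z^\star}(s,\lambda^\star)$ and variance $|\Lambda''_{\mathsf Z^\star}(s,\lambda^\star)|$, and the fact that $\E{\Lambda'_{\mathsf Z^\star}(\mathsf S,\lambda^\star)}=0$ — but the quantitative core of your plan fails. To reach confidence $1-K/\sqrt k$ you must tolerate $|\Sigma_k|$ up to $r_k\asymp\sqrt{k\log k}$ (any threshold $O(\sqrt k)$ leaves an atypical set of non-vanishing constant probability). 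In that regime your window $\mathcal E_\tau$ lies $\asymp\sqrt{\log k}$ standard deviations away from the $Q_{s^k}$-mean of $\sum_i\mathsf d(s_i,Z_i)$, so its Gaussian mass is $\Theta\!\left(k^{-\frac12-\beta}\right)$ with $\beta=\Var{\Lambda'_{\mathsf Z^\star}(\mathsf S,\lambda^\star)}/(2\E{|\Lambda''_{\mathsf Z^\star}(\mathsf S,\lambda^\star)|})>0$, whereas the Berry--Esseen error is $\Theta(1/\sqrt k)$ (the non-uniform version only gains a power of $\log k$). Hence your displayed lower bound on $Q_{s^k}(\mathcal E_\tau)$ is negative precisely in the regime that is supposed to generate the $\log k$ coefficient, and no rebalancing of $\tau$ and $r_k$ repairs it: additive CLT error bounds cannot certify probabilities that are $o(1/\sqrt k)$, and relative-error moderate-deviation or local estimates are not available under the stated moment assumptions (note $\mathsf d(\mathsf s,\mathsf Z^\star)$ may be lattice, e.g.\ Hamming distortion). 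Keeping the tilt frozen at $\lambda^\star$ and paying for the off-centering inside the Gaussian factor is the step that does not go through.

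The proof the paper relies on (it is invoked by citation to \cite[Lemma 2]{kostina2011fixed}, not reproved) avoids this by optimizing the tilt per source realization. One first shows a ball-probability bound of the form $\log\frac{1}{P_{Z^{k\star}}(B_d(s^k))}\leq \max_{\lambda\geq 0}\sum_{i=1}^k\Lambda_{\mathsf Z^\star}(s_i,\lambda)+\tfrac12\log k+O(1)$ on a typical set of $s^k$: since the maximizing $\lambda$ centers the tilted mean of $\sum_i\mathsf d(s_i,Z_i)$ at $kd$, the constant-width-window Berry--Esseen argument is legitimate there (main term $\Theta(\tau/\sqrt k)$ dominates the $O(1/\sqrt k)$ error once $\tau$ is a large enough constant). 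The $\log k$ inflation with coefficient $\Var{\Lambda'_{\mathsf Z^\star}(\mathsf S,\lambda^\star)}/(\E{|\Lambda''_{\mathsf Z^\star}(\mathsf S,\lambda^\star)|}\log e)$ then arises on the source side: by concavity of $\lambda\mapsto\sum_i\Lambda_{\mathsf Z^\star}(s_i,\lambda)$, the gap $\max_\lambda\sum_i\Lambda_{\mathsf Z^\star}(s_i,\lambda)-\sum_i\jmath_{\mathsf S}(s_i,d)$ is at most a ratio of the form $\bigl(\sum_i\Lambda'_{\mathsf Z^\star}(s_i,\lambda^\star)\bigr)^2$ over twice the empirical second derivative, and concentration of $\sum_i\Lambda'_{\mathsf Z^\star}(S_i,\lambda^\star)$ at scale $\sqrt{k\log k}$ together with a lower bound $\sum_i|\Lambda''_{\mathsf Z^\star}(S_i,\lambda^\star)|\gtrsim k\,\E{|\Lambda''_{\mathsf Z^\star}(\mathsf S,\lambda^\star)|}$ on the typical set yields exactly the $\bar c$ of \eqref{eq:Cbar}; the ninth-moment hypothesis is what makes all of these error events simultaneously $O(1/\sqrt k)$ with moments computed under the tilted measures. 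If you restructure your argument this way — tilt at the empirical maximizer, then Taylor-expand back to $\lambda^\star$ — your outline becomes essentially the cited proof.
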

We weaken \eqref{eq:A} by fixing 
\begin{align}
 P_{X^n} &= P_{X^{n \star}} = P_{\mathsf X^\star} \times \ldots \times P_{\mathsf X^\star}\\
 P_{Z^k} &= P_{Z^{k \star}} = P_{\mathsf Z^\star} \times \ldots \times P_{\mathsf Z^\star}\\
\gamma &= \frac 1 2 \log_e k + 1 \label{eq:-Agamma}
\end{align}
where $\Delta > 0$, so there exists a $(k, n, d, \epsilon^\prime)$ code with error probability $\epsilon^\prime$ upper bounded by
\begin{align}
&
\E{ \exp\left( - \left| \sum_{i = 1}^n \imath_{\mathsf X; \mathsf Y}^\star(X_i^\star; Y_i^\star )  - \log \frac{ \gamma}{P_{Z^{k \star}}(B_d(S^k))} 
\right|^+ \right)}
\notag\\&
+ e^{1 - \gamma}
\label{eq:-2orderAa1}
\end{align}
where $(S^k, {X^n}^\star, {Y^n}^\star, Z^{k \star})$ are distributed according to $P_{S^k}P_{{X^n}^\star} P_{Y^n|X^n} P_{Z^{k \star}}$. 
We need to show that for $k, n$ satisfying \eqref{eq:2order}, \eqref{eq:-2orderAa1} is upper bounded by $\epsilon$. 

We apply Lemma \ref{lemma:aepA} to upper bound  \eqref{eq:-2orderAa1} as follows:
\begin{align}
 \epsilon^\prime 
 \leq&~ \E{\exp\left( - \left|U_{k, n}\right|^+ \right)} + \frac {K+1} {\sqrt k} \label{eq:-2orderA1sta}
\end{align}
with
\begin{align}
U_{k, n} 
&= \sum_{i = 1}^n \imath_{\mathsf X; \mathsf Y}^\star(X_i^\star; Y_i^\star )  - \sum_{i = 1}^k \jmath_{\mathsf S}(S_i, d) - \left( \bar c - \frac 1 2 \right)\log k 
\notag\\
&- \log \gamma  - c \label{eq:-2orderAUkn}
\end{align}

We first consider the (nontrivial) case $\mathcal V(d) + V > 0$. Let $k$ and $n$ be such that
\begin{align}
nC - kR(d) &\geq \sqrt{n V + k \mathcal V(d)} \Qinv{\epsilon_{k, n} } 
\notag\\
&+ \bar c  \log k + \log{\gamma} + c  \label{eq:-2orderAe}\\
\epsilon_{k, n} &= \epsilon - \frac{B}{\sqrt{n+k}} - \frac {K + 2} {\sqrt k}
\end{align}
where constants $c$ and $\bar c$ are defined in Lemma \ref{lemma:aepA}, and $B$ is the Berry-Esseen ratio \eqref{eq:BerryEsseenBn} for the sum of $n + k$ independent random variables appearing in \eqref{eq:-2orderA1sta}. 
Note that $B$ is finite because:
\begin{itemize}
\item  either $\mathcal V(d) > 0$ or $V > 0$ by the assumption;
\item the third absolute moment of $\jmath_{\mathsf S}(\mathsf S, d)$ is finite by restriction \eqref{item:last} as spelled out in \eqref{eq:dtilted3dmoment};
\item the third absolute moment of $\imath_{\mathsf X; \mathsf Y}^\star(\mathsf X^\star; \mathsf Y^\star)$ is finite, as observed in Appendix \ref{appx:lemmaC}. 
\end{itemize}
Applying a Taylor series expansion to \eqref{eq:-2orderAe} with the choice of $\gamma$ in \eqref{eq:-Agamma}, we conclude that \eqref{eq:-2orderAe} can be written as  \eqref{eq:2order} with the remainder term satisfying \eqref{eq:Aremainder}. 

It remains to further upper bound \eqref{eq:-2orderA1sta} using \eqref{eq:-2orderAe}. Let
\begin{align}
 \mathcal T_{k, n} = \bigg\{ 
 &\left(s^k, x^{n}, y^{n}\right) \in \mathcal S^k \times \mathcal A^n \times \mathcal B^n \colon
 \notag\\
  &\sum_{i = 1}^n 
  \imath_{\mathsf X; \mathsf Y}^\star(x_i; y_i )  - \sum_{i = 1}^k \jmath_{\mathsf S}(s_i, d)
  \notag\\ 
  &\geq nC -  kR(d) - \sqrt{n V + k \mathcal V(d)} \Qinv{\epsilon_{k, n}}
 \bigg\}  \label{eq:-2orderAf}
\end{align}
By the Berry-Esseen bound (Theorem \ref{thm:Berry-Esseen}), 
\begin{equation}
 \Prob{\left(S^k, X^{n \star}, Y^{n \star}\right) \notin \mathcal T_{k, n}} \leq \epsilon_{k, n} + \frac{B}{\sqrt{n + k}}
\end{equation}
so the expectation in the right side of \eqref{eq:-2orderA1sta} is upper-bounded as
\begin{align}
  &~\E{ \exp\left( - \left| U_{k, n} \right|^+ \right)} \notag\\
  \leq&~ 
  \E{\exp\left( - \left| U_{k, n}\right|^+ 1\left\{ \left(S^k, X^{n \star}, Y^{n \star}\right) \in \mathcal T_{k, n}\right\} \right)}
\notag\\&
+ \Prob{\left(S^k, X^{n \star}, Y^{n \star}\right) \notin \mathcal T_{k, n}} \label{eq:-2orderAg}\\
 \leq&~\frac 1 {\sqrt k}\Prob{ \left(S^k, X^{n \star}, Y^{n \star}\right) \in \mathcal T_{k, n} } + \epsilon_{k, n} + \frac{B}{\sqrt{n + k}} \label{eq:-2orderA1stb}
\end{align}
where we used \eqref{eq:-2orderAe} and \eqref{eq:-2orderAf} to upper bound the exponent in the right side of \eqref{eq:-2orderAg}.  

Putting \eqref{eq:-2orderA1sta} and \eqref{eq:-2orderA1stb} together, we conclude that $\epsilon^\prime \leq \epsilon$. 

Finally, consider the case $V = \mathcal V(d) = 0$, which implies  $\jmath_{\mathsf S}(\mathsf S, d) = R(d)$ and $\imath_{\mathsf X; \mathsf Y}^\star(X_i^\star; Y_i^\star ) = C$ almost surely, and let $k$ and $n$ be such that
\begin{equation}
nC - kR(d) \geq \left( \bar c - \frac 1 2 \right) \log k + \log{\gamma} + c   + \log \frac 1 {\epsilon - \frac{K + 1}{\sqrt k}} 
\end{equation}
where constants $c$ and $\bar c$ are defined in Lemma \ref{lemma:aepA}. Then
\begin{equation}
  \E{ \exp\left( - \left| U_{k, n} \right|^+ \right)}
  \leq  \epsilon - \frac{K + 1}{\sqrt k}
  \end{equation}
which, together with \eqref{eq:-2orderA1sta}, implies that $\epsilon^\prime \leq \epsilon$, as desired. 

\subsection{Lossy or almost lossless coding over a Gaussian channel}
\label{appx:2orderAgauss}
In view of Remark \ref{rem:Gauss}, it suffices to consider the equal power constraint \eqref{eq:equalP}. As shown in the proof of Theorem \ref{thm:AGMS-AWGN}, for any distribution of $X^n$ on the power sphere, 
\begin{align}
 \imath_{X^n; Y^n}(X^n; Y^n) \geq G_n - F 
\end{align}
where $G_n$ is defined in \eqref{eq:ixyAWGNconditionalsum} (cf. \eqref{eq:ixyAWGNconditional}) and $F$ is a (computable) constant. 

Now, the proof for almost lossless coding in Appendix \ref{appx:2orderAlossless} can be modified to work for the Gaussian channel by adding $\log F$ to the right side of \eqref{eq:-Alosslessb} 
and replacing 
$\sum_{i = 1}^n \imath_{\mathsf X; \mathsf Y}^\star\left( X_i^\star; Y_i^\star \right)$
in \eqref{eq:-Alosslessa} and \eqref{eq:-Alosslessc1} with $G_n - \log F$, and in \eqref{eq:-2orderATknlossless} with $G_n$. 

Similarly, the proof for lossy coding in Appendix \ref{appx:2orderAlossy} is adapted for the Gaussian channel by adding $\log F$ to the right side of \eqref{eq:-2orderAe}
and replacing
$\sum_{i = 1}^n \imath_{\mathsf X; \mathsf Y}^\star\left( X_i^\star; Y_i^\star \right)$
in \eqref{eq:-2orderAa1} and \eqref{eq:-2orderAUkn}
 with $G_n - \log F$, and in \eqref{eq:-2orderAf} with $G_n$.
 
\section{Proof of Theorem \ref{thm:2order1}}
\label{appx:2order1}
Applying the Berry-Esseen bound to \eqref{eq:Cuncoded}, we obtain
\begin{align}
 &~D_1(n, \epsilon, \alpha) 
 \notag\\
 \geq&\! \min_{\substack{P_{\mathsf Z|\mathsf S} \colon\\ I(\mathsf S; \mathsf Z) \leq C(\alpha) }} \left\{\!\E{\mathsf d(\mathsf S, \mathsf Z)} + \sqrt \frac{\Var{\mathsf d(\mathsf S, \mathsf Z)}}{n} \Qinv{\epsilon + \frac{B}{\sqrt n}} \!\right\}\\
 =&~  D(C(\alpha)) + \sqrt \frac{\mathscr W_1(\alpha)}{n} \Qinv{\epsilon+ \frac{B}{\sqrt n}}  \label{eq:-2order1a}
\end{align}
where $B$ is the Berry-Esseen ratio, and  \eqref{eq:-2order1a} follows by the application of Lemma \ref{lemma:maxsum} with 
\begin{align}
 \mathcal D &=\left\{ P_{\mathsf S \mathsf Z} = P_{\mathsf Z| \mathsf S} P_{\mathsf S} \colon I(\mathsf S; \mathsf Z) \leq R(\bar d) \right\} \\
 f(P_{\mathsf S \mathsf Z}) &= - \E{d(\mathsf S, \mathsf Z)}\\
 g(P_{\mathsf S \mathsf Z}) &= - \sqrt{\Var{d(\mathsf S, \mathsf Z)}}  \Qinv{\epsilon+ \frac{B}{\sqrt n}}\\
 \varphi &= 1\\
\psi &= \frac 1 {\sqrt n}
\end{align}
Note that the mean and standard deviation of $d(\mathsf S, \mathsf Z)$ are linear and continuously differentiable in $P_{\mathsf S \mathsf Z}$, respectively, so conditions \eqref{eq:lemmag} and \eqref{eq:lemmaflinear} hold with the metric being the usual Euclidean distance between vectors in $\mathbb R^{|\mathcal S| \times |\hat {\mathcal S}|}$. So, \eqref{eq:-2order1a} follows immediately upon observing that by the definition of the rate-distortion function, $\E{\mathsf d(\mathsf S, \mathsf Z)} \geq \E{\mathsf d(\mathsf S, \mathsf Z^\star)} = D(C(\alpha))$ for all $P_{\mathsf Z|\mathsf S}$ such that $I(\mathsf S; \mathsf Z) \leq C(\alpha)$. 

\bibliographystyle{IEEEtran}
\bibliography{../../ratedistortion}

% Generated by IEEEtran.bst, version: 1.13 (2008/09/30)
\begin{thebibliography}{10}
\providecommand{\url}[1]{#1}
\csname url@samestyle\endcsname
\providecommand{\newblock}{\relax}
\providecommand{\bibinfo}[2]{#2}
\providecommand{\BIBentrySTDinterwordspacing}{\spaceskip=0pt\relax}
\providecommand{\BIBentryALTinterwordstretchfactor}{4}
\providecommand{\BIBentryALTinterwordspacing}{\spaceskip=\fontdimen2\font plus
\BIBentryALTinterwordstretchfactor\fontdimen3\font minus
  \fontdimen4\font\relax}
\providecommand{\BIBforeignlanguage}[2]{{%
\expandafter\ifx\csname l@#1\endcsname\relax
\typeout{** WARNING: IEEEtran.bst: No hyphenation pattern has been}%
\typeout{** loaded for the language `#1'. Using the pattern for}%
\typeout{** the default language instead.}%
\else
\language=\csname l@#1\endcsname
\fi
#2}}
\providecommand{\BIBdecl}{\relax}
\BIBdecl

\bibitem{kostina2012ISITjscc}
V.~Kostina and S.~Verd\'{u}, ``Lossy joint source-channel coding in the finite
  blocklength regime,'' in \emph{2012 IEEE International Symposium on
  Information Theory}, Cambridge, MA, July 2012, pp. 1553--1557.

\bibitem{kostina2012tocodeornot}
------, ``To code or not to code: Revisited,'' in \emph{Proceedings 2012 IEEE
  Information Theory Workshop}, Lausanne, Switzerland, 2012.

\bibitem{shannon1948mathematical}
C.~E. Shannon, ``{A mathematical theory of communication},'' \emph{Bell Syst.
  Tech. J.}, vol.~27, pp. 379--423, 623--656, July and October 1948.

\bibitem{shannon1959coding}
------, ``{Coding theorems for a discrete source with a fidelity criterion},''
  \emph{IRE Int. Conv. Rec.}, vol.~7, pp. 142--163, Mar. 1959, reprinted with
  changes in {\it Information and Decision Processes}, R. E. Machol, Ed. New
  York: McGraw-Hill, 1960, pp. 93-126.

\bibitem{verdu2012losslessCISS}
S.~Verd\'{u} and I.~Kontoyiannis, ``Lossless data compression rate: Asymptotics
  and non-asymptotics,'' in \emph{Proceedings 2012 46th Annual Conference on
  Information Sciences and Systems (CISS)}, Princeton, NJ, March 2012, pp.
  1--6.

\bibitem{verdu2012lossless}
------, ``Lossless data compression at finite blocklengths,'' \emph{submitted
  to IEEE Transactions on Information Theory}, 2012.

\bibitem{gastpar2003tocodeornot}
M.~Gastpar, B.~Rimoldi, and M.~Vetterli, ``To code, or not to code: lossy
  source-channel communication revisited,'' \emph{IEEE Transactions on
  Information Theory}, vol.~49, no.~5, pp. 1147--1158, May 2003.

\bibitem{polyanskiy2010channel}
Y.~Polyanskiy, H.~V. Poor, and S.~Verd{\'u}, ``{Channel coding rate in finite
  blocklength regime},'' \emph{IEEE Transactions on Information Theory},
  vol.~56, no.~5, pp. 2307--2359, May 2010.

\bibitem{kostina2011fixed}
V.~Kostina and S.~Verd\'{u}, ``Fixed-length lossy compression in the finite
  blocklength regime,'' \emph{IEEE Transactions on Information Theory},
  vol.~58, no.~6, pp. 3309--3338, June 2012.

\bibitem{csiszar1980joint}
I.~Csisz{\'a}r, ``Joint source-channel error exponent,'' \emph{Prob. Contr. \&
  Info. Theory}, vol.~9, no.~5, pp. 315--328, Sep. 1980.

\bibitem{csiszar1982error}
------, ``{On the error exponent of source-channel transmission with a
  distortion threshold},'' \emph{IEEE Transactions on Information Theory},
  vol.~28, no.~6, pp. 823--828, Nov. 1982.

\bibitem{pilc1967coding}
R.~Pilc, ``Coding theorems for discrete source-channel pairs,'' Ph.D.
  dissertation, M.I.T., 1967.

\bibitem{pilc1967transmission}
------, ``The transmission distortion of a discrete source as a function of the
  encoding block length,'' \emph{Bell Syst. Tech. J.}, vol.~47, no.~6, pp.
  827--885, July/August 1968.

\bibitem{wyner1968communication}
A.~D. Wyner, ``{Communication of analog data from a Gaussian source over a
  noisy channel},'' \emph{Bell Syst. Tech. J.}, vol.~47, no.~5, pp. 801--812,
  May/June 1968.

\bibitem{wyner1972transmission}
------, ``{On the transmission of correlated Gaussian data over a noisy channel
  with finite encoding block length},'' \emph{Information and Control},
  vol.~20, no.~3, pp. 193--215, April 1972.

\bibitem{csiszar1983abstract}
I.~Csisz{\'a}r, ``An abstract source-channel transmission problem,''
  \emph{Problems of Control and Information Theory}, vol.~12, no.~5, pp.
  303--307, Sep. 1983.

\bibitem{campo2011random}
A.~Tauste~Campo, G.~Vazquez-Vilar, A.~Guill{\'e}n~i F{\`a}bregas, and
  A.~Martinez, ``Random-coding joint source-channel bounds,'' in
  \emph{Proceedings 2011 IEEE International Symposium on Information Theory},
  Saint-Petersburg, Russia, Aug. 2011, pp. 899--902.

\bibitem{wang2011dispersion}
D.~Wang, A.~Ingber, and Y.~Kochman, ``The dispersion of joint source-channel
  coding,'' in \emph{49th Annual Allerton Conference on Communication, Control
  and Computing}, Monticello, IL, Sep. 2011.

\bibitem{csiszar1974extremum}
I.~Csisz\'{a}r, ``{On an extremum problem of information theory},''
  \emph{Studia Scientiarum Mathematicarum Hungarica}, vol.~9, no.~1, pp.
  57--71, Jan. 1974.

\bibitem{poor1994introduction}
H.~V. Poor, \emph{{An introduction to signal detection and estimation}}.\hskip
  1em plus 0.5em minus 0.4em\relax Springer, 1994.

\bibitem{vembu1995sourcechannel}
S.~Vembu, S.~Verd\'{u}, and Y.~Steinberg, ``The source-channel separation
  theorem revisited,'' \emph{IEEE Transactions on Information Theory}, vol.~41,
  no.~1, pp. 44 --54, Jan 1995.

\bibitem{sakrison1968geometric}
D.~Sakrison, ``{A geometric treatment of the source encoding of a Gaussian
  random variable},'' \emph{IEEE Transactions on Information Theory}, vol.~14,
  no.~3, pp. 481--486, May 1968.

\bibitem{polyanskiy2010thesis}
Y.~Polyanskiy, ``Channel coding: non-asymptotic fundamental limits,'' Ph.D.
  dissertation, Princeton University, 2010.

\bibitem{jelinek1968probabilistic}
F.~Jelinek, \emph{Probabilistic information theory: discrete and memoryless
  models}.\hskip 1em plus 0.5em minus 0.4em\relax McGraw-Hill, 1968.

\bibitem{csiszar2011information}
I.~Csisz{\'a}r and J.~K{\"o}rner, \emph{{Information Theory: Coding Theorems
  for Discrete Memoryless Systems}}, 2nd~ed.\hskip 1em plus 0.5em minus
  0.4em\relax Cambridge Univ Press, 2011.

\bibitem{Goblick1965theoretical}
J.~Goblick, T., ``Theoretical limitations on the transmission of data from
  analog sources,'' \emph{IEEE Transactions on Information Theory}, vol.~11,
  no.~4, pp. 558--567, Oct. 1965.

\bibitem{gallager1968information}
R.~Gallager, \emph{{Information theory and reliable communication}}.\hskip 1em
  plus 0.5em minus 0.4em\relax John Wiley \& Sons, Inc. New York, 1968.

\bibitem{harremoes2007information}
P.~Harremo{\"e}s and N.~Tishby, ``The information bottleneck revisited or how
  to choose a good distortion measure,'' in \emph{Proceedings 2007 IEEE
  International Symposium on Information Theory}, Nice, France, June 2007, pp.
  566--570.

\bibitem{courtade2011multiterminal}
T.~Courtade and R.~Wesel, ``Multiterminal source coding with an entropy-based
  distortion measure,'' in \emph{Proceedings 2011 IEEE International Symposium
  on Information Theory}, Saint-Petersburg, Russia, Aug. 2011, pp. 2040--2044.

\bibitem{kontoyiannis2000pointwise}
I.~Kontoyiannis, ``{Pointwise redundancy in lossy data compression and
  universal lossy data compression},'' \emph{IEEE Transactions on Information
  Theory}, vol.~46, no.~1, pp. 136--152, Jan. 2000.

\bibitem{feller1971introduction}
W.~Feller, \emph{An Introduction to Probability Theory and its Applications},
  2nd~ed.\hskip 1em plus 0.5em minus 0.4em\relax John Wiley \& Sons, 1971,
  vol.~II.

\end{thebibliography}
\newpage
\begin{IEEEbiographynophoto}{Victoria Kostina}(S'12)
received the BachelorÕs degree with honors in applied mathematics
and physics from the Moscow Institute of Physics and Technology,
Russia, in 2004, where she was affiliated with the Institute for Information Transmission Problems of the Russian Academy of Sciences, and the MasterÕs degree in electrical engineering from the University
of Ottawa, Canada, in 2006.

She is currently pursuing a Ph.D. degree in electrical engineering at Princeton University. Her research interests lie in information theory, theory of random processes, coding, and wireless
communications.
\end{IEEEbiographynophoto}

\begin{IEEEbiographynophoto}{Sergio Verd\'{u}}(S'80--M'84--SM'88--F'93)
received the Telecommunications Engineering degree from the
Universitat Polit\`{e}cnica de Barcelona in 1980, and the Ph.D. degree in Electrical Engineering from the
University of Illinois at Urbana-Champaign in 1984. Since 1984 he has been a member of the faculty of
Princeton University, where he is the Eugene Higgins Professor of Electrical Engineering, and is a member
of the Program in Applied and Computational Mathematics.

Sergio Verd\'{u} is  the recipient of the 2007 Claude E. Shannon Award, and
the 2008 IEEE Richard W. Hamming Medal. 
He is a member of the  National Academy of Engineering, 
and was awarded a Doctorate Honoris Causa from the Universitat  Polit\`{e}cnica de Catalunya in 2005.

He is a recipient of several paper awards from the IEEE: 
the 1992 Donald Fink Paper Award, 
the 1998 and 2012 Information Theory  Paper Awards, 
an Information Theory Golden Jubilee Paper Award,
the 2002 Leonard Abraham Prize Award,  
the 2006 Joint Communications/Information Theory Paper Award, 
and the 2009 Stephen O. Rice Prize from IEEE Communications Society.  
In 1998, Cambridge University Press published his book {\em Multiuser Detection,} 
for which he received the 2000 Frederick E. Terman Award from the American Society for Engineering Education. 

Sergio Verd\'{u} served as President of the IEEE Information Theory Society in 1997, and
on its Board of Governors (1988-1999, 2009-present).
He has also served in various editorial capacities for the {\em IEEE Transactions on Information Theory}:
Associate Editor (Shannon Theory, 1990-1993; Book Reviews, 2002-2006),  
Guest Editor of the Special Fiftieth Anniversary Commemorative Issue
(published by IEEE Press as ``Information Theory: Fifty years of discovery"), 
and member of the Executive Editorial Board (2010-2013).
He is the founding Editor-in-Chief of {\em Foundations and Trends in Communications and Information Theory}.
\end{IEEEbiographynophoto}

\end{document}